\documentclass[11pt]{amsart}

\usepackage{hyperref}
\usepackage{tikz}
\usetikzlibrary{arrows,decorations.markings,shapes.arrows,patterns,calc}
\usepackage[all]{xy}

\tikzset{
  ->-/.style={decoration={markings, mark=at position 0.5 with {\arrow{to}}},
              postaction={decorate}},
}

\tikzset{
  -<-/.style={decoration={markings, mark=at position 0.5 with {\arrow{to reversed}}},
              postaction={decorate}},
}

\tikzset{
  dbl->-/.style={
double, 
double equal sign distance,
shorten >= 1pt,
shorten <= 1pt,
 decoration={markings, mark=at position 0.5 with {\arrow{implies}}},
              postaction={decorate}},
}

\tikzset{
  dbl-<-/.style={
double, 
double equal sign distance,
shorten >= 1pt,
shorten <= 1pt,
 decoration={markings, mark=at position 0.5 with {\arrowreversed{implies}}},
              postaction={decorate}},
}

\usepackage{amsmath,amsthm}
\textwidth=14.5cm \oddsidemargin=1cm  \evensidemargin=1cm\setlength{\parskip}{10pt} \setlength{\headsep}{20pt}

\pdfmapfile{+mathpple.map}
\usepackage{mathrsfs}
\usepackage{amscd,amssymb, amsfonts, verbatim,subfigure, enumerate}
\usepackage[mathcal]{eucal}
\usepackage[super]{nth}

\usepackage{mathpazo}

\linespread{1.2}  
\usepackage{color,slashed}

\setcounter{tocdepth}{1}
\newcommand{\Lap}{\tr}
\newcommand{\dbar}{\br{\partial}}

\newcommand{\zbar}{\br{z}}

\newcommand{\dpa}[1]{\frac{\partial}{\partial #1}}

\newcommand{\Obs}{\op{Obs}}

\newcommand{\eps}{\epsilon}
\newcommand{\g}{\mathfrak{g}}

\newcommand{\xto}{\xrightarrow}

\newcommand{\what}{\widehat}
\newcommand{\tr}{\triangle}

\newcommand{\til}{\widetilde}
\newcommand{\mscr}{\mathscr}

\newcommand{\br}{\overline}

\newcommand{\iso}{\cong}
\newcommand{\C}{\mathbb C}

\newcommand{\Oo}{\mscr O}
\newcommand{\Z}{\mathbb Z}

\newcommand{\into}{\hookrightarrow}

\newcommand{\op}{\operatorname}

\newcommand{\mbb}{\mathbb}
\newcommand{\mf}{\mathfrak}
\newcommand{\mc}{\mathcal}

\newcommand{\ip}[1]{\left\langle #1 \right\rangle}
\newcommand{\abs}[1]{\left| #1 \right|}

\newcommand{\R}{\mbb R}
\renewcommand{\d}{\mathrm{d}}

\DeclareMathOperator{\Sym}{Sym} \DeclareMathOperator{\Hom}{Hom}

\newtheoremstyle{thm}
  {7pt}
  {7pt}
  {\itshape}
  {}
  {\bf}
  {.}
  {5pt}
  {\thmnumber{#2 }\thmname{#1}\thmnote{ (#3)}}

\newtheoremstyle{def}
  {7pt}
  {10pt}
  {\itshape}
  {}
  {\bf}
  {.}
  {5pt}
  {\thmnumber{#2} \thmname{#1}\thmnote{ (#3)}}

\newtheoremstyle{rem}
  {4pt}
  {10pt}
  {}
  {}
  {\itshape}
  {:}
  {3pt}
  {}

\newtheoremstyle{texttheorem}
  {8pt}
  {8pt}
  {\itshape}
  {}
  {\bf}
  {. \hspace{5pt}}
  {3pt}
  {}

\theoremstyle{thm}

\newtheorem*{claim}{Claim}
\newtheorem*{theorem*}{Theorem}
\newtheorem*{lemma*}{Lemma}
\newtheorem*{corollary*}{Corollary}
\newtheorem*{proposition*}{Proposition}
\newtheorem*{definition*}{Definition}

\newtheorem*{conjecture}{Conjecture}

\newtheorem{theorem}{Theorem}[subsection]
\newtheorem{thm-def}{Theorem/Definition}[theorem]
\newtheorem{proposition}[theorem]{Proposition}

\newtheorem*{question*}{Question}
\newtheorem{lemma}[theorem]{Lemma}

\newtheorem{corollary}[theorem]{Corollary}

\numberwithin{equation}{subsection}

\theoremstyle{def}
\newtheorem{definition}[theorem]{Definition}

\theoremstyle{rem}

\newtheorem*{remark}{Remark}



\newcommand{\cinfty}{C^{\infty}}

\usepackage{stmaryrd}
\parskip=10pt
\date{}
\newcommand{\gl}{\mf{gl}}

\title{Holography and Koszul duality: the example of the $M2$ brane}
\author{Kevin Costello}
\thanks{}

\address{Perimeter Institue for Theoretical Physics}
\email{kcostello@perimeterinstitute.ca}

\begin{document}

\begin{abstract}
Si Li and author suggested in that, in some cases, the AdS/CFT correspondence can be formulated in terms of the algebraic operation of Koszul duality.  In this paper this suggestion is checked explicitly for $M2$ branes in an $\Omega$-background. The algebra of supersymmetric operators on a stack of $K$ $M2$ branes is shown to be Koszul dual, in large $K$, to the algebra of supersymmetric operators of $11$-dimensional supergravity in an $\Omega$-background (using the formulation of supergravity in an $\Omega$-background presented in \cite{Cos16}).  

The twisted form of supergravity that is used here can be quantized to all orders in perturbation theory.  We find that the Koszul duality result holds to all orders in perturbation theory,  in both the gravitational theory and the theory on the $M2$.  (However, there is a certain non-linear identification of the coupling constants on each side which I was unable to determine explicitly).

It is also shown that the algebra of operators on $K$ $M2$ branes, as $K \to \infty$, is a quantum double-loop algebra (a two-variable analog of the Yangian).  This algebra is also the Koszul dual of the algebra of operators on the gravitational theory. An explicit presentation for this algebra is presented, and it is shown that this algebra is the unique quantization of its classical limit.     Some conjectural applications to enumerative geometry of Calabi-Yau threefolds are also presented.  
\end{abstract}

\maketitle
\tableofcontents
\section{Introduction}
The AdS/CFT correspondence has been very influential in theoretical physics. However, compared to other aspects of string theory, the impact of holography on mathematics has been more limited.  Perhaps one reason is that, previously, a twisted form of the AdS/CFT correspondence has been unavailable.

Such a twisted AdS/CFT correspondence has been proposed by the author and Si Li \cite{CosLi16}.  The twisted AdS/CFT correspondence relates a twist of a supersymmetric gauge theory (in the sense of \cite{Wit88a}) with a twist of supergravity.  Twisted supergravity was introduced by the author and Si Li in \cite{CosLi15,CosLi16}. Although twisted supergravity is non-renormalizable, we were able to construct it at the quantum level in perturbation theory in some important examples.

Li and I came up with the following mathematical formulation of (part of) the AdS/CFT correspondence, which we state as a conjecture (and at a somewhat impressionistic level).
\begin{conjecture}
Consider a $D$-brane in type IIA or IIB string theory, a brane the topological string, or a brane in $M$-theory. Following Polchinski, we can view the $D$-brane as a defect in the string theory or supergravity theory.  Then, we can consider:
\begin{enumerate} 
\item The algebras of operators of the supersymmetric gauge theory living on stack of $N$ $D$-branes, after performing a twist and sending $N \to \infty$. 
\item The algebra of operators in perturbative twisted supergravity living on the location of the defect given by the stack of $D$-branes. 
\end{enumerate} 
Then, these two algebras are Koszul dual\footnote{As stated, the conjecture is imprecise. Only in the case that the brane sources no flux in the supergravity theory do we expect an exact match with the Koszul dual. In general, we expect a deformation of the Koszul dual.   This unusual situation can be engineered by considering a stack of the same number of branes and anti-branes.  We also find that $M2$ branes in $11$-dimensional supergravity in the $\Omega$-background do not source a flux. } . 
\end{conjecture}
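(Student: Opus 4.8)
The conjecture as stated is impressionistic, so the plan is to verify a precise version of it in the case at hand --- $K$ $M2$ branes in $11$-dimensional supergravity in an $\Omega$-background. The strategy is to compute the two algebras of the conjecture explicitly as associative algebras over $\C[[\hbar]]$, to prove a rigidity statement forcing the quantization on the brane side to be unique, and then to exhibit the Koszul-duality pairing between the two algebras --- first at the classical level, then order by order in perturbation theory.

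On the brane side, I would start from the holomorphic-topological twist of the $\mc{N}=8$ worldvolume theory on $K$ $M2$ branes and impose the $\Omega$-background on the two worldvolume directions transverse to a chosen line. The expectation is that the $3$-dimensional theory localizes to a $1$-dimensional topological quantum mechanics on $\R$, whose algebra of local operators $A_K$ is a BRST reduction of an algebra built from $\gl_K$. The crucial point is to organize the computation so that the $K$-dependence is uniform, so that the limit $A_\infty = \lim_{K \to \infty} A_K$ exists and admits an explicit presentation by generators and relations; one then recognizes $A_\infty$ as a quantum double-loop algebra (a two-variable analog of the Yangian). Separately, I would prove that $A_\infty$ is the unique filtered quantization of its classical limit, which is an infinite-dimensional Poisson algebra --- morally, functions on, or Hamiltonian vector fields on, a copy of $\C^2$. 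This is a Hochschild-cohomology argument: one shows that the relevant graded piece of $HH^2$ of the classical algebra vanishes, so every quantization is equivalent to $A_\infty$ up to reparametrization of $\hbar$.

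On the gravity side, I would take the description of $11$-dimensional supergravity in the $\Omega$-background from \cite{Cos16}. After the $\Omega$-deformation this twisted theory reduces to a $5$-dimensional, partly holomorphic and partly topological, field theory on $\C^2 \times \R$ --- a higher/noncommutative analog of Chern--Simons theory, controlled by an $L_\infty$ algebra of (divergence-free) holomorphic vector fields on $\C^2$ --- in which the $M2$ branes appear as a line defect wrapping $\R$ at the origin of $\C^2$. I would compute the algebra $B$ of BRST-closed local operators that can be inserted along this line (a Chevalley--Eilenberg / $\op{Ext}$ computation), and then show that $B$ is Koszul dual to $A_\infty$ by exhibiting the coupling of bulk fields to defect operators as a Maurer--Cartan element of $\g \otimes B$ --- equivalently, an $A_\infty$ morphism --- and arguing that it is the universal such element. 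The rigidity statement from the previous step then forces this identification to hold to all orders, up to the nonlinear change of coupling constant mentioned in the abstract.

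The main obstacle I expect to be the brane side: producing a presentation of $A_\infty$ that is genuinely uniform in $K$, and then controlling the deformation theory of its infinite-dimensional classical limit well enough to establish uniqueness of the quantization. Keeping track of filtration degrees and showing that there is literally no room for an inequivalent quantization is delicate, and this is where most of the work lies. A secondary difficulty, already flagged, is that even once both algebras are pinned down, the matching of coupling constants on the two sides is nonlinear; I do not expect to determine it in closed form, so the argument must be arranged so that the Koszul-duality statement itself does not depend on that identification.
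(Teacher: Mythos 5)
You correctly identify the overall shape of the argument --- compute both algebras, establish a rigidity result, and then pin down the match up to a reparametrization of the coupling --- but two steps as you describe them would not go through.

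First, your rigidity step is not the right statement, and as phrased it would be both false and unprovable. You propose to show that ``the relevant graded piece of $HH^2$ of the classical algebra vanishes, so every quantization is equivalent.'' The classical algebra here is $U(\mf{gl}_N \otimes \op{Diff}(\C))$, and its Hochschild cohomology does \emph{not} vanish in degree two --- if it did, the deformation $U^{comb}_\hbar$ that one needs would itself be trivial. What one actually needs is that the relevant deformation space is as small as possible without being zero. Moreover, for fixed $N$ this Hochschild cohomology is essentially incomputable: for $N=1$ it encodes the Lie algebra cohomology of $\op{Diff}(\C)$, a famously open problem. The paper gets around this with a device you do not mention and which is the real engine of the argument: one replaces $\mf{gl}_N$ by the super Lie algebras $\mf{gl}(N+R\mid R)$, relates the algebras for different $R$ by $Q$-cohomology, and computes the \emph{stable} (uniform-in-$R$) Hochschild cohomology via a super version of Loday--Quillen--Tsygan. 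The resulting deformation space is not zero but a free rank-one $\C[\kappa]$-module, where $\kappa$ is the lift of the central element $1\otimes\op{Id}$. That rank-one-ness, not vanishing, is what forces the quantization to be unique up to the $\kappa$-dependent change of variables in $\hbar$ --- and it is also what produces the polynomials $f_i(\kappa)$ that you flagged as the coupling-constant ambiguity. Without this stabilization trick there is no handle on the deformation problem at all.

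Second, you mis-identify the $5$-dimensional theory. It is not controlled by an $L_\infty$ algebra of divergence-free holomorphic vector fields on $\C^2$. The gauge Lie algebra is $\mf{gl}_N \otimes \Oo_c(\C^2)$, where $\Oo_c(\C^2)$ carries the Moyal product --- the $\mf{gl}_N$ comes from the $A_{N-1}$ singularity, not from a reduction of diffeomorphisms. The theory is a non-commutative Chern--Simons theory for this (super)algebra, and the Koszul dual of its observables is $U(\mf{gl}_N \otimes \Oo_c(\C^2))$ (deformed by $\hbar$). If you set up the bulk theory with divergence-free vector fields you will not land on the algebra you need on the brane side. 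Relatedly, the paper does not prove the Koszul duality by exhibiting a universal Maurer--Cartan element coupling bulk to defect; it computes the Koszul dual of the classical observable algebra $C^\ast(\mf{gl}_N \otimes \Oo_c(\what D))$ directly, shows the quantum observables admit a unique augmentation, and then uses the rigidity theorem plus an explicit one-loop OPE to match with the combinatorial algebra. The Maurer--Cartan route is a reasonable alternative in principle, but it would still have to be combined with the same stabilization argument to control higher orders.
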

\begin{remark}
 All the examples we have, and all of the theoretical evidence supporting this conjecture, only hold after performing some (not necessarily topological) twist to both the theory on the brane and the dual supergravity theory (where the supergravity theory is twisted in the sense of  \cite{CosLi16}).  Maybe some  version of this conjecture holds for the physical theory, before twisting, but we do not have good evidence for that right now. 
\end{remark} 
It may seem rather outrageous to conjecture that holography can be understood in terms of an abstract homological-algebra operation like Koszul duality.  Such a claim needs to be supported by strong evidence.   

One strand of evidence we plan to present for this conjecture is the relationship between this statement and Witten's \cite{Wit98} prescription for computing gauge-theory correlators in terms of the supergravity dual.   

The other evidence we can present is to verify the conjecture in detail in examples.  This is (part of) the purpose of this paper.  The example considered here comes from $M$-theory instead of string theory. We consider a stack of $K$  $M2$ branes at the tip of an $A_{N-1}$ singularity, where both the $M2$ brane and $11$-dimensional supergravity are placed in an $\Omega$-background. (The concept of supergravity in the $\Omega$-background is developed in \cite{Cos16}).  

Our conjecture now states that the algebra of operators on the stack of $M2$ branes, as $K \to \infty$, becomes Koszul dual to the algebra of operators of $11$-dimensional supergravity in the $\Omega$-background.  We find, after a lot of work, that this statement holds, \emph{even when the  supergravity theory is treated at the quantum level}.  

We use the description of $11$-dimensional supergravity in the $\Omega$-background presented in \cite{Cos16}, where it is found that it is equivalent to a $5$-dimensional non-commutative gauge theory.  There, it was also shown that this $5$-dimensional gauge theory can be quantized in perturbation theory; this is what allows us to access part of $11$-dimensional supergravity theory at the quantum level. 

Incidentally, this statement does not contradict the standard idea that one should not find any local operators in a gravity theory.  The $5$-dimensional gauge theory which is our avatar of $11$-dimensional supergravity has no local operators \emph{of ghost number zero}.  However, it does have local operators of positive ghost number, which  come from the symmetries of the background we use. The operation of Koszul duality can turn an algebra built from elements of positive cohomological degree into one concentrated in cohomological degree $0$, so that operators of positive ghost number on the gravitational side are related to operators of ghost number zero living on the brane.

\subsection{} 
  The proof of this result is very much non-trivial: we have to work very hard to constrain the operator products that appear in the gravitational system. In the end, the match we obtain between the two systems involves a non-linear identification of the coupling constants.  This is unique, and can in principle be determined by Feynman diagram computations, but I was unable to compute it.  

Despite the motivation from physics, this paper is entirely mathematical, and in fact mostly algebra. Mathematicians should not be discouraged by the mentions of $M$-theory: I will shortly explain the main results of the paper from a purely mathematical point of view.

\subsection{A $5$-dimensional gauge theory from $M$-theory}
Let me now describe the $5$-dimensional gauge theory which will be our substitute for $M$-theory. In \cite{Cos16}, I gave a definition of supergravity in the $\Omega$ background.  I also analyzed $11$-dimensional supergravity on a product of $TN_k$, the Taub-NUT manifold with an $A_k$ singularity at the origin, with $\R^3 \times \C^2$.  I showed that, when placed in an $\Omega$-background arising from rotating the circle fibre of the Taub-NUT with speed $\eps$ and $\R^2$ with speed $\delta$ (while preserving the holomorphic volume form on $TN_k \times \R^2$), $M$-theory reduces to a $5$-dimensional non-commutative gauge theory.

 The fundamental field of this gauge theory is a $3$-component connection which, on flat space $\R \times \C^2$ takes the form
$$
A = A_t \d t + A_{\zbar_1} \d \zbar_1 + A_{\zbar_2} \d \zbar_2
$$ 
where the components are smooth functions valued in $\mf{gl}_N$.
 
The action functional is the non-commutative Chern-Simons functional
$$
S(A) = \frac{1}{\delta}  \int \d z_1 \d z_2 \left\{ \tfrac{1}{2} \op{Tr} ( A \ast_\eps \d A)  + \tfrac{1}{3} \op{Tr} (A \ast_\eps A \ast_\eps A) \right\}k
$$
where $\ast_\eps$ indicates the Moyal product on the plane $\C^2$ with parameter of non-commutativity $\eps$.  The theory has two coupling constants $\delta$ and $\eps$, and we work in a perturbative regime where both $\delta$ and $\eps$ are small but $\abs{\delta}$ is much less than $\abs{\eps}$.   These coupling constants are the same as the $\Omega$-background parameters mentioned above.

This theory can be placed on $\R \times X$ where $X$ is a  hyper-K\"ahler $4$-manifold which, when viewed as a complex symplectic surface, is equipped with a deformation quantization of the sheaf of holomorphic functions.  For example, $X$ could be a resolution of an ADE singularity or the cotangent bundle of a Riemann surface. In the limit when $\eps = 0$, the equations of motion of this theory describe a bundle on $\R \times X$ with a holomorphic structure in the $X$ directions and a flat connection in the $\R$ direction. When $\eps$ is non-zero, the equations of motion are modified so that the holomorphic bundle on $X$ is replaced by a module over the non-commutative structure sheaf of $X$.

This $5$-dimensional gauge theory is our substitute for $M$-theory. For the rest of the paper, I will only use this $5$-dimensional gauge theory, and refrain from referring back to the full $11$-dimensional supergravity. Mathematicians who are not familiar with $M$-theory can take this $5$-dimensional gauge theory as a definition (for the purposes of this paper).  

$11$-dimensional supergravity has two types of extended objects, namely $M2$ and $M5$ branes. In this paper we are exclusively interested in $M2$ branes.   These extended objects have residues in the $5$-dimensional gauge theory. We will discuss  the theory on an $M2$ brane in detail shortly. 


\subsection{Quantizing the $5d$ gauge theory}
Let us now describe the mathematical results we can prove about this $5$-dimensional quantum field theory. 
\begin{theorem*} 
Let $X$ be a complex symplectic surface which is either $\C^2$, a resolution of an ADE singularity, or the cotangent bundle of a Riemann surface. Then, the $5$-dimensional gauge theory on $\R \times X$  can be quantized essentially uniquely in perturbation theory. 
\end{theorem*}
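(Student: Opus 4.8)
The plan is to run the argument inside Costello's homotopical renormalization and Batalin--Vilkovisky framework, where quantizing a classical BV theory means producing a family of effective actions satisfying the quantum master equation, and existence and uniqueness are governed by an obstruction theory. First I would fix a gauge: a Hermitian metric on $X$ together with the flat metric on $\R$ presents the free theory as an elliptic complex with fields $\Omega^\bullet(\R)\,\what\otimes\,\Omega^{0,\bullet}(X)\otimes\gl_N[1]$, differential $\d_{\R}+\dbar_X$, and cubic (Moyal-deformed) Chern--Simons interaction paired against the holomorphic symplectic form $\Omega_X$. The existence of a gauge-fixing operator, the convergence of the relevant Feynman weights, and the existence of a (non-canonical) set of counterterms at each scale all follow from the general machinery once one checks the analytic hypotheses, which are available here because the theory is topological along $\R$ and holomorphic along $X$ (so the propagator is built from the usual heat kernels and has good scaling behaviour).

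With a gauge fixing in hand, the obstruction to quantizing to the next order lives in $H^1$ of the local deformation complex $\op{Def}(X)$, and, once that obstruction vanishes, the choices of quantization at that order form a torsor over $H^0(\op{Def}(X))$. So the two things to prove are: the obstruction classes that actually arise vanish in $H^1$; and $H^0(\op{Def}(X))$ is spanned by the small explicit list of deformations that one expects --- rescaling of $\delta$, rescaling of the noncommutativity parameter $\eps$, and, when $X\ne\C^2$, the choices inherent in the deformation quantization of $\Oo_X$. Proving $H^0$ is of this form is exactly what makes the quantization ``essentially unique.''

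For the obstruction I would first reduce to the one-loop anomaly: since the classical interaction is cubic and the theory carries a weight grading (under rescaling the holomorphic coordinates, refined by ghost number), a local functional with the cohomological degree and weight forced on a higher-loop anomaly has to vanish, so the only possible obstruction is the one-loop term. That one-loop anomaly is a gauge-invariant, cyclically symmetric local functional built from the trace on $\gl_N$; invariance under the translations, rotations and scalings of $\R\times X$ cuts the candidates down to a short list, whose members are either exact in $\op{Def}$ or have a coefficient computable from a single triangle/box diagram that one checks vanishes. Then I would compute $H^0(\op{Def}(X))$ by descent: sheafify $\op{Def}$ over $X$, identify the stalk with a Chevalley--Eilenberg / Gelfand--Fuks complex for the formal dg Lie algebra of local symmetries of the theory (a Loday--Quillen--Tsygan-type step passing from $\gl_N$ to cyclic/single-trace functionals), compute that stalk, and run the resulting sheaf-cohomology spectral sequence, crucially using that $\Omega_X$ trivializes the canonical bundle of $X$.

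The main obstacle is this last cohomology computation. For $X=\C^2$ it is a self-contained formal-geometry calculation, but for a resolution of an ADE singularity or a cotangent bundle the global topology of $X$ feeds into the descent spectral sequence, and one must show that the only new classes it produces are the anticipated ones --- coming from $H^\bullet(X)$ and from the moduli of the quantization of $\Oo_X$ --- so that no exotic nonlocal deformation appears in $H^0$ and, just as importantly, no obstruction appears in $H^1$ (for instance one must rule out contributions sourced by $H^2(X)$ or by the noncommutative deformation itself). Controlling this spectral sequence degree by degree, and matching its $E_\infty$ page with the expected finite list of couplings, is where essentially all the work in the proof lies.
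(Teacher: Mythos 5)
Your overall strategy---Costello's BV/renormalization package, with existence governed by an obstruction class and the residual freedom governed by $H^0$ of a deformation complex of local functionals, cut down by the symmetries of $\R\times X$---is the same route the paper has in mind; note, however, that the paper does not prove this theorem itself but defers it to \cite{Cos16}, saying only that ``the proof consists of calculating the cohomology groups describing obstructions and deformations to quantization,'' so the comparison is against that description and the precise form of the statement in the Quantization section.

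The genuine gap is in your treatment of higher-loop anomalies. You claim that the weight grading (rescaling of the holomorphic coordinates, refined by ghost number) forces every obstruction beyond one loop to vanish, so that only a single one-loop anomaly needs to be examined. This cannot be right as stated: the interaction is not just cubic---expanded in the noncommutativity parameter $c$ it has terms with arbitrarily many derivatives and the theory is badly non-renormalizable---and the paper says explicitly that ``in order to cancel certain higher loop anomalies, we will also need to allow negative powers of $c$.'' In other words, the higher-loop obstruction classes do \emph{not} vanish in the naive space of counterterms; they are cancelled only after enlarging the coefficient ring to series $\sum \hbar^i f_i(c)$ with $f_i \in \C((c))$ and $f_0 \in \C[[c]]$ (on $\C^2$, powers $c^{-r}\hbar^k$ with $r\le k$). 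Since $c$ itself carries weight $2$ under the very $\C^\times$-action you invoke, a $k$-loop anomaly of nonzero weight can be compensated by $c^{-k}$, so your weight argument does not exclude it; indeed the allowed counterterms and ambiguities at $k$ loops are exactly of the form $\hbar^k c^{-k}(\cdots)$. Correspondingly, your description of $H^0(\op{Def})$ as ``rescaling of $\delta$ and $\eps$ plus the moduli of the star-product'' misses this structure: the paper's uniqueness statement is that at each order in $\hbar$ one may add $\hbar^k c^{-k}$ times the Chern--Simons functional and, on $\C^2$, $\hbar^k c^{-k}\, c\dpa{c}$ of the cubic term (for conical $X$, $\op{dim}H^2(X)+1$ terms given by an $\hbar$-dependent star-product and holomorphic volume form), all of which are then absorbed into a reparameterization of $(c,\hbar)$ (or of the quantization data of $\Oo_X$). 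Without building the enlarged ring of couplings into the obstruction theory from the outset, neither existence nor the precise ``essentially unique'' statement of the theorem can be reached, so the core of the argument---how the nontrivial higher-loop anomalies are disposed of---is missing from your proposal.
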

This theorem is proved in \cite{Cos16},  using the method for perturbative renormalization in the BV formalism developed in \cite{Cos11}.   Like any non-commutative gauge theory, our theory is naively non-renormalizable: there are terms in the interaction with arbitrary high number of derivatives.   (We can still treat it as a local theory as long as we work in perturbation theory in the parameter $\eps$.)  We find that the quantum master equation for the theory is strong enough to constrain all possible counter-terms uniquely, up to a redefinition of the coupling constants $\eps$ and $\delta$. 

The theory developed in \cite{Cos11,CosGwi11} allows one to encode the algebra of operators (or observables) of the theory into an object called a factorization algebra. In particular, the operator product expansion in the $\R$ direction, in which the theory is topological, makes the space of local operators of the theory into a homotopy associative algebra, say $\Obs_{\eps,\delta}$.  

At $\delta = 0$, this algebra has a simple description. Let $\Oo_\eps(\C^2)$ denote the non-commutative deformation of the algebra $\C[z_1,z_2]$ of polynomial functions on $\C^2$ where the variables $z_i$ satisfy the commutation relation 
$$
[z_1,z_2] = \eps. 
$$
Then,
$$
\Obs_{\eps,\delta = 0} \simeq C^\ast ( \Oo_\eps(\C^2) \otimes \mf{gl}_N) 
$$
where $C^\ast$ indicates the Chevalley-Eilenberg Lie algebra cochain complex.  One arrives at this description as follows.  First, we note that  locally, every solution to the equations of motion of our $5$-dimensional gauge theory is trivial. However, the trivial solution has a very large Lie algebra of infinitesimal gauge symmetries, namely $\Oo_\eps(\C^2) \otimes \mf{gl}_N$.   Functions of the ghosts for these gauge symmetries,  equipped with the natural BRST differential,  give us the Chevalley-Eilenberg cochain algebra. 

For non-zero $\delta$, the algebra $\Obs_{\eps,\delta}$ will be some interesting deformation of this Chevalley-Eilenberg cochain complex.

\subsection{Koszul duality and quantum double loop algebras}
We have proposed that one can understand the AdS/CFT correspondence at the level of operators as being Koszul duality.  Thus, the Koszul dual of the algebra $\Obs_{\eps,\delta}$ will be an important part of the story.   

At $\delta=0$, the theory is classical and Koszul dual is easily understood. Standard results tell us that, for any Lie algebra $\mf{g}$, the Koszul dual of $C^\ast(\mf{g})$ is the universal enveloping algebra $U(\mf{g})$.  It follows that the Koszul dual of $\Obs_{\eps,0}$ is the universal enveloping algebra $U(\Oo_{\eps}(\C^2) \otimes \mf{gl}_N)$, and that the Koszul dual of $\Obs_{\eps,\delta}$ is some deformation of this algebra.  We will denote this deformation by 
$$
U_{\delta}^{QFT}( \Oo_{\eps}(\C^2) \otimes \mf{gl}_N). 
$$ 
The superscript ``QFT'' is to distinguish this from another deformation of  $U(\Oo_\eps (\C^2) \otimes \mf{gl}_N)$ we will consider momentarily.  An explicit calculation of operator products will show us that $U_{\delta}^{QFT}(\Oo_\eps(\C^2) \otimes \mf{gl}_N)$ is a non-trivial deformation to first order in $\delta$. 
\begin{remark}
A very similar quantum double loop group was introduced by N. Guay \cite{Gua07}.  The algebras studied here are almost certainly isomorphic to those studied by Guay. Indeed, Guay shows that at a certain specialization of his parameters, his algebra is isomorphic to $U(\mf{gl}_N  \otimes \op{Diff}(\C))$. The formula for the first-order deformation away from this specialization coincides with the first-order deformation  obtained here.  Some work remains to prove the algebras are isomorphic: Guay's algebras are not presently known to satisfy the criteria of the uniqueness theorem  I prove for the quantization of $U(\Oo_\eps(\C^2) \otimes \mf{gl}_N)$.
\end{remark}

This follows a pattern studied in \cite{Cos13}.  In that paper, I constructed a $4$-dimensional field theory such that the Koszul dual of the algebra of local operators is the Yangian.  I showed that this $4$-dimensional theory is related to the Yangian in the same way that Chern-Simons theory is related to the quantum group.  The $5$-dimensional Chern-Simons type theory considered in this paper thus bears the same relationship to the two-variable counterpart to the Yangian.

\subsection{$M2$ branes}
The other part of our story concerns $M2$ branes.  As before, let us consider $M$-theory on $TN_{N-1, \eps,\delta} \times \R^2_{-2\delta} \times \R \times \C^2$,  and let us place an $M2$ brane wrapping $\R^2_{-2\delta} \times \R$, and placed at the singular point of $TN_{N-1}$ and some point in $\C^2$.  Since we are in the $\Omega$-background, we find that the $M2$ brane has become effectively a one-dimensional theory.  We will show, using standard techniques from string theory and $M$-theory, that this theory is a holomorphic version of quantum mechanics built from a moduli space of instantons on $\C^2$.  

Let us describe this in more detail.  Let $\mc{M}_{N,K}^\eps$ denote the moduli space of torsion-free sheaves of rank $N$ and charge $K$ on the non-commutative deformation of $\C^2$ where $[z_1,z_2] = \eps$.  These sheaves are also framed at $\infty$.  We can describe $\mc{M}_{N,K}^{\eps}$ as the Nakajima quiver variety associated to the ADHM quiver 
\begin{center}
\begin{tikzpicture}
  \node[circle,draw,thick](NL) at (0,0){$K$};
  \node[rectangle,draw,thick, inner sep=5](NR) at (2,0){$N$};
  \draw[thick](NL.east)--(NR.west);
  
  \draw[thick](NL.north west) arc (16:344:1) ;  

\end{tikzpicture}
\end{center}
where we set the moment map to $\eps \op{Id}_{gl(K)}$ when we form the symplectic quotient defining the Nakajima quiver variety.  

More concretely, $\mc{M}_{N,K}^{\eps}$ is the variety consisting of quadruples
\begin{align*} 
B_1,B_2& \in \mf{gl}(K) \\
I& \in \Hom(\C^N,\C^K) \\
J& \in \Hom(\C^K, \C^N),   
\end{align*}
satisfying the moment map equation
$$
[B_1,B_2] + IJ = \eps \op{Id}_{\mf{gl}(K)} 
$$
and taken up to the natural action of $GL(K)$. 

The algebra of local operators on  the $M2$ brane, in the $\Omega$-background we consider, is a non-commutative deformation $\Oo_{\delta}(\mc{M}_{N,K}^{\eps})$ of the algebra $\Oo(\mc{M}_{N,K}^{\eps})$ of polynomial holomorphic  functions on the variety $\mc{M}_{N,K}^{\eps}$.  This non-commutative deformation is defined by quantum Hamiltonian reduction. 

Generators for the algebra of functions $\Oo_{\delta}(\mc{M}_{N,K}^{\eps})$ are given by the expressions 
$$
\op{Tr}_{\C^N} \left( M  J B_1^k B_2^l I\right) 
$$
for a matrix $M \in \mf{gl}_N$, and integers $k,l \in \Z_{\ge 0}$.  For $K$ sufficiently large these generators become independent and match, in a certain limit, with the generators $M z_1^k z_2^l$ of $U(\mf{gl}_N[z_1,z_2])$.   

From the point of view of physics, these operators give holomorphic functions on the Higgs branch  of the moduli of the $3d$ $N=4$ theory on the $M2$ brane.  From the standard picture of holography, all these functions can be thought of in terms of the open-string states on a $D6$ brane. This is why they are captured by our $5d$ gauge theory, which is the theory on a $D6$ brane in an $\Omega$-background and in the presence of a $B$-field.  Of course, all the states on the $D6$ brane come from states of $M$-theory on the Taub-NUT. In \cite{Cos16} it was shown that variation of the Taub-NUT radius is $Q$-exact for the supersymmetry we are using, so that the $D6$ brane captures all the relevant information.      

Let me mention some literature relevant to this study of the large $K$ limit of the algebra of operators on the $M2$ brane. Our analysis is connected, by a chain of string dualities,  to a somewhat related large $K$ analysis performed by Koroteev and Sciarappa in their interesting series of papers \cite{KorSci16}. In addition, the algebras $\Oo_{\delta}(\mc{M}_{N,K}^{\eps})$ and their representation theory have been studied in detail by I. Losev \cite{Los16} in the context of symplectic duality.

\subsection{}
To prove our formulation of the AdS/CFT correspondence, we need to relate the large $K$ limit of the algebra $\Oo(\mc{M}_{N,K}^{\eps})$ to the algebra $U^{QFT}_{\delta}(\Oo_\eps(\C^2) \otimes \mf{gl}(N))$, which is the Koszul dual of the algebra of observables of the $5$-dimensional non-commutative gauge theory. 

We do this via the introduction of an intermediate algebra.  In section \ref{section_combinatorial_algebra} I construct a purely combinatorial definition of an algebra $U^{comb}_{\delta}(\Oo_\eps(\C^2)\otimes \mf{gl}(N) )$, defined for $\eps \neq 0$, which deforms the universal enveloping algebra $U(\Oo_\eps(\C^2) \otimes \mf{gl}(N) )$.  There are algebra homomorphisms
$$
U^{comb}_{\delta}(\Oo_\eps(\C^2) \otimes \mf{gl}(N) )\to \Oo_\delta(\mc{M}_{N,K}^{\eps})
$$ 
for all $K$.  There is a central parameter $\kappa \in  U^{comb}_{\delta}(\Oo_\eps(\C^2)\otimes \mf{gl}(N) )$, lifting the central element $1 \otimes \op{Id}_{\mf{gl}(N)}$ in the Lie algebra $\Oo_\eps(\C^2) \otimes \mf{gl}(N)$.  This homomorphism sends $\kappa$ to the constant $K$. 

The following result allows us to think of $U^{comb}_{\delta}(\Oo_\eps(\C^2) \otimes \mf{gl}(N))$ as a kind of large-$K$ limit of the algebras $\Oo_{\delta}(\mc{M}_{N,K}^{\eps})$.  
\begin{theorem} 
The map
$$
U^{comb}_{\delta}(\Oo_\eps(\C^2) \otimes \mf{gl}(N) )\to \Oo_\delta(\mc{M}_{N,K}^{\eps})
$$
is surjective for all $K$, and as $K \to \infty$ it identifies $\Oo_{\delta}(\mc{M}_{N,K}^{\eps})$ with the quotient of $U^{comb}_{\delta}(\Oo_\eps(\C^2) \otimes \mf{gl}(N))$ where we set the central parameter $\kappa$ to $K$.
\end{theorem}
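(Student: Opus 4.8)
The plan is threefold: prove the map is onto for every $K$; check that it kills $\kappa - K$, hence descends to $U^{comb}_\delta(\Oo_\eps(\C^2)\otimes\mf{gl}(N))/(\kappa-K)$; and show the descended map becomes an isomorphism onto $\Oo_\delta(\mc{M}^\eps_{N,K})$ in every fixed degree as $K\to\infty$. (For small $K$ the descended map is not injective --- already for $N=1$, $K=2$ one has the relation $\eps\,\op{Tr}_{\C^1}(JB_1^2I)=\op{Tr}_{\C^1}(JB_1I)^2$ --- but any such relation involves only bounded data and disappears once $K$ is large enough.) All three steps reduce to the classical limit $\delta=0$: every algebra here is flat over $\C[\delta]$ (the quantizations by construction, the combinatorial algebra via its PBW basis) and every map is filtered for the filtration by number of generators, so an isomorphism on associated graded in a fixed degree for $K$ large yields the same at $\delta\ne 0$ for $K$ large.

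\emph{Surjectivity and the central element.} At $\delta=0$ the algebra $\Oo(\mc{M}^\eps_{N,K})$ is the ring of $GL(K)$-invariants on $\{[B_1,B_2]+IJ=\eps\,\op{Id}_K\}$, which by the first fundamental theorem for $GL(K)$ is generated by the closed loop operators $\op{Tr}_{\C^K}(w(B_1,B_2))$ and the open path operators $\op{Tr}_{\C^N}(M_1Jw_1I\,M_2Jw_2I\cdots)$. Now rewrite the moment-map relation as $B_2B_1=B_1B_2-\eps\,\op{Id}_K+IJ$; this is the one point where $\eps\ne 0$ is essential. Normal-ordering any word by this rule either moves it toward $B_1^kB_2^l$ or inserts an $IJ$, which splits an open path into two strictly shorter ones --- and, via $\op{Tr}_{\C^K}(\cdots IJ\cdots)=\op{Tr}_{\C^N}(J\cdots I)$, converts a loop carrying an $IJ$ into an open path. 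Induction on word length writes every open path as a polynomial in the basic generators $\op{Tr}_{\C^N}(MJB_1^kB_2^lI)$. A pure loop reduces, the same way, to $\op{Tr}_{\C^K}(B_1^aB_2^b)$; one further use of $\eps\,\op{Id}_K=[B_1,B_2]+IJ$ with cyclicity gives $(b+1)\,\eps\,\op{Tr}_{\C^K}(B_1^aB_2^b)=(\text{open operators})+(\text{loops of smaller degree})$, the a priori leading term $\op{Tr}_{\C^K}(B_1^{a+1}B_2^{b+1})$ having cancelled; as $\eps(b+1)\ne 0$, induction on degree finishes. So $\Oo(\mc{M}^\eps_{N,K})$ is generated by the images of the generators of $U^{comb}_\delta$, the map is onto, and the quantum case follows since the associated graded of the map is this classical surjection. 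Moreover $1\otimes\op{Id}_{\mf{gl}(N)}$ maps to $\op{Tr}_{\C^N}(JI)=\op{Tr}_{\C^K}(IJ)=\op{Tr}_{\C^K}(\eps\,\op{Id}_K-[B_1,B_2])=\eps K$, a scalar, so the map kills $\kappa-K$ (with the normalization of $\kappa$ recorded above) and descends.

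\emph{The large-$K$ isomorphism.} Fix $d,m$ and consider the step of the PBW-type filtration spanned by products of at most $d$ of the basic generators $\op{Tr}_{\C^N}(MJB_1^kB_2^lI)$ with $k+l\le m$; I claim the descended map is injective there once $K$ is large. Work at $\delta=0$: the corresponding subspace of $U^{comb}_\delta/(\kappa-K)$ is a fixed finite-dimensional space whose dimension does not depend on $K$ (a PBW count for $\Sym(\Oo_\eps(\C^2)\otimes\mf{gl}(N))/(\kappa-K)$, routine once $U^{comb}_\delta$ is constructed), so by surjectivity it is enough to show there are no relations among the basic generators in this range beyond those already present in $U^{comb}_\delta$ (the moment-map consequences and the elementary trace identities like cyclicity). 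By the second fundamental theorem for matrix invariants (Procesi, Razmyslov), every relation among traces of $K\times K$ matrices is a consequence of the fundamental trace identity, a polarization of Cayley--Hamilton that is multilinear in $K+1$ matrices. Substituting $\op{Tr}_{\C^N}(MJB_1^kB_2^lI)$ and counting matrix degrees, a relation in the above $(d,m)$-range has matrix degree $\le d(m+3)$, hence can be a consequence of the fundamental identity only if $K+1\le d(m+3)$. Thus for $K\ge d(m+3)$ there are no new relations, the map is injective in this degree, and with surjectivity it is an isomorphism. Letting $d,m\to\infty$ gives the asserted identification of $\Oo_\delta(\mc{M}^\eps_{N,K})$, in the $K\to\infty$ limit, with $U^{comb}_\delta/(\kappa-K)$.

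The main obstacle is the last step. One must make the degree bookkeeping --- through the moment-map reduction to basic generators and through the polarization of Cayley--Hamilton --- precise enough to produce an honest threshold like $K\ge d(m+3)$, and, more seriously, one must check that passing to $\delta\ne 0$ introduces no new relations of bounded degree for $K$ large: equivalently, that $U^{comb}_\delta$ really is the flat quantization through which all the $\Oo_\delta(\mc{M}^\eps_{N,K})$ factor, with kernel exactly $(\kappa-K)$ in bounded degree once $K$ is large. This is why the combinatorial algebra $U^{comb}_\delta$ must be defined carefully first, as in the next section; the argument above then applies to it.
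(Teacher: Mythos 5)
Your three-step outline — surjectivity, factoring through $\kappa - K$, degree-wise injectivity for $K$ large — captures the content of the theorem, and the surjectivity argument via normal-ordering under the moment-map relation is a correct and pleasantly explicit version of what the paper dispatches as ``an exercise in invariant theory.''  The genuine divergence from the paper is in the injectivity step, and the comparison is worth making.  The paper's route (appendix~\ref{appendix_flatness}) is to restrict to the Zariski-open chart of $\mc{M}^{\eps}_{N,K}$ where $B_1$ is regular semisimple and $I_1$ is normalized, identify the chart with the cotangent bundle of an affine quotient, and then show the leading terms of the generators $\op{Tr}_{\C^K}(IE_{\alpha\beta}JB_1^kB_2^l)$ are power sums, which satisfy no relation of degree $\le K$.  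You instead invoke the second fundamental theorem of matrix invariant theory.  That works in principle, but be careful: the SFT you need is the one for the mixed representation $\mf{gl}(K)^{\oplus 2}\oplus(\C^K)^{\oplus N}\oplus((\C^K)^\ast)^{\oplus N}$, whose generating relations are not only polarized Cayley--Hamilton but also the $(K{+}1)$-fold alternation (determinantal) identities coming from too many vectors or covectors.  Both families start in degree $K{+}1$, so your threshold survives, but you should cite the mixed version, not just Procesi--Razmyslov for pure trace identities.  You also need to match, degree by degree, the $GL(K)$-invariant generators of the moment-map ideal with the images of relation (M); this is exactly the ``ideal matching'' step that the paper's proof also waves at, so neither argument is complete without it.

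The more substantial gap is flatness.  You reduce the whole theorem to the classical limit $\delta = 0$ by asserting that all the algebras are flat over $\C[[\delta]]$, and you say $\Oo_\delta(\mc{M}^\eps_{N,K})$ is flat ``by construction.''  It is not: quantum Hamiltonian reduction can and does fail to be flat in general, and the flatness of $\Oo_\hbar(\mc{M}^c_{N,K})$ is a real theorem (Lemma~\ref{lemma_flat_adhm}), proved in the paper using the \emph{same} regular-semisimple chart that gives the power-sum argument — that chart identifies the quantized ring on the open locus with twisted differential operators on an honest smooth affine quotient, which are visibly flat, and injectivity of the restriction then propagates flatness.  So if you take the SFT route to algebraic independence, you have not saved the chart argument: you still need it (or a substitute) for flatness, at which point the power-sum computation comes almost for free.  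Separately, note that the paper's proof of the large-$K$ proposition in section~\ref{section_adhm} sidesteps flatness of the reduced algebra entirely by working with the uniform-in-$K$ invariants of $T^\ast V$ (whose flatness is trivial, being a Moyal product on a polynomial ring) and matching two-sided ideals there; your route, which goes through the quotient first and reduces to $\delta=0$, is the one that requires the flatness input.

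Finally, a normalization point: you compute that $1\otimes\op{Id}_{\mf{gl}_N}$ maps to $\op{Tr}_{\C^N}(JI)=\op{Tr}_{\C^K}(IJ)=\eps K$, whereas the paper's $\kappa$ is the empty disc, which under the ADHM map is literally the sum of $K$-labels on an unmarked boundary circle and hence maps to $K$.  These generate the same central subalgebra so the conclusion is unaffected, but the bookkeeping in your last paragraph should be done with whichever element you actually call $\kappa$.
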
 

\subsection{}
This theorem reduces the proof of our formulation of the AdS/CFT correspondence to finding a relationship between the two different deformations $U^{comb}_{\delta}(\Oo_\eps(\C^2) \otimes \mf{gl}(N))$ and $U^{QFT}_{\delta}(\Oo_\eps(\C^2) \otimes \mf{gl}(N))$  of the universal enveloping algebra $U(\Oo_\eps(\C^2) \otimes \mf{gl}(N))$. This turns out to be the hardest part of the argument.  

The main difficulty in finding such an isomorphism is that, while the combinatorially defined algebra is completely explicit, it is very hard to compute exactly with the algebra defined from quantum field theory.  This is because the structure constants of this algebra are expressed in terms of Feynman diagrams which can have an arbitrary number of loops.   

We thus have to resort to more abstract methods to relate these two algebras.  The idea is to try to show that there is a unique, or at least a small number, of deformations of the algebra $U(\Oo_\eps(\C^2) \otimes \mf{gl}(N))$, and use this to show that the two algebras must  be isomorphic.  In physics terminology, this kind of argument is related to the ``bootstrap'' method, which uses associativity properties of the OPE of a conformal field theory to constrain the theory. 

Deformations of $U(\Oo_\eps(\C^2) \otimes \mf{gl}(N))$ are described by Hochschild cohomology of this algebra.  Unfortunately, the Hochschild cohomology of this algebra is very hard to compute, at least for small $N$. For example, when $N = 1$, it is closely related to the Lie algebra cohomology of the Lie algebra $\Oo_\eps(\C^2)$, the computation of which is a famously difficult problem \cite{Fuk12}.   When $N$ is large, however, these Hochschild cohomology groups are more tractable, and can be computed using a version of the Loday-Quillen-Tsygan \cite{LodQui84,Tsy83} theorem.   

We solve the problem of constraining deformations of $U(\Oo_\eps(\C^2) \otimes \mf{gl}(N))$ for all values of $N$ by studying the corresponding problem when the Lie algebra $\mf{gl}(N)$ is replaced by the super Lie algebra $\mf{gl}(N+R \mid R)$.  Any deformation of $U(\Oo_\eps(\C^2) \otimes \mf{gl}(N+R \mid R) )$ gives rise to one of $U(\Oo_\eps(\C^2) \otimes \mf{gl}(N+R-1\mid R-1))$, using a trick I learned in \cite{MikWit14}. If we  choose an element $Q \in \mf{gl}(N+R \mid R)$ of rank $(0\mid 1)$, then $Q^2 = 0$ and the $Q$-cohomology of $\mf{gl}(N+R \mid R)$ is $\mf{gl}(N+R-1 \mid R-1)$.  Similarly, the $Q$-cohomology of any deformation of $U(\Oo_\eps(\C^2) \otimes \mf{gl}(N+R \mid R) )$ will be a deformation of $U(\Oo_\eps(\C^2) \otimes \mf{gl}(N+R -1 \mid R-1) )$.  

One can thus ask for compatible sequences of deformations of $U(\Oo_\eps(\C^2)\otimes \mf{gl}(N+R \mid R) )$ for all $R$.   The group describing such deformations is the large-$R$ limit of the Hochschild cohomology groups of $U(\Oo_\eps (\C^2) \otimes \mf{gl}(N+R \mid R) )$. We can compute these Hochschild cohomology  groups completely explicitly in the large $R$, using a version of the Loday-Quillen-Tsygan theorem.  Before stating the theorem, note that since $U(\Oo_\eps(\C^2) \otimes \mf{gl}(N+R \mid R))$ has a central element $\kappa$,  we can multiply every deformation by any power of $\kappa$. Thus, the space of first order deformations is a module for $\C[\kappa]$.  
\begin{theorem*}
The uniform-in-$R$ second Hochschild cohomology group of $U(\Oo_\eps(\C^2) \otimes \mf{gl}(N+R \mid R))$ is the rank $1$ module $\C[\kappa]$ for the ring $\C[\kappa]$. 
\end{theorem*}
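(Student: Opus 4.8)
The plan is to translate the statement into Lie algebra cohomology and then compute it by a super-refinement of the Loday--Quillen--Tsygan theorem. Write $A = \Oo_\eps(\C^2)$, which for $\eps \neq 0$ is a form of the Weyl algebra $\op{Diff}(\C)$, and $\mf{g}_R = A \otimes \mf{gl}(N+R \mid R)$. The first step is the standard identification, valid for the enveloping algebra of any (super) Lie algebra $\mf{h}$, of Hochschild cohomology with Lie algebra cohomology with adjoint coefficients,
$$
HH^\bullet(U\mf{h}) \;\cong\; H^\bullet\big(\mf{h};\, (U\mf{h})^{\op{ad}}\big),
$$
where $x$ acts on $u$ by $xu - ux$. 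So I must compute $H^2(\mf{g}_R; U\mf{g}_R)$ compatibly in $R$: the transition maps $R \mapsto R-1$ are the ones induced by passing to $Q$-cohomology (the trick recalled above), and the ``uniform-in-$R$'' group is the inverse limit of this tower, which by a stability estimate equals the common value of $H^2(\mf{g}_R; U\mf{g}_R)$ for $R$ large.

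To handle the coefficients I would filter $U\mf{g}_R$ by the PBW filtration: this is a filtration by $\mf{g}_R$-submodules with $\mathrm{gr}\, U\mf{g}_R = \bigoplus_{p \ge 0}\Sym^p(\mf{g}_R)^{\op{ad}}$, yielding a spectral sequence with first page $\bigoplus_p H^\bullet(\mf{g}_R; \Sym^p\mf{g}_R)$ converging to $H^\bullet(\mf{g}_R; U\mf{g}_R)$. Each $\Sym^p(\mf{g}_R)$ is built, as a module over $A \otimes \mf{gl}(N+R\mid R)$, from the defining and adjoint representations of $\mf{gl}(N+R\mid R)$ with coefficients in tensor powers of $A$, and this is precisely the input to a super-version of Loday--Quillen--Tsygan: as $R \to \infty$ the cohomology of $A \otimes \mf{gl}(N+R\mid R)$ with trivial coefficients becomes the free graded-commutative algebra on the (reduced) cyclic homology $HC_{\bullet-1}(A)$ --- in particular it is independent of $N$ --- and with coefficients in a ``matrix'' module attached to an $A$-bimodule $M$ one acquires an extra tensor factor of $HH_\bullet(A;M)$. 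For $A = \op{Diff}(\C)$ all of this is explicit and small: $HH_\bullet(A) = \C$ concentrated in degree $2$; $HC_\bullet(A) = \C$ in each even degree $\ge 2$, with the higher classes obtained from the degree-two class by Connes' periodicity operator $S$; and $HP_\bullet(A)$ is one-dimensional.

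Running this through, the contributions to total cohomological degree $2$ come only from the pieces $\Sym^p\mf{g}_R$ with $p \ge 1$: the $p=0$, trivial-coefficient contribution vanishes in degree $2$ because the reduced cohomology of $\mf{gl}(A)$ begins in degree $3$. One finds a single line in each such piece, and the central element $\kappa = 1 \otimes \op{Id}$, which lives in the $\Sym^1$ layer, sends the $\Sym^p$-line to the $\Sym^{p+1}$-line; under the Loday--Quillen--Tsygan dictionary this action of $\kappa$ is implemented by the periodicity operator $S$, so the degree-$2$ part of the first page is already a free rank-one $\C[\kappa]$-module. It then remains to check that the spectral sequence degenerates in total degree $2$ --- a matter of degree bookkeeping on the first page --- after which one concludes $HH^2_{\mathrm{stable}} \cong \C[\kappa]$, free of rank one, the generator being represented by the first-order deformation produced by the quantum field theory (equivalently combinatorial) construction.

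I expect the main obstacle to be the middle step: setting up and proving the super Loday--Quillen--Tsygan theorem in the exact form required --- with adjoint, and more generally $\Sym^p$, coefficients, and for the $\mf{gl}(N+R\mid R)$ stabilisation rather than the classical $\mf{gl}_\infty$ --- and then controlling the PBW spectral sequence for the infinite-dimensional Lie algebra $\mf{g}_R$ (convergence, and the collapse in degree $2$). Most delicate is establishing that the resulting $\C[\kappa]$-module is genuinely \emph{free of rank one}: that the degree-two generator is annihilated by no nonzero polynomial in $\kappa$, and that no further generators appear in higher $\kappa$-degree. This is exactly where the precise structure of the cyclic homology of the Weyl algebra --- that $HP_\bullet(\op{Diff}(\C))$ is one-dimensional and that $\kappa$ acts as Connes' periodicity operator --- must enter.
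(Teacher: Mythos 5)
Your plan follows the same overall route as the paper: reduce $HH^\bullet(U\mf{h})$ to Lie algebra cohomology with adjoint coefficients, stabilise over $R$ via the $\mf{gl}(N+R\mid R)$ tower, invoke a super version of Loday--Quillen--Tsygan, and feed in the (correct) cyclic and Hochschild homology of the Weyl algebra. Two aspects, however, need correction or elaboration.

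First, the convergence of your PBW spectral sequence is a non-issue, but for a reason you did not exploit: the symmetrisation map $\Sym^\ast\mf{g}\to U\mf{g}$ is a $\mf{g}$-equivariant isomorphism, so $C^\ast(\mf{g},U\mf{g})\cong C^\ast(\mf{g},\Sym^\ast\mf{g})$ at the cochain level and your spectral sequence collapses at $E_1$ with no further argument. The paper uses this to replace the spectral sequence by the single identification $CH^\ast(U\mf{g})\simeq C^\ast(\mf{g}\oplus\mf{g}^\ast[-1])$, which is then plugged into super-LQT in one step.

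Second, and more seriously, your claim that the $\kappa$-action ``is implemented by Connes' periodicity operator $S$'' is a misidentification, and this is precisely the step on which freeness of the $\C[\kappa]$-module hinges. In the super-LQT output $\Sym^\ast\bigl(HC^\ast(B\oplus B^\vee[-1])[-1]\bigr)$, the element $\kappa$ is the image of $1\in HH^0(B)$ under the inclusion of the ``one $B^\vee$'' piece into cyclic cohomology; it sits in degree $0$ in the module $M=HC^\ast(B\oplus B^\vee[-1])[-1]$ and acts by \emph{multiplication in the symmetric algebra}, taking $\Sym^n M\to\Sym^{n+1}M$ while preserving total cohomological degree. That is a different operation from $S$, which acts \emph{inside} cyclic cohomology and shifts degree by $-2$. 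You are conflating the grading by ``number of $B^\vee$ factors'' (your $\Sym^p$-coefficient grading under PBW, which shifts degree because $(B^\vee)^{\otimes^{\mathbb L}_B k}\simeq B^\vee[2(k-1)]$) with the grading by ``number of traces'' (the LQT symmetric power). The paper computes $M$ explicitly modulo $c$-torsion to get $M^0\cong\C[[c]]$ spanned by $\kappa$, $M^1=0$, $M^2\cong\C[[c]]$, and then reads off $H^2\Sym^\ast M=\C[\kappa]\cdot\xi$ for $\xi$ the $M^2$-generator, directly from the structure of the symmetric algebra. To repair your argument, replace the $S$-identification with this explicit bookkeeping in $\Sym^\ast M$: freeness over $\C[\kappa]$ follows because the degree-$2$ part of $\Sym^\ast M$ is exactly $\bigoplus_{n\ge 0}\kappa^n\cdot M^2$ and nothing else in low degree contributes. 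Finally, note that the paper's actual theorem concerns the tangent complex of a \emph{relative} moduli problem tracking a map from $U(\mf{sl}(N+R\mid R))$, giving a long exact sequence with a correction term built from $V$; it happens that the correction vanishes in the degree relevant for $HH^2$, so your absolute computation lands on the same answer, but this coincidence should be remarked on rather than silently assumed.
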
  
This is the best result one can hope for: up to multiplying by powers of $\kappa$ there is only one first-order deformation that is defined uniformly in $R$.

We show that both our combinatorially defined algebra $U^{comb}_{\delta}(\Oo_\eps(\C^2)\otimes \mf{gl}(N))$ and the algebra $U^{QFT}_{\delta}(\Oo_\eps(\C^2)\otimes \mf{gl}(N))$  defined using quantum field theory can also be defined if we replace $\mf{gl}(N)$ by $\mf{gl}(N+R \mid R)$, and they are related for different values of $R$ in the way described above.  In each algebra, let us fix $\eps$ and view the algebra as a family over $\C[[\delta]]$.  The uniqueness statement for uniform-in-$R$ deformations thus leads to the following.
\begin{corollary*}
For all $N$, there is an isomorphism of associative algebras
$$
\Phi : U^{QFT}_{\delta_{QFT}}(\Oo_\eps(\C^2)\otimes \mf{gl}(N)) \iso U^{comb}_{\delta_{comb}}(\Oo_\eps(\C^2)\otimes \mf{gl}(N)).
$$
This isomorphism sends
$$
\delta_{QFT} \mapsto \lambda \delta_{comb} + f_2(\kappa) \delta_{comb}^2 + f_3(\kappa) \delta_{comb}^3 + \dots
$$
where $f_i(\kappa)$ are polynomials in the central element $\kappa$ of degree at most $i-1$, and $\lambda$ is a non-zero constant which is determined by a Feynman diagram computation to be
$$
\lambda = 2^{5} \pi^{2}. 
$$ 
\end{corollary*}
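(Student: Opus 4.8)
The plan is to deduce the Corollary from the Hochschild cohomology theorem above by a rigidity (``bootstrap'') argument, and then to extract by hand the one numerical invariant $\lambda$ that such an argument cannot see. First I would put the two algebras on a common footing: fix $\eps \neq 0$; then both $U^{comb}_\delta(\Oo_\eps(\C^2) \otimes \gl(N+R \mid R))$ and $U^{QFT}_\delta(\Oo_\eps(\C^2) \otimes \gl(N+R \mid R))$ are flat associative deformations over $\C[[\delta]]$ of $U(\Oo_\eps(\C^2) \otimes \gl(N+R \mid R))$ --- by construction for the combinatorial algebra, and, for the field-theory algebra, because the $5$d theory quantizes essentially uniquely and $\Obs_{\eps,0} \simeq C^\ast(\Oo_\eps(\C^2) \otimes \gl(N+R \mid R))$ has Koszul dual $U(\Oo_\eps(\C^2) \otimes \gl(N+R \mid R))$. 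Both families are defined uniformly in $R$ and are intertwined, for varying $R$, by the $Q$-cohomology operation attached to a rank $(0 \mid 1)$ element $Q$; and both carry a weight grading (coming from the scaling symmetry of $\R \times \C^2$) under which $\delta$, $\eps$ and the central element $\kappa$ have definite weights, which is what will pin down the $\kappa$-dependence below.

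Next I would match the two uniform-in-$R$ families order by order in $\delta$. At first order, each deformation class sits in the uniform-in-$R$ group $H^2 = \C[\kappa]$, a free rank-one $\C[\kappa]$-module on a generator $c_0$; the weight grading forces this class to have $\kappa$-weight zero, hence to be a constant multiple of $c_0$. An explicit computation of one leading structure constant on each side --- a single first-order-in-$\delta$ operator product for the field theory, and the corresponding explicit structure constant for the combinatorial algebra --- shows both constants are nonzero, so after rescaling $\delta_{comb}$ by their ratio the two algebras agree modulo $\delta^2$. For the inductive step, suppose a substitution $\delta_{QFT} \mapsto \lambda \delta_{comb} + f_2(\kappa)\delta_{comb}^2 + \dots + f_{n-1}(\kappa)\delta_{comb}^{n-1}$ together with an algebra isomorphism matches the two deformations modulo $\delta^n$; the discrepancy at order $n$ is a uniform-in-$R$ Hochschild $2$-cocycle, its class equals $g(\kappa)c_0$ for a polynomial $g$ whose degree the grading bounds by $n-1$, and changing the substitution by $f_n(\kappa)\delta_{comb}^n$ with $f_n = -g/\lambda$ (legitimate since $\kappa$ is central) cancels it, while the leftover coboundary is absorbed into the order-$n$ term of the isomorphism. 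Note one never needs $H^3$: both deformations are given to all orders, so only the matching, which $H^2$ governs, is at issue. In the limit this produces an isomorphism of the uniform families with $\delta_{QFT} \mapsto \lambda \delta_{comb} + \sum_{i \ge 2} f_i(\kappa)\delta_{comb}^i$, with $f_i$ of degree at most $i-1$ in $\kappa$. Specializing to each $R$ and applying $Q$-cohomology $R$ times --- which both algebras respect, and which carries $\gl(N+R \mid R)$ to $\gl(N)$ --- descends this to the stated isomorphism $\Phi$ for every $N$ (take $R$ large enough that the cohomology has stabilized).

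Finally I would determine $\lambda$, the one ingredient the rigidity argument leaves open: it is the ratio of the normalizations of the unique first-order deformation in the two presentations. The combinatorial normalization is read off directly from the definition of $U^{comb}_\delta$. The field-theory normalization is the coefficient of $\delta$ in a prescribed operator product of two local operators of the $5$d non-commutative Chern--Simons theory on $\R \times \C^2$ --- a single finite configuration-space integral built from the $5$d propagator --- and evaluating it and comparing with the combinatorial side gives $\lambda = 2^5 \pi^2$.

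\textbf{The main obstacle.} With the Hochschild cohomology theorem in hand the rigidity step is essentially formal; the real work lies in (i) checking that the field-theory algebra genuinely is a flat, uniform-in-$R$ deformation with a \emph{nonzero} first-order term, which requires an honest leading-order Feynman computation, and (ii) evaluating the single diagram that fixes $\lambda = 2^5 \pi^2$. The higher coefficients $f_i(\kappa)$ for $i \ge 2$ are obtained only as an existence statement; computing them would require diagrams of unbounded loop order, which is why they are left undetermined.
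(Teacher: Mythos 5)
Your proposal follows the paper's strategy closely: cast both algebras as uniform-in-$R$ families of deformations via the $\gl(N+R\mid R)$ trick, identify the space of first-order deformations with the stable Hochschild group $\C[\kappa]$, match the two families order by order, and extract the normalization $\lambda$ from a single Feynman integral. Your observation that one never needs the obstruction group (the $H^1$ of the tangent complex, which the paper computes) is correct and a small economy over the paper's framing, since both deformations are given and only the uniqueness half of obstruction theory enters.

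However, there is a genuine gap in how you obtain the bound $\deg f_i \le i-1$. You assert that ``the grading bounds by $n-1$'' the $\kappa$-degree of the order-$n$ discrepancy, invoking the weight grading from the scaling symmetry of $\R\times\C^2$. But under that $\C^\times$-action the central element $\kappa = 1\otimes\op{Id}_{\gl_N}$ has weight zero, so any polynomial in $\kappa$ is weight-homogeneous of weight zero and the grading imposes no constraint whatever on its degree. The paper is explicit about this point: ``The fact that the polynomials $f_i(\kappa)$ are of order at most $i-1$ is \emph{not} a formal consequence of the calculation of Hochschild cohomology. Instead, it follows from a more refined analysis of the algebras we are considering.'' Concretely, the paper's argument is that in \emph{each} of the two algebras the commutator of the low-lying generators $E_{\alpha\beta}z_1^k z_2^l$, when expanded to order $\hbar^i$, involves at most the $(i-1)$st power of $\kappa$; this is checked by a diagrammatic count (of discs/boundaries in the combinatorial algebra, and of loop and trace structure in the Feynman expansion for the QFT algebra). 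Without this input, the rigidity argument alone gives only that $f_i$ is \emph{some} polynomial in $\kappa$, with no degree bound, so the stated form of the reparameterization is not established by your argument.

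A smaller point worth flagging: at first order you argue that ``an explicit computation of one leading structure constant on each side\ldots shows both constants are nonzero, so after rescaling\ldots the two algebras agree modulo $\delta^2$.'' A nonzero structure constant in a chosen cochain representative does not by itself show the deformation class is nonzero in $HH^2$; it might still be a coboundary. The paper handles this in the appendix by exhibiting an explicit cyclic cycle that pairs nontrivially with the first-order cocycle, which is what actually certifies nontriviality and identifies both classes as the same generator of $\C[\kappa]$ up to the scalar $\lambda$.
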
 
This corollary thus tells us that the two algebras are isomorphic after performing some change of coordinates on the central parameters $\delta$ and $\kappa$. The fact that the polynomials $f_i(\kappa)$ are of order at most $i-1$ is not a formal consequence of the calculation of Hochschild cohomology.  Instead, it follows from a more refined analysis of the algebras we are considering. 

Since the algebra $U_{\delta}^{comb}(\Oo_\eps(\C^2) \otimes \mf{gl}(N))$ maps to the algebras $\Oo_{\delta}(\mc{M}^\eps_{N,K})$ for all $K$, in such a way that the central parameter $\kappa$ gets sent to $K$, we deduce from this that there are algebra homomorphisms
$$
 U^{QFT}_{\delta}(\Oo_\eps(\C^2)\otimes \mf{gl}(N)) \to  \Oo_{F(\delta,K)}(\mc{M}^\eps_{N,K}) 
$$
where
$$
F(\delta,K) = 2^{-5} \pi^{-2}  \delta + \delta^2 g_2(K) + \dots
$$
and the $g_i$ are polynomials in $K$ of degree at most $i-1$.   These homomorphisms are all surjective, and as $K \to \infty$ the kernel of this homomorphism becomes the ideal which sets the central parameter $\Phi^{-1}(\kappa)$ to $K$. 

To translate into physics terminology, what this shows is that the algebra we find from our non-commutative gauge theory is a scaling limit of the algebra of operators of the theory on a stack of $K$ $M2$ branes, where the coupling constant of the theory on the $M2$ branes depends on $K$ and  $\delta$ in a non-trivial way.  

\begin{remark}
There is an ambiguity in the quantization of the $5$-dimensional gauge theory, but two different quantizations  are related by a reparameterization of $\delta$.  This tells us that, when  we set $\kappa$ to zero, we can choose a quantization of the $5$-dimensional gauge theory so that there is an isomorphism of algebras
$$
U^{QFT}_{2^{-5} \pi^{-2} \delta} (\Oo_\eps(\C^2) (\mf{gl}_N) / (\kappa) \iso U^{comb}_{\delta}(\Oo_\eps(\C^2) \otimes \mf{gl}_N )/(\kappa) 
$$
with no unknown change of coordinates. Since any quantization of the field theory is as good as any other (they only differ by how we coordinatize the space of coupling constants), we can always quantize our field theory so that this holds.  

When we do not set $\kappa = 0$, we can in the same way choose the quantization of our field theory so that the constant term in each polynomial $f_i(\kappa)$ is zero.  It is not possible to adjust the remaining terms in these polynomials by different choices of quantization of the field theory.  

It is natural to conjecture that these polynomials are all zero. I have no idea how one could attempt to prove this conjecture except by brute-force computations in the quantum field theory.  
\end{remark}

\subsection { Categorified Donaldson-Thomas invariants and $5$-dimensional gauge theory}
\label{subsection_DT_intro}
Now that I have outlined the technical results contained in this paper, let me discuss someconjectural links between the algebras considered here and enumerative geometry.

In \cite{GopVaf98}, Gopakumar and Vafa analyzed the Gromov-Witten invariants of a Calabi-Yau three-fold by considering the $5$-dimensional field theory obtained from $M$-theory compactified on the Calabi-Yau.    Their analysis involved BPS objects in this $5$-dimensional theory. Their work led to many important developments in mathematics, including the introduction of Donaldson-Thomas and related invariants. 

The $5$-dimensional theory considered here is, according to the analysis of \cite{Cos16}, a supersymmetric twist of the theory obtained from an compactifying $M$-theory on certain toric Calabi-Yaus, in an $\Omega$-background (that is, equivariantly).  The $5$-dimensional theory has parameters corresponding to the two equivariant parameters in the toric Calabi-Yau.   It is thus natural to expect that the $5$-dimensional theory associated to a toric Calabi-Yau ``controls'' the equivariant enumerative geometry of the same Calabi-Yau.

In Gopakumar-Vafa's analysis, curve-counting invariants of the Calabi-Yau  arise from $M$-theory compactified on a circle.  If we did not put $M$-theory on a circle, and instead considered the Hilbert space of states, we would find categorified versions of curve-counting invariants, such as categorified Donaldson-Thomas invariants.  We would thus expect that the $5$-dimensional field theories studied here, and the corresponding two-variable quantum groups, should be related to categorified DT invariants.    In this section I will state a conjecture along these lines.

Let 
$$
X_N = \til{\C^2 / \Z_N}
$$
be the resolution of the $A_{N-1}$ surface singularity.   Consider the equivariant, categorified Donaldson-Thomas theory of $\C \times X_N$.  This is the equivariant cohomology of the Donaldson-Thomas moduli of ideal sheaves on $\C \times X_N$ with coefficients in the appropriate sheaf of vanishing cycles.  We work equivariantly with respect to the action of $\C^\times \times \C^\times$ which preserves the Calabi-Yau volume form on $\C \times X_N$.  We let $\delta$ denote the equivariant parameter associated to the $\C^\times$ action which rotates $\C$ and scales the symplectic form on $X_N$, and $\eps$ the equivariant parameter associated to the $\C^\times$ action which preserves $\C$ and fixes the symplectic form on $X_N$. We invert the equivariant parameter $\eps$ but not $\delta$. 

The moduli of ideal sheaves on $\C \times X_N$ has components labelled by the Chern classes of the ideal sheaf.    We denote the equivariant categorified Donaldson-Thomas invariants of $\C \times X_N$, summing over all components, by $\mc{H}(\C \times X_N)$.  This is a module over the ring $\C[[\delta]]((\eps))$ of series in $\eps,\delta$. 

The module $\mc{H}(\C \times X_N)$ is $\Z^N$-graded, where one of the $\Z$-gradings corresponds to the number of points and the remaining $N-1$ $\Z$-gradings correspond to curve classes in $H_2(X_N,\Z) = \Z^{N-1}$.  We interpret this grading as a $\C^\times \times (\C^\times)^{N-1}$ action on $\mc{H}(\C \times X_N)$. We identify $(\C^\times)^{N-1}$ with the maximal torus $H \subset SL_N$. 

The algebra $U_\delta(\Oo_\eps(\C^2) \otimes \mf{gl}_N)$ is an algebra over the ring $\C[[\delta]]((\eps))$.    This algebra also has an action of $\C^\times \times H$, where $\C^\times$ acts as a symplectic symmetry of $\C^2$, and the maximal torus $H \subset SL_N$ acts on $\mf{gl}_N$ by the adjoint action.
\begin{conjecture}
\begin{enumerate}
\item
The algebra 
$$U_\delta(\Oo_\eps(\C^2) \otimes \mf{gl}_N)/(\kappa = 0)$$ acts on the equivariant categorified Donaldson-Thomas space $\mc{H}(\C \times X_N)$ by correspondences.  
\item This action takes place in the category of $\C[[\delta]]((\eps))$-modules.  
\item The action is compatible with the action of the torus $\C^\times \times H$ on both sides.
\item There is an embedding of Kontsevich-Soibelman's cohomological Hall algebra \cite{KonSoi10} associated to $\C \times X_N$ into the algebra $U_\delta(\Oo_\eps(\C^2) \otimes \mf{gl}_N)$, in such a way that the action of  $U_\delta(\Oo_\eps(\C^2) \otimes \mf{gl}_N)$ on the categorified Donaldson-Thomas invariants restricts to the natural action of the cohomological Hall algebra on the same space.  
\end{enumerate}
\end{conjecture}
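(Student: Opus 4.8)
\emph{Proof strategy.} The plan is to take the physical origin of the conjecture (Section~\ref{subsection_DT_intro}) as a guide and then make it rigorous. In the $M$-theory picture $\mc{H}(\C\times X_N)$ is the Hilbert space of the relevant twist of $M$-theory on $\C\times X_N$ in the $\Omega$-background, and $U_\delta(\Oo_\eps(\C^2)\otimes\mf{gl}_N)$ is the algebra of operators on an $M2$ brane wrapping $\C\subset\C\times X_N$; an $M2$ brane brought in from infinity modifies the brane charge of a state, which mathematically is an action by correspondences. The first step is to replace the critical/vanishing-cycle definition of $\mc{H}(\C\times X_N)$ by a smooth model. Since the threefold is a product with a $\C$-factor, the $3d$ Donaldson--Thomas moduli problem and its vanishing-cycle sheaf should admit a dimensional reduction along $\C$: the critical locus becomes the moduli of sheaves on $X_N$ that extend $\C^\times$-equivariantly over $\C\times X_N$, and $\mc{H}(\C\times X_N)$ becomes the equivariant Borel--Moore homology of a disjoint union of Nakajima quiver varieties for the affine $\widehat{A}_{N-1}$ (McKay) quiver. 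Under the McKay equivalence $D^b_{\Z_N}(\C^2)\simeq D^b(X_N)$ these are variants of the ADHM spaces $\mc{M}^\eps_{N,K}$ of this paper: $N$ records the $N$ fractional branes at the $A_{N-1}$ singularity (equivalently the $N$ nodes of the quiver), producing the $\mf{gl}_N$; the moment-map level is identified with $\eps$; and the $N-1$ curve-class gradings record the $\Z_N$-isotypic decomposition of the instanton number $K$, with the remaining equivariant parameter identified with $\delta$ exactly as in Section~\ref{subsection_DT_intro}.

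\textbf{The action.} On $\bigsqcup_K \mc{M}^\eps_{N,K}$ there are Nakajima--Hecke correspondences adding a single instanton; their convolution action, together with multiplication by the functions $\op{Tr}_{\C^N}(M\,J B_1^k B_2^l I)$, realizes the surjection from the combinatorial algebra appearing in the Theorem on $U^{comb}_\delta\to\Oo_\delta(\mc{M}^\eps_{N,K})$. Passing to the limit $K\to\infty$, invoking that Theorem, transporting through the isomorphism of the Corollary relating $U^{comb}_\delta$ to $U^{QFT}_\delta$, and finally killing the central parameter, yields an action of $U_\delta(\Oo_\eps(\C^2)\otimes\mf{gl}_N)/(\kappa=0)$ on $\mc{H}(\C\times X_N)$, with the equivariant parameter of the $\C$-factor identified with $\delta$ after the non-linear change of variables $F(\delta,K)$ specialized to $\kappa=0$. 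This is part~(1). Parts~(2) and~(3) are then bookkeeping: the convolution construction is $\C[[\delta]]((\eps))$-linear once $\eps$ is inverted, and the residual torus $\C^\times\times H\subset\C^\times\times SL_N$ acts compatibly on every moduli space and every correspondence in sight.

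\textbf{The cohomological Hall algebra.} Kontsevich--Soibelman's COHA \cite{KonSoi10} of $\C\times X_N$ is built from the same quiver with potential; after the dimensional reduction above it becomes the ordinary cohomological Hall algebra of the $\widehat{A}_{N-1}$ quiver, which is generated by the fundamental classes of its one-dimensional representations. These classes map to the generators $\op{Tr}_{\C^N}(M\,J B_1^k B_2^l I)$ of $U_\delta$, and the COHA product of two of them is computed by exactly the Hecke correspondence computing the corresponding operator product on the quiver varieties; comparing the two identifies the COHA with the subalgebra of $U_\delta$ generated by these elements and makes its action on $\mc{H}(\C\times X_N)$ the restriction of the action just constructed, which is part~(4). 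Faithfulness of the embedding should follow from the explicit presentation of $U_\delta$ together with its uniqueness-as-a-quantization property.

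\textbf{The main obstacle.} I expect two genuinely hard points. The first is the dimensional-reduction step itself: the reduction of vanishing-cycle cohomology is established for local curves and a handful of other geometries, but here it is needed at the level of the full cohomological-Hall (perverse) structure on $\C\times X_N$, and one must carry the $\eps$-deformation --- the noncommutative resolution of $X_N$ --- through every step. The second is the exact matching of deformation parameters: one has to compute the polynomials $g_i(K)$ of the earlier discussion in this geometric incarnation and check that after setting $\kappa=0$ the equivariant parameter $\delta$ of the Donaldson--Thomas problem is identified with $2^{-5}\pi^{-2}\delta$, up to the normalization fixed by the convolution conventions. The remaining issues --- torus-equivariance, $\C[[\delta]]((\eps))$-linearity, and the precise meaning of ``acting by correspondences'' (a homomorphism to a convolution algebra versus a genuine functor on a category of correspondences) --- should be settled along the way.
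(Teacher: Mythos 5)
This statement is a \emph{conjecture}: the paper does not prove it and never claims to. What the paper does is offer heuristic evidence, in section~\ref{section_DT}, for a \emph{Pandharipande--Thomas} variant (quasi-maps from $\mbb{P}^1$ to $X_N^{[k]}$ rather than ideal sheaves on $\C\times X_N$), and the evidence it offers is a $3d$ mirror symmetry / symplectic-duality argument: $X_N^{[k]}$ is symplectic dual to the ADHM space $\mc{M}^\eps_{N,k}$, a quantization of one side acts on vortex (quasi-map) cohomology of the other, and $A_{N,\delta,\eps}$ maps to $\Oo_\delta(\mc{M}^\eps_{N,k})$ for all $k$. Your route---dimensional reduction of the vanishing-cycle cohomology along the $\C$-factor, McKay, and Nakajima--Hecke correspondences---is a genuinely different strategy, and it is worth recording why it does not quite close.

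The central gap is at the McKay step. After dimensionally reducing the DT moduli of $\C\times X_N$ you get (roughly) Nakajima quiver varieties for the affine $\widehat{A}_{N-1}$ quiver, i.e.\ Hilbert schemes of points on $X_N$. But the McKay equivalence $D^b_{\Z_N}(\C^2)\simeq D^b(X_N)$ does \emph{not} turn these into the ADHM spaces $\mc{M}^\eps_{N,K}$ of this paper: $\mc{M}^\eps_{N,K}$ parametrizes framed rank-$N$ torsion-free sheaves on (noncommutative) $\C^2$ with no $\Z_N$-equivariance, and $X_N^{[k]}$ parametrizes rank-$1$ sheaves on $X_N$. These are \emph{symplectic dual} to one another, which is exactly the paper's route, not isomorphic by McKay. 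As a result, the ``$N$ nodes of the McKay quiver produce the $\mf{gl}_N$'' identification is not correct: the $\mf{gl}_N$ in $U_\delta(\Oo_\eps(\C^2)\otimes\mf{gl}_N)$ arises from the \emph{framing} of the Jordan/ADHM quiver, not from the gauge nodes of $\widehat{A}_{N-1}$; matching the two would again require stable envelopes / symplectic duality and does not come for free.

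A secondary confusion: you describe the Nakajima--Hecke correspondences on $\bigsqcup_K\mc{M}^\eps_{N,K}$ as realizing the maps $U^{comb}_\delta\to\Oo_\delta(\mc{M}^\eps_{N,K})$. Those maps, as constructed in section~\ref{section_adhm}, are algebra homomorphisms to the quantum Hamiltonian reduction at a \emph{fixed} $K$, coming from the quantized moment map; they are not correspondences that change $K$. The action on $\mc{H}$ that the conjecture needs is an action on a sum over $K$ (equivalently over curve classes and point numbers), so you need the Hecke-type correspondences as a separate ingredient, and it is precisely that bridge---from deformation quantizations at each $K$ to a correspondence action across all $K$---that the paper's symplectic-duality argument is designed to supply. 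Your proposal implicitly assumes that bridge rather than constructing it.

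In short, your outline is a reasonable alternative skeleton, and the dimensional-reduction step you flag as hard is indeed hard, but the McKay step as written conflates two symplectic-dual moduli and thereby hides the real content, which is exactly the $3d$ mirror symmetry the paper invokes.
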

\begin{remark}
\begin{enumerate}
\item In this conjecture I do not distinguish between the two versions of the algebra $U_\delta(\Oo_\eps(\C^2) \otimes \mf{gl}_N)$ we have discussed above. At $\kappa = 0$ the algebras are isomorphic up to a change of coordinates in the parameter $\delta$.  In practise, the conjecture can really only be checked for the explicit combinatorial version of the algebra. 
\item The choice of stability structure plays a role in the definition of the categorified DT invariants and the Konstevich-Soibelman cohomological Hall algebra (COHA). I expect that the statement should hold for all stability structures, so that the COHA for any choice of stability structure is embedded in $U_\delta(\Oo_\eps(\C^2) \otimes \mf{gl}_N)$.    As we will see shortly, the choice of stability structure seems to be  related to the choice of positive roots in the algebra $U_\delta(\Oo_\eps(\C^2) \otimes \mf{gl}_N)$.  
\end{enumerate}
\end{remark}
There are also variants of this conjecture which apply to Pandharipande-Thomas rather than Donaldson-Thomas invariants.    In certain cases, as I explain in section \ref{section_DT}, the PT version of the conjecture can be derived from the results of this paper together with the $3$-dimensional mirror symmetry conjecture.  

There are further generalizations of this conjecture in which we study the $5$-dimensional gauge theory, not on $\R \times \C^2$, but on $\R \times X_K$, where $X_K$ is a resolution of an $A_{K-1}$ singularity.  In this case, we expect to be able to construct an algebra $U_\delta(\Oo_\eps(X_K) \otimes \mf{gl}_N)$ from the quantum field theory, and this algebra should  act on rank $K$ categorified DT invariants of $\C \times X_{N}$, in the same way.

\section{Open problems}
Before moving to the proofs of the results mentioned in the introduction, let me list some natural questions and conjectures that this work suggests. Most of these are questions in pure algebra (with no reference to $M$-theory).  

\subsection{}
This paper is not the first paper to consider two-variable quantum groups deforming $U(\mf{g}[z_1,z_2])$ for a Lie algebra $\mf{g}$: see \cite{Gua07, GuaYan16}.
\begin{question*}
Can one prove that the combinatorial algebra $U^{comb}_{\delta}(\Oo_\eps(\C^2) \otimes \mf{gl}_N)$ is isomorphic to the quantum double current algebras considered by Guay and his collaborators?  
\end{question*}
  
\subsection{}
Our $5$-dimensional quantum field theory is defined not just on $\R \times \C^2$, but on a variety of spaces of the form $\R \times X$ for a complex symplectic surface $X$.  In particular we can take $X$ to be a resolution of an ADE singularity. The algebra of functions $\Oo(X)$ on the resolved ADE singularity has a deformation with $r+1$ parameters (where $r$ is the rank of the ADE group). Let us call this deformation $\Oo_{\eps,\lambda_1,\dots,\lambda_r}(X)$.  By applying Koszul duality to the algebra of observables of the theory on $\R \times X$, one finds a deformation $U^{QFT}_{\delta} ( \Oo_{\eps,\lambda_1,\dots,\lambda_r}(X) \otimes \mf{gl}_N)$ of the universal enveloping algebra of $\Oo_{\eps,\lambda_1,\dots,\lambda_r}(X)\otimes \mf{gl}(N)$.   It is natural to make the following conjecture.
\begin{conjecture}
When $X$ is a resolution of an ADE singularity, the algebra $U^{QFT}_{\delta} ( \Oo_{\eps,\lambda_1,\dots,\lambda_r}(X) \otimes \mf{gl}_N)$ is the large $K$ limit of a deformation quantization of the moduli of framed torsion-free sheaves of charge $K$ and rank $N$ on the non-commutative deformation of $X$.

In particular, in the case $N=1$, the algebra $U^{QFT}_{\delta} ( \Oo_{\eps,\lambda_1,\dots,\lambda_r}(X))$ should be the large $K$ limit of the symplectic reflection algebra associated to the action of the wreath product $S_K \ltimes (\Gamma)^K$ on $\C^{2K}$, where $\Gamma$ is the finite subgroup of $SU(2)$ associated to the ADE singularity.  
\end{conjecture}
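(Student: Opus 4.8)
The plan is to carry out, for $X$ a resolution of an ADE singularity, the same three-part argument that the body of the paper uses for $X = \C^2$, with the ADHM quiver replaced throughout by the affine Dynkin quiver $\til Q$ attached to the finite subgroup $\Gamma\subset SU(2)$. First I would introduce a purely combinatorial algebra $U^{comb}_\delta(\Oo_{\eps,\lambda}(X)\otimes\mf{gl}_N)$, deforming $U(\Oo_{\eps,\lambda}(X)\otimes\mf{gl}_N)$, with generators labelled by $M\in\mf{gl}_N$ together with words in the arrows of $\til Q$ (the analogues of the expressions $\op{Tr}_{\C^N}(MJB_1^kB_2^lI)$), the relations being the images of the preprojective relations of $\til Q$ deformed in $\delta$ according to the one-dimensional OPE, with $\eps,\lambda_1,\dots,\lambda_r$ recording the value of the quiver moment map. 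Evaluating traces on a rank-$N$ charge-$K$ framed torsion-free sheaf on the noncommutative $X$ should give algebra homomorphisms $U^{comb}_\delta(\Oo_{\eps,\lambda}(X)\otimes\mf{gl}_N)\to\Oo_\delta(\mc M^{\eps,\lambda}_{N,K})$ for all $K$; invariant theory for the product of general linear groups acting on quiver representations gives surjectivity, and, exactly as in the first theorem of the introduction, as $K\to\infty$ these maps should identify $\Oo_\delta(\mc M^{\eps,\lambda}_{N,K})$ with the quotient of $U^{comb}_\delta$ by the relation $\kappa = K$. The one point needing genuine care here is flatness of $U^{comb}_\delta$ as a family in the parameters $\delta,\lambda_1,\dots,\lambda_r$, i.e.\ a PBW property, because the preprojective relations of an affine quiver are less rigid than the single ADHM relation.

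\textbf{The rigidity step.} The crux, as for $\C^2$, is to identify the combinatorial algebra with $U^{QFT}_\delta(\Oo_{\eps,\lambda}(X)\otimes\mf{gl}_N)$ — the Koszul dual of the observables of the $5$-dimensional gauge theory on $\R\times X$, which exists and quantizes essentially uniquely by the theorem quoted in the introduction — without ever computing the Feynman-diagram structure constants of the latter. Following the paper, I would replace $\mf{gl}_N$ by $\mf{gl}(N+R\mid R)$, use a rank-$(0\mid 1)$ square-zero element and the $Q$-cohomology trick of \cite{MikWit14} to relate deformations for consecutive $R$, and compute the uniform-in-$R$ second Hochschild cohomology of $U(\Oo_{\eps,\lambda}(X)\otimes\mf{gl}(N+R\mid R))$ by a Loday--Quillen--Tsygan argument. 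This reduces the problem to the cyclic homology of the associative algebra $\Oo_{\eps,\lambda}(X)$. For $X=\C^2$ the relevant input was the essentially trivial cyclic homology of the Weyl algebra; for a resolution of an ADE singularity it is the cyclic homology of a deformation quantization of a symplectic surface, which by the Feigin--Tsygan form of the algebraic index theorem is $H^\bullet(X,\C)((u))$. Since $H^\bullet(X_\Gamma,\C)$ has rank $1$ in degree $0$ and rank $r$ in degree $2$, I expect the uniform-in-$R$ second Hochschild cohomology to be a free $\C[\kappa]$-module of rank $r+1$, the independent first-order deformations being ``vary $\delta$'' and ``vary each $\lambda_i$'', up to powers of $\kappa$. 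Proving exactly this — the ADE analogue of the rigidity theorem in the introduction — is the heart of the argument.

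\textbf{Matching and the $N=1$ case.} Granting the rigidity statement, both $U^{comb}_\delta$ and $U^{QFT}_\delta$ extend to $\mf{gl}(N+R\mid R)$ with compatible $R\mapsto R-1$ maps, so rigidity forces an isomorphism between them after a change of coordinates in $(\delta,\lambda_1,\dots,\lambda_r,\kappa)$, with leading coefficient a computable nonzero constant, as in the corollary of the introduction. Composing with the homomorphisms of the first step produces surjections $U^{QFT}_\delta(\Oo_{\eps,\lambda}(X)\otimes\mf{gl}_N)\to\Oo_{F(\delta,\lambda,K)}(\mc M^{\eps,\lambda}_{N,K})$ whose kernels, as $K\to\infty$, become the ideals setting $\kappa$ to $K$; this is the first assertion of the conjecture. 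For the $N=1$ statement one invokes the known identification of the quantum Hamiltonian reduction of the Nakajima variety for $\til Q$ at charge $K$ and rank $1$ with (the spherical subalgebra of) the symplectic reflection algebra of the wreath product $S_K\ltimes\Gamma^K$ acting on $\C^{2K}$ — the circle of results of Etingof--Ginzburg, Gan--Ginzburg and Gordon--Oblomkov — and then takes $K\to\infty$.

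\textbf{Main obstacle.} The genuinely new ingredient, and the step I expect to be hardest, is the cyclic homology computation for the quantized resolution $\Oo_{\eps,\lambda}(X)$ together with the bookkeeping of the $\lambda$-directions in the uniform-in-$R$ limit: one must show that there are \emph{no} first-order deformations of $U(\Oo_{\eps,\lambda}(X)\otimes\mf{gl}(N+R\mid R))$ beyond the expected $r+1$ families and their $\kappa$-multiples, and that each of these is realized on both the combinatorial and the field-theory side. A subsidiary point is that the $M2$-brane operators most naturally yield the \emph{spherical} subalgebra of the symplectic reflection algebra, whereas the conjecture is phrased in terms of the full algebra; one should either show the two have the same large-$K$ limit or state the result for the spherical subalgebra. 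By contrast, the $5$-dimensional field-theory input and the construction of $U^{comb}_\delta$ I expect to be fairly mechanical adaptations of the $X=\C^2$ case treated in the main text.
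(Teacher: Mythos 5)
This statement is a \emph{conjecture} in the paper's ``Open problems'' section — the paper does not prove it, and explicitly outlines essentially the same three-step strategy you describe (construct a combinatorial large-$K$ algebra for the affine quiver, show it degenerates to the universal enveloping algebra, generalize the uniform-in-$R$ rigidity theorem to produce an $(r+1)$-dimensional deformation space). Your proposal is therefore not a proof but a more detailed elaboration of the paper's own suggested route, and it is consistent with it. You add two points the paper leaves implicit: the identification of the relevant cyclic homology of the quantized resolution $\Oo_{\eps,\lambda}(X)$ with $H^\bullet(X,\C)((u))$ via the algebraic index theorem (so that the expected rank $r+1$ has a clean homological origin), and the caveat that the quantum Hamiltonian reduction most directly yields the \emph{spherical} subalgebra of the symplectic reflection algebra rather than the full algebra, which would require either a Morita-type argument or a rephrasing of the statement. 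Both observations are correct and sharpen the conjecture, but none of this constitutes a proof — the flatness/PBW property of the combinatorial algebra for affine quivers and the precise control of the $\lambda$-directions in the uniform-in-$R$ Hochschild computation remain open, exactly as the paper says.
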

When $X$ is a resolved ADE singularity, an algebra quantizing the universal enveloping algebra of a central extension of $\Oo(X) \otimes \mf{sl}_N$ was studied by Guay, Hernandez and Loktev in \cite{GuaHerLok09}.  It is natural to hope that the algebra studied by these authors is related to the one given by quantum field theory.

Hopefully, one could generalize the arguments presented in this paper to prove a statement along these lines, but it seems to me to be non-trivial. As a first step, one should find a combinatorially-defined algebra which is the large $K$ limit of a deformation quantization of the moduli of sheaves on $X$.  One should then show that in an appropriate limit, this combinatorial algebra degenerates to the universal enveloping algebra of $\Oo_{\eps,\lambda_1,\dots,\lambda_r}(X) \otimes \mf{gl}_N$. Next, one should try to generalize the uniqueness result used here to show that the universal enveloping algebra of $\Oo_{\eps,\lambda_1,\dots,\lambda_r}(X) \otimes \mf{gl}_N$ admits an $r+1$-dimensional family of deformations which are defined uniformly in $N$, as we saw in the case that $X = \C^2$.  All of these steps require some work, but entirely in algebra.  

Arguments similar to the one presented in this paper suggest that, in the case $X$ is a resolved $A_{r-1}$ singularity, the the algebra $U_\delta( \Oo_{\eps,\lambda_1,\dots,\lambda_r}(X) \otimes \mf{gl}_N)$ acts on the moduli space of quasi-maps to the moduli of rank $r$ torsion free sheaves on a resolved $A_{N-1}$ singularity.

\subsection{}
The $5$-dimensional theory considered here is related to the $4$-dimensional theory of \cite{Cos13} via a certain dimensional reduction. This $4$-dimensional theory is related to the Yangian.  Consideration of how dimensional reduction acts on the algebra of operators leads to the following conjecture relating the algebras studied here to the Yangian. 

Let us give the algebra $U^{comb}_{\delta}(\Oo_\eps(\C^2) \otimes \mf{gl}_N)$ a $\C^\times$ action arising from the $\C^\times$ action on $\C^2$ which scales the coordinate with opposite weight. Given any algebra $A$ with a $\C^\times$ action, or equivalently a grading, one can define a new algebra $\mc{B}(A)$, called the  $B$-algebra, by
$$
\mc{B}(A) = A^0 / \left(\oplus_{i > 0} A^i \cdot A^{-i}  \right). 
$$
Here $A^i$ refers to the $i$th graded piece. The algebra is the quotient of $A^0$ by the two-sided ideal consisting of the product of an element of degree $i$ with an element of degree $-i$. 
\begin{conjecture}
There is an isomorphism  between
$$
\mc{B}(U^{comb}_{\delta}(\Oo_\eps(\C^2) \otimes \mf{gl}_N) ) \iso Y(\mf{gl}_N) 
$$
between the $B$-algebra of $ U^{comb}_{\delta}(\Oo_\eps(\C^2) \otimes \mf{gl}_N)$ and the Yangian for $\mf{gl}_N$.  
\end{conjecture}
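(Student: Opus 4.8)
We sketch a proof. The plan is to combine a structural computation of the classical $B$-algebra with a rigidity argument of the same flavour as the one used earlier in this paper, guided throughout by the physical picture that the $B$-algebra operation is the algebraic shadow of dimensionally reducing the $5$-dimensional gauge theory on $\R \times \C^2$ along the $\C^\times$-orbits to the $4$-dimensional theory of \cite{Cos13} on $\R \times \C$, whose Koszul-dual algebra of local operators is $Y(\mf{gl}_N)$. Since $U^{QFT}_{\delta}(\Oo_\eps(\C^2) \otimes \mf{gl}_N) \cong U^{comb}_{\delta}(\Oo_\eps(\C^2) \otimes \mf{gl}_N)$ up to a change of coordinates in $\delta$ and $\kappa$ (the Corollary), and since $\kappa$ remains central after applying $\mc{B}$, it is equivalent to work with either algebra.

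\emph{Step 1: the classical $B$-algebra.} The first task is a purely algebraic lemma: for any $\Z$-graded Lie algebra $\mf{g} = \bigoplus_{i \in \Z} \mf{g}^i$, with $U(\mf{g})$ given the induced grading, one has $\mc{B}(U(\mf{g})) \cong U(\mf{g}^0)$. This follows from a PBW computation: order PBW monomials with positive-weight generators on the left, then weight-zero, then negative; a weight-zero monomial that involves a positive-weight generator necessarily involves a negative one too, and is visibly a product of an element of positive weight with one of negative weight, hence lies in $\sum_{i>0} U^i \cdot U^{-i}$, while $U(\mf{g}^0)$ maps isomorphically onto the quotient (using, for injectivity, the splitting through the weight-zero part of the induced module $U(\mf{g})/U(\mf{g})\mf{g}^{>0}$). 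Applying this with $\mf{g} = \Oo_\eps(\C^2) \otimes \mf{gl}_N$ and the $\C^\times$-action rescaling $z_1 \mapsto t z_1$, $z_2 \mapsto t^{-1} z_2$ (which preserves $[z_1,z_2] = \eps$), the weight-zero subalgebra of $\Oo_\eps(\C^2)$ is the polynomial ring $\C[u]$ with $u = z_1 z_2$, so $\mf{g}^0 = \mf{gl}_N[u]$ and hence $\mc{B}(U^{comb}_{\delta = 0}(\Oo_\eps(\C^2) \otimes \mf{gl}_N)) \cong U(\mf{gl}_N[u])$, with the central parameter $\kappa$ descending to $1 \otimes \op{Id}_{\mf{gl}_N} \in \mf{gl}_N[u]$. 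This is exactly the classical limit of $Y(\mf{gl}_N)$.

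\emph{Steps 2 and 3: flatness in $\delta$ and identification with the Yangian.} Next one must check that forming the $B$-algebra commutes with the specialization $\delta \to 0$, i.e.\ that $\mc{B}(U^{comb}_{\delta}(\Oo_\eps(\C^2) \otimes \mf{gl}_N))$ is a flat $\C[[\delta]]$-deformation of $U(\mf{gl}_N[u])$. For this I would produce a PBW-type basis of $U^{comb}_{\delta}$ adapted to the $\C^\times$-grading — the explicit combinatorial presentation is what makes this bookkeeping feasible — so that the two-sided ideal defining the $B$-algebra admits a $\delta$-independent complement and each graded piece of $\mc{B}(U^{comb}_{\delta})$ is free over $\C[[\delta]]$ of the expected rank. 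Granting this, $\mc{B}(U^{comb}_{\delta})$ is a one-parameter filtered deformation of $U(\mf{gl}_N[u])$; one computes its first-order term in $\delta$ directly. To finish, I would invoke a rigidity statement for $U(\mf{gl}_N[u])$ parallel to the one proved here for $U(\Oo_\eps(\C^2) \otimes \mf{gl}_N)$: via the trick of replacing $\mf{gl}_N$ by $\mf{gl}(N+R \mid R)$ together with a Loday--Quillen--Tsygan computation, $U(\mf{gl}_N[u])$ should admit, up to rescaling of the deformation parameter and multiplication by powers of $\kappa$, an essentially unique deformation defined uniformly in $R$, realized by the Yangian. Matching first-order terms then forces $\mc{B}(U^{comb}_{\delta}(\Oo_\eps(\C^2) \otimes \mf{gl}_N)) \cong Y(\mf{gl}_N)$ for all $N$. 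Alternatively, and perhaps more safely for small $N$, one can write the images in $\mc{B}(U^{comb}_{\delta})$ of the RTT generators $T^{(r)}_{ij}$ of $Y(\mf{gl}_N)$ in terms of the weight-zero generators $\op{Tr}(M z_1^k z_2^l)$ of $U^{comb}_{\delta}$ and verify the defining relations by hand.

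\emph{Main obstacle.} The hard part is the flatness step: $U^{comb}_{\delta}$ is controlled only combinatorially, its weight-zero part is intricate, and one must rule out the $B$-algebra quotient acquiring unexpected relations or failing to be $\C[[\delta]]$-flat once $\delta$ is turned on; a subsidiary difficulty is pinning down precisely which function of $\eps$ (and possibly $\delta$) plays the role of the Yangian's deformation parameter. A conceptually cleaner but analytically heavier alternative would bypass the combinatorics entirely: prove directly that Koszul duality intertwines the dimensional reduction of the $5$-dimensional factorization algebra along the $\C^\times$-direction with the $B$-algebra operation, and then quote from \cite{Cos13} that the $4$-dimensional answer is $Y(\mf{gl}_N)$.
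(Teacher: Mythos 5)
This statement appears in the paper's ``Open problems'' section, labelled a \emph{conjecture}; the paper offers no proof, remarking only that it ``could, in principle, be checked by analyzing the combinatorial presentation of the algebra.'' So there is no argument of the paper's to compare against, and the comments below assess your sketch on its own terms.

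Your Step 1 has a concrete error in the injectivity argument. You want the ideal $I = \oplus_{i>0} U^i \cdot U^{-i}$ to act by zero on the weight-zero part of a suitable $U(\mf{g})$-module, but the module you propose, $M = U(\mf{g})/U(\mf{g})\mf{g}^{>0}$, has $\mf{g}^{>0}$ annihilating its cyclic vector $v_0$, which is the wrong side for the order of factors in $I$. Already for $\mf{g} = \mf{sl}_2$ with its standard grading one has $EF \in U^1\cdot U^{-1} \subset I$, yet $EF\cdot v_0 = (FE+H)v_0 = Hv_0 \neq 0$ in $M^0$, so $I$ does not annihilate $M^0$ and the argument breaks. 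The fix is to induce from the opposite parabolic: take $\tilde M = U(\mf{g}) \otimes_{U(\mf{g}^{\leq 0})} U(\mf{g}^0)$, with $\mf{g}^{<0}$ acting by zero on the left-regular $\mf{g}^0$-module $U(\mf{g}^0)$. Then $\tilde M$ is concentrated in weights $\geq 0$, so $U^{-i}\tilde M^0 = 0$ for $i>0$ and hence $I\tilde M^0 = 0$, and one checks the resulting $U^0/I$-action on $\tilde M^0 \cong U(\mf{g}^0)$ is by left multiplication, which is faithful. With this replacement the identification $\mc{B}(U(\mf{g})) \cong U(\mf{g}^0)$, and hence of the classical $B$-algebra with $U(\mf{gl}_N[u])$, does go through.

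The remaining steps are where the substance lies, and you are candid that they are hopes rather than arguments. The rigidity theorem for $U(\mf{gl}_N[u])$ you invoke by analogy with the paper's uniqueness theorem is genuinely new work: one must rerun the large-$R$ Loday--Quillen--Tsygan machinery with $\C[u]$ (or its $\eps$-deformation) in place of $\Oo_\eps(\C^2)$, recompute the cyclic cohomology of $\C[u]\oplus\C[u]^\vee[-1]$, verify that the stable deformation space is again a free rank-one $\C[\kappa]$-module in the relevant degree, and separately show that the Yangian furnishes a uniform-in-$R$ generator of it. And $\C[[\delta]]$-flatness of $\mc{B}(U^{comb}_{\delta})$ is not automatic, since the $B$-algebra is a quotient by a two-sided ideal that itself depends on $\delta$; one would need a $\C^\times$-weight-adapted refinement of the reduced-surface basis exhibiting a $\delta$-independent complement to the ideal. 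These gaps do not look fatal, but as written the sketch records the architecture of a proof rather than a proof, and the first-order cocycle matching that would pin down the Yangian's $\hbar$ in terms of $\delta$ and $\eps$ is not even started.
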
 
More generally, given a cocharacter $\rho$ of $SL(N)$, one gets a $\C^\times$ action on the algebra $U^{comb}_{\delta}(\Oo_\eps(\C^2) \otimes \mf{gl}_N)$ by combining the action on the plane $\C^2$ used before with the adjoint action of the chosen cocharacter on $\mf{gl}_N$.   Using this $\C^\times$-action one finds a different $B$-algebra, which we denote by $ \mc{B}_{\rho}(U^{comb}_{\delta}(\Oo_\eps(\C^2) \otimes \mf{gl}_N) )$ .
\begin{conjecture}
There is an isomorphism
$$
\mc{B}_{\rho}(U^{comb}_{\delta}(\Oo_\eps(\C^2) \otimes \mf{gl}_N) ) \iso Y_{\rho}(\mf{gl}_N)
$$
where $Y_\rho(\mf{gl}_N)$ is the shifted Yangian defined  by \cite{BruKle06,KamWebWeeYac14} associated to the cocharacter $\rho$.  
\end{conjecture}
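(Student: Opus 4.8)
\emph{Towards a proof.} The strategy is to combine a field-theoretic heuristic that predicts the answer with a direct algebraic computation on the explicit combinatorial model, the latter organized so as to reuse the uniqueness-of-deformation machinery developed in the rest of the paper. The $\C^\times$ action defining $\mc{B}_\rho$ is the combination of the $\C^\times$ acting on $\C^2$ (scaling the two holomorphic coordinates with opposite weights, so as to preserve the symplectic form and hence the Moyal product) with the adjoint action of the cocharacter $\rho$ on $\mf{gl}_N$. Passing to the $B$-algebra should correspond, on the quantum-field-theory side, to $\Omega$-deforming the $5$-dimensional non-commutative Chern--Simons theory along this $\C^\times$; by the dimensional-reduction picture alluded to in the introduction this produces (a non-commutative deformation of) the $4$-dimensional Chern--Simons theory of \cite{Cos13}, whose Koszul-dual algebra of line operators is $Y(\mf{gl}_N)$. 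Turning on $\rho$ inserts a monodromy defect along the reduction direction, and the $4$d algebra of operators bound to that defect is, by the oscillator/Gelfand--Tsetlin description of such defects, the shifted Yangian $Y_\rho(\mf{gl}_N)$. Besides predicting the isomorphism, this reduces the shifted statement to the unshifted one (the previous conjecture) plus an understanding of how the defect modifies the answer.

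For an actual proof I would work directly with $U^{comb}_\delta$. First make the grading explicit: the generator $M z_1^k z_2^l$ with $M \in \mf{gl}_N$ has degree $(k-l) + \deg_\rho(M)$, where $\deg_\rho$ is the weight under $\op{ad}\rho$; one checks this is a genuine $\Z$-grading of the combinatorial product. Then identify the degree-zero subalgebra $A^0$ together with the two-sided ideal $\sum_{i>0} A^i\cdot A^{-i}$, exhibit a small set of degree-zero generators whose images generate $\mc{B}_\rho(U^{comb}_\delta)$, match these with the generators $e_i^{(r)}, f_i^{(r)}, h_i^{(r)}$ of the Drinfeld presentation of $Y_\rho(\mf{gl}_N)$ used in \cite{BruKle06, KamWebWeeYac14} --- the shift by $\rho$ appearing as a shift in the lowest values of $r$ at which the $e$'s and $f$'s occur --- and read off the defining relations of the shifted Yangian from the (explicit) structure constants of $U^{comb}_\delta$. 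To see that no further relations are imposed one can try to run the argument used in the rest of the paper: enlarge $\mf{gl}_N$ to $\mf{gl}(N+R\mid R)$, compare with the corresponding $B$-algebra stabilized in $R$, and combine a Loday--Quillen--Tsygan / uniqueness-of-deformation input with a PBW basis count. Alternatively --- and perhaps more cleanly --- one may use that the surjections $U^{comb}_\delta(\Oo_\eps(\C^2)\otimes\mf{gl}_N)\to\Oo_\delta(\mc{M}_{N,K}^\eps)$ are $\C^\times$-equivariant, hence descend to $B$-algebras, identify the $B$-algebra of the quantized instanton moduli algebra $\Oo_\delta(\mc{M}_{N,K}^\eps)$ with a truncated shifted Yangian (via $3$-dimensional mirror symmetry and the Coulomb-branch realization of such Yangians), and then invoke the $K\to\infty$ theorem above to remove the truncation.

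The main obstacle, in every version, is the same: controlling the ideal $\sum_{i>0} A^i\cdot A^{-i}$ --- that is, understanding precisely which degree-zero operators can be written as a product of a strictly positive with a strictly negative degree operator. This is where the genuinely two-variable combinatorics of $U^{comb}_\delta$ must be confronted, and where the soft uniqueness arguments used elsewhere do not by themselves suffice; one really needs a PBW-type basis theorem for $U^{comb}_\delta$ adapted to the $\rho$-grading. A secondary difficulty is the bookkeeping of parameters: $\eps$, $\delta$ and the central element $\kappa$ must be matched with the equivariant/quantum parameters and the shift of the target shifted Yangian, and --- exactly as in the corollary above --- one should anticipate a nontrivial but in principle computable rescaling of $\delta$.
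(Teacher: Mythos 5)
The paper offers no proof of this statement: it appears in the ``Open problems'' section as an explicit conjecture, accompanied only by the remark that it ``could, in principle, be checked by analyzing the combinatorial presentation of the algebra $U^{comb}_{\delta}(\Oo_\eps(\C^2))$.'' So there is nothing in-paper to compare against, and your text is correctly framed as a strategy memo rather than a verification of an existing argument.

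As a strategy memo it is reasonable, and you have correctly isolated where the difficulty lives: controlling the two-sided ideal $\sum_{i>0} A^i\cdot A^{-i}$. Note that the uniqueness-of-deformation machinery used elsewhere in the paper is formulated at the level of uniform-in-$R$ Hochschild cohomology of the \emph{whole} algebra $U(\Oo_\eps(\C^2)\otimes\mf{gl}(N+R\mid R))$; nothing in it as stated localizes to the degree-zero piece of the $\rho$-grading, so invoking it here is optimistic without further work. A PBW basis for $U^{comb}_{\delta}$ compatible with the $\rho$-grading --- a refinement of the ``reduced surface'' basis of Definition \ref{definition_reduced} --- really does seem to be the missing input. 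Of the two routes you sketch, the second is the more promising and is closest to what the literature can now support: the surjections $U^{comb}_{\delta}(\Oo_\eps(\C^2)\otimes\mf{gl}_N)\to\Oo_\delta(\mc{M}^\eps_{N,K})$ are visibly $\C^\times$-equivariant, hence descend to $B$-algebras, and the identification of the $B$-algebra of a quantized Nakajima quiver variety with a truncated shifted Yangian (the Coulomb branch algebra of the symplectic dual theory, in the framework of \cite{BraFinNak16}) is well advanced; the uniform-in-$K$ isomorphism proved in the paper then removes the truncation. Two caveats you should make explicit: the shifted Yangians of \cite{BruKle06,KamWebWeeYac14} are defined for \emph{dominant} $\rho$, and one must check which $\rho$ the $B$-algebra construction actually produces as a function of the cocharacter; and the field-theory paragraph (dimensional reduction of the $5$d theory to the $4$d theory of \cite{Cos13} with a monodromy defect) is a heuristic, not an argument --- the paper proves no statement relating the Koszul dual of the $5$d observables along a $\C^\times$ to anything $4$-dimensional, so this cannot carry weight in a proof.
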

Both of these conjectures could, in principle, be checked by analyzing the combinatorial presentation of the algebra $U^{comb}_{\delta}(\Oo_\eps(\C^2))$ given in this paper. 

\subsection{}
Given a Riemamn surface $\Sigma$, the $5$-dimensional theory we are considering can be defined on $\R \times T^\ast \Sigma$, at the quantum level \cite{Cos16}.  If $\Sigma$ is an affine algebraic curve, then the observables of the theory on an interval in $\R$ times all of $T^\ast \Sigma$ can be identified with $C^\ast(\mf{gl}_N \otimes \op{Diff}(\Sigma))$.

One can use Koszul duality to construct from this a deformation of $U(\mf{gl}_N \otimes \op{Diff}(\Sigma) )$.  (Applying Koszul duality to the algebra constructed from the field theory is not entirely trivial, because it is built from holomorphic functions on $T^\ast \Sigma$ rather than polynomials.  One can show, however, that the quantum field theory constructions make sense if one considers fields of polynomial growth at at the boundary of $T^\ast \Sigma$. If one does this, one finds that the algebra of observables is quasi-isomorphic to cochains of polynomial differential operators on $\Sigma$, so there are no difficulties with defining Koszul duality).  

These quantum-field theory considerations lead us to the following conjecture.
\begin{conjecture}
There is a deformation of $U(\mf{gl}_N \otimes \op{Diff}(\Sigma) ) $ coming from the $5$-dimensional quantum field theory on $\R \times T^\ast\Sigma$, which can be described in terms of the large $K$ limit of a deformation quantization of the moduli of instantons on $T^\ast \Sigma$.  
\end{conjecture}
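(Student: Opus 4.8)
The strategy would be to transplant, more or less verbatim, the three-step argument carried out above for $X=\C^2$, now with $\C^2$ replaced by the complex symplectic surface $T^\ast\Sigma$ and $\Oo_\eps(\C^2)$ replaced by the algebra $\op{Diff}(\Sigma)$ of polynomial differential operators on $\Sigma$ --- the canonical filtered deformation quantization of $\Oo(T^\ast\Sigma)$, with a Rees parameter so that $[\del,x]=\eps$. On the field-theory side one first checks, as sketched in the paragraph preceding the conjecture, that the $5$-dimensional non-commutative Chern--Simons theory is well defined on $\R\times T^\ast\Sigma$ once one restricts to fields of polynomial growth at the boundary, that the algebra of local operators on an interval in $\R$ is quasi-isomorphic to $C^\ast(\mf{gl}_N\otimes\op{Diff}(\Sigma))$, and that Koszul duality applies. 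Turning on the coupling $\delta$ then produces an associative algebra $U^{QFT}_\delta(\op{Diff}(\Sigma)\otimes\mf{gl}_N)$, flat over $\C[[\delta]]$, deforming $U(\op{Diff}(\Sigma)\otimes\mf{gl}_N)$ and carrying a central element $\kappa$ lifting $1\otimes\op{Id}_{\mf{gl}_N}$.

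The second step realises the matching deformation on the instanton side. Here the replacement for $\mc{M}^\eps_{N,K}$ is the moduli space $\mc{M}_{N,K}(T^\ast\Sigma)$ of rank $N$, charge $K$ framed instantons on $T^\ast\Sigma$ --- equivalently, of framed torsion-free sheaves --- and the replacement for $\Oo_\delta(\mc{M}^\eps_{N,K})$ is its deformation quantization $\Oo_\delta(\mc{M}_{N,K}(T^\ast\Sigma))$, obtained from the moduli of finitely generated torsion $\op{Diff}(\Sigma)$-modules by a global quantum Hamiltonian reduction whose moment-map value is fixed by the quantization $\op{Diff}(\Sigma)$ of $\Oo(T^\ast\Sigma)$. (For $N=1$, $\Sigma=\C$ this recovers the spherical rational Cherednik algebra of $S_K$; in general it is a rank-$N$ Calogero--Moser--Cherednik algebra attached to $\Sigma$.) One must then construct a combinatorial algebra $U^{comb}_\delta(\op{Diff}(\Sigma)\otimes\mf{gl}_N)$ deforming $U(\op{Diff}(\Sigma)\otimes\mf{gl}_N)$, with a central $\kappa$ and with algebra homomorphisms to $\Oo_\delta(\mc{M}_{N,K}(T^\ast\Sigma))$ for all $K$ sending $\kappa\mapsto K$, and prove the analogue of the theorem above identifying $\Oo_\delta(\mc{M}^\eps_{N,K})$ with a large-$K$ quotient of $U^{comb}_\delta$: that these maps are surjective and identify $\Oo_\delta(\mc{M}_{N,K}(T^\ast\Sigma))$, as $K\to\infty$, with $U^{comb}_\delta(\op{Diff}(\Sigma)\otimes\mf{gl}_N)/(\kappa-K)$. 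A presentation of $U^{comb}_\delta$ should come from one of $\op{Diff}(\Sigma)$ by generators and relations; when $\Sigma$ is rational, say $\Sigma=\C\setminus\{p_1,\dots,p_r\}$, one has $\op{Diff}(\Sigma)=\op{Diff}(\C)[f^{-1}]$ with $f=\prod_i(x-p_i)$, so the moduli space is an open subvariety of the Calogero--Moser space and $U^{comb}_\delta(\op{Diff}(\Sigma)\otimes\mf{gl}_N)$ can be built as a localisation of the $\C^2$ combinatorial algebra of this paper; for higher genus one glues from affine charts.

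The third step is the bootstrap. Both algebras extend to $\mf{gl}(N+R\mid R)$ and are intertwined across $R$ by the $Q$-cohomology trick used above, so the comparison is controlled by the large-$R$ limit of the second Hochschild cohomology of $U(\op{Diff}(\Sigma)\otimes\mf{gl}(N+R\mid R))$. The analogue of the theorem above computing the uniform-in-$R$ second Hochschild cohomology of $U(\Oo_\eps(\C^2)\otimes\mf{gl}(N+R\mid R))$ --- whose proof via Loday--Quillen--Tsygan inputs the cyclic homology of the coefficient algebra --- should express this group in terms of $\br{HC}_\bullet(\op{Diff}(\Sigma))$; and the cyclic homology of a ring of differential operators on a curve is finite-dimensional and governed by the de Rham cohomology $H^\bullet_{\mathrm{dR}}(\Sigma)$ (Feigin--Tsygan, Wodzicki). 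So the space of uniform-in-$R$ first-order deformations should be a finite-rank free $\C[\kappa]$-module: it contains the distinguished rank-one summand --- the ``$\delta$-direction'', which restricts to the $\C^2$ deformation in every affine chart --- coming from $H^0_{\mathrm{dR}}(\Sigma)=\C$, together with possibly further generators indexed by $H^1_{\mathrm{dR}}(\Sigma)$; it is exactly rank one when $\Sigma=\C$, which is the content of the main results of this paper via $\op{Diff}(\C)\iso\Oo_\eps(\C^2)$. As in the proof of the Corollary above establishing $U^{QFT}_\delta\iso U^{comb}_\delta$, this near-rigidity forces $U^{QFT}_\delta(\op{Diff}(\Sigma)\otimes\mf{gl}_N)$ and $U^{comb}_\delta(\op{Diff}(\Sigma)\otimes\mf{gl}_N)$ to be isomorphic after a reparametrisation of $\delta$ (and, when $H^1_{\mathrm{dR}}(\Sigma)\neq 0$, a matching of the extra parameters), it being enough to check agreement at first order. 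Composing with $U^{comb}_\delta\to\Oo_\delta(\mc{M}_{N,K}(T^\ast\Sigma))$ then exhibits the field-theory deformation as the promised large $K$ limit.

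The main obstacle is the second step. In the $X=\C^2$ case the whole argument is anchored by the explicit ADHM/Nakajima description of $\mc{M}^\eps_{N,K}$ and by the explicit combinatorial algebra $U^{comb}_\delta(\Oo_\eps(\C^2)\otimes\mf{gl}_N)$ built here; for a general affine curve neither the moduli space nor its quantization carries such a presentation, and producing a combinatorial model of the large $K$ limit that is simultaneously computable and provably surjective onto $\Oo_\delta(\mc{M}_{N,K}(T^\ast\Sigma))$ --- beyond the rational case, where localisation does the job --- requires genuinely new input. A secondary difficulty is to control the extra deformation directions coming from $H^1_{\mathrm{dR}}(\Sigma)$ on both sides and to verify that the field-theory deformation lands in the expected $\C[\kappa]$-submodule rather than in an exotic direction; as elsewhere in this paper, determining the resulting non-linear change of coordinates in $\delta$ and $\kappa$ explicitly would require Feynman-diagram computations that I do not expect to be feasible.
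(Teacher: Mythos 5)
You are proposing a proof strategy for what is, in the paper, only a conjecture: the paper does not prove this statement, and explicitly says so in the sentence immediately following it (``To be precise, the existence of the algebra can be proved by QFT considerations. What one does not get for free from QFT is a simple presentation of the algebra, or its relation to moduli of instantons on $\R\times T^\ast\Sigma$.''). So there is no paper proof to compare your attempt against; the right question is whether your strategy is a plausible route to a proof and whether you have correctly identified where the hard work lies.

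On those terms your proposal is sound and, crucially, honest: steps (1) and (3) are the natural transplants of the $\C^2$ argument, and you correctly flag step (2) --- a combinatorial, provably-surjective large-$K$ model for $\Oo_\delta(\mc{M}_{N,K}(T^\ast\Sigma))$ --- as the genuine missing input, which the paper agrees has no ADHM-style substitute beyond the rational case. Two refinements are worth making. First, your appeal to $H^1_{\mathrm{dR}}(\Sigma)$ as a source of extra deformation directions needs to be reconciled with the paper's quantization theorem for $\R\times X$, which says the QFT ambiguity at each loop order is $\dim H^2(X)+1$; for $X=T^\ast\Sigma$ with $\Sigma$ an affine curve one has $H^2(X)=0$, so the field theory sees only the single $\delta$-direction. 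Any additional algebraic first-order deformations coming from cyclic homology (via Wodzicki/Brylinski--Getzler--type computations of $HC_\ast(\op{Diff}(\Sigma))$ in terms of $H^\ast_{\mathrm{dR}}(\Sigma)$) would therefore not be realized by the field theory, and one must argue, not merely verify, that the QFT class lies in the correct rank-one summand --- your phrasing suggests this is a check, but it is where the $\C^2$ proof uses the explicit OPE calculation of Proposition~\ref{proposition_commutator_QFT}, and one would need the analogous local computation on a chart of $T^\ast\Sigma$. Second, in step (3) the Loday--Quillen--Tsygan argument in the paper (Theorem~\ref{theorem_tangent_complex_cyclic}) requires computing $A^\vee\otimes^{\mbb L}_A A^\vee$ for $A=\op{Diff}(\Sigma)$, which was done for $\what{D}^2$ by an explicit resolution; for a general affine curve the corresponding Lemma~\ref{lemma_noncommutative_tensor_powers} would need to be re-proved, and it is not clear it goes through without modification when the underlying variety $T^\ast\Sigma$ is not contractible. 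Neither of these kills the strategy, but both are substantive, not routine, and your proposal under-emphasizes the second.
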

To be precise, the existence of the algebra can be proved by QFT considerations.  What one does not get for free from QFT is a simple presentation of the algebra, or its relation to moduli of instantons on $R \times T^\ast \Sigma$.

In the case $N=1$, the deformation of $U(\op{Diff}(\Sigma))$ should be related to the large $K$ limit of the spherical Cherednik algebra associated to the curve $\Sigma$ in the work of Etingof \cite{Eti04}.  

It is natural to further ask whether these algebras can be  defined when we use other affine holomorphic symplectic surfaces, such as the complement of an  elliptic curve in $\mbb{P}^2$ or of a divisor in a $K3$.  I don't know the answer to this, because currently I don't know  whether the  $5$-dimensional quantum field theory can be defined on such backgrounds. 

\subsection{}
In \cite{Cos13} I explained how, in a certain four-dimensional theory which is topological in two directions and holomorphic in one complex direction, the Yangian arises as the Koszul dual of the algebra of operators, equipped with the operator product in a topological direction.  Moreover, the $R$-matrix of the Yangian arises  from the operator product in the holomorphic direction.

This suggests that there should be a similar here. Our $5$-dimensional gauge theory is topological in one direction and holomorphic in two complex directions. The two-variable quantum group  we consider is the Koszul dual of the algebra of operators equipped with the operator product in the topological direction.  The OPE in the two complex directions should equip this algebra with an interesting and novel algebraic structure, which should be worth exploring.  I hope to discuss this structure in subsequent work.

\section{How to read this paper}
This paper is mostly algebra, but also uses the $5$-dimensional quantum field theory related to $M$-theory, and its algebra of observables.  Mathematicians who have no interest in quantum field theory could skip sections \ref{section_5d_gauge_theory}--\ref{section_ope}, which are the only sections of the paper dealing with quantum field theory.  Starting at section \ref{section_combinatorial_algebra}, the paper is exclusively algebraic.  Section \ref{section_combinatorial_algebra} gives an explicit description of an algebra deforming $U(\mf{gl}_N \otimes \op{Diff}(\C))$.  The subsequent sections explore the relation of this algebra to  deformation quantizations of instanton moduli spaces, and prove a theorem regarding the  uniqueness of this deformation of $U(\mf{gl}_N \otimes \op{Diff}(\C)) $.

\section{Acknowledgements}
I am very grateful to many people for helpful conversations about the mathematical and physical aspects of this project. I'd like to particularly thank Sasha Braverman, Tudor Dimofte, Chris Dodd, Pavel Etingof, Davide Gaiotto, Sachin Gautam, Nicolas Guay, Si Li, Ivan Loseu and Yaping Yang.

This work is supported by Perimeter Institute for Theoretical Physics. Research at Perimeter Institute is supported by the Government of Canada through Industry Canada and by the Province of Ontario through the Ministry of Research and Innovation.  This work was also supported in part by the NSERC Discovery Grant program.
 
\section{A $5d$ gauge theory}
\label{section_5d_gauge_theory}
In this section we will describe the $5d$ gauge theory for the group $\gl_N$ which is obtained by an equivariant reduction of $M$-theory on an $A_{N-1}$ singularity times $\R^2$.

Let's describe the theory first in the traditional formulation, with fields, an action functional, and gauge  symmetry. Then we will explain how to rewrite it in the BV formalism.   We will start by describing the theory on $\R \times \C^2$, and then explain how to write it on $\R \times X$ for a class of complex surfaces $X$.

The theory is a non-commutative gauge theory, meaning that the complex plane $\C^2$ is non-commutative.  If $\Oo(\C^2)$ denotes the ring of holomorphic functions on $\C^2$, recall that the Moyal product on $\Oo(\C^2)$ is a non-commutative product of the form
$$
f \ast_c g = f g + c  \tfrac{1}{2} \eps_{ij} \dpa{z_i} f \dpa{z_j} g + c^2 \tfrac{1}{2^2 \cdot 2!} \eps_{i_1 j_1} \eps_{i_2 j_2} \left( \dpa{z_{i_1}} \dpa{z_{i_2}} f\right) \left( \dpa{z_{j_1}} \dpa{z_{j_2}} g \right) + \dots
$$
where $c$ is a formal parameter, $\eps_{ij}$ is the alternating symbol, and we have used the summation convention.    The coefficient of $c^n$ in the expansion is 
$$
\frac{1}{2^n n!} \prod_{r = 1}^{n} \eps_{i_r j_r} \left(  \prod_{r = 1}^{n} \dpa{z_{i_r}} f  \right) \left( \prod_{r = 1}^{n} \dpa{z_{j_r}}  g \right). 
$$
This product can be extended to a product on the Dolbeault complex $\Omega^{0,\ast}(\C^2)$, by the same formula except that product of holomorphic functions is replaced by wedge product of Dolbeault forms.  The $\dbar$ operator on $\Omega^{0,\ast}(\C^2)$ is a derivation for the Moyal product, so that $\Omega^{0,\ast}(\C^2)[[c]]$ becomes a non-commutative differential graded algebra.

The fundamental field of our $5$-dimensional theory is a partial connection
$$
A \in \left(  \Omega^1(\R \times \C^2) / \left(\d z_1 \Omega^0 \oplus \d z_2 \Omega^0 \right) \right) \otimes \gl_N. 
$$
Thus, $A$ has $3$ components,
$$
A = A_0 \d t + A_{1} \d \zbar_1 + A_{2} \d \zbar_2
$$
where $A_0$, $A_1$ and $A_2$ are $\gl_N$-valued smooth functions on $\R \times \C^2$.

The action functional is 
\begin{align*} 
S(A) &= \int_{\R \times \C^2} \d z_1 \d z_2 \op{Tr} \left\{ \tfrac{1}{2} A \d A + \tfrac{1}{3} A (A \ast_c A) \right\}  \\
     &=  \int_{\R \times \C^2} \d z_1 \d z_2 \op{Tr} \left\{ \tfrac{1}{2} A \ast_c \d A + \tfrac{1}{3} A \ast_c A \ast_c A) \right\} 
\end{align*}
where $c$ is a coupling constant, and is treated as a formal parameter.   The expressions on the first and second lines are equivalent because the difference between them is a total derivative. 

Note that if we try to treat $c$ as a non-zero number, we find a non-local action functional.  While each term in the expansion of $c$ is local, the coefficient of $c^n$ has $2n$ derivatives, so that summing up the coefficients of $c^n$ will lead to a non-local expression.  Also, note that this theory is very non-renormalizable, because the classical action functional contains arbitrarily high derivatives.  Nevertheless, it was proved in \cite{Cos16} that the theory can be quantized (using techniques developed in \cite{CosLi15}).

The Lie algebra of infinitesimal gauge transformations of the theory is the space $\Omega^0(\R \times \C^2) \otimes \gl_N$, but equipped with the Lie bracket
$$
[f,g] = f \ast_c g - g \ast_c f
$$
which is the commutator in the tensor product of $\Omega^0(\R \times \C^2)$ with it's $c$-dependent Moyal product with the associative algebra $\gl_N$.  An infinitesimal gauge transformation $f$ acts on a field $A$ by 
$$
A \mapsto A + \eps \left( \d f + [f,A] \right) 
$$
where, again, the commutator $[f,A]$ uses the Moyal product.  Here by $\d f$ we mean the image of the de Rham differential in $f$ in the quotient of $\Omega^1(\R \times \C^2)$ by the subspace spanned by $\d z_1$ and $\d z_2$.

Let us now explain how one can put this theory in the BV formalism.  Let us introduce the space
$$
\mc{A} = \Omega^\ast(\R \times \C^2) / \langle \d z_1, \d z_2 \rangle
$$
which is the quotient of the de Rham complex by the differential ideal generated by the $1$-forms $\d z_1$, $\d z_2$.  Thus, 
 $$
\mc{A} = \cinfty(\R \times \C^2) [ \d t, \d \zbar_1, \d \zbar_2] 
 $$
 where the parameters $\d t$, $\d \zbar_1$ and $\d \zbar_2$ are of degree $1$. 

 Evidently, $\mc{A}$ is a commutative algebra and the de Rham operator $\d_{dR}$ descends to a differential $\d_{\mc{A}}$ on $\mc{A}$.  Explicitly, 
 $$
\d_{\mc{A}} = \d t \dpa{t} + \sum  \d \zbar_i \dpa{\zbar_i}. 
 $$
 Further, the Moyal product on $\C^2$ gives a map
 \begin{align*} 
 \mc{A} \otimes \mc{A} &\to \mc{A}[[c]]\\
     \alpha \otimes \beta & \mapsto  \alpha \ast_c \beta
 \end{align*}
 defined by the same formula we gave earlier, but where product of holomorphic functions on $\C^2$ is replaced by the wedge-product in $\mc{A}$. This Moyal product makes $\mc{A}[[c]]$ into a differential-graded commutative algebra over $\C[[c]]$.

 Let $\mc{A}_c$ be the quotient of the space $\Omega^\ast_c(\R \times \C^2)$ of compactly supported forms by the ideal generated by $\d z_1$ and $\d z_2$.  Then, $\mc{A}_c[[c]]$ is equipped with an integration map
 \begin{align*} 
 \int : \mc{A}_c[[c]] &\to \C [[c]]\\ 
     \alpha & \mapsto  \int_{\R \times \C^2} \d z_1 \d z_2 \alpha.
 \end{align*}
 Note that $\int \d_{\mc{A}}\alpha = 0$ and 
 $$
\int \alpha \ast_c \beta = \pm \int \beta \ast_c \alpha
 $$
 for $\alpha,\beta \in \mc{A}$.  

 Thus, the space $\mc{A}_c$ has all the structure needed to define the kind of Chern-Simons action functional that appears in open-string field theory. The space of fields for this action functional is $\mc{A}_c \otimes \gl_N[1]$, and the action functional is
 \begin{align*} 
 S(\alpha) & = \tfrac{1}{2} \int \alpha \ast_c \d \alpha + \tfrac{1}{3} \int \alpha \ast_c \alpha \ast_c \alpha\\
               &= \tfrac{1}{2} \int \alpha  \d \alpha + \tfrac{1}{3} \int \alpha ( \alpha \ast_c \alpha).
 \end{align*}
The first and second lines are the same because the difference between them is a total derivative.

Then, $\mc{A}_c \otimes \gl_N[1]$ is the space of fields of our $5$-dimensional gauge theory in the BV formalism.  The BV action functional is the functional $S$ above. The odd symplectic structure is given by the formula
$$
\ip{\alpha,\beta} = \int \alpha \beta
$$
for $\alpha,\beta \in \mc{A}_c \otimes \gl_N[1]$.

\section{The theory on more general manifolds}
Let $X$ be a holomorphic symplectic complex surface, and suppose that $X$ is equipped with a $\C^\times$ action which scales the holomorphic symplectic form.  (Such an $X$ is called \emph{conical}).  For example, $X$ could be the cotangent bundle of a Riemann surface, or the resolution of an ADE singularity. In this section, we will explain how to put our theory on $\R \times X$.  
\begin{definition}
A $\ast$-product  on the sheaf $\Oo_X$ of holomorphic functions on $X$ is a map of sheaves $\Oo_X \otimes_{\C} \Oo_X \to \Oo_X[[c]]$ which satisfies the following properties.
\begin{enumerate} 
 \item It makes $\Oo_X[[c]]$ into a sheaf of associative algebras, quantizing the sheaf of Poisson algebras $\Oo_X$.
 \item The coefficient of $c^n$ in the product is given by a holomorphic bi-differential operator of finite order.
\item Suppose the $\C^\times$ action on $X$ scales the holomorphic symplectic form by some non-zero weight $k$. Let us give $\Oo_X[[c]]$ an action of $\C^\times$ by combining the given action on $X$ with the action on $\C[[c]]$ which scales $c$ with weight $k$.  We require that the associative product on $\Oo_X[[c]]$ must be compatible with this $\C^\times$-action.   
\end{enumerate}
\end{definition}
The third condition severely restricts the moduli of $\ast$-products on $\Oo_X$. If $X$ is the cotangent bundle of a Riemann surface or a resolution of an ADE singularity, then this condition implies that the space of $\ast$-products is finite-dimensional and is a torsor for $H^2(X)$.

In this situation, we can define our $5$-dimensional gauge theory on $\R \times X$, just as before.  We let $\mc{A}^{X}$ denote the quotient of $\Omega^\ast(\R \times X)$ by the differential ideal generated by $\Omega^{1,0}(X)$, and $\mc{A}_c^{X}$ be the corresponding quotient of $\Omega^\ast_c(R \times X)$. Thus,
   $$
   \mc{A}^{X} = \Omega^\ast(\R) \what{\otimes} \Omega^{0,\ast}(X)
   $$
where $\what{\otimes}$ is the completed projective tensor product. (Using this completed tensor products simply means that the fields are the sections of a graded vector bundle on $\R \times X$ with fibres $\wedge^\ast (T^\ast \R \oplus \br{T}^\ast X)$. 

The fact that the $\ast$-product on $\Oo_X$ is implemented by holomorphic bi-differential operators means that it extends in a natural way to a $\ast$-product on $\mc{A}^{X}$, making $\mc{A}^{X}[[c]]$ into a differential graded associative algebra.  As before, there is an integration map on $\mc{A}_c^{X}$ defined by the formula
$$ 
\int \alpha = \int_{\R \times X} \omega \alpha 
$$
where $\omega$ is the holomorphic volume form on $X$. 

Thus, we can define a field theory where the fields are $\mc{A}_c^{X} \otimes \gl_N[1]$, and the action functional as before is
$$
S(\alpha)  = \tfrac{1}{2} \int \alpha \ast_c \d \alpha + \tfrac{1}{3} \int \alpha \ast_c \alpha \ast_c \alpha.
$$
This describes the theory on $X$ in the BV formalism.

\section{Quantization}
In this section we will state a theorem, proved in \cite{Cos16}, about when this theory can be quantized.  As we mentioned above, this theory is highly non-renormalizable, as the classical action functional has infinitely many terms with more and more derivatives. Even so, we find that consistency of the quantization -- the BV quantum master equation -- constrains the problem of quantization so tightly that there are only $2$ free parameters.

To quantize the theory, we will introduce a loop expansion parameter $\hbar$ and use the action $\hbar^{-1} S$.   We will then quantize order by order in $\hbar$.  In order to cancel certain higher loop anomalies, we will also need to allow negative powers of $c$.  This means that the quantum-corrected action functional is a series in $c$ and $\hbar$ where each positive power of $\hbar$ can be accompanied by some finite number of negative powers of $c$. More formally, this means we have a theory over the sub-ring of the base ring $\C((c))[[\hbar]]$ consisting of series $\sum \hbar^i f_i(c)$ where $f_i(c) \in \C((c))$ and $f_0(c) \in \C[[c]]$.

There are two versions of our quantization theorem, depending on whether we work on a conical surface $X$ or on $\C^2$. In the case that we work on $\C^2$, it is natural to ask that we quantize the theory in a way compatible with all the symmetries of $\C^2$, whereas on a general conical $X$ we can only ask that the quantization is compatible with the $\C^\times$ action on $X$.  Asking that our quantization is compatible with these extra symmetries on $\C^2$ means that the theorem is slightly different in this case. 

Let us first state the version of our theorem that applies for a conical symplectic surface $X$.  In fact, we will restrict to the case that $X$ is a resolution of an ADE singularity or the cotangent bundle to a Riemann surface.  To state the theorem, we should recall that every class $w \in H^2(X)$ gives rise to a first-order deformation of the $\ast$-product on the sheaf $\Oo_X$ of holomorphic functions on $X$ which we write as
$$
\alpha \ast_c \beta + \eps \alpha \ast_{c,w} \beta.
$$ 
This first-order deformation is compatible with the $\C^\times$-action on $X$ in the sense we discussed above. Further, every $\C^\times$-equivariant deformation of the $\ast$-product on $X$ is of this form.  
\begin{theorem}
Let $X$ be a conical complex symplectic surface which is either the cotangent bundle to a Riemann surface or a resolution of an ADE singularity. Consider our gauge theory on $\R \times X$ with gauge group $\mf{gl}_N$. Let us consider quantizing the theory in perturbation theory in the loop expansion parameter $\hbar$, and let us work ``uniformly in $N$'' as discussed in detail in \cite{CosLi15}.  Let us also ask that the quantization is compatible with the $\C^\times$-action on $X$. Finally, let us allow negative powers of $c$ as long as they are accompanied by positive powers of $\hbar$. Heuristically, this means that we should treat $c$ and $\hbar$ as both being small but with $\hbar$ much smaller than any positive power $c^n$ of $c$.  

Then, there are no obstructions (i.e.\ anomalies) to producing a consistent quantization.  At each order in $\hbar$, we are free to add $\op{dim} H^2(X) + 1$ independent terms to the action functional, leading to an ambiguity in the quantization.   Explicitly, the deformation of the action functional we can incorporate at $k$ loops is
 \begin{align*} 
 S & \mapsto S  + \hbar^{k}c^{-k}\left( \tfrac{1}{2} \int \alpha \d \alpha + \tfrac{1}{3} \int \alpha \ast_c \alpha \ast_c \alpha\right)\\
 S & \mapsto S +   \hbar^k c^{-k}\int \alpha \ast_{c} \alpha \ast_{c,w} \alpha 
\end{align*}
where $\ast_{c,w}$ indicates the first-order deformation of the $\ast$-product on $X$ associated to a class $w \in H^2(X)$. 

\end{theorem}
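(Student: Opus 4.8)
The plan is to run the standard obstruction theory for quantization in the BV formalism of \cite{Cos11,CosLi15}: one builds the quantization scale by scale, equivalently order by order in $\hbar$, and having quantized modulo $\hbar^{k+1}$ the obstruction to extending modulo $\hbar^{k+2}$ is a cocycle (the anomaly) in the complex of local functionals $\Obs_{\mathrm{loc}}(\R\times X)$ of the classical theory, sitting in cohomological degree (ghost number) $+1$; the ambiguity in any extension is then a torsor over the degree-$0$ cohomology of the same complex, i.e.\ over gauge-invariant local functionals modulo the classical BV differential $\{S,-\}$. So the theorem reduces to two cohomological claims: that $H^1$ of this deformation complex vanishes, and that $H^0$, in each fixed weight $\hbar^k c^{-k}$, is $(\dim H^2(X)+1)$-dimensional and spanned by the two displayed deformations.

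First I would identify the deformation complex algebraically. Since the theory is a Chern--Simons-type theory whose underlying object is the cyclic differential graded algebra $\mc{A}_c^X\otimes\gl_N[1]$ --- topological along $\R$, holomorphic along $X$, with Moyal product $\ast_c$ and cyclic pairing $\langle\alpha,\beta\rangle=\int\alpha\beta$ of degree $-1$ --- its local functionals are computed by cyclic cochains of $\mc{A}_c^X\otimes\gl_N$, with the shift dictated by the cyclic pairing. Working uniformly in $N$ as in \cite{CosLi15}, the single-trace, $\GL_N$-invariant part surviving the large-$N$ limit is governed, via the Loday--Quillen--Tsygan theorem \cite{LodQui84,Tsy83}, by the reduced cyclic cohomology of the dg algebra $\mc{A}_c^X$ itself. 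Because the $\ast$-product is implemented by holomorphic bidifferential operators, $(\mc{A}^X,\ast_c)$ is a soft resolution of $\Oo_X[[c]]$, so this cyclic cohomology computes the (compactly supported) cyclic cohomology of the deformation-quantization algebra $\Oo_c(X)$ tensored with $\Omega^\ast_c(\R)$ --- the latter, being the de Rham complex of $\R$, contributing only a degree shift.

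I would then carry out the computation using the Feigin--Tsygan/Brylinski--Getzler identification of the periodic cyclic homology of a deformation quantization of a smooth symplectic variety with its de Rham cohomology, $\op{HP}_\bullet(\Oo_c(X))\cong H^\bullet_{\mathrm{dR}}(X)((c))$, the non-periodic corrections being controlled by the $c$-adic filtration --- this is exactly where allowing negative powers of $c$ weighted by powers of $\hbar$ enters. For the surfaces at hand $H^\bullet_{\mathrm{dR}}(X)$ is concentrated in degrees $0$ and $2$, so after bookkeeping the shifts (the $\R$ direction, the dimension shift from the degree-$(-1)$ cyclic pairing, and the cyclic degree) the degree-$0$ part of the deformation complex in each weight $\hbar^k c^{-k}$ is $H^0(X)\oplus H^2(X)$, of dimension $\dim H^2(X)+1$, while the degree-$1$ part vanishes. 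On representatives: $1\in H^0(X)$ is realized by rescaling the whole Chern--Simons action, $\hbar^k c^{-k}\bigl(\tfrac12\int\alpha\,\d\alpha+\tfrac13\int\alpha\ast_c\alpha\ast_c\alpha\bigr)$, and $w\in H^2(X)$ by the first-order deformation $\ast_{c,w}$ of the $\ast$-product, $\hbar^k c^{-k}\int\alpha\ast_c\alpha\ast_{c,w}\alpha$; imposing $\C^\times$-equivariance (with $c$ scaling by the same weight as the symplectic form) is compatible with both and removes all other candidates, which is the content of the theorem.

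\textbf{Main obstacle.} The hard part is not the formal cohomology count but showing that this naive obstruction theory is actually valid for a highly non-renormalizable theory: the classical interaction has arbitrarily many derivatives, so one must first establish --- following the renormalization method of \cite{Cos11} as adapted in \cite{Cos16,CosLi15} --- that counterterms exist at every scale over the enlarged base ring of series $\sum\hbar^i f_i(c)$ with $f_i\in\C((c))$, $f_0\in\C[[c]]$, and one must control the interplay of the $c$-adic and $\hbar$-adic filtrations so that the anomaly genuinely lands in the subcomplex to which the Brylinski--Getzler computation applies (a priori it could carry too negative a power of $c$). A secondary subtlety is making the large-$N$/cyclic reduction rigorous for local functionals built from the completed tensor products $\Omega^\ast(\R)\,\what{\otimes}\,\Omega^{0,\ast}(X)$, and running the Brylinski--Getzler argument in the non-compact conical setting, where one exploits the $\C^\times$-action to restrict to a finite-dimensional weight-graded piece and so reduce to a tractable de Rham cohomology.
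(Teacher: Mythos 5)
The paper does not actually contain a proof of this theorem: immediately after the statement it says ``The theorem is proved in \cite{Cos16}. The proof consists of calculating the cohomology groups describing obstructions and deformations to quantization,'' and gives no further detail. So there is no in-paper argument to compare your proposal against; I can only assess it on its own terms and against the one-line description.

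Your high-level plan --- identify the obstruction/deformation complex with local functionals of the classical BV theory, pass to the single-trace, uniformly-in-$N$ sector via Loday--Quillen--Tsygan to reduce to cyclic cohomology of the cyclic dg algebra $\mc{A}^X_c$, resolve $\mc{A}^X_c$ to $\Oo_c(X)$, and then use the Brylinski--Getzler/Feigin--Tsygan computation of (periodic) cyclic cohomology of a deformation quantization in terms of de Rham cohomology of $X$ --- is a sensible instantiation of ``calculating the cohomology groups describing obstructions and deformations,'' and your identification of the explicit representatives (rescaling the whole Chern--Simons functional for $1 \in H^0(X)$, and $\ast_{c,w}$ for $w \in H^2(X)$) matches the two displayed deformations in the theorem.

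However, there is a genuine gap in the step where you assert ``For the surfaces at hand $H^\bullet_{\mathrm{dR}}(X)$ is concentrated in degrees $0$ and $2$,'' and then read off both the vanishing of $H^1$ of the deformation complex (no anomalies) and the count $\dim H^0 + \dim H^2 = \dim H^2(X) + 1$ of first-order deformations. This is simply false for $X = T^\ast\Sigma$ whenever $H^1(\Sigma) \neq 0$ --- which is every Riemann surface other than $\mbb{P}^1$ and the affine line: a compact genus-$g$ surface has $\dim H^1 = 2g$, and any affine curve with punctures is a bouquet of circles with $H^1 \neq 0$. Since $T^\ast\Sigma$ retracts onto $\Sigma$, these odd-degree classes persist in $H^\bullet_{\mathrm{dR}}(T^\ast\Sigma)$. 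The classes in $H^1(\Sigma)$ are pulled back from the $\C^\times$-fixed base, so they sit in weight zero for the fiber-scaling $\C^\times$-action; they are therefore not automatically excluded by the equivariance requirement, and you would need a separate argument (for example, a more careful parity/degree-shift bookkeeping in the passage from cyclic cohomology of $\mc{A}^X_c$ to $H^\bullet_{\mathrm{dR}}(X)$, showing that $H^1(X)$ lands in a cohomological degree of the local-functional complex that does not correspond to anomalies) to conclude that obstructions vanish. As written, the argument proves vanishing of anomalies only for $X$ with $H^1(X) = 0$, which covers the ADE resolutions but not the cotangent bundles the theorem allows.

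A smaller sanity check is also worth flagging: applying your formula $\dim H^2(X) + 1$ to $X = \C^2 = T^\ast\C$ gives $1$, yet the companion theorem for $\C^2$ in the same section asserts $2$ independent deformations at each loop order (the extra one being $\hbar^k c^{-k}\, c\,\partial_c \int \alpha \ast_c \alpha \ast_c \alpha$), despite imposing \emph{stronger} symmetry constraints ($\sl(2)$- and holomorphic-translation-invariance), which should only cut down the count. This signals that the bookkeeping of which weight sectors of cyclic cohomology contribute --- and how the $c$-dependence of the $\ast$-product interacts with the $\hbar^k c^{-k}$ grading you fix --- is more delicate than your sketch allows, and you should reconcile your count against the $\C^2$ case before trusting it for general conical $X$.
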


One way to summarize the result is the following. Recall that to define the action functional we need to give a $\ast$-product on the sheaf $\Oo_X$ of holomorphic functions on $X$, together with a holomorphic volume form $\omega$ on $X$.  To specify a quantization, we need to specify $\ast$-product and holomorphic volume form which both depend on $\hbar$. 
 
More formally,the moduli space of quantizations consists of pairs consisting of:
\begin{enumerate} 
\item  A $\hbar$-dependent associative $\ast$-product
$$
\alpha \ast_{c,\hbar} \beta = \sum_{k \ge 0} \hbar^k \alpha \ast_{c,k} \beta
$$
on $\Oo_X$, which at $\hbar = 0$ are the original $\ast$-product on $X$.
\item  A $\hbar$-dependent holomorphic volume $\omega_{\hbar} = \sum \hbar^k \omega_k$ on $X$, where $\omega_0$ is the original holomorphic volume form. 
\end{enumerate}
Further, the product $\ast_{c,\hbar}$ must be $\C^\times$-invariant, where both $c$ and $\hbar$ have weight $k$.  The holomorphic volume form $\omega_{\hbar}$ must be of weight $k$ under the $\C^\times$-action on $X$. This data is taken up to gauge equivalence.

The theorem is proved in \cite{Cos16}. The proof consists of calculating the cohomology groups describing obstructions and deformations to quantization.

Let us now describe the version of the theorem that applies to $\C^2$.  We will consider quantizations that are invariant under $\mf{sl}(2)$, compatible with the $\C^\times$ action scaling $\C^2$, and also holomorphically translation invariant. The latter means that they are translation invariant but that the translations $\dpa{\zbar_i}$ act in a homotopically trivial way (that is, in a way exact for the BRST operator). 

\begin{theorem}
Let us consider quantizing the theory on $\R \times \C^2$in perturbation theory in the loop expansion parameter $\hbar$, and let us work ``uniformly in $N$'' as discussed in detail in \cite{CosLi15}.  Let us also ask that the quantization is compatible with the $\C^\times$-action on $\C^2$, is $\mf{sl}(2)$-invariant, and is holomorphically translation invariant.  Also, let us allow negative powers of $c$ of the form $c^{-r} \hbar^n$ where $r \le n$ (heuristically this means that $c$ and $\hbar$ are both small and $\hbar$ is much less than $c$).  

Then, there are no obstructions (i.e.\ anomalies) to producing a consistent quantization.  At each order in $\hbar$, we are free to add $2$ independent terms to the action functional, leading to an ambiguity in the quantization.   Explicitly, the deformations of the action functional we can incorporate at $k$ loops are 
\begin{align*} 
S &\mapsto S + \hbar^{k}c^{-k}\tfrac{1}{2} \int \alpha \d \alpha + \tfrac{1}{3} \int \alpha \ast_c \alpha \ast_c \alpha\\ 
S &\mapsto S + \hbar^k c^{-k} c \dpa{c} \int \alpha \ast_c \alpha \ast_c \alpha.
\end{align*}
These two deformations of the action can be absorbed by a change of coordinates in $c$ and $\hbar$.
\end{theorem}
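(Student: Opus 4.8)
The plan is to run the obstruction--deformation analysis of the BV formalism in the presence of the symmetry constraints, following the same template as the conical case above, and to show that the relevant obstruction group vanishes at every loop order while the space of allowed counterterms is two-dimensional. At order $\hbar^k$ the anomaly is a cocycle of cohomological degree $1$ in the complex of local functionals on the BV fields $\mc{A}_c \otimes \gl_N[1]$ with the classical BRST differential, and a consistent quantization to that order, once it exists, is ambiguous exactly by a degree-$0$ cocycle. Because the theory is topological along $\R$, holomorphic along $\C^2$, and $\R$-translation invariant, this complex of local functionals is quasi-isomorphic to a complex of point operators which at the $c$-classical level is the (reduced) Chevalley--Eilenberg cochain complex of the Lie algebra $\Oo_c(\C^2) \otimes \gl_N$ of gauge symmetries of the trivial solution --- the cochain-level refinement of the identification of the $\delta=0$ observable algebra with $C^\ast(\Oo_\eps(\C^2)\otimes\gl_N)$ recalled above. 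Imposing $\mf{sl}(2)$-invariance, $\C^\times$-equivariance with $c$ carrying the weight of the symplectic form, and holomorphic translation invariance with $\partial_{\zbar_1},\partial_{\zbar_2}$ acting homotopically trivially then cuts this down to a small reduced subcomplex whose cohomology I must compute in degrees $0$ and $1$.

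Since the action is of Chern--Simons type built from the cyclically symmetric pairing $\int \alpha \ast_c \beta$, its deformations are governed --- after passing to the uniform-in-$N$ limit --- by the cyclic cohomology of the cyclic $A_\infty$-algebra $\Omega^{0,\ast}(\C^2)[[c]]$ with its Moyal product and $\dbar$-differential, equivalently by the cyclic (co)homology of the (completed) Weyl algebra $\Oo_c(\C^2)$, with $\gl_N$ entering only through a Morita-invariant factor. This is the mechanism of Loday--Quillen--Tsygan \cite{LodQui84,Tsy83}: the stable Lie-algebra cohomology of $A \otimes \gl_N$ is freely generated by reduced cyclic homology of $A$. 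I would therefore (i) make this identification precise at the level of the local complex, comparing the Dolbeault model to $\Oo_c(\C^2)$ by a spectral sequence; (ii) compute the cyclic (co)homology of the Weyl algebra --- its Hochschild cohomology is concentrated in degree $0$, so the cyclic groups are periodic above that --- and impose the three symmetry constraints; (iii) account for the non-compactness of $\C^2$, which contributes one further class through compactly supported cohomology in top degree, so that the counterterm degree ends up $1+1=2$-dimensional rather than the $\dim H^2(\C^2)+1 = 1$ that the conical formula would give naively; and (iv) check that nothing survives in the anomaly degree, so there are no obstructions.

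A separate and more elementary ingredient is the bookkeeping of the two small parameters: although the action has arbitrarily many derivatives as a series in $c$, it is local order by order in $c$, and a weight argument using the $\C^\times$ action that scales $c$ and $\hbar$ with the weight of the symplectic form forces the $k$-loop counterterm to carry a factor $\hbar^k c^{-k}$, which is the stated shape $c^{-r}\hbar^n$ with $r \le n$. It then remains only to exhibit generators of the two-dimensional space and to see they are absorbable. Writing $S(c) = \tfrac12 \int \alpha\, \d\alpha + \tfrac13 \int \alpha \ast_c \alpha \ast_c \alpha$, the first displayed deformation is $\hbar^k c^{-k}\, S(c)$ and the second is $\hbar^k c^{-k}\, c\,\partial_c S(c)$ (the quadratic term being $c$-independent). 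Since the theory is built from $\hbar^{-1} S$, these are precisely the infinitesimal forms of the substitutions $\hbar \mapsto \hbar(1 + O(\hbar^k c^{-k}))$ and $c \mapsto c(1 + O(\hbar^k c^{-k}))$, so any two quantizations differ by such a change of coordinates in $(c,\hbar)$ --- the last claim.

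I expect the main obstacle to be the combination in steps (i)--(iv): computing the cyclic cohomology of the Weyl algebra \emph{with the symmetry constraints imposed}, while matching it to the \emph{local} cohomology of the field theory rather than to the cohomology of the abstract algebra $C^\ast(\Oo_c(\C^2) \otimes \gl_N)$. Concretely, one must control a spectral sequence relating the Moyal-deformed Dolbeault complex $\Omega^{0,\ast}(\C^2)$ to $\Oo_c(\C^2)$ while keeping exact track of the $\mf{sl}(2)$-weights and of the homotopical triviality of $\partial_{\zbar_i}$, and then verify that exactly two classes survive in the counterterm degree --- no more and no fewer --- and none in the anomaly degree. Once that cohomology is pinned down, the $\hbar c$ power counting and the identification of the two surviving classes with rescalings of $c$ and $\hbar$ are comparatively routine.
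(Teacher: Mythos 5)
First, a caveat about the comparison: the paper does not actually prove this theorem --- it is quoted from \cite{Cos16}, with only the one-line indication that ``the proof consists of calculating the cohomology groups describing obstructions and deformations to quantization.'' Your plan follows exactly that route (BV obstruction theory, reduction of local functionals to an equivariant Gelfand--Fuchs/cyclic-cohomology computation for the Moyal algebra via Loday--Quillen--Tsygan in the uniform-in-$N$ setting, weight counting to force the $\hbar^k c^{-k}$ shape, and identification of the two surviving classes with reparametrizations of $c$ and $\hbar$), so the strategy is the right one. The final step --- recognizing $\hbar^k c^{-k} S(c)$ and $\hbar^k c^{-k}\,c\,\partial_c S(c)$ as infinitesimal reparametrizations of $\hbar$ and $c$ --- is correct and matches the paper's subsequent discussion of the family over $\C[[c,\hbar,\mu]]/(c\mu-\hbar)$.

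The genuine gap is that the decisive computation is not carried out, and the one concrete claim you make about its outcome is wrong in mechanism. The entire content of the theorem is that the $\mf{sl}(2)$-invariant, $\C^\times$-equivariant, holomorphically translation-invariant local deformation complex, with coefficients allowing only $c^{-r}\hbar^n$, $r\le n$, has vanishing cohomology in the obstruction degree and is exactly two-dimensional in the deformation degree at each loop order; your steps (i)--(iv) defer precisely this, so as written the proposal is a plan rather than a proof. Moreover, your step (iii) attributes the second class to compactly supported top-degree cohomology of $\C^2$ compensating for $\dim H^2(\C^2)=0$. That is not where it comes from: the two theorems are proved under different symmetry hypotheses (the conical statement imposes only $\C^\times$-equivariance, the $\C^2$ statement adds $\mf{sl}(2)$ and holomorphic translation invariance and a different regime of negative powers of $c$), and in the paper's own packaging of the moduli of quantizations --- an $\hbar$-dependent $\ast$-product together with an $\hbar$-dependent volume form --- the second class on $\C^2$ is the equivariant deformation $c\,\partial_c(\ast_c)$ of the Moyal product (the $\ast$-product component), while the volume-form component gives the overall rescaling of $S$. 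Establishing that these two classes are everything, and that nothing survives in the anomaly degree once the $c$-filtration is taken into account, is the real work; the compactly-supported-cohomology heuristic will not substitute for it and, if taken literally, would miscount in other examples.
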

It is important to note that when we work on $\C^2$, we only need to involve the parameters $c$, $\hbar$ and $c^{-1} \hbar$.   If we let $\mu$ denote $c^{-1} \hbar$, then we can think of any quantization of the classical theory as depending on three parameters $c,\hbar$ and $\mu$ where $c \mu = \hbar$.  Further, there is essentially one quantization, up to changes of coordinates on these parameters of the form
\begin{align*} 
 \hbar & \mapsto \hbar f(\mu)\\
  c & \mapsto c g(\mu) \\
  \mu &\mapsto \mu f(\mu)/g(\mu)
\end{align*}
where $f,g$ are invertible series in the parameter $\mu$ whose leading coefficient is $1$. In other words, we have a family of theories over the formal scheme which is the formal spectrum of the ring $\C[[c,\hbar,\mu]]/(c \mu = \hbar)$ and this family of theories is unique up to change of coordinates on the formal scheme.  The issue of specifying a particular quantization is then the issue of choosing coordinates on this formal scheme. 

\section{Koszul duality and a two-variable quantum algebra}
\label{sec:Koszul duality}

In this section we will analyze the Koszul dual to the algebra of operators of our $5d$ gauge theory on $\R \times \C^2$.  We will show that this Koszul dual is a quantization of the universal enveloping algebra of the Lie algebra $\op{Diff}(\C) \otimes \mf{gl}_N$ where $\op{Diff}(\C)$ is the algebra  differential operators on the plane. This section uses the techniques developed in \cite{Cos13}.  Roughly speaking, this analysis tells us that the quantum universal enveloping algebra of $\op{Diff}(\C) \otimes \mf{gl}_N$ plays the same role in this theory as the quantum group plays in Chern-Simons theory and the Yangian plays in the theory analyzed in \cite{Cos13}. 

In this section we will use the language of factorization algebras as developed in \cite{CosGwi11}.

Let us recall the strategy as developed in \cite{Cos13}.  Let $D \subset \C^2$ be a small polydisc\footnote{A polydisc in $\C^2$ is the product of two discs in $\C$} centered at $0$.  Then, the space of observables $\Obs(I \times D)$ (where $ I \subset \R$ is an interval) is part of a factorization algebra on $\R \times \C^2$. In particular, by using the factorization product in the $\R$-direction, we find that $\Obs(I \times D)$ can be viewed as part of a factorization algebra on $\R$. 

Our theory is topological in the $\R$-direction. One manifestation of this is the following:
\begin{lemma} 
 For any inclusion $I \subset I'$ of intervals in $\R$, the map
 $$
\Obs(I \times D) \to \Obs(I' \times D)
 $$
 is a quasi-isomorphism. 
\end{lemma}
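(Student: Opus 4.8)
The plan is to exploit the BV/factorization-algebra formalism of \cite{CosGwi11} together with the fact that our $5d$ theory is topological in the $\R$-direction, which at the level of the BRST complex means that the operator $\dpa{t}$ acts homotopically trivially. Concretely, the local observables $\Obs(U)$ for an open set $U = I \times D$ are computed, by the general machinery, as a deformation of $C^\ast$ of the (derived) gauge Lie algebra on $U$; because the theory is $\R$-topological, the de Rham complex $(\Omega^\ast(I),\d)$ in the $\R$-direction enters only through the contractible pair $(\cinfty(I)\d t \to \cinfty(I))$, and restriction of intervals $I\subset I'$ corresponds to the restriction map $\Omega^\ast(I')\to\Omega^\ast(I)$ on this factor.

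The key steps, in order, are as follows. First, I would identify $\Obs(I\times D)$ with the cochain complex $C^\ast\bigl(\Omega^\ast(I)\,\what\otimes\,\mathfrak{h}\bigr)$ (suitably deformed by the quantization), where $\mathfrak{h}$ is the relevant local model Lie algebra built from $\Oo_\eps(D)\otimes\gl_N$ together with the Dolbeault data on $D$; this is exactly the kind of identification already used at $\delta=0$ in the introduction, where $\Obs_{\eps,0}\simeq C^\ast(\Oo_\eps(\C^2)\otimes\gl_N)$. Second, I would observe that for the inclusion $I\subset I'$ the induced map is $C^\ast$ applied to the restriction $\Omega^\ast(I')\what\otimes\mathfrak{h}\to\Omega^\ast(I)\what\otimes\mathfrak{h}$. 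Third, I would invoke the standard fact that $\Omega^\ast(I')\to\Omega^\ast(I)$ is a quasi-isomorphism of commutative dgas (both are quasi-isomorphic to $\C$ in degree $0$, since intervals are contractible), hence remains a quasi-isomorphism after (completed) tensoring with the fixed complex $\mathfrak{h}$ and after applying the Chevalley--Eilenberg functor; a spectral sequence or filtration argument by polynomial degree in the generators of $C^\ast$ reduces this to the underlying quasi-isomorphism. Finally, I would check that the quantization (the $\delta$-deformation of the differential) is compatible with this filtration, so that the deformed differential does not spoil the quasi-isomorphism — this uses that the interaction terms are local and built from the same $\ast$-product, so they are strictly compatible with restriction of opens.

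The main obstacle is the third/fourth step: making precise that the completed projective tensor product and the (pro-)completion implicit in the definition of the observables as a power series in the coupling constants do not obstruct the quasi-isomorphism. One must be careful that $\what\otimes$ preserves quasi-isomorphisms here — this is true because $\Omega^\ast(I)$ and the fibrewise complex are nuclear Fréchet and the maps are continuous with the relevant good properties — and that passing to $C^\ast$, which involves an infinite product over symmetric powers, still detects the quasi-isomorphism degreewise. The cleanest route is to filter $C^\ast$ by the number of Chevalley--Eilenberg generators, note that the associated graded is a product of symmetric powers of $\Omega^\ast(I)\what\otimes\mathfrak{h}[1]$, apply the quasi-isomorphism there, and then run the complete-filtration comparison theorem; the deformation terms raise the filtration degree or preserve it in a way that is handled by the same argument. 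This is essentially the argument of \cite{Cos13} in the present setting, and I expect no genuinely new difficulty beyond bookkeeping the topological vector space structure.
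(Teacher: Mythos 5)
The paper does not actually supply a proof of this lemma: it is stated as a manifestation of the fact that the theory is topological in the $\R$-direction, and the reader is implicitly referred to the general machinery of \cite{CosGwi11,Cos13}. So there is no paper proof to compare against, and the relevant question is whether your sketch is sound. It is, and it is essentially the argument one would write out. The identification $\Obs^{cl}(I\times D)\simeq C^\ast(\mc{A}(I\times D)\otimes\gl_N)$ with $\mc{A}(I\times D)=\Omega^\ast(I)\what\otimes\Omega^{0,\ast}(D)$ is exactly what the paper records a few paragraphs later, the factorization structure map for $I\subset I'$ is indeed $C^\ast$ applied to the restriction $\mc{A}(I'\times D)\to\mc{A}(I\times D)$, and the rest is the de Rham contractibility of an interval fed through the Chevalley--Eilenberg functor and the $\hbar$-filtration.

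Two small points worth tightening. First, to make the step ``$\what\otimes$ preserves quasi-isomorphisms'' clean, you should upgrade the quasi-isomorphism $\Omega^\ast(I')\to\Omega^\ast(I)$ to a continuous cochain homotopy equivalence (the explicit Poincar\'e homotopy on an interval provides this); a homotopy equivalence is preserved by any additive functor, in particular by $-\,\what\otimes\,\Omega^{0,\ast}(D)$ and by $\Sym^n$, so you sidestep the functional-analytic question of whether $\what\otimes$ is exact. Second, when you pass from the associated graded of the CE filtration back to the full complex and then to the $\hbar$-deformed differential, you need the filtration to be complete and exhaustive so the spectral-sequence comparison actually converges; the paper works throughout with \emph{filtered} observables precisely so this step is available. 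With those two remarks supplied, your argument is a correct proof of the lemma.
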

Factorization algebras with this feature are called locally constant. A locally constant factorization algebra on $\R^n$ gives rise, by a result of Lurie \cite{Lur12}, to an $E_n$ algebra.  Applying this result to our situation, we find that $\Obs(I \times D)$ is an $E_1$ algebra, or homotopy-associative algebra. The product map is easy to describe: if $I_1,I_2$ are disjoint intervals, with $I_1$ situated to the left of $I_2$, such that both are contained in $I_3$, then the structure of a factorization algebra gives a map
$$
\Obs (I_1 \times D) \otimes \Obs(I_2 \times D) \to \Obs(I_3 \times D).
$$
Since the quasi-isomorpism class of $\Obs(I \times D)$ doesn't depend in the interval $I$, this map can be viewed as giving a product (in the homotopical sense) on $\Obs(I \times D)$.  The $E_1$ structure gives us the higher homotopies making this product associative up to coherent homotopy.

It turns out that we need to be a little careful about an extra structure present on $\Obs(I \times D)$. This is not just a cochain complex, but a complete filtered cochain complex: there are subspaces $F^i \Obs(I \times D) \subset \Obs(I \times D)$  such that $\Obs(I \times D)$ is the limit of the quotients $\Obs(I \times D)/F^i \Obs(I\times D)$.  The factorization structure is compatible with this filtration, so that $\Obs(I \times D)$ is a filtered $E_1$ algebra.

Further, since we have a family of theories which depends on three formal parameters $c,\mu,\hbar$ such that $c \mu = \hbar$, we find that $\Obs(I\times D)$ is a filtered $E_1$ algebra in the category of modules over the filtered ring $\C[[\mu,c,\hbar]]/ (\mu c - \hbar)$, where the ring is filtered by giving $c$ and $\mu$ filtered degree $1$ and $\hbar$ filtered degree $2$.  

When we set  $\hbar = \mu = 0$, then $\Obs(I \times D)$ becomes the classical observables of the theory, which we denote by $\Obs^{cl}$. The classical observables are functions are, in the BV formalism, functions on the space of fields, with a differential which is $\d = \{S,-\}$ where $\{-,-\}$ is the BV odd Poisson bracket and $S$ is the classical action functional. 

Recall that we can describe the space of fields of our theory and the classical action functional in terms of an auxiliary sheaf of algebras $\mc{A}$ on $\R \times \C^2$.  On an open of the form $I \times D$ where $D$ is a polydisc, we have
$$
\mc{A}(I \times D) = \Omega^\ast(I) \what{\otimes} \Omega^{0,\ast}(D)= \cinfty(I \times D)[\d t, \d \zbar_1, \d \zbar_2]
$$
where the variables $\d t$, $\d \zbar_i$ are of degree $1$.  The algebra $\mc{A}(I \times D)$ has a differential $\d_{\mc{A}}$ which is a sum of the de Rham differential on $\Omega^\ast(I)$ and the Dolbeault differential on $\Omega^{0,\ast}(D)$. The product on $\mc{A}$ is the obvious wedge product, but we have seen how to equip $\mc{A}[[c]]$ with a non-commutative Moyal product $\ast_c$. 

In terms of the algebra $\mc{A}$, the space of fields of the theory is $\mc{A} \otimes \mf{gl}_N[1]$, and the action functional is the Chern-Simons type functional
$$
S = \tfrac{1}{2} \int \op{Tr}( \alpha \ast_c \d_{\mc{A}} \alpha )+ \tfrac{1}{3} \int \op{Tr}(\alpha \ast_c \alpha \ast_c \alpha)
$$
where $\int$ is a certain integration map on the algebra $\mc{A}$. 

As explained in detail in \cite{CosGwi11}, the complex of classical observables can be represented as the Chevalley-Eilenberg Lie algebra cochain complex of the dg Lie algebra $\mc{A}(I \times D)\otimes \mf{gl}_N[[c]]$, where the Lie bracket is the commutator associated to the Moyal product and we take Chevalley-Eilenberg cochains over the base ring $\C[[c]]$. Since we have a quasi-isomorphism 
$$
\mc{A}(I \times D) \simeq \Oo(D)
$$
where $\Oo(D)$ indicates the algebra of holmorphic functions on the polydisc $D$.   We let $\Oo_c(D)$ denote the algebra $\Oo(D)[[c]]$ with the non-commutative Moyal product. It follows that (if one takes sufficient care with topological vector spaces) one has a quasi-isomorphism
$$
\Obs^{cl}(I \times D) \simeq C^\ast (\Oo_c(D)\otimes \mf{gl}_N). 
$$
between classical observables on $I \times D$ and the Chevalley-Eilenberg cochains, over the ring $\C[[c]]$, of $\Oo_c(D) \otimes \mf{gl}_N$.  

The physics interpretation of this statement is as follows.  Every solution to the equations of motion of the theory on $I \times D$ is gauge-equivalent to the trivial solution. However, the trivial solution has a very large algebra of gauge symmetries, whose Lie algebra is $\mf{gl}_N \otimes \Oo_c(D)$.  Therefore, the algebra of observables of the theory is entirely built from the ghost field and its $z_i$ derivatives.  This gives the expression as a Chevalley cochain complex of $\Oo_c(D) \otimes \mf{gl}_N$.

The Koszul dual of the Lie algebra cochains $C^\ast(\mf{g})$ of any Lie algebra $\mf{g}$ is the universal enveloping algebra $U(\mf{g})$ (although, as explained in \cite{Cos13}, one needs to work with filtered algebras for this to hold). We would therefore expect that the Koszul dual of the algebra $\Obs^{cl}(I \times D)$ is the universal enveloping algebra $U(\Oo_c(D) \otimes \mf{gl}_N)$. However, because $\Oo(D)$ is a topological vector space, there are a number of subtleties with this statement.  To avoid this, we will focus on what should be called observables on the formal disc $\what{D}$. This will bring us into the world of pure algebra, without any functional-analysis complications. 

One way to try to define the observables $\Obs(I \times \what{D})$ on the product of the formal disc with an interval is to take the limit of the spaces $\Obs(I \times D)$ over a nested sequence of polydiscs $D$ centered at the origin. This turns out to be a tricky operation in our context, so we will take another approach.  

Note that $\Obs(I \times D)$ has a $\C^\times$-action arising from the rotation of the polydisc $D$ around the origin. The parameters $c$ and $\hbar$ have weight $2$ under this $\C^\times$ action, and $\mu$ has weight $0$.  We will let 
$$
\Obs^{(k)}(I \times D) \subset \Obs(I \times D)
$$
denote the observables which are of weight $k$ under this $\C^\times$ action.  We then define
$$
\Obs(I \times \what{D}) = \oplus_k \Obs^{(k)} (I \times D)
$$
where the direct sum is taken in the world of complete filtered vector spaces. Explicitly, this means that
$$
\Obs(I \times \what{D}) = \lim_i \left( \oplus_k \Obs^{(k)}(I \times D) / F^i \Obs^{(k)}(I \times D) \right). 
$$
If we do the same procedure at the classical level, we find indeed that
$$
\Obs^{cl}(I \times \what{D}) \simeq C^\ast( \Oo_c(\what{D})  
$$
where $\Oo_c(\what{D}) = \C[[z_i,c]]$ with the non-commutative Moyal product. This justifies the interpretation of this construction as giving the observables on the formal disc. 

Once we do this, we find Koszul duality (as developed in \cite{Cos13}) works perfectly, and that the Koszul dual of $\Obs^{cl}(I \times D)$ is the universal enveloping algebra of $\Oo_c(\what{D})\otimes \mf{gl}_N$. 

We would like to construct the Koszul dual of quantum observables. Before we do this, we need to discuss one extra step.  Koszul duality is an operation performed on \emph{augmented} algebras, that is, algebras equipped with a rank $1$ module. Given an augmented algebra $A$ over a ring $R$, equipped with an augmentation homomorphism $A \to R$, the Koszul dual is $\R \op{Hom}_A(R,R)$.  

To apply this construction to our situation, we need to know that quantum observables are augmented. (Classical observables are easily seen to be augmented: the augmentation is given by evaluating a function of the fields at the field $\alpha = 0$). We can use a homological argument to show that quantum observables have a unique augmentation. 
\begin{lemma}
There is a unique filtered augmentation of the filtered $E_1$ algebra $\Obs(I \times \what{D})$ over the ring $\C[[c,\mu,\hbar]]/(c \mu - \hbar)$.   
\end{lemma}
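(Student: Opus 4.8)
The plan is to deduce both existence and uniqueness of the augmentation from connectivity of $\Obs(I \times \what{D})$, by identifying an augmentation with the Postnikov truncation map; concretely this amounts to an obstruction-theory argument whose obstruction and ambiguity groups are computed by Koszul duality and turn out to vanish.

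The two homological inputs are as follows. From the identification recalled above, $\Obs^{cl}(I \times \what{D}) \simeq C^\ast\bigl(\Oo_c(\what{D}) \otimes \mf{gl}_N\bigr)$, and since $\Oo_c(\what{D}) \otimes \mf{gl}_N$ is an ordinary Lie algebra concentrated in cohomological degree $0$, its Chevalley--Eilenberg cochains live in degrees $\ge 0$ with $H^0 = \C[[c]]$ spanned by the unit. The quantum differential is a derivation, so it annihilates $1$; running the spectral sequence of the filtration $F^\bullet$ --- whose first page is the cohomology of the associated graded, hence (via the specialization $\mu = \hbar = 0$, which recovers $\Obs^{cl}$) is concentrated in degrees $\ge 0$ with $E_1^0$ spanned by $1$ --- one concludes that $\Obs(I \times \what{D})$ is again connective and that $H^0\bigl(\Obs(I \times \what{D})\bigr) = R := \C[[c, \mu, \hbar]]/(c\mu - \hbar)$, with $R \to H^0$ the unit map.

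Given this, the argument is short. Postnikov truncation of the connective filtered $E_1$-algebra $\Obs(I \times \what{D})$ over $R$ produces a canonical map of filtered $E_1$-algebras over $R$
$$
\epsilon \colon \Obs(I \times \what{D}) \longrightarrow \tau_{\le 0}\Obs(I \times \what{D}) \simeq R ,
$$
and the composite $R \to \Obs(I \times \what{D}) \xrightarrow{\epsilon} R$ is the identity because the unit maps to the unit; thus $\epsilon$ is a filtered augmentation. For uniqueness, any filtered augmentation $\epsilon'$ is a map to the $0$-truncated algebra $R$, so it factors through $\tau_{\le 0}\Obs(I \times \what{D}) \simeq R$, and the induced morphism of filtered $R$-algebras $R \to R$ retracts the unit, hence is the identity; therefore $\epsilon' = \epsilon$. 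Equivalently, one may lift the classical augmentation (``evaluation of a function of the fields at $\alpha = 0$'') order by order along $F^\bullet$: writing $A = C^\ast(\Oo_c(\what{D}) \otimes \mf{gl}_N)$, at each stage the obstruction to extending lies in $\op{Ext}^2_A(\C, \C)$ and the set of extensions is a torsor over $\op{Ext}^1_A(\C, \C)$; since $\RHom_A(\C, \C) \simeq U(\Oo_c(\what{D}) \otimes \mf{gl}_N)$ is concentrated in cohomological degree $0$, both groups vanish.

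The homological algebra here is soft; the real difficulty is the filtered/completed bookkeeping --- which is precisely why one works with the formal disc $\what{D}$ before discussing Koszul duality at all. One must check that Postnikov truncation (equivalently, the order-by-order induction) is legitimate for filtered $E_1$-algebras over the ring $R$, that $\tau_{\le 0}\Obs(I \times \what{D})$ is genuinely $R$ equipped with its stated filtration rather than some completion of it, and that the associated graded of $\Obs(I \times \what{D})$ really carries the cohomology of $\Obs^{cl}(I \times \what{D})$, so that the spectral-sequence control of $H^{\le 0}$ is justified. Once these points are in place the lemma follows.
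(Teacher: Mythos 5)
Your second argument --- the order-by-order lift controlled by $\op{Ext}^1$ and $\op{Ext}^2$, killed because the Koszul dual $U(\Oo_c(\what{D}) \otimes \mf{gl}_N)$ sits in cohomological degree $0$ --- is exactly the paper's proof. Two small imprecisions: in the paper the obstruction and lifting groups are $\op{Ext}^2_{\Obs(I \times \what{D})/\mu^n}(M_n, M_n)$ and $\op{Ext}^1$, taken over the \emph{quantum} observables modulo $\mu^n$ with $M_n$ the order-$n$ augmentation module, and they are only reduced to $\op{Ext}$-groups of $M_0 = \C[[c]]$ (not $\C$) over the classical $A = C^\ast(\Oo_c(\what{D}) \otimes \mf{gl}_N)$ by running a further spectral sequence on the $(\mu,\hbar)$-filtration. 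Your ``at each stage the obstruction lies in $\op{Ext}^2_A(\C,\C)$'' compresses that reduction step away; the content is the same, but it is worth stating which algebra you take $\op{Ext}$ over and how the passage to the classical case is justified.

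The Postnikov-truncation framing you put first is a genuinely different route, but the caveats you flag at the end are not ``filtered/completed bookkeeping'' that one can wave away: they are the whole problem. In the category of complete filtered complexes there is no t-structure for which both (i) $R$, in degree $0$ with its $(\mu,c)$-filtration, is $0$-truncated, and (ii) $\tau_{\le 0}$ is the naive degree-wise truncation and satisfies the universal property you invoke. The naive truncation by cohomological degree does not respect the filtration (a degree-$0$ cocycle need not have a filtration-compatible splitting to $H^0$), while the Beilinson-type t-structure on filtered objects truncates in the filtration weight rather than the cohomological degree, so $R$ is not $0$-truncated there either. Establishing that a good truncation exists and that ``any filtered augmentation factors through $\tau_{\le 0}$'' is exactly the work the order-by-order induction does by hand; your ``equivalently'' is hiding the entire proof. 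On the other hand, your input facts --- connectivity of $\Obs(I \times \what{D})$ and $H^0 = R$, via the filtration spectral sequence --- are correct and match what the paper implicitly uses.
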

A filtered augmentation is a map of filtered $E_1$ algebras over $\C[[c,\mu,\hbar]]$
$$
\Obs(I \times \what{D}) \to \C[[c,\mu,\hbar]].
$$
\begin{proof}
There is a unique filtered augmentation when $\hbar = \mu = 0$.  Let $M_0 = \C[[c]]$ denote the augmentation module.  Suppose we have an augmentation modulo the ideal generated by $\mu^n$ (which contains $\hbar^n$).  Let $M_n = \C[[c,\mu,\hbar]]/(\mu c - \hbar, \mu^n)$ denote the augmetation module.  An augmentation exists at the next order in $\hbar$ and $\mu$ if and only if $M_n$ lifts to a module $M_{n+1}$. The obstruction to having such a lifting is
$$
\op{Ext}^2_{\Obs(I \times D^2)/\mu^n} (M_n,M_n)
$$
where we take the $\op{Ext}$-groups over the algebra of observables modulo $\mu^n$, in the world of filtered cochain complexes. If the obstruction vanishes, then the space of lifts to the next order is a torsor for $\op{Ext}^1$.

We can compute these filtered $\op{Ext}$-groups by a spectral sequence associated to the filtration by powers of $\mu$ and $\hbar$. This reduces us to the case of $ M_0=\C[[c]]$, in which case we are computing the $\op{Ext}$-groups of the module $M_0$ over the ring of classical observables.  We have already seen that these $\op{Ext}$-groups are the cohomology of the Koszul dual, and that the Koszul dual of 
$$
\Obs^{cl}(I \times \what{D}) \simeq C^\ast (\Oo_c(\what{D})\otimes \mf{gl}_N)
$$
is the universal enveloping algebra $U(\Oo_c(\what{D}) \otimes \mf{gl}_N)$, which is concentrated in degree $0$. Thus, $\op{Ext}^1$ and $\op{Ext}^2$ vanish and there is a unique augmentation. 
\end{proof}
\begin{remark}
The same argument allows us to construct an augmentation for the algebra when we take our surface  to be a resolution of an ADE singularity instead of $\C^2$. In this case, we should work with the algebra of observables on a formal neighbourhood of the exceptional locus. 
\end{remark}
This lemma shows that we now have enough to define the Koszul dual of the algebra of quantum observables. 
\begin{definition} 
 Let $U_{\hbar}^{QFT}(\Oo_c(\what{D}) \otimes \mf{gl}_N)$ be the cohomology of the Koszul dual of the algebra of quantum observables $\Obs(I \times \what{D})$. This is a flat family of algebras over the ring $\C[[c,\mu,\hbar]]/(\mu c -\hbar)$. 
\end{definition}

Note that this algebra has a $\C^\times$-action arising from the $\C^\times$ action on the two-dimensional formal disc, which also scales the parameters $c$ and $\hbar$ each with weight $2$.  We can use this $\C^\times$-action to define a restricted version of this algebra which is a quantization of $U(\Oo_c(\C^2) \otimes \mf{gl}_N)$, where $\Oo_c(\C^2)$ is the algebra of polynomials on $\C^2$ with the Moyal product. Of course, $\Oo_c(\C^2) = \op{Diff}_c(\C)$ is the algebra of differential operators on the plane with $c$ playing the role of Planck's constant. 

The restricted algebra is defined as follows.  Working modulo $\mu^n$, let $U^{(k)}_{\hbar}(\Oo_c(\what{D}) \otimes \mf{gl}_N)/ \mu^n$ denote the $k$th eigenspace of the $\C^\times$-action.  Modulo $\mu^n$, the restricted algebra is the direct sum of these spaces:
$$
U^{QFT}_{\hbar}(\op{Diff}_c(\C) \otimes \mf{gl}_N) / \mu^n = \oplus_k U^{QFT,(k)}_{\hbar} (\Oo_c(\C)\otimes \mf{gl}_N) / \mu^n
$$
We then set $U_{\hbar}^{QFT}(\op{Diff}_c(\C)\otimes \mf{gl}_N)$ to be the inverse limit of these truncated algebras. The truncated algebras (modulo $\mu^n$) are algebras over $\C[c,\mu,\hbar]/(c \mu - \hbar, \mu^n)$. This is a consequence of the fact that $c$ is of weight $2$ under the $\C^\times$ action so only polynomials in $c$ can appear. When we take the inverse limit, we find a family of algebras over $\C[c][[\mu,\hbar]]/(\mu c - \hbar)$. Thus, the variable $c$ is treated as a polynomial but the other variables are formal. 

Now, the $\C^\times$ action scales the variables $c$ and $\hbar$.  This means that we can use the $\C^\times$ action to set $c = 1$ (note that we can not have $c = 0$ unless $\hbar = 0$ as well). If we set $c = 1$, then $\mu = \hbar$, so we have an algebra $U^{QFT}_{\hbar}(\op{Diff}(\C)\otimes \mf{gl}_N)$ over $\C[[\hbar]]$ deforming the algebra $U(\op{Diff}(\C) \otimes \mf{gl}_N)$.

\begin{theorem}
The family of algebras $U^{QFT}_{\hbar}(\op{Diff}(\C)\otimes \mf{gl}_N)$ constructed in the previous section from our $5$-dimensional gauge theory is a non-trivial deformation of $U(\op{Diff}(\C) \otimes \mf{gl}_N)$. 
\end{theorem}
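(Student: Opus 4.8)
The plan is to identify the leading term of the deformation with an explicit class in the second Hochschild cohomology $HH^{2}$ of $U(\op{Diff}(\C)\otimes\mf{gl}_{N})$ and to show that this class is non-zero; since a formal deformation is trivial exactly when its first-order cocycle is a Hochschild coboundary, it suffices to compute a single first-order structure constant and check that it cannot be absorbed by a filtered change of variables. By construction the associative product on $U^{QFT}_{\hbar}(\op{Diff}(\C)\otimes\mf{gl}_{N})$ is the one induced by the factorization product of $\Obs(I\times\what{D})$ in the topological $\R$-direction, so the computation is a Feynman-diagram computation in the $5d$ gauge theory, organised by the loop parameter $\hbar$.

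First I would recall the Koszul-duality dictionary of \cite{Cos13}: the generators of the Koszul dual correspond to the linear local operators of the theory, that is, to elements $z_{1}^{m}z_{2}^{n}\otimes M$ of $\Oo_{c}(\what{D})\otimes\mf{gl}_{N}$ assembled from the ghost field and its holomorphic derivatives, and at $\hbar=0$ their product is precisely the universal enveloping product of the Moyal-twisted Lie algebra $\Oo_{c}(\what{D})\otimes\mf{gl}_{N}$, reproduced by the tree-level diagrams. The order-$\hbar$ correction to the product of two such operators, placed at points $t_{1}<t_{2}$ of the line $\R\times\{0\}$, is a sum of one-loop diagrams built from the cubic vertex of the non-commutative Chern--Simons action and the propagator, with the output read off as a local operator. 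The symmetries available --- $\mathfrak{sl}(2)$-invariance, compatibility with the $\C^{\times}$-action (which fixes the weights of $c$, $\hbar$ and $\mu$), holomorphic translation invariance, and the fact that we work uniformly in $N$ --- restrict the contributing topologies to a short list and fix the answer up to finitely many numerical constants; in particular the correction must be proportional to $\mu=c^{-1}\hbar$ times a bilinear expression in the generators of the expected weight.

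Next I would evaluate the surviving diagram --- essentially the one-loop ``triangle'' with three external legs --- whose value is a configuration-space integral over the positions of the interaction vertices; extracting its coefficient, I expect to find a non-zero multiple of $\mu$ times an explicit bilinear term, for instance a genuine correction to the commutator of $z_{1}z_{2}\otimes M_{1}$ with $z_{1}z_{2}\otimes M_{2}$. Finally I would argue non-triviality: antisymmetrising the first-order cochain gives a Lie-algebra $2$-cocycle on $\op{Diff}(\C)\otimes\mf{gl}_{N}$ deforming the bracket, and one checks on the chosen pair of generators that it is not of the form $x\otimes y\mapsto[x,\xi(y)]-[y,\xi(x)]-\xi([x,y])$ for any linear $\xi$ --- the space of such coboundaries in the relevant weight is finite-dimensional and explicitly describable, so this is a finite computation. (Alternatively, one can note that the correction carries a factor of $c^{-1}$, placing it in a piece of the $\mu$--$\hbar$ filtration that a filtered automorphism cannot reach, but the cocycle check is cleaner.) This yields that $U^{QFT}_{\hbar}(\op{Diff}(\C)\otimes\mf{gl}_{N})$ is a non-trivial deformation of $U(\op{Diff}(\C)\otimes\mf{gl}_{N})$.

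The hard part will be the Feynman-diagram computation itself: verifying that the relevant one-loop integral converges under the holomorphic regularisation built into the quantisation of \cite{Cos16}, determining exactly which diagrams contribute, confirming that they do not conspire to cancel, and pinning down the sign and normalisation of the surviving coefficient. A secondary technical point is justifying the identification of the leading deformation term with an $HH^{2}$-class in the filtered, topological-vector-space setting, but this is the same framework used for the augmentation lemma above and should go through by the same spectral-sequence argument.
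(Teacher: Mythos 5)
Your overall plan agrees with the paper's strategy at the top level: compute the first-order correction to the product from a Feynman-diagram OPE calculation in the $5$-dimensional theory, recognize the result as a Hochschild $2$-cocycle on $U(\op{Diff}(\C)\otimes\mf{gl}_N)$, and show that this cocycle is not a coboundary. However, the step where you assert that checking non-triviality is a ``finite computation'' on coboundaries is where the argument has a genuine gap, and it is exactly the point at which the paper takes a very different route.

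The difficulty is that coboundaries for a first-order deformation of $U(\mf{g})$ that is quadratic in the generators are parametrized (after HKR) by $\mf{g}$-equivariant classes coming from maps $\mf{g}\to\Sym^{\le 2}\mf{g}$, not by linear maps $\xi:\mf{g}\to\mf{g}$, and for $\mf{g}=\op{Diff}(\C)\otimes\mf{gl}_{N}$ the space of candidate cochains of the fixed weight under the $\C^{\times}$-action which scales $z$ and $\partial$ oppositely is not finite-dimensional, so the proposed pointwise check on coboundaries does not terminate. This is precisely the obstacle the paper flags in the introduction: the Hochschild cohomology of $U(\Oo_{\eps}(\C^{2})\otimes\mf{gl}_{N})$ for small $N$ is related to the Gelfand--Fuks-type cohomology of $\op{Diff}(\C)$, which is not known. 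To get around this, the paper replaces $\mf{gl}_N$ by the family $\mf{gl}(N+R\mid R)$ and proves non-triviality only \emph{for sufficiently large $R$} (the proposition at the end of the OPE section and Appendix~\ref{appendix_non_trivial_deformation}). After stabilizing in $R$, the (super) Loday--Quillen--Tsygan theorem converts the relevant invariant Lie-algebra cochain complex into a cyclic complex of $A\oplus A^{\vee}[-1]$ for $A=\C[z_{1},z_{2}]$, and non-triviality is established not by enumerating coboundaries but by exhibiting an explicit \emph{closed chain} in the dual complex, namely $z_{1}\otimes 1^{\ast}\otimes z_{2}\otimes 1^{\ast}$, and pairing it against the cocycle $\eps_{ij}\,z_{i}^{\ast}\otimes z_{j}^{\ast}\otimes 1\otimes 1 - \eps_{ij}\,z_{i}^{\ast}\otimes 1\otimes z_{j}^{\ast}\otimes 1 + \cdots$. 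That dual pairing is what makes the check finite. Your parenthetical alternative, that the $c^{-1}$ factor places the correction outside the reach of filtered automorphisms, does not work either, because the natural deformation parameter is $\mu=c^{-1}\hbar$ and the deformation-theory filtration already accounts for it; there is no a priori reason a $\mu$-linear term cannot be gauged away.

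Two smaller corrections. The contributing diagram at one loop is not a ``triangle with three external legs.'' By the $\hbar$-counting (each vertex $\hbar^{-1}$, each internal line $\hbar$), the unique contributing topology is a \emph{tree} with the two special vertices $X_{\alpha\beta}$, $X_{\gamma\delta}$ joined through two cubic interaction vertices, each of which also carries one external ghost leg --- so two external legs, not three, and no honest loop in the graph. Also, the first-order correction you want is already visible in the commutator $[E_{\alpha\beta}\,\partial,\,E_{\gamma\delta}\,z]$ of the lowest-weight generators; one does not need to go to $z_{1}z_{2}\otimes M$.
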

This is proved by an explicit calculation of the OPE in our $5$-dimensional gauge theory, together with some more abstract results concerning the Hochschild cohomology of the classical algebra.  The relevant OPE is calculated in the next section.

\section{Calculation of an operator product}
\label{section_ope}
 In this section we will perform a one loop calculation of the operator product expansion of certain local operators in the five- dimensional gauge theory. The operators we consider are of ghost number one, and are linear functions of the ghost field. Upon performing Koszul duality, this calculation will give the form of the first order deformation of the Koszul dual algebra.

From a physics perspective it may seem unusual to calculate the OPE of operators of non-zero ghost number. There is, however, another interpretation of this calculation. We will show that the OPE calculation we perform describes the obstruction to certain classical line operators existing at the quantum level.

\subsection {}
Let us start by discussing how calculations of OPEs of local operators relate to the algebra structure of the Koszul dual algebra.  To do this, we first have to recall some details about how Koszul duality responds to deformations of algebras. We will analyze this in the simplest example of Koszul duality, that between the symmetric algebra on a vector space $V$ and the exterior algebra on the dual $V^\ast$.   

Given any dg algebra $A$, let $\op{Def}^\ast(A)$ denote the cohomology of the cochain complex controlling first order deformations of $A$.  Thus, $\op{Def}^0(A)$ is the space of first-order deformations of $A$, modulo equivalence; $\op{Def}^1(A)$ contains obstructions to a first-order deformation extending to a second order; and $\op{Def}^{-1}(A)$ is the space of derivations of $A$. Then, there is an exact sequence 
$$
\dots \to \op{Def}^0(A) \to HH^2(A) \to H^2(A) \to \dots
$$ 
where $HH^2(A)$ is the second Hochschild cohomology group, and $H^2(A)$ is the second cohomology group of $A$. There is a natural cochain map from the Hochschild cochains of $A$ to $A$ itself. In the case that $A$ is situated in degree $0$, this exact sequence tells us that $\op{Def}^0(A)$ is isomorphic to $HH^2(A)$, which is the familiar statement that $HH^2(A)$ describes first-order deformations of an ordinary (not dg) algebra.

In the particular case that $A = \Sym^\ast V$, the Hochschild-Kostant-Rosenberg theorem tells us that the Hochschild cohomology of $A$ is $\Sym^\ast V \otimes \wedge^\ast V^\ast$.  In this case, the map from the Hochschild cohomlogy of $A$ to $A$ is just the natural projection onto $\Sym^\ast V \otimes \wedge^0 V^\ast$.   This is of course surjective, so we find that deformations of $A$ are controlled by the second cohomology of the complex  $\Sym^\ast V \otimes \wedge^{> 0} V^\ast$ (with zero differential). Here we put $V^\ast$ in degree $1$ so that $\wedge^i V^\ast$ is in degree $i$.  The first cohomology of this complex describes infinitesimal symmetries of $\Sym^\ast V$, the third cohomology contains obstructions to a first-order deformation extending to second order, etc. 

To sum up, we find that the space of first-order deformations of $\Sym^\ast V$ is given by $\wedge^2 V^\ast \otimes \Sym^\ast V$. Elements of this space give rise to Poisson brackets on $\Sym^\ast V$, so it is not surprising that they correspond to first-order deformations.

Koszul duality is an operation that one applies to \emph{augmented} algebras.    Not all first-order deformations will preserve the augmentation: only those preserving the maximal ideal of the algebra will.  Applying this to the example of $\Sym^\ast V$, we find that the complex describing deformations of $\Sym^\ast V$ as an augmented algebra is $\Sym^{>0} V \otimes \wedge^{>0} V^\ast$.  For instance, a constant-coefficient Poisson bracket on $\Sym^\ast V$,  which comes from an element of $\wedge^{2} V^\ast$, does not give rise to a deformation preserving the augmentation. This is because it does not preserve the maximal ideal $\Sym^{>0} V \subset \Sym^\ast V$.

Let's now discuss the complex controlling deformations of the Koszul dual algebra $\wedge^\ast V^\ast$. This algebra should be treated as a dg algebra, with zero differential and grading where $\wedge^i V^\ast$ is in degree $i$.  The HKR theorem applies in this situation too, and tells us that the Hochschild cohomology of this algebra is $\wedge^\ast V^\ast \otimes \what{\Sym}^\ast V$.  Here $\what{\Sym}^\ast$ indicates the completed symmetric algebra, defined as the product of the spaces $\Sym^i V$.  If one only looks at deformations which shift the natural grading on $\wedge^\ast V^\ast$ by a finite amount, one finds instead that the relevant version of Hochschild cohomology is $\wedge^\ast V^\ast \otimes \Sym^\ast V$.  (We will gloss over such issues with completions in what follows).  

Deformations are not described by the Hocshchild cohomology, but by the kernel of the natural map from Hochschild cohomology to $\wedge^\ast V^\ast$. Thus, we find that deformations of the exterior algebra $\wedge^\ast V^\ast$ are given by $\wedge^\ast V^\ast \otimes \Sym^{> 0} V$.  Finally, we also want to restrict attention to those deformations which preserve the augmentation.  This restricts is to $\wedge^{> 0} V^\ast \otimes\Sym^{>0} V$.

Thus, on both sides of Koszul duality, we find that the complex describing augmented deformations is the same: it is $\wedge^{> 0} V^\ast \otimes \Sym^{> 0} V$.  The operation of Koszul duality sets up an isomorphism between the complexes of augmented deformations of each algebra.  We will use the following useful (and easy) fact: \emph{the isomorphism}
$$
\op{Def}_+(\wedge^\ast V^\ast) =   \wedge^{> 0} V^\ast \otimes \Sym^{> 0} V \iso \wedge^{> 0} V^\ast \otimes \Sym^{> 0} V  = \op{Def}_+(\Sym^\ast V) 
$$
\emph{between the complexes of augmented deformations of $\wedge^\ast V^\ast$ and $\Sym^\ast V$ is the identity map}.

This allows us to track through deformations of the algebras on both sides. Let's do some examples.

As a first example, suppose we perform a first-order deformation of the differential on $\wedge^\ast V^\ast$.  Such  a deformation is given by a derivation, which is determined by what it does on the space $V^\ast$ of generators. It is therefore given by a linear map $V^\ast \to \wedge^2 V\ast$, that is, an element of $V \otimes \wedge^2 V^\ast \subset \Sym^{>0} V \otimes \wedge^{>0} V^\ast$.  

On the Koszul dual side, an element of $\wedge^2 V^\ast \otimes V$ gives a linear-coefficient Poisson bracket on $\Sym^\ast V$.  So we see that adding a differential to $\wedge^\ast V^\ast$ has the effect of making $\Sym^\ast V$ non-commutative with a linear Poisson bracket. (Since, so far, this is a first-order deformation, there is no requirement that this linear-coefficient bracket satisfies the Jacobi identity). 

Suppose that we extend the first-order deformation of the differential on $\wedge^\ast V^\ast$ to all orders. The resulting differential squares to zero if and only if the corresponding element of $V \otimes \wedge^2 V^\ast$, viewed as a linear map $\wedge^2 V \to V$, satisfies the Jacobi identity.  Thus, we find that non-infinitesimal deformations of the differential on $\wedge^\ast V^\ast$ all arise by treating $\wedge^\ast V^\ast$ as the Chevalley-Eilenberg chain complex for some chosen Lie bracket on $V$.

On the Koszul dual side, a linear-coefficient bracket on $\Sym^\ast V$ gives an all-order deformation of the algebra if and only if it satisfies the Jacobi identity, and so defines a Lie bracket on $V$. In this case, the deformation of $\Sym^\ast  V$ is into the universal enveloping algebra of this Lie algebra.

We have thus sketched, in this example, the standard fact that the Chevalley-Eilenberg cochain complex of a Lie algebra is Koszul dual to the universal enveloping algebra of the same Lie algebra.

Let's consider another deformation, this time by an element of $\Sym^2 V \otimes \wedge^2 V^\ast$. From the point of view of the exterior algebra $\wedge^\ast V^\ast$, this deformation makes the algebra non-commutative.  Indeed, we can think of $\Sym^2 V \otimes \wedge^\ast V^\ast$ as the space of bi-vectors on the odd super-manifold $\pi V$.   We are therefore deforming the product on $\wedge^\ast V^\ast$ by using a quadratic-coefficient Poisson tensor. 

From the point of view of $\Sym^\ast V$, an element of $\Sym^2 V \otimes \wedge^2 V^\ast$ is again a quadratic-coefficient Poisson tensor, and gives rise to a non-commutative deformation of this algebra.

To see how this works concretely, let's choose a basis of $V$ and $V^\ast$ so that we can write $\Sym^\ast V$ as a polynomial algebra $\C[x_i]$ in commuting variables $x_i$, and $\wedge^\ast V^\ast$ as a polynomial algebra $\C[\eps^i]$ in anti-commuting variables $\eps^i$.  Let us denote the tensor in $\Sym^2 V \otimes \wedge^2 V^\ast$ by $A$ with entries $A^{ij}_{kl}$, where it is symmetric in the raised indices and anti-symmetric in the lowered indices.  Then the first-order deformation of $\C[x_i]$ coming from $A$ is given by setting
$$
x_i \ast x_j = x_i x_j + \hbar \tfrac{1}{2} A_{ij}^{kl} x_k x_l + O(\hbar^2).  
$$ 
Dually, the first-order deformation of $\C[\eps^i]$ is given by setting
$$
\eps^i \ast \eps^j = \eps^i \eps^j + \hbar \tfrac{1}{2}A^{ij}_{kl}\eps^k \eps^l.  
$$
This, the upper and lower indices in the tensor $A$ play dual roles in the deformations of $\C[\eps^i]$ and of $\C[x_i]$.   

We can also view these deformations as deforming the relations in the algebra, by saying that
\begin{align*} 
[x_i, x_j] &= \hbar A_{ij}^{kl} x_k x_l \\
[\eps^i, \eps^j] &= \hbar A^{ij}_{kl} \eps^k \eps^l  
\end{align*}
(where $[-,-]$ refers to the graded commutator).  Note that the upper and lower indices in the tensor $A^{ij}_{kl}$ play opposite roles in this expression. 

\subsection{}
Next, let's discuss the deformations coming from a tensor in $\Sym^3 V \otimes \wedge^2 V^\ast$.   From the point of view of the algebra $\Sym^\ast V$, these elements give rise to a deformation whereby the commutator of the generators is a cubic polynomial in the generators.    In a basis, we have
$$
[x_i,x_j] = \hbar A^{klm}_{ij} x_k x_l x_m. 
$$
From the point of view of the exterior algebra $\wedge^\ast V^\ast$, these tensors give rise to a non-trivial $A_\infty$ structure in which the operation $m_3$ is turned on.  We have$$
m_3(\eps^i, \eps^j, \eps^k) = \hbar \tfrac{1}{3!} A^{ijk}_{lm} \eps^l \eps^m.  
$$
In a similar way, deformations of $\C[x_i]$ in which the generators commute to give a polynomial of degree $d$ correspond to $A_\infty$ deformations of $\C[\eps^i]$ in which the $A_\infty$ operation $m_d$ is turned on.  In what follows, however, we will not need to consider these $A_\infty$ deformations.  

\subsection{}
After these generalities, let us return to our quantum field theory.  We have seen that there is a commutative dg algebra of classical observables on $I \times \what{D}$, where $\what{D}$ is a formal disc in $\C^2$.  We have also seen that $\Obs^{cl}(I \times \what{D})$ is canonically quasi-isomorphic to the Chevalley-Eilenberg cochains $C^\ast(\mf{gl}_N \otimes \Oo_c(\what{D}) ) $ where $\Oo_c(\what{D})$ is the algebra $\C[[z_1,z_2]]$ where the variables $z_i$ commute according to $[z_1,z_2] = c$. 

We are interested in the Koszul dual of this algebra and of its first-order quantization.  From our discussion above, we know that the Koszul dual of $C^\ast(\mf{gl}_N[[z_1,z_2]])$ is the universal enveloping algebra $U(\mf{gl}_N[[z_1,z_2]])$ (where as always the $z_i$ commute to give $c$).    

What we will calculate is the first-order deformation of the product on the algebra $\Obs^{cl}(I \times \what{D}) \simeq C^\ast(\mf{gl}_N[[z_1,z_2]])$.   We will in fact analyze the failure of this product to be (graded) commutative. That is, we will compute the commutator of the generators of this algebra, which are elements in the linear dual of $\mf{gl}_N[[z_1,z_2]]$.  For grading reasons, the commutator of two generators is given by a quadratic expression in the generators. 

Let us write $V = \mf{gl}_N[[z_1,z_2]]$, so that as an algebra we can identify $C^\ast(\mf{gl}_N[[z_1,z_2]])$ with $\wedge^\ast V^\ast$. Then the Hochschild cochain which gives rise to the deformation of the product on this algebra is an element of $\Sym^2 V \otimes \wedge^2 V^\ast$. On the Koszul dual side, it thus gives rise to a deformation of the universal enveloping algebra $U(\mf{gl}_N[[z_1,z_2]])$ in which the commutator of two generators has a term which is quadratic in the generators.

Let's now turn to the computation of the quantum correction to the commutator in the algebra $C^\ast(\mf{gl}_N[[z_1,z_2]])$.  Let us write $\partial_{z_1}^r \partial_{z_2}^s X_{\alpha \beta}$ be the generator of $C^\ast(\mf{gl}_N[[z_1,z_2]])$ which, as a linear function on $\mf{gl}_N[[z_1,z_2]]$, sends an element $M f(z_1,z_2)$ to
$$
\partial_{z_1}^r \partial_{z_2}^s X_{\alpha,\beta}(M f(z_1,z_2)) = \op{Tr} (E_{\alpha\beta} M) \left( \partial_{z_1}^{r} \partial_{z_2}^s f\right)_{z_1 = z_2 = 0}.  
$$  
Here $M$ is an element of $\mf{gl}_N$, $f \in \C[[z_1,z_2]]$, and $E_{\alpha \beta}$ is the elementary matrix. (We can identify the space of functions on the non-commutative formal disc as functions as the vector space  $\C[[z_1,z_2]]$ equipped with the Moyal product).  

As a function of the fields of the $5$-dimensional gauge theory, the operator $\partial_{z_1}^r\partial_{z_2}^sX_{\alpha,\beta}$ is a function just of the ghosts field $\chi$. As such, it is given by the same expression:
$$
\partial_{z_1}^r\partial_{z_2}^sX_{\alpha,\beta}(\chi) = \op{Tr}  E_{\alpha \beta}  \left( \partial_{z_1}^{r} \partial_{z_2}^s \chi \right)_{z_1 = z_2 = t = 0}. 
$$
In this expression, $z_1,z_2,t$ are the coordinates on the $5$-dimensional space-time $\C^2 \times \R$.

Let us denote by $\{-,-\}$ the Poisson bracket on $C^\ast(\mf{gl}_N[[z_1,z_2]])$ that arises from its deformation to first order in $\hbar$ into the algebra of quantum observables. This Poisson bracket is computed as follows.  We consider two observables of the form $\partial_{z_1}^r \partial_{z_2}^s X_{\alpha,\beta}$, and place one at $(t = 0, z_i = 0)$ and the other at $(t = \eps, z_i = 0)$. Then we consider the one-loop contribution to the operator product
$$\left(\partial_{z_1}^r \partial_{z_2}^s X_{\alpha \beta} (0)\right) \left( \partial_{z_1}^m \partial_{z_2}^n  X_{\gamma \delta}(\eps)\right).$$
We then take the limit as $\eps \to 0$.  The Poisson bracket is obtained by computing the difference between the limit as $\eps \to 0$ from above and below.  

The OPE is computed by taking the sum over all diagrams such as those in figure \ref{figure_ope}.  

Before turning to the computation, let us explain (following our discussion earlier) how the result of the Poisson bracket will translate into an expression for the deformation of the universal enveloping algebra $U(\mf{gl}_N \otimes \C[[z_1,z_2]])$.   To avoid a profusion of indices, let us consider the case $N = 1$, in which case we can write operator $X_{\alpha \beta}$ as just $X$.  

Suppose we find an expression at one loop like
\begin{equation}
\{\partial_{z_1}^p \partial_{z_2}^q X, \partial_{z_1}^k\partial_{z_2}^l X\} = \sum \hbar c^{(r+s + m + n-p-q-k-l)/2 - 1}  A^{p,q,k,l}_{r,s,m,n}  (\partial_{z_1}^r \partial_{z_2}^s X ) (\partial_{z_1}^m \partial_{z_2}^n X).
\end{equation}
Then the commutator on the universal enveloping algebra $U(\mf{gl}_1  \otimes \C[[z_1,z_2]])$ will be corrected by an expression of the form
\begin{equation}
[z_1^r z_2^s, z_1^m z_2^n] = \sum \hbar  c^{(r+s+m+n-p-q-k-l)/2  -1} A^{p,q,k,l}_{r,s,m,n} 
(z_1^p  z_2^q) (z_1^k z_2^l)\frac{m! n! r! s!  }{p! q! k! l!}  
\end{equation}

Let us now turn to the explicit computation for the commutator. 
\begin{proposition}
We have 
\begin{align*} 
 2^5 \pi^2  \{X_{\alpha,\beta}, X_{\gamma,\delta} \} = \hbar \partial_{z_2}X_{\gamma,\beta} \partial_{z_1}X_{\alpha,\delta}  
- \hbar \partial_{z_1}X_{\gamma,\beta} \partial_{z_2}X_{\alpha,\delta}
\\
+ \hbar \delta_{\alpha \delta} \sum_{\mu} \left(\partial_{z_1}X_{\mu,\beta}\partial_{z_2}X_{\gamma,\mu} -  \partial_{z_2}X_{\mu,\beta} \partial_{z_1}X_{\gamma,\mu}\right)\\
+ \hbar \delta_{\gamma \beta} \sum_\mu \left(\partial_{z_1}X_{\alpha \mu} \partial_{z_2}X_{\mu \delta} -  \partial_{z_2}X_{\alpha \mu} \partial_{z_1}X_{\mu \delta} \right) \\
+  O(c) 
\end{align*}
Here $O(c)$ indicates terms which depend on $c$ and so must involve higher derivatives of the operators $X_{\dot \dot}$. 
\label{proposition_commutator_QFT}
\end{proposition}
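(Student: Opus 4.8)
The plan is to compute, directly in perturbation theory in the $5$-dimensional gauge theory, the one-loop contribution to the operator product of the two ghost-number-one operators $X_{\alpha,\beta}$, and then to read off the Poisson bracket as the discontinuity of this operator product across coincidence in the topological $t$-direction. The Feynman rules are as follows. The free part of the action is $\tfrac12\int\op{Tr}(\alpha\,\d_{\mc{A}}\alpha)$, so after gauge fixing the propagator $P$ is the integral kernel of $\d_{\mc{A}}^{-1}$ with respect to the BV pairing: a $2$-form on the configuration space of two points of $\R\times\C^2$, built from the Green's function of the associated second-order (Laplace-type) operator on $\R\times\C^2\cong\R^5$, of Bochner--Martinelli type in the $z_i$-directions and of the usual type in the $t$-direction, with $\d P$ a definite constant multiple of the delta-current on the diagonal. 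The only interaction vertex is the non-commutative Chern--Simons cubic term $\tfrac13\int\op{Tr}(\alpha\ast_c\alpha\ast_c\alpha)$; expanding $\ast_c$ in powers of $c$ gives a trivalent vertex with $\partial_{z_i}$-insertions distributed over the three legs by the alternating symbol $\eps_{ij}$, which at order $c^0$ is just the ordinary Chern--Simons vertex.

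The next step is to single out the graphs that contribute at order $\hbar c^0$. Since each $X_{\alpha,\beta}$ is linear in the ghost $\chi$, the product $X_{\alpha,\beta}(0)X_{\gamma,\delta}(\eps)$ is a functional quadratic in $\chi$, and the relevant one-loop graphs --- those collected in figure \ref{figure_ope} --- each consist of two trivalent interaction vertices joined into a single loop, with the external operators $X_{\alpha,\beta}(0)$ and $X_{\gamma,\delta}(\eps)$ attached along propagators (contributing the colour factors $E_{\alpha\beta}$ at $t=0$ and $E_{\gamma\delta}$ at $t=\eps$) and with two leftover ghost legs forming the output. Summing over the routings of the $\mf{gl}_N$-index lines through these graphs, and using $E_{\alpha\beta}E_{\gamma\delta}=\delta_{\beta\gamma}E_{\alpha\delta}$, should reproduce exactly the four colour structures of the statement: the planar routing gives the terms $\partial_{z_2}X_{\gamma,\beta}\partial_{z_1}X_{\alpha,\delta}-\partial_{z_1}X_{\gamma,\beta}\partial_{z_2}X_{\alpha,\delta}$, while the two routings in which one index line closes up into a loop give the $\delta_{\alpha\delta}\sum_\mu(\cdots)$ and $\delta_{\gamma\beta}\sum_\mu(\cdots)$ terms. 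A weight count under the $\C^\times$-action that scales $\C^2$ (under which $c$ and $\hbar$ have weight $2$) explains why exactly one $z$-derivative lands on each output operator at order $c^0$: a term accompanied by $\hbar c^k$ must carry $2k+2$ factors of $\partial_{z_i}$ distributed over the two output operators, so at $k=0$ there are precisely two, one on each factor; using the $c\ge 1$ part of the Moyal vertex only raises this count and lands in the $O(c)$ remainder.

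It then remains to evaluate the position-space integrals. Each graph contributes an integral over the positions $w_1,w_2\in\R\times\C^2$ of its two interaction vertices of a product of propagators, the $z_i$-derivatives in the answer arising from Taylor-expanding the internal lines about the origin as $\eps\to0$. Because the theory is topological in the $t$-direction the total amplitude is locally constant in $\eps$ for $\eps\ne0$, and the Poisson bracket $\{X_{\alpha,\beta},X_{\gamma,\delta}\}$ is its jump across $\eps=0$, i.e.\ the difference of the $\eps\to0^\pm$ limits. I would compute this jump by Stokes' theorem: it localizes, through the delta-current in $\d P$, to the locus where the cyclic ordering of the $t$-coordinates of $0,\eps,w_1,w_2$ changes, collapsing one integration and leaving an elementary radial integral over $\R^5$ together with angular integrals over $S^4$ (and over the $S^3$'s implicit in the Bochner--Martinelli kernels). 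Carrying these out, and including the symmetry factor $\tfrac12$ of the bubble and the $\tfrac13$ of the cubic vertex, gives the stated formula, with overall constant $2^5\pi^2$.

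The main obstacle will be exactly this bookkeeping of constants: the normalization in front of $P$, the factors of $2\pi$ coming from the two holomorphic $\dbar$-kernels, the volume of $S^4$, and the combinatorial symmetry factors all have to be tracked carefully so that the final constant comes out to $2^5\pi^2$ rather than some other rational multiple of a power of $\pi$. A secondary technical point is to check that no other graphs contribute at order $\hbar c^0$: tadpole and self-energy insertions vanish by the standard vanishing arguments for holomorphic-topological theories, and graphs with more interaction vertices carry extra powers of $\hbar$, so that the graphs of figure \ref{figure_ope} are indeed the only relevant contributions.
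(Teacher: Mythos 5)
Your overall strategy (compute the one-loop OPE of the two ghost operators, take the jump across $\eps=0$ in the topological direction, use the $\C^\times$-weight argument to force exactly one $\partial_{z_1}$ and one $\partial_{z_2}$ at order $c^0$, and read off the four colour structures from the index routings) is the right one and matches the paper. But your identification of the contributing Feynman graph is wrong, and it is the load-bearing step. You describe the graph as ``two trivalent interaction vertices joined into a single loop,'' with the operators ``attached along propagators'' and a symmetry factor $\tfrac12$ ``of the bubble.'' Since each $X_{\alpha\beta}$ is \emph{linear} in the ghost, an operator insertion caps exactly one propagator end; it cannot sit on a line of a closed loop. If the two cubic vertices are joined by two propagators into a bubble, the two remaining legs must be the external ghosts and there is then no leg left to attach either operator; if instead you insert the operators on the loop lines you get four internal propagators and two vertices, hence order $\hbar^{4-2}=\hbar^2$, not $\hbar$. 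The graph that actually contributes at order $\hbar$ is a \emph{tree}: the chain $X_{\alpha\beta}$\,--\,$I$\,--\,$I$\,--\,$X_{\gamma\delta}$ with one external ghost leg hanging off each interaction vertex (three propagators, two vertices, so $\hbar^{3-2}=\hbar$), drawn in double-line notation so that the four up/down choices of the external legs produce the four colour structures. In particular there is no bubble symmetry factor $\tfrac12$, and building your Stokes'-theorem evaluation on the bubble topology sets up the wrong position-space integral.

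With the correct tree in hand, the analytic evaluation is also more constrained than your sketch suggests: one represents the amplitude as the composition $D^\ast\tr^{-1}_{IR}\,\chi_1\,D^\ast\tr^{-1}_{IR}\,\chi_2\,D^\ast\tr^{-1}_{IR}$ applied to the delta-current at $(t=\eps,z_i=0)$, uses $(D^\ast)^2=0$ and $[D^\ast,\tr^{-1}_{IR}]=0$ to commute $D^\ast$ past the (holomorphic, linear) ghost insertions, and reduces everything to $\partial_t\tr^{-3}_{IR}$ at $z_i=0$, which a heat-kernel computation shows equals $(t/\abs{t})\,2^{-6}\pi^{-2}$ up to smooth terms; the jump across $t=0$ then gives the constant $2^{-5}\pi^{-2}$. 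Your localization-by-$\d P$ idea could in principle be made to work for the tree, but as written the proposal both evaluates the wrong diagram and inserts a spurious factor of $\tfrac12$, so the constant and indeed the quadratic-in-$\chi$ structure of the answer would not come out correctly without redoing this part.
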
 
\begin{remark}
It is very possible that I dropped a sign in the determination of the constant $2^5 \pi^2$.  
\end{remark}
\begin{figure}
\begin{tikzpicture}[scale=0.6]
\node[draw,circle](N1) at (-6,0) {$X_{\alpha\beta}$};
\node[draw,circle](N2) at (6,0) {$X_{\gamma\delta}$};
\node[draw, shape=circle](N3) at (-2,0) {$I$}; 
\node[draw, shape=circle](N4) at (2,0) {$I$};
\draw[dbl-<-] (N1) -- (N3) node[midway, above]{$\alpha$} node[midway,below]{$\beta$} ;
\draw[double, double equal sign distance] (N3) -- (N4) node[midway, above]{$\alpha$} node[midway,below]{$\gamma$} ;
\draw[dbl->-] (N4) -- (N2)  node[midway, above]{$\delta$} node[midway,below]{$\gamma$}; 
\draw[dbl-<-]   (N3) -- (-2,-2);
\draw[dbl-<-]  (N4) -- (2,2);  
\end{tikzpicture}
\caption{This is one of the diagrams which contribute to the OPE calculation of proposition \ref{proposition_commutator_QFT}.  We use double line notation.  The arrows indicate whether we place an operator or an external ghost field at the end of the line. The vertices labelled by $I$ come from the interaction term in the Lagrangian for our theory.\label{figure_ope} }   
\end{figure}
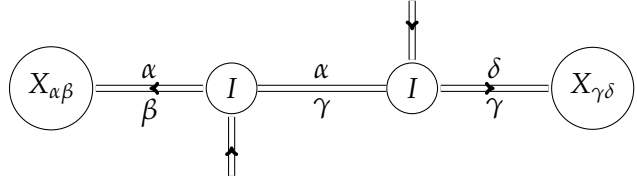
\begin{proof}
For $t \in \R$ we let $X_{\alpha \beta}(t)$ denote the same operator as described above, but evaluated at $(t,0,0) \in \R \times \C^2$.  To compute the product in the algebra of operators between $X_{\alpha \beta}$ and $X_{\gamma \delta}$, we consider placing $X_{\alpha \beta}$ at $t = 0$ and $X_{\gamma \delta}$ at $t = \eps$. That is, we consider the operator $X_{\alpha \beta} (0) X_{\gamma \delta}(\eps)$.  We then take the limit as $\eps \to 0$.  The commutator is obtained by computing the difference between the limit as $\eps \to 0$ from above and below. 

To compute this, we consider the sum over all connected graphs with two special vertices, labelled by $X_{\alpha \beta}(0)$ and $X_{\gamma \delta}(\eps)$ respectively; a number of ordinary trivalent vertices, labelled by the interaction $I$; and some number of external lines, which we label by a ghost field $\chi$.  Once we take the discontinuity at $0$ of this as a function of $\eps$, we will find some function of the ghost field $\chi$ which is a polynomial in $\partial_{z_1}^r \partial_{z_2}^s \chi_{\mu \eta}(t = 0, z_1 = 0, z_2 = 0)$. Thus, the resulting operator will again be expressed in terms of the operators $\partial_{z_1}^r \partial_{z_2}^s X_{\mu \eta}$.

Let us consider what diagrams can possibly appear at one loop. First, observe that for reasons of cohomological degree, there must be precisely two external lines.  Next, note that each interaction contributes a factor of $\hbar^{-1}$ and each internal line contributes a factor of $\hbar$. To be a one-loop diagram, the only possibility is that the graph is a tree with two internal vertices labelled by $I$, two external lines labelled by the ghost field $\chi$, and two special vertices labelled by $X_{\alpha \beta}$ and $X_{\gamma \delta}$.  The only such tree is depicted in figure \ref{figure_ope}. There, we have drawn the tree using the double-line notation (i.e. it's drawn as a planar tree) to assist in keeping track of the $\mf{gl}_N$ indices.  There is only one such tree, but it has four different structures of planar tree, depending on whether the two external lines point up or down in figure \ref{figure_ope}.   Each such tree will contribute the same analytic factor to the OPE, but the $\mf{gl}_N$ factors will be slightly different, accounting for the various terms in the sum that appears in the statement of proposition \ref{proposition_commutator_QFT}. 

In what follows we will focus on computing the analytic factor in the OPE.  In the sum over diagrams, we label each edge of the graph by a propagator for the theory. Let us write down the propagator. First, we need to introduce some notation. 

Let
\begin{align*} 
D &= \d \zbar_1 \partial_{\zbar_1} + \d \zbar_2 \partial_{\zbar_2} + \d t \partial_t \\
D^\ast &= -\dpa{\d \zbar_1} \partial_{z_1} - \dpa{\d \zbar_2} \partial_{z_2} - \dpa{\d t} \partial_t. 
\end{align*}
Here $\dpa{\d \zbar_1}$ means contraction with the vector field $\partial_{\zbar_1}$, i.e.\ it is the operator that removes a $\d \zbar_1$.  These operators act on $\Omega^{0,\ast}(\C^2) \what{\otimes} \Omega^\ast(\R)$ and so on the  fields of the theory. The operator $D$ is the linearized BRST operator of the theory.  

Note that
$$
[D,D^\ast] =-\partial_{z_1} \partial_{\zbar_1} - \partial_{z_2} \partial_{\zbar_2} - \partial_t^2 = \Lap 
$$
is the Laplacian on $\C^2 \times \R$. 

The propagator of the theory is the kernel for the operator $D^\ast \Lap^{-1}$. One can calculate that this is
\begin{align*} 
 -\frac{5}{48 \pi^2} D^\ast \left( \frac{\d \zbar_1 \d \zbar_2 \d t}{r^3}  \right)\\
&= \frac{15}{96 \pi^2}\frac{\zbar_1 \d \zbar_2 \d t - \d \zbar_1 \zbar_2 \d t + \d \zbar_1 \d \zbar_2 t}{r^5}   
\end{align*}
(although the constant appearing here will play no role for us).

In the OPE we are calculating, adding a quantity supported away from the diagonal to the propagator will have no effect, because it will only change the result by something that is continuous as a function of $\eps$ and we are interested in the discontinuity at $\eps = 0$.  We can therefore replace the propagator by a propagator with an infra-red cutoff. The details of the IR cutoff are irrelevant: for example, we could obtain an IR cutoff by replacing the function $r^{-3}$ by a function which is $r^{-3}$ for $r$ small and $0$ for $r$ large. 

We let $\tr^{-1}_{IR}$ denote the parametrix for the Laplacian whose kernel is the Green's function with such an infrared cutoff. 

We can write the analytic factor in the diagram we are computing as a composition of linear operators acting on the space $\Omega^\ast(\R) \what{\otimes} \Omega^{0,\ast}(\C^2)$ of fields, as follows. We represent the operator $X_{\gamma \delta}(\eps)$ that we place at $t = \eps$, $z_1 = z_2 = 0$, by the delta-current $\delta_{t = \eps,z_i = 0}$ which we view as a distributional element of $\Omega^1(\R) \what{\otimes} \Omega^{0,2}(\C^2)$. (We are implicitly dividing the $\delta$-function by $\d z_1 \d z_2$).  

We let $\chi_1,\chi_2 \in \Omega^{0}(\R \times \C^2)$ denote the two external ghost fields we insert. 

Recall that we are computing the OPE at $c = 0$, in which limit the cubic term in the Lagrangian contains no derivatives.

We find that the analytic factor in the diagram of \ref{figure_ope} is
\begin{equation}
\left[D^\ast \tr_{IR}^{-1} \chi_1 D^\ast \tr_{IR}^{-1} \chi_2 D^\ast \tr_{IR}^{-1}  (\delta_{t = \eps, z_i = 0} ) \right](0,0,0). \label{equation_analytic_factor} 
\end{equation}
Note that since $D^\ast \tr^{-1}$ decreases cohomological degree by $1$, and we start with something in $\Omega^1(\R) \what{\otimes} \Omega^{0,2}(\C^2)$, the expression inside the square brackets is  a distribution on $\R \times \C^2$ whose only singularities are at $t = \eps, z_i = 0$.  We then evaluate this distribution at $(0,0,0)$.  

Now, $(D^\ast)^2 = 0$, and $[D^\ast,\tr_{IR}^{-1}] = 0$.  The second equation follows from the fact that we can choose $\tr_{IR}^{-1}$ to commute with translation.  

We then commute $D^\ast$ past the ghost fields $\chi_2$ in equation \ref{equation_analytic_factor}.  Using the fact that $(D^\ast)^2= 0$ and $[D^\ast,\tr_{IR}^{-1}]= 0$ we can rewrite equation \ref{equation_analytic_factor} as
\begin{equation}
\left[D^\ast \tr_{IR}^{-1} [\chi_1, D^\ast] \tr_{IR}^{-1} [\chi_2, D^\ast] \tr_{IR}^{-1}  (\delta_{t = \eps, z_i = 0} ) \right](0,0,0). \label{equation_analytic_factor2} 
\end{equation}  
Any terms in our OPE which involve $\zbar_i$ or $t$ derivatives of the fields $\chi_i$ are BRST exact and can be discarded.  We can therefore assume, without loss of generality, that the fields $\chi_i$ are holomorphic functions of the  $z_i$.  Using this, we find that
\begin{equation}
[\chi_1,D^\ast] = \sum \dpa{\d \zbar_i} \frac{\partial \chi_1}{\partial z_i}. 
\end{equation}
Arranging the terms in equation \ref{equation_analytic_factor2} gives us 
\begin{equation}
\left[\eps_{ij} \partial_t \frac{\partial}{\partial d t}  \frac{\partial}{\partial \d \zbar_i}  \frac{\partial}{\partial \d \zbar_j} \tr_{IR}^{-1}      \frac{\partial \chi_1}{\partial z_i}  \tr_{IR}^{-1} \frac{\partial \chi_2}{\partial z_j} \tr_{IR}^{-1}  (\delta_{t = \eps, z_i = 0} ) \right](0,0,0). \label{equation_analytic_factor3} 
\end{equation}
Next, we note that if we scale the $z_i$ by a factor $\lambda$ we scale $\hbar$ by $\lambda^2$. It follows that the result of our operator product will have precisely $1$ derivative in each of the $z_i$.  Since the expression we have arrived at contains at least one derivative in each of the $z_i$, we find that no more  derivatives are possible.  Therefore, we can assume that the external ghost fields $\chi_i$ are each a linear combination of $z_1$ and $z_2$. Without loss of generality we will assume that $\chi_2 = z_2$ and $\chi_1 = z_1$.

We can now rewrite our expression as 
 \begin{equation}
\left[\partial_t \frac{\partial}{\partial d t}  \frac{\partial}{\partial \d \zbar_1}  \frac{\partial}{\partial \d \zbar_2} \tr_{IR}^{-3}   (\delta_{t = \eps, z_i = 0} ) \right](0,0,0). \label{equation_analytic_factor4} 
\end{equation}
Here $\tr_{IR}^{-3}$ is the cube of the operator $\tr_{IR}^{-1}$. 

Let us now calculate $\tr_{IR}^{-3}$.  To do this, we will choose a particular IR cutoff based on the heat kernel, and write
$$
\tr_{IR}^{-1} = \int_{0}^L K_u \d u 
$$
where
$$
K_u = (4 \pi u)^{-5/2} e^{-r^2 / 4 u}
$$
is the heat kernel (and $r^2 =  t^2 + \abs{z_1}^2 + \abs{z_2}^2$).  If we performed the integral of the heat kernel from $0$ to $\infty$, we would find the usual Green's function. However, as we are only integrating from $0$ to $L$, we find a regularized Green's function which has the same behaviour for small $r$ but which decays like $e^{-r^2/4 L}$ for large $r$. 

To calculate $\tr_{IR}^{-3}$, we should convolve $\tr_{IR}^{-1}$ with itself three times.  The convolution of $K_u$ with $K_s$ is $K_{u+s}$. We thus find that
\begin{align*} 
 \tr_{IR}^{-3} &= \int_{0 \le u_1,u_2,u_3 \le L}  K_{u_1 + u_2 + u_3} \d u_1 \d u_2 \d u_3\\
 &=  \int_{u = 0}^{3 L} \int_{\substack{s_1, s_2 \le L \\ s_1 + s_2 \le u} } (4 \pi u)^{-5/2} e^{-r^2 / 4 u} \d s_1 \d s_2 \d u
 \end{align*}
In the second line we have renamed the variables $u = \sum u_i$, $s_1 = u_1$, $s_2 = u_2$.  Note that, in the integral in the second line, if we only perform the integral in the range when $L \le u \le 3 L$, we are left with a smooth function of the space-time variables $t,z_1,z_2$.  Such a smooth function can not contribute to  the OPE calculation we are considering. Therefore it suffices to consider the integral over the domain  when $u \le L$. In this case, the integral  over the variables $s_i$ produces $u^2/2$.  We therefore find
$$
 \tr_{IR}^{-3} = \int_{u = 0}^{L}\frac{u^2}{2} (4 \pi u)^{-5/2} e^{-r^2 / 4 u} \d u.
$$
Recall that we are interested in the $t$-derivative of $\tr_{IR}^{-3}$ when $z_1 = z_2 = 0$ and $t = \eps$.  We will thus set $z_i = 0$ and take the $t$-derivative, giving
\begin{align*} 
 \partial_t \tr_{IR}^{-3}\mid_{z_i = 0} &= \int_{u = 0}^{L}\frac{u^2}{2} (4 \pi u)^{-5/2}\partial_t  e^{-t^2 / 4 u}  \d u\\
 & -t \int_{u = 0}^{L}\frac{u}{4} (4 \pi u)^{-5/2}  e^{-t^2 / 4 u}  \d u.
 \end{align*}
The powers of $u$ that appear in the integral are now such that the integral converges as $L \to \infty$.  Since changing $L$ will only change the answer by a smooth function  of $t$, we can take this limit.   We find
\begin{align*} 
  \partial_t \tr_{IR}^{-3}\mid_{z_i = 0} &=  
 -  t \pi^{-5/2} 4^{-7/2}   \int_{u = 0}^{\infty}u^{-3/2} e^{-t^2 / 4 u} \d u.
 \end{align*}
Changing variables $s = t^2 /4 u$ gives us 
\begin{align} 
  \partial_t \tr_{IR}^{-3}\mid_{z_i = 0} &=  
   t \pi^{-5/2} 4^{-7/2} 2 \abs{t}^{-1}    \int_{s = 0}^{\infty}s^{3/2} e^{-s}s^{-2} \d s\\
   &=(t / \abs{t}) \pi^{-5/2} 2^{-6} \Gamma (\tfrac{1}{2})  \\
   &=(t / \abs{t}) 2^{-6} \pi^{-2}.  \label{equation_analytic_factor5}  
 \end{align}
Setting $t = \eps$, we find that the result of the OPE is 
$$
\begin{cases}
2^{-6} \pi^2 & \text{ if } \eps > 0 \\
- 2^{-6} \pi^2 & \text{ if } \eps < 0. 
\end{cases}
$$
We are interested in the commutator, which is the difference of the limit  as $\eps \to 0$ from above and below. This tells us that the analytic factor in the commutator is $2^{-5} \pi^{-2}$.

\end{proof}
Recall that the $A_\infty$ structure on $C^\ast(\mf{gl}_N[[z_1,z_2]])$ contributes to the deformation of the product on the Koszul dual universal enveloping algebra.  At first order in $\hbar$, there is no possibility of a non-trivial $A_\infty$ structure. This is simply because there are no connected diagrams with three or more special vertices at which we place the local operators $X$ which appear at this order in $\hbar$; all such diagrams appear at second and higher order in $\hbar$. 

The deformation of $C^\ast(\mf{gl}_{N} \otimes \C[z_1,z_2])$ can be turned, by Koszul duality, into a deformation of the universal enveloping algebra $U(\mf{gl}_{N} \otimes \C[z_1,z_2] )$ .  The explicit formula presented above for $c = 0$ for certain commutators in $C^\ast(\mf{gl}_{N} \otimes \C[z_1,z_2] )$ can be  turned into an explicit expression for a deformation of $U(\mf{gl}_{N} \otimes \C[z_1,z_2])$.  We find that the one-loop correction to the commutator of $E_{\alpha \beta} z_1 $ with $E_{\gamma \delta} z_2$ is
\begin{multline} 
 [E_{\alpha\beta}z_1, E_{\gamma\delta} z_2] = \delta_{\beta \gamma}E_{\alpha\delta}(z_1 \ast_c z_2) - \delta_{\alpha\delta}E_{\gamma\beta}(z_2 \ast_c z_1)\\
 + \hbar 2^{-5} \pi^{-2}  E_{\alpha \delta} E_{\gamma \beta} - \hbar 2^{-5} \pi^{-2} \sum_{\mu} \delta_{\beta \gamma}  E_{\alpha \mu} E_{\mu \delta}  
- \hbar 2^{-5} \pi^{-2} \sum_{\mu} \delta_{\alpha \delta} E_{\gamma \mu} E_{\mu \beta} .
\label{equation_algebra_commutator_QFT}
\end{multline}
There is one slight subtlety: in translating between the Poisson bracket on the cochain algebra $C^\ast(\g[[z]])$ and the deformation of the product on the Koszul  dual, one has to worry about the ordering in a product like $E_{\gamma \mu} E_{\mu \beta}$.  The most scientific  way to deal with this is to take the average of the product in both orderings.   However it turns out that the difference between choosing one ordering ($E_{\gamma \mu} E_{\mu \beta})$ versus the other ($E_{\mu \beta} E_{\gamma \mu}$) can be absorbed into a  redefinition of the generators of the form
$$
E_{\alpha \beta} z_i \mapsto  E_{\alpha \beta} z_i + \hbar c^{-1} 2^{-5} \pi^{-2} E_{\alpha \beta} z_i. 
$$

Note that, although the expression for the quantum-corrected commutator in the algebra $C^\ast(\mf{gl}_N[z_1,z_2])$ involves some expressions involving $c$ which were not explicitly evaluated, the formula for the quantum correction to the commutator of the elements $E_{\alpha \beta}z_1$ and $E_{\gamma \delta} z_2$  in $U(\mf{gl}_N[z_1,z_2])$ is exact. The point is that the terms in the equation in proposition \ref{proposition_commutator_QFT} which depend on $c$ must involve higher derivatives of the $z_i$, and therefore will only contribute to the commutators of the form $[E_{\alpha \beta} z_1^k z_2^l, E_{\gamma \delta} z_1^r z_2^s]$ for $k+l+r+s >2$.   

In the appendix \ref{appendix_non_trivial_deformation} we will prove the following proposition.
\begin{proposition}
For sufficiently large $R$, this deformation is non-trivial as a first-order deformation of the family of algebras $U(\mf{gl}_{N+R \mid R}\otimes \C[z_1,z_2] )$ for all values of $c$ (where $[z_1,z_2] = c$).  
\end{proposition}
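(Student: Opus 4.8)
\emph{Proof sketch.} Write $\mf{g}_M:=\Oo_c(\C^2)\otimes\mf{gl}_M$ with $M=N+R\mid R$, and let $\phi$ denote the Hochschild $2$-cocycle on $U(\mf{g}_M)$ whose value on the generators $E_{\alpha\beta}z_i$ is recorded (for $c=0$) by \eqref{equation_algebra_commutator_QFT}. The plan is to show $\phi$ is not a coboundary in the filtered category by producing an explicit relation among the degree-one generators that $\phi$ fails to respect. First I would reduce to the associated graded. Equip $U(\mf{g}_M)$ with the filtration $F^\bullet$ by total $z$-degree; by the $\C^\times$-scaling argument used throughout the field theory (rescaling the $z_i$ by $\lambda$ rescales $\hbar$ by $\lambda^2$), the first-order term $\phi$ of the deformed product strictly lowers $F^\bullet$ by $2$. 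Hence $\op{gr}\phi$ is a Hochschild cocycle for $\op{gr}U(\mf{g}_M)=U(\mf{gl}_M[z_1,z_2])$, homogeneous of $z$-degree $-2$, and by the remark following \eqref{equation_algebra_commutator_QFT} its restriction to the span $V^{(1)}$ of the generators $E_{\alpha\beta}z_1,E_{\alpha\beta}z_2$ is again given by \eqref{equation_algebra_commutator_QFT} and is \emph{independent of $c$} (the $c$-dependent terms of Proposition \ref{proposition_commutator_QFT} only enter commutators of generators of total $z$-degree $>2$). If the deformation were trivial as a filtered deformation, $\op{gr}\phi$ would be a trivial first-order deformation of the graded algebra $A:=U(\mf{gl}_M[z_1,z_2])$; so it suffices to prove $\op{gr}\phi$ is non-trivial for $R$ large, and this then holds for every value of $c$ at once.

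Next I would extract the relevant cohomological invariant. The algebra $A$ is bigraded (by $z$-degree and by the Weyl-algebra weight $z_1\mapsto 1,\ z_2\mapsto-1$) with finite-dimensional graded pieces, so if $\op{gr}\phi$ were a coboundary we could write $\op{gr}\phi=\delta\psi$ with $\psi$ a Hochschild $1$-cochain homogeneous of $z$-degree $-2$. Such a $\psi$ kills $A^{(0)}\oplus A^{(1)}$, and in particular $\psi(V^{(1)})=0$, so for $a,b\in V^{(1)}$ one gets $\op{gr}\phi(a,b)=a\psi(b)-\psi(ab)+\psi(a)b=-\psi(ab)$, which depends only on the product $ab\in A$. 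Consequently $\op{gr}\phi$ must vanish on $\ker\bigl(m\colon V^{(1)}\otimes V^{(1)}\to A\bigr)$, where $m$ is multiplication. So the whole statement comes down to exhibiting one element of this kernel on which $\op{gr}\phi$ does not vanish.

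For $R$ sufficiently large there are four distinct (bosonic) matrix indices $1,2,3,4$; set $a=E_{12}z_1$, $b=E_{34}z_2\in V^{(1)}$ and $\omega=a\otimes b-b\otimes a$. Then $m(\omega)=[a,b]_A=z_1z_2\,[E_{12},E_{34}]_{\mf{gl}_M}=0$ since $[E_{12},E_{34}]_{\mf{gl}_M}=\delta_{23}E_{14}-\delta_{41}E_{32}=0$, so $\omega\in\ker\bigl(m\colon V^{(1)}\otimes V^{(1)}\to A\bigr)$. On the other hand \eqref{equation_algebra_commutator_QFT} with $(\alpha,\beta,\gamma,\delta)=(1,2,3,4)$, noting that $\delta_{23}=\delta_{14}=0$ kills the two correction sums, gives $\op{gr}\phi(\omega)=2^{-5}\pi^{-2}\,E_{14}E_{32}$, and $E_{14}E_{32}$ is a \emph{non-zero} element of $U(\mf{gl}_M)\subset A$: indeed $[E_{14},E_{32}]_{\mf{gl}_M}=0$, so it equals $E_{32}E_{14}$, the image of $E_{14}\odot E_{32}\neq 0$ under the symmetrization $\Sym^2\mf{gl}_M\hookrightarrow U(\mf{gl}_M)$. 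This contradicts the conclusion of the previous paragraph, so $\op{gr}\phi$, and hence the original first-order deformation of $U(\mf{gl}_{N+R\mid R}\otimes\C[z_1,z_2])$, is non-trivial, for all $c$.

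The hypothesis ``$R$ large'' enters only through the existence of four distinct indices, and it is genuinely needed: for $\mf{gl}_1$ and $c\neq 0$ the deformation \eqref{equation_algebra_commutator_QFT} collapses to the reparameterization $c\mapsto c-\hbar\,2^{-5}\pi^{-2}$, which is trivial, so one must enlarge $\mf{gl}_N$ to have enough matrix indices for the argument above. The main obstacle in turning this sketch into a proof is not the (immediate) computation of the last paragraph but the first step: one must check carefully, in Costello's filtered formalism, that triviality of the deformation as a \emph{filtered} first-order deformation forces $\op{gr}\phi$ to be a coboundary for $\op{gr}U(\mf{g}_M)=U(\mf{gl}_M[z_1,z_2])$, and that $\phi$ genuinely restricts on $V^{(1)}\otimes V^{(1)}$ to the $c$-free expression \eqref{equation_algebra_commutator_QFT} (this is Proposition \ref{proposition_commutator_QFT} together with the remark after \eqref{equation_algebra_commutator_QFT}). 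One may also cross-check non-triviality against the computation that the uniform-in-$R$ second Hochschild cohomology is the rank-one module $\C[\kappa]$: the class of this deformation is then a polynomial in $\kappa$, and the argument above shows that polynomial is non-zero.
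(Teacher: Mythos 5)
Your argument is essentially correct but takes a genuinely different route from the paper's. The paper passes to the stable $R\to\infty$ Lie algebra cohomology via a super Loday--Quillen--Tsygan theorem, identifies the single-trace deformations with a piece of the cyclic complex of $A\oplus A^\vee[-1]$ (with $A=\op{Diff}(\C)$), writes the deformation cocycle as $\eps_{ij}z_i^\ast\otimes z_j^\ast\otimes 1\otimes 1-\eps_{ij}z_i^\ast\otimes 1\otimes z_j^\ast\otimes 1+\cdots$ in that model, and then exhibits a \emph{closed chain} $\gamma=z_1\otimes 1^\ast\otimes z_2\otimes 1^\ast$ in the predual complex that pairs non-trivially with it; the closedness of $\gamma$ is a direct computation using the $c$-dependent $A$--$A$ bimodule structure on $A^\vee$, and it holds for every $c$, so no passage to an associated graded is needed. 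Your proof avoids cyclic cohomology and LQT altogether, works directly at a finite $R$ once there are four distinct bosonic indices, and reads off non-triviality from the single kernel element $E_{12}z_1\otimes E_{34}z_2-E_{34}z_2\otimes E_{12}z_1$; this is more elementary and makes transparent the role of the large matrix algebra, namely that four distinct indices kill every $\delta$-symbol in equation \eqref{equation_algebra_commutator_QFT}.

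The one loose end is the one you flag yourself, and it does need to be closed. For fixed $c\neq 0$ the algebra $U(\Oo_c(\C^2)\otimes\mf{gl}_M)$ is only \emph{filtered} by $z$-degree, so a Hochschild $1$-cochain $\psi$ satisfying $\phi=\delta\psi$ need not have a well-defined filtration degree, and ``$\phi$ is a coboundary'' does not formally reduce to ``$\op{gr}\phi$ is a graded coboundary''. The fix, which is essentially what the paper does when it works with restricted Hochschild cochains over $\C[c]$, is to keep $c$ as a weight-$2$ graded variable rather than a scalar: then the cochain complex is honestly graded by the diagonal $\C^\times$-weight, one may project $\psi$ to its weight-$(-2)$ component, the weight bound forces $\psi(V^{(1)})=0$ because $U$ has no elements of negative weight, and your kernel computation then applies verbatim, with specialization in $c$ as a last step. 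Without making this explicit you have only ruled out filtered trivializations. Finally, your aside about $\mf{gl}_1$ is not right: for $N=1,R=0$ the correction to $[z_1,z_2]$ is $-\hbar\,2^{-5}\pi^{-2}\kappa^2$, where $\kappa=1\otimes 1$ is a central \emph{generator} of $U(\Oo_c(\C^2))$, not the scalar unit, so this is not literally the shift $c\mapsto c-\hbar\,2^{-5}\pi^{-2}$; in fact the same weight argument applied to $\omega=z_1\otimes z_2-z_2\otimes z_1$ shows that deformation is also non-trivial. The hypothesis ``$R$ sufficiently large'' in the proposition is there for other reasons (the uniform-in-$R$ framework), so your main argument does not depend on this aside, but you should correct or drop it.
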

Because we also have a theorem regarding the uniqueness of the family of deformations of $U(\mf{gl}_{N+R \mid R}\otimes \C[z_1,z_2])$, this result will allow us to relate the deformation coming from field theory to an explicit deformation presented in the next section.  

\section{A combinatorial description of the algebra}
\label{section_combinatorial_algebra}
In this section we'll give a combinatorial description of a family of algebras deforming $U(\op{Diff}(\C) \otimes \mf{gl}_N)$.   We  will later relate this combinatorially defined family of  algebras with $U_{\hbar}^{QFT}(\op{Diff}(\C) \otimes\mf{gl}_N)$, using a slightly subtle uniqueness statement about deformations of $U(\op{Diff}(\C) \otimes \mf{gl}_N)$.   

We will refer to the combinatorially defined algebra either as $U_{\hbar}^{comb}(\op{Diff}_c(\C) \otimes \mf{gl}_N)$, or (more often) as $A_{N,\hbar,c}$.
 
The definition of this algebra involves combinatorial games with surfaces.  
\begin{definition}
A \emph{marked surface} is a compact oriented topological surface $\Sigma$ with boundary, together with a number of marked points on the boundary.  The set of marked points is equipped with a total ordering,  so that we have a collection $p_1,\dots,p_n$ of marked points.  The marked points are equipped with the following labels:
\begin{enumerate} 
\item Each marked point is labelled as either incoming or outgoing.  
\item Each marked point is labelled as being of type $1$ or of type $2$. 
\item Type $1$ marked points have an additional label by an element $\alpha$ in the set $\{1,\dots,N\}$.   
\end{enumerate}
These labels are subject to the following constraint.  For every marked point $p_i$ of type $1$ on $\Sigma$ which is incoming, we require that there is a marked point $p_j$ which is on the same boundary component as $p_i$, and directly to the right of $p_i$ using the orientation on the boundary of $\Sigma$, and which is of type $1$ and outoing.  Conversely, every type $1$ outgoing marked point $p_j$ has a type $1$ incoming marked point on the same boundary component and directly to the left.  In this way, type $1$ marked points come in pairs of an incoming and outgoing marked point, and these pairs are adjacent to each other on the boundary of $\Sigma$. 

It is important to note that the total order on the set $p_i$ of marked points is \emph{completely independent} of the way these marked points are arranged on the boundary of $\Sigma$.

The surface $\Sigma$ may be disconnected, but each connected component is required to have non-empty boundary.  

Two such surfaces $\Sigma$, $\Sigma'$ are equivalent if there is an orientation-preserving homeomorphism $f : \Sigma \to \Sigma'$ which takes the marked point $p_i$ on $\Sigma$ to the marked point $p_i'$ on $\Sigma'$, and is such that the various labels on $p_i$ are the same as those on $p_i'$.  
\end{definition}
Figure \ref{fig:surface} is a local picture of what such a marked surface looks like, near the boundary.

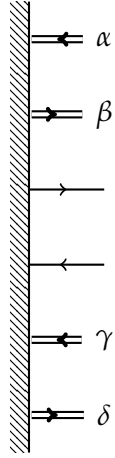
\begin{figure}

\begin{tikzpicture}[thick]
\draw(3,0)--(3,6);
\fill[pattern = north west lines ](2.75,0) rectangle (3,6);
\node(N1) at (4,5.5) {$\alpha$}; 
\node(N2) at (4,4.5) {$\beta$}; 
\draw[dbl-<-](3,5.5)--(N1);
\draw[dbl->-](3,4.5)--(N2);

\draw[->-](3,3.5)--(4,3.5);
\draw[-<-](3,2.5)--(4,2.5);
 
\node(N3) at (4,1.5) {$\gamma$};
\node(N4) at (4,0.5) {$\delta$};
\draw[dbl-<-] (3,1.5)--(N3);
\draw[dbl->-](3,0.5) -- (N4); 
\end{tikzpicture}
\caption{This figure displays a typical region in the boundary of a marked surface.  The arrows correspond to the marked points, and the orientation of the arrows indicates whether the marked points are viewed as incoming or outgoing.  The double arrows are type $1$ marked points and the single arrows are type $2$ marked points.  The labels on the double arrows are elements of the set $\{1,\dots,N\}$. Note that the double arrows come in adjacent pairs.   Not shown is the total ordering on the set of marked points (arrows). \label{fig:surface} }  
\end{figure}

We can define a large algebra $\til{A}$ as the vector space spanned by the set of equivalence classes of such marked surfaces.   We will refer to the element of $A$ corresponding to a surface $\Sigma$ with marked points $p_1,\dots,p_n$ as $\Sigma(p_1,\dots,p_n)$.   The algebra structure on $A$ is defined by saying that
$$
\Sigma(p_1,\dots,p_n)\cdot \Sigma'(q_1,\dots,q_m) = \left( \Sigma \amalg \Sigma'\right) (p_1,\dots,p_n,q_1,\dots,q_m).
$$
In other words, the product of basis elements corresponding to marked surfaces is obtained by taking the disjoint union of the surfaces, with the total order on the marked points of the resulting disconnected surface defined by the order $p_1,\dots,p_n,q_1,\dots,q_m$ as indicated.

We will form a quotient $A$ of $\til{A}$ by a combinatorially-defined collection of relations, which depends on two parameters $c,\hbar$ as before.  The relations involve the operation of boundary connect sum of surfaces.  

\begin{definition} 
 Let $\Sigma(p_1,\dots,p_n)$ be a marked surface, and consider the marked points $p_{i},p_{i+1}$.  Let $I_i$ (respectively $I_{i+1}$) be a closed interval in the boundary of $\Sigma$ which contains $p_i$ (respectively $p_{i+1}$) in its interior, but contains no other marked points. Fix an orientation-reversing homeomorphism $f : I_{i} \simeq I_{i+1}$.  We can form a new surface by gluing $I_i$ to $I_{i+1}$ using $f$.  We will denote this new surface  by
 $$
\Sigma(p_1,\dots,p_i \asymp  p_{i+1},\dots,p_n). 
 $$
 This new surface has two fewer marked points, but the remaining marked points are equipped with the natural total ordering and labels they had before. 
\end{definition}

Let us now describe the relations. There are three different classes of relations, which we call \emph{permuting, cutting and moving}. We will start by discussing the permuting relations. These tell us what happens when we change the order on the set of marked points on the boundary of $\Sigma$ without changing their position on $\Sigma$. 

 Let $\Sigma(p_1,\dots,p_n)$ be a marked surface. 
\begin{enumerate} 
 \item[(P0)] Suppose that $p_i$ and $p_{i+1}$ are either
 \begin{enumerate} 
  \item Of different types, so that $p_i$ is of type $1$ and $p_{i+1}$ of type $2$, or the other way around.
  \item Both incoming or both outgoing.
  \item Both of type $1$, but labelled by different elements of the set $\{1,\dots,N\}$.  
 \end{enumerate}
 Then,
 $$
\Sigma(p_1,\dots,p_i,p_{i+1},\dots,p_n) = \Sigma(p_1,\dots,p_{i+1},p_{i},\dots,p_n). 
 $$
 \item[(P1)] Suppose the marked points $p_i$ and $p_{i+1}$ are both of type $2$, where $p_i$ is incoming and $p_{i+1}$ is outgoing. Then,
 $$
\Sigma(p_1,\dots,p_i,p_{i+1},\dots,p_n) = \Sigma(p_1,\dots,p_{i+1},p_{i},\dots,p_n) + \hbar \Sigma(p_1,\dots,p_i\asymp p_{i+1},\dots,p_n) .  
 $$
 In other words, when we switch the order of $p_i$ and $p_{i+1}$ we get an extra contribution from the surface where $p_i$ and $p_{i+1}$ are glued.  This is depicted in the following diagram:

\begin{tikzpicture}[scale=0.7, shift = {(-5,0)}]

\begin{scope}[shift={(-2,0)}]
\fill[pattern=north west lines ] (2.75,0) rectangle (7.25,4);
\fill[color=white] (3,-0.1) rectangle (7,4.1); 
\draw (3,4)--(3,0);
\draw (7,4)--(7,0);
\draw[-<-](3,2) -- (4.5,2) node[midway,above]{$i$};
\draw[-<-](5.5,2) -- (7,2) node[midway,above]{$i+1$};
\end{scope}
\node at (6.5,2) {$=$}; 
\begin{scope}[shift={(5,0)}]
\fill[pattern=north west lines ] (2.75,0) rectangle (7.25,4);
\fill[color=white] (3,-0.1) rectangle (7,4.1); 
\draw (3,4)--(3,0);
\draw (7,4)--(7,0);
\draw[-<-](3,2) -- (4.5,2) node[midway,above]{$i+1$};
\draw[-<-](5.5,2) -- (7,2) node[midway,above]{$i$};
\end{scope}
\node at (13,2) {$+$};  

\begin{scope}[shift = {(12,0)}]
\node at (2.0,2) {$\hbar$};
\draw[rounded corners] (3,4) -- (3,2.5) -- (7,2.5) -- (7,4);  
 \draw[rounded corners] (3,0) -- (3,1.5) -- (7,1.5) --  (7,0);
\fill[pattern=north west lines ]  (2.5,4) -- (3,4)[rounded corners] -- (3,2.5) -- (7,2.5) [sharp corners]--(7,4)  -- (7.5,4) -- (7.5,0) -- (7,0) [rounded corners] -- (7,1.5) -- (3,1.5) [sharp corners]-- (3,0)  -- (2.5,0) -- cycle; 

\end{scope}

\end{tikzpicture}
Here, the arrows indicate incoming or outgoing type $2$ marked points, and the labels indicate the position of these marked points in the total order. 
 \item[(P2)] Suppose the marked points $p_i$ and $p_{i+1}$ are both of type $1$, where $p_i$ is incoming and $p_{i+1}$ is outgoing.  Suppose further that $p_i$ and $p_{i+1}$ are both labelled by the same element in the set $\{1,\dots,N\}$. Finally, suppose that $p_{i+1}$ is \emph{not} the marked point which is directly to the right of $p_i$ on the same boundary component.  

Then,
 $$
\Sigma(p_1,\dots,p_i,p_{i+1},\dots,p_n) = \Sigma(p_1,\dots,p_{i+1},p_{i},\dots,p_n) +  \Sigma(p_1,\dots,p_i\asymp p_{i+1},\dots,p_n) .  
 $$
This is depicted in the following diagram:
 
\begin{tikzpicture}[scale=0.7, shift = {(-5,0)}]

\begin{scope}[shift={(-2,0)}]
\fill[pattern=north west lines ] (2.75,0) rectangle (7.25,4);
\fill[color=white] (3,-0.1) rectangle (7,4.1); 
\draw (3,4)--(3,0);
\draw (7,4)--(7,0);
\draw[dbl-<-](3,2) -- (4.5,2) node[midway,above]{$i$} node[midway, below]{$\alpha$};
\draw[dbl-<-](5.5,2) -- (7,2) node[midway,above]{$i+1$} node[midway, below]{$\beta$}; 
\end{scope}
\node at (6.5,2) {$=$}; 
\begin{scope}[shift={(5,0)}]
\fill[pattern=north west lines ] (2.75,0) rectangle (7.25,4);
\fill[color=white] (3,-0.1) rectangle (7,4.1); 
\draw (3,4)--(3,0);
\draw (7,4)--(7,0);
\draw[dbl-<-](3,2) -- (4.5,2) node[midway,above]{$i+1$} node[midway, below]{$\alpha$};
\draw[dbl-<-](5.5,2) -- (7,2) node[midway,above]{$i$} node[midway, below]{$\beta$};
\end{scope}
\node at (13.5,2) {$+\  \delta_{\alpha \beta}$};  

\begin{scope}[shift = {(12.5,0)}]

\draw[rounded corners] (3,4) -- (3,2.5) -- (7,2.5) -- (7,4);  
 \draw[rounded corners] (3,0) -- (3,1.5) -- (7,1.5) --  (7,0);
\fill[pattern=north west lines ]  (2.5,4) -- (3,4)[rounded corners] -- (3,2.5) -- (7,2.5) [sharp corners]--(7,4)  -- (7.5,4) -- (7.5,0) -- (7,0) [rounded corners] -- (7,1.5) -- (3,1.5) [sharp corners]-- (3,0)  -- (2.5,0) -- cycle; 

\end{scope}
\end{tikzpicture}
As before, the Roman labels on the arrows indicate the position in the total order on the set of marked points (or arrows), and the Greek labels are the elements of $\{1,\dots,N\}$ present on every marked point. 
 \item[(P3)]  Next, suppose that $p_i$, $p_{i+1}$ are type $1$ marked points where $p_i$ is incoming, $p_{i+1}$ is outgoing, and $p_{i+1}$ is directly next to $p_i$ on the same boundary component.  Then,
$$
\Sigma(p_1,\dots,p_i,p_{i+1},\dots,p_n) = \Sigma(p_1,\dots,p_{i+1},p_{i},\dots,p_n) +  \Sigma(p_1,\dots,\what{p_i},\what{ p_{i+1}},\dots,p_n) .  
$$
This is illustrated as follows: 

\begin{tikzpicture}[scale=0.6]
\begin{scope}
\fill[pattern=north west lines ] (0.75,0) rectangle (1,4);
\draw (1,4)--(1,0);
\draw[dbl-<-](1,3) -- (3,3) node[midway,above]{$i$}  node[midway, below]{$\alpha$};
\draw[dbl->-](1,1) -- (3,1) node[midway,above]{$i+1$} node[midway,below]{$\beta$};
\end{scope}
\node at (4.5,2) {$=$};
\begin{scope}[shift={(5,0)}]

\fill[pattern=north west lines ] (0.75,0) rectangle (1,4);
\draw (1,4)--(1,0);
\draw[dbl-<-](1,3) -- (3,3) node[midway,above]{$i+1$} node[midway, below]{$\alpha$};
\draw[dbl->-](1,1) -- (3,1) node[midway,above]{$i$} node[midway,below]{$\beta$};
\end{scope}

\node at (9,2) {$+$}; 
\begin{scope}[shift={(10,0)}]

\node at (0,2) {$\delta_{\alpha \beta}$};
\fill[pattern=north west lines ] (0.75,0) rectangle (1,4);
\draw (1,4)--(1,0);
\end{scope}
\end{tikzpicture}

\end{enumerate}
The relations (P0), (P1), (P2) and (P3) allow us to arbitrarily permute the total order on the set of marked points, at the price of introducing some additional topological complexity to the surface. 

Let's now discuss the cutting relations.
\begin{enumerate}
 \item[(C0)] Consider a simple closed curve in $\Sigma$ and let $\Sigma'$ be the surface obtained by cutting $\Sigma$ along the curve and then filling in each of the two new boundary components with a disc. If there are any components of the resulting surface with empty boundary, we remove them, resulting in a surface $\Sigma'$ each of whose components has non-empty boundary.  Then,
$$
\Sigma(p_1,\dots,p_n)  = \Sigma'(p_1,\dots,p_n).  
$$
\item[(C1)]  Let $\Sigma(p_1,\dots,p_n)$ be a marked surface.  Suppose that $p_i,p_j,p_k,p_l$ are type $1$ marked points, where $p_i,p_k$ are incoming and $p_j,p_l$ are outgoing.  Suppose that $p_j$ is directly to the right of $p_i$ on the same boundary component, and that similarly $p_l$ is directly to the right of $p_k$.  Suppose that we choose an arc on $\Sigma$ which starts between $p_i$ and $p_k$ and ends between $p_k$ and $p_l$.  Let $\Sigma'$ be the surface obtained by cutting $\Sigma$ along this arc.  Then, 
$$
\Sigma(p_1,\dots,p_n) = \Sigma'(p_1,\dots,p_n). 
$$ 
This is indicated in the following diagram:

\begin{tikzpicture}[xscale=0.7]
\begin{scope}
\fill[pattern= north west lines ] (3,0) rectangle (5,4); 
\node(A) at (1,3) {$\alpha$};
\node(B) at (1,1) {$\beta$};
\node(C) at (7,3) {$\gamma$};
\node(D) at (7,1) {$\delta$};
\draw[dbl->-] (3,3) -- (A);
\draw[dbl->-] (B) -- (3,1);
\draw[dbl->-] (C) -- (5,3);
\draw[dbl->-] (5,1) -- (D);
\draw (3,0) -- (3,4);
\draw (5,0) -- (5,4);
\draw[very thick, dashed] (3,2) -- (5,2);
\end{scope}

\node at (10,2) {$=$};
\begin{scope}[ shift = {(12,0)}]
\node(A) at (1,3) {$\alpha$};
\node(B) at (1,1) {$\beta$};
\node(C) at (7,3) {$\gamma$};
\node(D) at (7,1) {$\delta$};
\draw[dbl->-] (3,3) -- (A);
\draw[dbl->-] (B) -- (3,1);
\draw[dbl->-] (C) -- (5,3);
\draw[dbl->-] (5,1) -- (D);

\fill[pattern=north west lines ]  (3,4) -- (3,2.5) -- (5,2.5) -- (5,4) -- cycle;
\fill[pattern=north west lines ]  (3,0) -- (3,1.5) -- (5,1.5) -- (5,0) -- cycle;

\draw[rounded corners = 1mm]  (3,4) -- (3,2.5) -- (5,2.5) -- (5,4);
\draw[rounded corners = 1mm] (3,0) -- (3,1.5) -- (5,1.5) -- (5,0);
\end{scope}
\end{tikzpicture}

\end{enumerate}
The final relation we call the ``moving'' relation, because it involves moving an incoming marked point past an outgoing marked point on the boundary of a surface. 
\begin{enumerate}
\item[(M)]
Consider a marked surface $\Sigma(p_1,\dots,p_n)$.    Suppose that $p_n$ is immediately to the right of $p_{n-1}$ on the same boundary component. 

 We want to consider various ways we can label the marked points $p_{n-1},p_{n}$, while leaving them in the same position on the boundary of $\Sigma$.  We let $\Sigma(p_1,\dots,p_{n-1}^{in}, p_{n}^{out})$ be the configuration where $p_{n-1}$ is incoming and $p_{n}$ is outgoing,  and both are of type $2$. Let $\Sigma(p_1,\dots,p_{n-1}^{out}, p_{n}^{in})$ be the configuration where $p_{n-1}$ is outgoing and $p_{n}$ is incoming, but again they are both of type $2$.  Finally, for each $\alpha,\beta \in \{1,\dots,N\}$ let $\Sigma(p_1,\dots,p_{n-2},p_{n-1}^{\alpha}, p_{n}^{\beta})$ be the same surface but where $p_{n-1}$ and $p_{n}$ are both type $1$ marked points labelled by $\alpha$ and $\beta$ respectively, and where $p_{n-1}$ is incoming and $p_{n}$ is outgoing.  Then, we have the relation
\begin{multline} 
 \Sigma(p_1,\dots,p_{n-2},p^{in}_{n-1},p_{n}^{out}) -  \Sigma(p_1,\dots,p_{n-2},p^{out}_{n-1},p_{n}^{in}) \\+ \hbar \sum_{\alpha = 1}^{N} \Sigma(p_1,\dots,p_{n-2},p_{n-1}^{\alpha}, p_{n}^{\alpha}) -\hbar \tfrac{1}{2} \Sigma(p_1,\dots,p_{n-2},p_{n-1}\asymp p_{n})  = c \Sigma (p_1,\dots,p_{n-2},\what{p_{n-1}},\what{p_{n}}). \label{equation_moment_map_universal} 
\end{multline}
Here $c$ is the other parameter of our algebra. 

This relation is best understood in pictorial terms:

\begin{tikzpicture}[scale=0.6]
\begin{scope}
\fill[pattern=north west lines ] (0.75,0) rectangle (1,4);
\draw (1,4)--(1,0);
\draw[->-](1,3) -- (3,3) node[midway,above]{$n-1$};
\draw[->-](3,1) -- (1,1) node[midway,above]{$n$};
\end{scope}
\node at (4,2) {$=$};
\begin{scope}[shift={(5,0)}]
\fill[pattern=north west lines ] (0.75,0) rectangle (1,4);
\draw (1,4)--(1,0);
\draw[->-](3,3) -- (1,3) node[midway,above]{$n-1$};
\draw[->-](1,1) -- (3,1) node[midway,above]{$n$};
\end{scope}

\begin{scope}[shift={(9.75,0)}]

\filldraw[pattern=north west lines , even odd rule] (1,2) circle (1) (1,2) circle (1.25);
\fill[color = white] (-1,0) rectangle (1,4);
\fill[pattern=north west lines ] (0.75,0) rectangle (1,4);
\draw (1,3) -- (1,1);
\draw (1,4) -- (1,3.25);
\draw (1,0) -- (1,0.75);
\node at (0,2) {$\hbar$}; 
\end{scope}
\node at (9,2) {$-$};
\node at (13.5,2) {$-$}; 

\begin{scope}[shift={(16,0)}]
\node at (-0.7,2) {$\hbar \displaystyle \sum_{\alpha = 1}^{N} $} ;  
\fill[pattern=north west lines ] (0.75,0) rectangle (1,4);
\draw (1,4)--(1,0);
\node(N1) at (3,3) {$\alpha$};
\node(N2) at (3,1) {$\alpha$};
\draw[dbl->-](N1) -- (1,3)  node[midway,above]{$n-1$}; 
\draw[dbl->-] (1,1) -- (N2) node[midway,above]{$n$};
\end{scope}
\node at (20,2) {$+$} ; 
\begin{scope}[shift={(21,0)}]
\node at (0,2) {$c$};
\fill[pattern=north west lines ] (0.75,0) rectangle (1,4);
\draw (1,4)--(1,0);
\end{scope}
\end{tikzpicture}

\end{enumerate}

\begin{remark}
For relation (M), we are only imposing it when it involves the last two marked $p_{n-1},p_n$ in the  total order on the set of marked points.  
\end{remark}

Let us record some statements about the algebra $A_{N,\hbar,c}$.
\begin{lemma} 
 For any non-zero constant $\lambda$, there is an isomorphism
$$
A_{N,\hbar,c} \iso A_{N, \lambda \hbar,\lambda c}. 
$$
\end{lemma}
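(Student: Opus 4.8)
\emph{Overview of the plan.} The plan is to absorb the rescaling $(\hbar,c)\mapsto(\lambda\hbar,\lambda c)$ into a rescaling of the spanning set of marked surfaces. The point is that the ``big'' algebra $\til A$ — the span of equivalence classes of marked surfaces with the disjoint-union product, which does not involve $\hbar$ or $c$ at all — carries a grading under which every defining relation (P0)--(M) of $A_{N,\hbar,c}$ becomes homogeneous, provided $\hbar$ and $c$ are both assigned weight $2$.

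\emph{The grading.} I would grade $\til A$ by $n_2(\Sigma)$, the number of type $2$ marked points of $\Sigma$ (type $1$ marked points and the topology of the surface contribute $0$); equivalently this is a $\C^\times$-action on $\til A$ by algebra automorphisms, well defined because $n_2(\Sigma\amalg\Sigma')=n_2(\Sigma)+n_2(\Sigma')$. The first step is then the bookkeeping verification that, with $\hbar$ and $c$ of weight $2$, each defining relation is homogeneous. The parameter-free relations (P0), (P2), (P3), (C0), (C1) have constant $n_2$ across all of their terms: (P0) only permutes the total order on marked points; (P2) and (P3) trade a pair of \emph{type $1$} marked points for the surface obtained by gluing, resp.\ erasing, that pair, which leaves every type $2$ point alone; (C0) may discard components with empty boundary, which carry no marked points; and (C1) leaves all marked points in place. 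For the two relations that do involve a parameter, one checks that every term carrying a factor of $\hbar$ or $c$ has $n_2$ exactly two less than the parameter-free terms of the same relation: in (P1) this is the single term $\hbar\,\Sigma(\dots p_i\asymp p_{i+1}\dots)$, and in (M) these are $\hbar\sum_\alpha\Sigma(\dots p_{n-1}^\alpha,p_n^\alpha)$, $\tfrac\hbar2\,\Sigma(\dots p_{n-1}\asymp p_n)$ and $c\,\Sigma(\dots\what{p_{n-1}},\what{p_n})$, all obtained from the type $2$ configurations $\Sigma(\dots p_{n-1}^{in},p_n^{out})$ and $\Sigma(\dots p_{n-1}^{out},p_n^{in})$ by removing two type $2$ marked points. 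Since $\hbar$ and $c$ have weight $2$, each relation is therefore homogeneous, of degree equal to the $n_2$ of its parameter-free terms.

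\emph{The isomorphism.} Working over $\C$, fix a square root $\lambda^{1/2}$ and let $\phi_\lambda\colon\til A\to\til A$ be the linear map $\Sigma\mapsto\lambda^{-n_2(\Sigma)/2}\,\Sigma$ (the value at $\lambda^{-1/2}$ of the $\C^\times$-action above). By additivity of $n_2$ under disjoint union, $\phi_\lambda$ is an algebra automorphism of $\til A$ fixing the unit, with inverse $\phi_{\lambda^{-1}}\colon\Sigma\mapsto\lambda^{n_2(\Sigma)/2}\,\Sigma$. I then check that $\phi_\lambda$ carries the two-sided ideal $J_{\hbar,c}$ of relations defining $A_{N,\hbar,c}$ onto the ideal $J_{\lambda\hbar,\lambda c}$ defining $A_{N,\lambda\hbar,\lambda c}$. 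This is immediate from the homogeneity: if $R\in J_{\hbar,c}$ is a defining relation and $d$ is the common $n_2$-degree of its parameter-free terms, then $\phi_\lambda(R)=\lambda^{-d/2}R'$, where $R'$ is the same relation with every $\hbar$ replaced by $\lambda\hbar$ and every $c$ by $\lambda c$ — that is, the corresponding defining relation of $A_{N,\lambda\hbar,\lambda c}$ — and for the parameter-free relations $R'=R$. Since nonzero scalar multiples of a generating set generate the same two-sided ideal, $\phi_\lambda(J_{\hbar,c})=J_{\lambda\hbar,\lambda c}$, so $\phi_\lambda$ descends to an algebra isomorphism $A_{N,\hbar,c}\iso A_{N,\lambda\hbar,\lambda c}$, inverse to the isomorphism induced by $\phi_{\lambda^{-1}}$.

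\emph{Where the difficulty lies.} There is no serious obstacle here: the only genuine content is identifying the correct grading — by the number of type $2$ marked points, with $\hbar$ and $c$ both of weight $2$ — and checking that relation (M), which has the most terms and is the only relation involving $c$, is homogeneous for it. Everything else is bookkeeping, and the single rescaling $\Sigma\mapsto\lambda^{-n_2(\Sigma)/2}\Sigma$ then converts every occurrence of $\hbar$ into $\lambda\hbar$ and every occurrence of $c$ into $\lambda c$ at the same time.
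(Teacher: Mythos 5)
Your proof is correct and rests on the same underlying idea as the paper's (one sentence): rescale each generating surface by a power of $\lambda$ determined by a suitable additive weight function, chosen so that every defining relation is homogeneous in a way that converts $(\hbar,c)$ into $(\lambda\hbar,\lambda c)$. The only difference is the choice of weight function. You grade by the \emph{total} number $n_2$ of type $2$ marked points and rescale by $\lambda^{-n_2/2}$, which requires fixing a square root $\lambda^{1/2}$. The paper instead grades by the number of \emph{incoming} type $2$ marked points and rescales by an integer power of $\lambda$, avoiding the square root (this works because the type $2$ points removed by each parameter-carrying term in (P1) and (M) always come in one-incoming--one-outgoing pairs, so the incoming count alone drops by exactly one). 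More generally, any weight function $a\cdot n_2^{\mathrm{in}}+b\cdot n_2^{\mathrm{out}}$ with $a+b\neq 0$ works; the paper takes $(a,b)=(1,0)$ and you take $(a,b)=(-\tfrac12,-\tfrac12)$. Your symmetric choice is just as valid — the square root is harmless over $\C$ — but the asymmetric choice is marginally cleaner and is what the paper uses.
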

\begin{proof} 
This isomorphism is given by multiplying a marked surface $\Sigma(p_1,\dots,p_n)$ by $\lambda$ raised to the power of the number of incoming type $2$ marked points on $\Sigma$.  Doing this takes the relations with parameters $(\hbar,c)$ to the relations with parameters $(\lambda \hbar, \lambda c)$.  
\end{proof}

In section \ref{appendix_flatness} we prove the following result. 
\begin{theorem}
If we treat $\hbar$ as a formal parameter, then for all $c \neq 0$, $A_{N,\hbar,c}$ is a flat family of algebras over $\C[[\hbar]]$. 
\end{theorem}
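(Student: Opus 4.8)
Since $\hbar$ is a formal parameter, $A_{N,\hbar,c}$ is $\hbar$-adically complete, and over the ring $\C[[\hbar]]$ flatness is equivalent to being a free module; so it suffices to exhibit a subset $B\subset A_{N,\hbar,c}$ which is simultaneously a $\C[[\hbar]]$-module basis of $A_{N,\hbar,c}$ and, after reduction modulo $\hbar$, a $\C$-basis of the classical algebra $A_{N,0,c}$. Setting $\hbar=0$ kills the $\hbar$-terms in relations (P1) and (M), and one checks directly that $A_{N,0,c}\cong U(\Oo_c(\C^2)\otimes\gl_N)$ (the universal enveloping algebra of the Moyal--Weyl algebra with $[z_1,z_2]=c$, tensored with $\gl_N$), which for $c\neq 0$ has an explicit Poincar\'e--Birkhoff--Witt (PBW) basis consisting of ordered monomials in the generators $E_{\alpha\beta}z_1^k z_2^l$. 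The plan is to produce $B$ by a normal-form (rewriting) argument on marked surfaces whose underlying \emph{set} is independent of $\hbar$.

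\textbf{Step 1: normal forms.} First I would isolate a family of ``reduced'' marked surfaces. Relation (C0) lets one cut any compact oriented surface along simple closed curves and cap off, with no correction term, reducing every connected component to a disk (genus $0$, one boundary circle); (C1) separates a disk along arcs joining adjacent type-$1$ pairs; and (P0)--(P3) together with (M) are then used to bring the marked points into a standard cyclic configuration and the total order into a fixed standard position. The upshot should be: $B$ is the set of disjoint unions of ``building-block'' disks, each block carrying exactly the marked-point pattern of one generator $E_{\alpha\beta}z_1^k z_2^l$ (the type-$1$ $\alpha,\beta$ pair encoding $E_{\alpha\beta}$, and the type-$2$ points encoding $z_1^k z_2^l$), with the global ordering of the marked points dictated by a chosen PBW ordering of these generators. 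By construction $B$ maps bijectively to the PBW monomials of $A_{N,0,c}$.

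\textbf{Step 2: $B$ spans over $\C[[\hbar]]$.} Next I would show that every marked surface is congruent, modulo the ideal of relations, to a $\C[[\hbar]]$-linear combination of elements of $B$; this is a termination statement for the rewriting system obtained by orienting (C0),(C1),(P0)--(P3),(M) toward reduced surfaces. The key technical ingredient is a well-founded complexity function $\nu(\Sigma)$ --- a lexicographically ordered tuple recording, roughly, (total genus; then $-(\text{number of components})$; then the number of non-standard handles/arcs; then the total number of marked points; then the number of inversions of the marked-point order relative to the standard one, together with the number of locally out-of-order type-$2$ adjacencies, the latter being the analogue of the ADHM moment-map relation $[B_1,B_2]+IJ=c$) --- designed so that every term on the right-hand side of each relation has strictly smaller $\nu$. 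The delicate point is that the boundary-connect-sum terms $\Sigma(\dots p_i\asymp p_{i+1}\dots)$ can raise the genus; one controls this by placing the marked-point count (which $\asymp$ strictly decreases by two) above genus in the ordering and by collapsing any created handle via (C0) into strictly simpler terms. Termination then yields that $B$ spans $A_{N,\hbar,c}$ over $\C[[\hbar]]$.

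\textbf{Step 3: $B$ is $\C[[\hbar]]$-independent --- the main obstacle.} This is the crux. The natural route is Bergman's diamond lemma: for the reduction system and compatible well-ordering of Step 2, one must check that every overlap ambiguity between two applicable relations resolves to a common value. Most ambiguities (two permuting moves, or a permuting move overlapping a cutting move) are routine, but the ambiguities involving the moment-map relation (M) --- which simultaneously alters the topology of $\Sigma$ and introduces the parameter $c$ --- are where genuine verification is needed, and I expect this to be the hardest part of the whole proof. As an independent cross-check, and as an alternative way to obtain independence without verifying confluence by hand, one can use the algebra homomorphisms $A_{N,\hbar,c}\to\Oo_\hbar(\mc{M}^{c}_{N,K})$ for all $K$ constructed in Section~\ref{section_combinatorial_algebra} (the $\hbar$-adic deformation quantizations of the Nakajima quiver varieties, which are flat over $\C[[\hbar]]$): these maps respect the $\hbar$-adic filtrations, and any nontrivial $\C[[\hbar]]$-relation among elements of $B$ would, after dividing by the top power of $\hbar$ and setting $\hbar=0$, produce a nontrivial relation among the corresponding monomials $\op{Tr}_{\C^N}(MJ B_1^k B_2^l I)$ in $\prod_K\Oo(\mc{M}^c_{N,K})$, contradicting their independence for $K\gg 0$. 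Either way, Steps 1--3 combine to show that $B$ is a free $\C[[\hbar]]$-basis reducing to the PBW basis of $A_{N,0,c}$ modulo $\hbar$, so $A_{N,\hbar,c}$ is a flat (indeed free) family of algebras over $\C[[\hbar]]$.
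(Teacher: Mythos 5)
Your second route in Step~3 (not the diamond lemma, which the paper does not attempt) is indeed the paper's strategy, but as you have stated it there is a genuine gap: the homomorphism $A_{N,\hbar,c}\to\Oo_\hbar(\mc{M}^{c}_{N,K})$ constructed in Proposition~\ref{proposition_homomorphism_adhm} sends each incoming type-$1$ marked point to $\tfrac{1}{\hbar}\dpa{\rho}$, so it is only defined after inverting $\hbar$ and does \emph{not} respect the $\hbar$-adic filtrations. Concretely, a reduced surface $\Sigma$ consisting of $n$ discs maps to $\hbar^{-n}\cdot(\text{polynomial})$; if you ``divide a hypothetical relation by the top power of $\hbar$ and set $\hbar=0$'' after applying $\phi_K$, the terms of lower pole order drop out, and you only obtain a relation among the images of the surfaces with the maximal number of discs --- not among all of them. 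The paper's fix is to introduce an auxiliary algebra $B_{N\mid M,\hbar,c}$, a renormalised version of $A$ (obtained by redistributing powers of $\hbar$ in relations (P2) and (M)) for which the map to $\Oo_\hbar(\mc{M}^c_{N\mid M,K})$ extends honestly over $\hbar=0$; one proves $B$ is torsion-free, and then a separate normal-ordering argument transfers flatness from $B$ to $A$, using that a putative torsion element of $A$ of bounded ``degree'' $n$ lands, after multiplying by $\hbar^n$, in the image of $B\to A$.

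Two further points you should be aware of. First, the polynomial independence of the generators $\op{Tr}_{\C^K}(I E_{\alpha\beta}J B_1^k B_2^l)$ on $\mc{M}^{c}_{N,K}$ (for relations of degree $\le K$) is not immediate: the paper proves it by restricting to a Zariski open locus where $B_1$ is regular semisimple, introducing an auxiliary grading, and reducing to linear independence of power sums; this requires its own lemma. Second, the paper establishes the flatness statement for the super Lie algebras $\mf{gl}(N\mid M)$, not merely $\mf{gl}(N)$, because the uniqueness argument elsewhere needs it; if you follow your route you will eventually want that generality too. Your Step~2 complexity-function argument is more detail than the paper gives (the paper simply observes that the reduction to reduced surfaces is evident), but it is not wrong; the difficult part --- as you correctly anticipated --- is $\C[[\hbar]]$-independence, and the device that makes it tractable is the auxiliary algebra $B$ rather than a diamond-lemma confluence check.
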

The proof of this is a bit tricky: we will rely on a relationship between $A_{N,\hbar,c}$ and deformation quantizations of certain quiver varieties.

Another fundamental feature of this family of algebras is the following. 
\begin{proposition}
 For $c \neq 0$,  $A_{N,\hbar=0,c}$ is naturally isomorphic to the universal enveloping algebra of the Lie algebra $\op{Diff}_{c}(\C)\otimes \mf{gl}_N$.
\label{proposition_classical_limit_universal_enveloping}
\end{proposition}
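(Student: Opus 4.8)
\emph{Strategy.} The plan is to build the isomorphism explicitly in the forward direction and then prove bijectivity by a reduction (normal-form) argument, using the classical limit of the commutator computed in \eqref{equation_algebra_commutator_QFT}.

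\emph{The forward map.} Since $c\neq 0$, $\op{Diff}_c(\C)$ is the Weyl algebra $\C\langle z_1,z_2\rangle/(z_1 z_2-z_2 z_1-c)$, with PBW basis $\{z_1^a z_2^b\}$, so $\mf g:=\op{Diff}_c(\C)\otimes\mf{gl}_N$ has basis $\{z_1^a z_2^b\otimes E_{\alpha\beta}\}$. To each such element I associate the marked surface $e_{\alpha\beta}(z_1^a z_2^b)\in A_{N,0,c}$ which is a disc carrying a single adjacent type-$1$ incoming/outgoing pair labelled $\alpha,\beta$, together with $a$ incoming and $b$ outgoing type-$2$ marked points arranged along the arc between the type-$1$ points in the order of the word $z_1^a z_2^b$, and with total order induced by the boundary order. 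I would then check that $\mf g\to A_{N,0,c}$ is a Lie algebra homomorphism: the product $e_{\alpha\beta}(v)\cdot e_{\gamma\delta}(w)$ is the disjoint union of the two discs with concatenated total order, and applying the permuting relations (P0)--(P3) to bring the two type-$1$ pairs adjacent, the merging terms $\delta_{\beta\gamma},\delta_{\alpha\delta}$ of (P2)/(P3), and (M) together with (P1) to normal-order the resulting type-$2$ word, produces $e_{\gamma\delta}(w)\cdot e_{\alpha\beta}(v)$ plus corrections; at $\hbar=0$ all the $\hbar$-dependent (hence potentially $A_\infty$) corrections in (P1),(P2),(P3),(M) drop out, and the surviving correction is exactly $\delta_{\beta\gamma}\,vw\otimes E_{\alpha\delta}-\delta_{\alpha\delta}\,wv\otimes E_{\gamma\beta}$, i.e.\ $[\,v\otimes E_{\alpha\beta},\,w\otimes E_{\gamma\delta}\,]$ in $\mf g$ --- this is the $\hbar=0$ part of \eqref{equation_algebra_commutator_QFT}. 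By the universal property this yields an algebra homomorphism $\phi:U(\op{Diff}_c(\C)\otimes\mf{gl}_N)\to A_{N,0,c}$; that $\mf g\hookrightarrow A_{N,0,c}$ can moreover be seen directly from the homomorphisms $A_{N,0,c}\to\Oo(\mc M^\eps_{N,K})$, under which the $e_{\alpha\beta}$ map to the functions $\op{Tr}(E_{\beta\alpha}\,JB_1^aB_2^b I)$, which are linearly independent for $K\gg 0$.

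\emph{Surjectivity.} Given a marked surface $\Sigma(p_1,\dots,p_n)$, I reduce it: iterating (C0) lowers the genus of each component and separates boundary circles, reducing to a disjoint union of discs; (C1) splits any disc with two or more type-$1$ pairs into discs with fewer pairs, so each disc carries at most one pair. A disc with no type-$1$ pair reduces to the empty disc, by reordering via (P0),(P1) and then stripping type-$2$ points in pairs using (M) at $\hbar=0$, which reads $\Sigma(p_{n-1}^{in},p_n^{out})-\Sigma(p_{n-1}^{out},p_n^{in})=c\,\Sigma(\dots,\what{p_{n-1}},\what{p_n})$; and the empty disc itself is $0$, since combining this $\hbar=0$ form of (M) with (P1) gives $0=c\cdot(\text{empty disc})$ and $c\neq 0$. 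A disc with exactly one type-$1$ pair normalises, again by (M), to some $e_{\alpha\beta}(z_1^a z_2^b)$, and (P0)--(P3) reorder the total order so that a disjoint union of such discs equals the product of the corresponding generators in the PBW order. Hence $A_{N,0,c}$ is spanned by $\phi$ of PBW monomials and $\phi$ is surjective.

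\emph{Injectivity --- the main obstacle.} It remains to show $\phi$ collapses nothing. The direct route is to construct the inverse $\psi:A_{N,0,c}\to U(\op{Diff}_c(\C)\otimes\mf{gl}_N)$, sending a marked surface to the product --- taken in the given total order of the marked points --- of the elements $w\otimes E_{\alpha\beta}$ read off its type-$1$ pairs (with $w$ the type-$2$ word on the relevant arc, read in $\op{Diff}_c(\C)$), and to $0$ if some component has no type-$1$ pair. One then verifies that $\psi$ annihilates each of (P0)--(P3),(C0),(C1),(M): at $\hbar=0$ every one of these becomes a formal consequence, in $U$, of $z_1 z_2-z_2 z_1=c$, $E_{\alpha\beta}E_{\gamma\delta}=\delta_{\beta\gamma}E_{\alpha\delta}$ and associativity. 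Then $\psi\circ\phi=\op{id}$, so $\phi$ is an isomorphism. (An alternative is to filter $A_{N,0,c}$ by the number of type-$2$ marked points: the $c$- and $\delta$-corrections in (P2),(P3),(M) are filtration-lowering, so $\op{gr}A_{N,0,c}$ is commutative, generated by the basic discs, and is a quotient of $\op{Sym}(\op{Diff}_c(\C)\otimes\mf{gl}_N)=\op{gr}U$; since $\op{gr}\phi$ is onto it is an isomorphism provided no further collapse occurs, which one extracts from the flatness theorem of section~\ref{appendix_flatness} and the identification of the $A_{N,\hbar,c}$ with quantisations of the ADHM quiver varieties.) \textbf{The hardest step is the well-definedness of $\psi$} --- equivalently, the confluence of the reduction used for surjectivity: one must check that the total order on the marked points, which is \emph{a priori} unrelated to their cyclic positions on the boundary, is compatible with the topology-forgetting cutting relations (C0),(C1) and with the boundary-adjacency hypotheses built into (P2),(P3),(C1),(M). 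Everything else is bookkeeping.
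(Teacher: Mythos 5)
Your construction of the forward map, the verification that it is a Lie algebra homomorphism (using that the $\hbar$-dependent corrections in (P1),(M) vanish at $\hbar=0$ while the $\delta$-terms in (P2),(P3) survive and reproduce the bracket), and the surjectivity argument via cutting to discs and eliminating discs with no type-$1$ pair all coincide, modulo conventions, with the paper's proof. For injectivity the paper simply appeals to the flatness result of section \ref{appendix_flatness}, which is exactly your parenthetical alternative route; since you land there anyway, your argument reaches the same conclusion by the same means.

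Two remarks on the parts where you tried to go further. First, you are right to flag well-definedness of the putative inverse $\psi$ as ``the hardest step'': checking confluence of the rewriting system (i.e.\ that the normal form is independent of the reduction path) is essentially equivalent to proving flatness of $A_{N,\hbar,c}$ over $\C[[\hbar]]$, which is precisely what section \ref{appendix_flatness} establishes via the ADHM realization; so the direct route offers no shortcut and is circular as a replacement for that input. Second, the filtration you propose in the aside --- by the number of type-$2$ marked points --- does \emph{not} make the (P2),(P3) corrections filtration-lowering: those corrections glue or delete a pair of \emph{type-$1$} points, leaving the type-$2$ count unchanged, so $\op{gr}A_{N,0,c}$ for that filtration need not be commutative. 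The filtration compatible with the PBW filtration on $U(\op{Diff}_c(\C)\otimes\mf{gl}_N)$ is by the number of type-$1$ pairs (one per generator); with that choice the (P2),(P3) corrections drop degree as required. This is a local fix and does not affect the main line of argument, since you still need the flatness theorem to conclude ``no further collapse''.
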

\begin{proof}
When $\hbar = 0$, the relations (P0) and (P1) concerning permutations of the ordering on the marked points become very simple: one has
$$
\Sigma(p_1,\dots,p_i,p_{i+1},\dots,p_n) = \Sigma(p_1,\dots,p_{i+1},p_{i},\dots,p_n).
$$ 
has long as at least one of the $p_i,p_{i+1}$ is of type $2$.  We still have the more complicated relation (P2), 
$$
\Sigma(p_1,\dots,p_i^{\alpha},p_{i+1}^{\beta},\dots,p_n) = \Sigma(p_1,\dots,p^{\beta}_{i+1},p^{\alpha}_{i},\dots,p_n) + \delta_{\alpha,\beta} \Sigma(p_1,\dots,p_i \asymp p_{i+1}, \dots,p_n)
$$
if $p_i$ is of type $1$ and incoming and labelled by $\alpha \in \{1,\dots,N\}$, and $p_{i+1}$ is of type $1$ and outgoing and labelled by $\beta$.

Consider relation (C0) which states that we can cut a surface along a simple closed curve.  By repeatedly applying this relation, we find that the generators of $A_{N,\hbar=0,c }$ can be taken to be disjoint unions of discs with marked points. Relation (C1) allows us to cut these disks further, so that we can take the generators are  discs with at most two marked points of type $1$.   

By permuting the total ordering on the set of marked points on the boundary of a disc using relations (P0) and (P1), we can always consider discs $D(p_1,\dots,p_n)$ where $p_i$ is immediately to the right of $p_{i-1}$ on the boundary of $D$.  There are two types of such discs: those with no type $1$ marked points, and those with two type $1$ marked points. If we have two type $1$ marked points, we can assume that these are $p_1$ and $p_n$, where $p_1$ is outgoing and $p_n$ is incoming.

In what follows, we will always assume that the ordering on our marked points is compatible with the cyclic order on the boundary of a disc.  Let us introduce some new notation to refer to the possible classes of discs which are our generators.  If a disc has an incoming type $2$ marked point in position $i$, we can indicate it by placing the symbol $\shortdownarrow$ like $D(p_1,\dots, \shortdownarrow, p_{i+1},\dots,p_n)$, where the labels $p_i$ indicate that they can be any kind of marked point.  Similarly, $\shortuparrow$ indicates an outgoing type $2$ marked point. We denote an incoming type $1$ marked point labelled by $\alpha \in \{1,\dots,N\}$ by $ {\alpha}\Downarrow$, and an outgoing type $1$ marked point by $ {\alpha}\Uparrow$.     

Then, note that relation (M) when $\hbar = 0$ becomes
\begin{multline*}
D(p_1,\dots,p_{i-1},\shortdownarrow,\shortuparrow,p_{i+2},\dots,p_n) - D(p_1,\dots,p_{i-1}, \shortuparrow, \shortdownarrow, p_{i+2},\dots,p_n)  \\ 
= -c D(p_1,\dots,p_{i-1},p_{i+2},\dots,p_n).   \tag{M'} 
\end{multline*}
We can use this relation to move any type $2$ incoming marked points to the right of type $2$ outgoing marked points. Thus, we can take the generators of our algebra to be of two types:
\begin{align*} 
 D( {\alpha}\Uparrow,\uparrow^k,\downarrow^l, {\beta}\Downarrow)&\\
D( \uparrow^k,\downarrow^l)&
\end{align*}
where $\uparrow^k$ indicates that we take $k$ copies of this marked point (and similarly for $\downarrow^l$). 

As the next step in our argument, we will show that 
$$
D(\uparrow^k, \downarrow^l) = 0
$$
in $A_{N,\hbar = 0,c}$. 

We can use equation (M') to move an incoming marked point past an outgoing marked point.   We find that
$$
 D(\shortdownarrow^k, \shortuparrow^l) = D(\shortdownarrow^{k-1},\shortuparrow,\shortdownarrow, \shortuparrow^{l-1}) -  c D(\shortdownarrow^{k-1}, \shortuparrow^{l-1}). 
$$ 
By repeatedly applying this relation we find that
$$
 D(\shortdownarrow^k, \shortuparrow^l) -  D(\shortdownarrow^{k-1}, \shortuparrow^{l},\shortdownarrow) = - l c D(\shortdownarrow^{k-1}, \shortuparrow^{l-1}). 
$$
But cyclic symmetry of the disc tells us that
$$
D (\shortdownarrow^k, \shortuparrow^l) = D (\shortdownarrow^{k-1}, \shortuparrow^l, \shortdownarrow).
$$
Therefore 
$$
 l c D(\shortdownarrow^{k-1}, \shortuparrow^{l-1}) = 0 
$$
and so, since $c \neq 0$, $ D(\shortdownarrow^{k-1}, \shortuparrow^{l-1}) = 0$ as desired.  

Our next goal is to define a map of associative algebras from $U( \mf{gl}_N \otimes \op{Diff}_c(\C))$ to $A_{N,\hbar = 0,c}$.  We will start by defining a linear map from $\mf{gl}_N \otimes \op{Diff}_{c}(\C)$ to $A_{N,\hbar = 0,c}$.  This linear map sends 
$$
E_{\alpha \beta} z^{k_1}\partial^{l_1} \dots z^{k_n} \partial^{l_n}  \mapsto D( {\alpha}\Uparrow,\downarrow^{k_1},\uparrow^{l_1}, \dots, \downarrow^{k_n} , \uparrow^{l_n},  {\beta}\Downarrow).
$$
Here $E_{\alpha \beta}$ is the elementary matrix. 

Let us check that this linear map is well-defined.  For this, note that the relation
$$
E_{\alpha \beta} f(z,\partial) \partial z g(z,\partial) = E_{\alpha \beta} f(z,\partial) z \partial g(z,\partial) + c E_{\alpha\beta} f(z,\partial) g(z,\partial)
$$
that holds in $\mf{gl}_N \otimes \op{Diff}_{c} (\C)$ precisely matches the relation of equation (M'). 

We claim that this linear map is a map of Lie algebras, and so extends to a map of associative algebras
$$
U(\op{Diff}_c(\C) \otimes \mf{gl}_N) \to A_{N,\hbar = 0,c}. 
$$ 
To check this, we need to check that the commutation relations hold.  We find
\begin{multline*} 
 \left[ D( {\alpha}\Uparrow,\uparrow^k,\downarrow^l, {\beta}\Downarrow), D( {\gamma}\Uparrow,\uparrow^m,\downarrow^n, {\delta}\Downarrow)\right] \\
 = \delta_{\beta \gamma} D( {\alpha}\Uparrow, \uparrow^k,\downarrow^l,\uparrow^m,\downarrow^n,  {\delta}\Downarrow)  - \delta_{\delta \alpha } D( {\gamma}\Uparrow, \uparrow^m,\downarrow^n,\uparrow^k,\downarrow^l,  {\beta}\Downarrow)  + \delta_{\alpha \delta} \delta_{\beta \gamma} D(\uparrow^k, \downarrow^l, \uparrow^m, \downarrow^n ) .   
\end{multline*}
Since the disk in the last term has no type $1$ marked points, it is zero, so we can remove it from the result of the commutator.  

If we do so, we find the result matches the commutation relation
$$
\left[ E_{\alpha \beta} z^k \partial^l , E_{\gamma \delta} z^m \partial^n \right] = \delta_{\beta \gamma} E_{\alpha \delta} z^k \partial^l z^m \partial^n - \delta_{\delta\alpha} E_{\gamma \beta} z^m \partial^n z^k \partial^l
$$
that holds in $\mf{gl}_N \otimes \op{Diff}(\C)$. 

We therefore find a map of associative algebras
$$
U(\op{Diff}(\C) \otimes \mf{gl}_N) \to A_{N,\hbar = 0,c}. 
$$
This map is surjective, since the image contains the generators $D(\alpha \Uparrow, \uparrow^k, \downarrow^l, \beta \Downarrow)$ of $A_{N,\hbar=0,c}$.  Injectivity follows from the analysis of flatness of the family of algebras over $\C[[\hbar]]$ presented in section \ref{appendix_flatness}.  
\end{proof}

Next, let us write down the first-order deformation of the algebra $U(\op{Diff}(\C) \otimes \mf{gl}_N)$ we obtain by working modulo $\hbar^2$. 
\begin{proposition}
Let $E_{\alpha \beta}$ denote the elementary matrix. The first-order deformation of the product is defined by deforming the commutator according to the formula 
\begin{align*}
[E_{\alpha \beta} \partial^m z^n, E_{\gamma \delta} \partial^r z^s] =& \delta_{\beta \gamma} E_{\alpha \delta} (\partial^m z^n \partial^r z^s) - \delta_{\delta\alpha} E_{\gamma \beta} (\partial^r z^s \partial^m z^n) \\
 &+ \hbar \sum_{i = 1}^{m} \sum_{j = 1}^s \left(E_{\alpha \delta} \partial^{i-1} z^{s-j}\right)\left(E_{\gamma\beta} \partial^r z^{j-1} \partial^{m-i} z^n \right)\\& - \hbar \sum_{i =1}^{n} \sum_{j = 1}^r \left(E_{\alpha \delta} \partial^m z^{i-1} \partial^{r-j} z^s \right) \left( E_{\gamma \beta} \partial^{j-l} z^{n-i} \right). 
\end{align*}
Here $\hbar$ is the deformation parameter.  
\label{proposition_combinatorial_one_loop}
\end{proposition}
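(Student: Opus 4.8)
The plan is to compute the commutator directly inside $A_{N,\hbar,c}$, working modulo $\hbar^2$, and then to read off the first-order term. Fix the two generators $g_1 = D({\alpha}\Uparrow,\uparrow^m,\downarrow^n,{\beta}\Downarrow)$ and $g_2 = D({\gamma}\Uparrow,\uparrow^r,\downarrow^s,{\delta}\Downarrow)$, which by the dictionary established in the proof of Proposition \ref{proposition_classical_limit_universal_enveloping} correspond to $E_{\alpha\beta}\partial^m z^n$ and $E_{\gamma\delta}\partial^r z^s$. By the definition of the product on $A_{N,\hbar,c}$, the product $g_1 g_2$ is the marked surface $D_1 \amalg D_2$ with the two blocks of marked points concatenated in the total order, while $g_2 g_1$ is the same disjoint union with the two blocks in the opposite order. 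So the commutator is computed by permuting the total order so as to move the $g_2$-block past the $g_1$-block, using the permuting relations (P0)--(P3), and collecting the correction terms. Since the only relations that carry a factor of $\hbar$ are (P1) and (M), the entire computation is finite modulo $\hbar^2$.

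First, the order-$\hbar^0$ part. Setting $\hbar = 0$, Proposition \ref{proposition_classical_limit_universal_enveloping} identifies $A_{N,0,c}$ with $U(\op{Diff}_c(\C)\otimes\mf{gl}_N)$, so this part of the commutator is simply $[E_{\alpha\beta}\partial^m z^n,E_{\gamma\delta}\partial^r z^s]$ computed there, namely $\delta_{\beta\gamma}E_{\alpha\delta}(\partial^m z^n\partial^r z^s) - \delta_{\delta\alpha}E_{\gamma\beta}(\partial^r z^s\partial^m z^n)$. Concretely, these two terms arise from the two type-$1$ crossings --- the outgoing type-$1$ point of one disc passing the incoming type-$1$ point of the other --- whose connect-sum terms, via relations (C0)/(C1), merge the two discs into a single standard generator.

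Second, the order-$\hbar^1$ part. Here the contributions come from the crossings in which a type-$2$ incoming point of one block passes a type-$2$ outgoing point of the other, so that relation (P1) contributes $\pm\hbar$ times a connect-sum surface. Gluing the $i$-th $\uparrow$ (that is, the $i$-th $\partial$) of $g_1$'s word to the $j$-th $\downarrow$ (the $j$-th $z$) of $g_2$'s word, and symmetrically the $i$-th $z$ of $g_1$ to the $j$-th $\partial$ of $g_2$, merges $D_1$ and $D_2$ into a single disc carrying all four type-$1$ marked points ${\alpha}\Uparrow,{\beta}\Downarrow,{\gamma}\Uparrow,{\delta}\Downarrow$; relation (C1) then cuts this disc into two standard generators. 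Tracing through the induced cyclic orders, one finds that the two pieces are $E_{\alpha\delta}(\,\cdot\,)$ and $E_{\gamma\beta}(\,\cdot\,)$, with differential-operator factors given by the sub-words cut off on the two sides of the glued pair --- producing the monomials $\partial^{i-1}z^{s-j}$ and $\partial^r z^{j-1}\partial^{m-i}z^n$ for one type of crossing and $\partial^m z^{i-1}\partial^{r-j}z^s$ and $\partial^{j-1}z^{n-i}$ for the other --- so that summing over all pairs $(i,j)$ yields exactly the two double sums in the statement, the signs coming from the two orientations in which (P1) is applied. One must in addition keep track of the further moves needed to put these connect-sum surfaces (and the connect-sum surfaces arising from the type-$1$ crossings) into standard form; the extra terms this generates either vanish or combine to leave precisely the displayed expression, which is checked by direct inspection.

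The step I expect to be the main obstacle is the orientation bookkeeping: for each connect-sum and each application of (C1) one must determine exactly the induced cyclic order of marked points on the new surface, and hence which of $E_{\alpha\delta},E_{\gamma\beta}$ (as opposed to $E_{\alpha\beta},E_{\gamma\delta}$, etc.) and which sub-word appears, and with which sign --- and then verify that all the first-order corrections assemble into the clean formula stated. An alternative organisation that reduces the amount of bookkeeping is to use that a first-order deformation of an associative algebra is a Hochschild $2$-cocycle, hence determined by its values on a generating set; since $\op{Diff}(\C)\otimes\mf{gl}_N$ is generated by the elements $E_{\alpha\beta}$, $E_{\alpha\beta}z$, $E_{\alpha\beta}\partial$, it suffices to carry out the above computation for these low-degree brackets and then propagate it by the (deformed) Leibniz rule, but the base cases still require the same orientation analysis.
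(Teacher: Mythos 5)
Your proposal takes essentially the same diagrammatic approach as the paper's proof: form the two ordered disjoint unions, permute the marked points across the two blocks using (P0)--(P3), collect the $\hbar$-corrections coming from (P1), and apply (C1) to cut the resulting four-type-$1$-point disc into two standard generators. One point you leave to "direct inspection" but which the paper spells out --- and which is worth making explicit --- is why the other class of $\hbar$-order diagrams, namely those in which a type-$1$ crossing is \emph{also} glued, contributes nothing: the resulting surface has a boundary component with no type-$1$ marked points, so by (C0) it splits off a disc with no double arrows, and such a disc was shown in the proof of Proposition \ref{proposition_classical_limit_universal_enveloping} to vanish modulo $\hbar$; since the diagram already carries one $\hbar$, it is $O(\hbar^2)$. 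With that gap closed, your argument matches the paper's.
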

\begin{proof}
 The calculation is given diagramatically. According to our rules, the coefficient of $\hbar$ in the commutator 
$$
 \left[ D( {\alpha}\Uparrow,\uparrow^m,\downarrow^n, {\beta}\Downarrow), D( {\gamma}\Uparrow,\uparrow^r,\downarrow^s, {\delta}\Downarrow)\right] 
$$
is given by a sum over two types of diagram. There are diagrams where we join a single type $2$ marked point on one disc with one on the other, with an appropriate sign; and there are diagrams which in addition join a type $1$ marked point on one disc with one on the other.

The first type of diagram is given explicitly by the sum 

\begin{center}
\resizebox{12cm}{!}{ \begin{tikzpicture}

\begin{scope}
\node[scale=1.3] at (180:5){$\sum\limits_{i = 0}^m \sum\limits_{j = 0}^s$};

\node[circle,  minimum size=2cm](N1) at (0,0){}; 
\node[circle, minimum size=2cm](N2) at (6,0){};
\coordinate (A) at (10:1);
\coordinate (B) at ( $ (N2) + (170:1)$ );
\coordinate (D) at (-10:1); 
\coordinate (C) at ( $(N2) + (190:1)$ );
\filldraw[ pattern=north west lines ] (A) -- (B)  arc(170:-170:1) (C) -- (D) arc (350:10:1) (A);

\node(a) at (150:2.5){$\alpha$};
\node(b) at (210:2.5){$\beta$};
\node(d) at ( $(N2) + (30:2.5)$ ) {$\delta$};
\node(g) at ( $(N2) + (-30:2.5)$ ) {$\gamma$};

\draw[dbl->-] (N1) -- (a);
\draw[dbl-<-] (N1) --(b);
\draw[dbl-<-] (N2) -- (d);
\draw[dbl->-] (N2) -- (g);


\draw[->-] (N1) -- (60:2);

\node[rotate=45] at (45:2.3){$\left. \begin{array}{c}\\    \\ \\ \end{array} \right\} i-1 $}; 

\node[rotate=135]  at (45:1.5){$\cdots$}; 
\draw[->-] (N1) -- (30:2);


\draw[->-] (N1) -- (-30:2); 
\node[rotate=45]  at (-45:1.5){$\cdots$};

\draw[->-] (N1) -- (-60:2);
\node[rotate=-45] at (-45:2.5){$\left. \begin{array}{c}  \\ \\ \\ \end{array} \right\} m-i $};

\draw[-<-] (N1) -- (-85:2); 
\node[rotate=-10]  at (-100:1.5){$\cdots$};
\draw[-<-] (N1) -- (-115:2);

\node[rotate=-100] at (-100:2.1){$\left. \begin{array}{c}  \\ \\ \\ \end{array} \right\} n $};

\begin{scope}[shift={(6,0)}, rotate=180]

\draw[-<-] (N2) -- (60:2);

\node[rotate=45] at (45:2.3){$j-1 \left\{ \begin{array}{c}\\    \\ \\ \end{array} \right.  $}; 

\node[rotate=315]  at (45:1.5){$\cdots$}; 
\draw[-<-] (N2) -- (30:2);

\draw[-<-] (N2) -- (-30:2); 
\node[rotate=225]  at (-45:1.5){$\cdots$};

\draw[-<-] (N2) -- (-60:2);
\node[rotate=-45] at (-45:2.5){$s- j \left\{ \begin{array}{c}  \\ \\ \\ \end{array} \right. $};

\draw[->-] (N2) -- (85:2); 
\node[rotate=10]  at (100:1.5){$\cdots$};
\draw[->-] (N2) -- (115:2);

\node[rotate=-80] at (100:2.1){$\left. \begin{array}{c}  \\ \\ \\ \end{array} \right\} r $};

\end{scope}

\end{scope}

\begin{scope}[shift={(0,-7)}]
\node[scale=1.3] at (180:5){$- \sum\limits_{i = 0}^n \sum\limits_{j = 0}^r$}; 
\node[circle,  minimum size=2cm](N1) at (0,0){}; 
\node[circle, minimum size=2cm](N2) at (6,0){};
\coordinate (A) at (10:1);
\coordinate (B) at ( $ (N2) + (170:1)$ );
\coordinate (D) at (-10:1); 
\coordinate (C) at ( $(N2) + (190:1)$ );
\filldraw[ pattern=north west lines ] (A) -- (B)  arc(170:-170:1) (C) -- (D) arc (350:10:1) (A);

\node(a) at (150:2.5){$\alpha$};
\node(b) at (210:2.5){$\beta$};
\node(d) at ( $(N2) + (30:2.5)$ ) {$\delta$};
\node(g) at ( $(N2) + (-30:2.5)$ ) {$\gamma$};

\draw[dbl->-] (N1) -- (a);
\draw[dbl-<-] (N1) --(b);
\draw[dbl-<-] (N2) -- (d);
\draw[dbl->-] (N2) -- (g);


\draw[->-] (N1) -- (85:2); 
\node[rotate=10]  at (100:1.5){$\cdots$};
\draw[->-] (N1) -- (115:2);

\node[rotate=100] at (100:2.1){$\left. \begin{array}{c}  \\ \\ \\ \end{array} \right\} m $};


\draw[-<-] (N1) -- (60:2);

\node[rotate=45] at (45:2.3){$\left. \begin{array}{c}\\    \\ \\ \end{array} \right\} i-1 $}; 

\node[rotate=135]  at (45:1.5){$\cdots$}; 
\draw[-<-] (N1) -- (30:2);


\draw[-<-] (N1) -- (-30:2); 
\node[rotate=45]  at (-45:1.5){$\cdots$};

\draw[-<-] (N1) -- (-60:2);
\node[rotate=-45] at (-45:2.5){$\left. \begin{array}{c}  \\ \\ \\ \end{array} \right\} n-i $};

\begin{scope}[shift={(6,0)}, rotate=180]

\draw[-<-] (N2) -- (-85:2); 
\node[rotate=-10]  at (-100:1.5){$\cdots$};
\draw[-<-] (N2) -- (-115:2);

\node[rotate=80] at (-100:2.1){$\left. \begin{array}{c}  \\ \\ \\ \end{array} \right\} s $};

\draw[->-] (N2) -- (60:2);

\node[rotate=45] at (45:2.3){$j-1 \left\{ \begin{array}{c}\\    \\ \\ \end{array} \right.  $}; 

\node[rotate=315]  at (45:1.5){$\cdots$}; 
\draw[->-] (N2) -- (30:2);

\draw[->-] (N2) -- (-30:2); 
\node[rotate=225]  at (-45:1.5){$\cdots$};

\draw[->-] (N2) -- (-60:2);
\node[rotate=-45] at (-45:2.5){$r- j \left\{ \begin{array}{c}  \\ \\ \\ \end{array} \right. $};

\end{scope}
\end{scope}
\end{tikzpicture} }

\end{center}

This results in an expression as a sum over discs each with $4$ type $1$ marked points.  We can apply relation (C1) to cut each disc into two discs,  as shown in the following diagram (where we cut along the dotted line):

\begin{center}

\resizebox{12cm}{!} {\begin{tikzpicture}
\node[circle,  minimum size=2cm](N1) at (0,0){}; 
\node[circle, minimum size=2cm](N2) at (6,0){};
\coordinate (A) at (10:1);
\coordinate (B) at ( $ (N2) + (170:1)$ );
\coordinate (D) at (-10:1); 
\coordinate (C) at ( $(N2) + (190:1)$ );
\filldraw[ pattern=north west lines ] (A) -- (B)  arc(170:-170:1) (C) -- (D) arc (350:10:1) (A);

\node(a) at (150:2.5){$\alpha$};
\node(b) at (210:2.5){$\beta$};
\node(d) at ( $(N2) + (30:2.5)$ ) {$\delta$};
\node(g) at ( $(N2) + (-30:2.5)$ ) {$\gamma$};

\draw[dbl->-] (N1) -- (a);
\draw[dbl-<-] (N1) --(b);
\draw[dbl-<-] (N2) -- (d);
\draw[dbl->-] (N2) -- (g);


\draw[->-] (N1) -- (60:2);

\node[rotate=45] at (45:2.3){$\left. \begin{array}{c}\\    \\ \\ \end{array} \right\} i-1 $}; 

\node[rotate=135]  at (45:1.5){$\cdots$}; 
\draw[->-] (N1) -- (30:2);


\draw[->-] (N1) -- (-30:2); 
\node[rotate=45]  at (-45:1.5){$\cdots$};

\draw[->-] (N1) -- (-60:2);
\node[rotate=-45] at (-45:2.5){$\left. \begin{array}{c}  \\ \\ \\ \end{array} \right\} m-i $};

\draw[-<-] (N1) -- (-85:2); 
\node[rotate=-10]  at (-100:1.5){$\cdots$};
\draw[-<-] (N1) -- (-115:2);

\node[rotate=-100] at (-100:2.1){$\left. \begin{array}{c}  \\ \\ \\ \end{array} \right\} n $};

\begin{scope}[shift={(6,0)}, rotate=180]

\draw[-<-] (N2) -- (60:2);

\node[rotate=45] at (45:2.3){$j-1 \left\{ \begin{array}{c}\\    \\ \\ \end{array} \right.  $}; 

\node[rotate=315]  at (45:1.5){$\cdots$}; 
\draw[-<-] (N2) -- (30:2);

\draw[-<-] (N2) -- (-30:2); 
\node[rotate=225]  at (-45:1.5){$\cdots$};
\draw[-<-] (N2) -- (-60:2);
\node[rotate=-45] at (-45:2.5){$s- j \left\{ \begin{array}{c}  \\ \\ \\ \end{array} \right. $};

\draw[->-] (N2) -- (85:2); 
\node[rotate=10]  at (100:1.5){$\cdots$};
\draw[->-] (N2) -- (115:2); 
\node[rotate=-80] at (100:2.1){$\left. \begin{array}{c}  \\ \\ \\ \end{array} \right\} r $};

\end{scope}
\draw[very thick,dashed] (-3,0) -- (9,0);

\begin{scope}[shift={(0,-7)} , text opacity = 1]
\node[scale=1.3] at (-4,0) {$=$}; 
\node[preaction={fill }, draw, circle,  minimum size=2cm, pattern=north west lines](N1) at (0,0){}; 
\node(a) at (180:2.5){$\alpha$};

\node(d) at (0:2.5) {$\delta$};

\draw[dbl->-] (N1) -- (a);
\draw[dbl-<-] (N1) --(d);


\draw[->-] (N1) -- (115:2);

\node[rotate=-50] at (130:2.3){$i-1\left\{ \begin{array}{c}\\    \\ \\ \end{array} \right. $}; 

\node[rotate=40]  at (130:1.5){$\cdots$}; 
\draw[->-] (N1) -- (145:2);


\draw[-<-] (N1) -- (35:2); 
\node[rotate=--40]  at (50:1.5){$\cdots$};
\draw[-<-] (N1) -- (65:2);
\node[rotate=50] at (50:2.5){$ \left. \begin{array}{c}  \\ \\ \\ \end{array} \right\} s-j $};

\begin{scope}[shift={(6,0)}]

\node[preaction={fill }, draw, circle,  minimum size=2cm, pattern=north west lines](N1) at (0,0){}; 
\node(b) at (150:2.5){$\beta$};

\node(g) at (30:2.5) {$\gamma$};

\draw[dbl->-] (N1) -- (g);
\draw[dbl-<-] (N1) --(b);


\draw[-<-] (N1) -- (180:2);

\node[rotate=15] at (195:2.3){$n \left\{ \begin{array}{c}\\    \\ \\ \end{array} \right.  $}; 

\node[rotate=105]  at (195:1.5){$\cdots$}; 
\draw[-<-] (N1) -- (210:2);


\draw[->-] (N1) -- (225:2);

\node[rotate=60] at (240:2.3){$m - i \left\{ \begin{array}{c}\\    \\ \\ \end{array} \right.  $}; 

\node[rotate=150]  at (240:1.5){$\cdots$}; 
\draw[->-] (N1) -- (255:2);


\draw[-<-] (N1) -- (285:2);

\node[rotate=300] at (300:2.3){$\left. \begin{array}{c}\\    \\ \\ \end{array} \right\} j-1 $}; 

\node[rotate=210]  at (300:1.5){$\cdots$}; 
\draw[-<-] (N1) -- (315:2);


\draw[->-] (N1) -- (330:2);

\node[rotate=345] at (345:2.3){$\left. \begin{array}{c}\\    \\ \\ \end{array} \right\} r $};

\node[rotate=75]  at (345:1.5){$\cdots$}; 
\draw[->-] (N1) -- (0:2);
\end{scope}
\end{scope}
\end{tikzpicture}  }
\end{center}
\vspace{-15pt}
Applying this relation to each disc in the sum results in the expression shown.

Recall that there is a second type of diagram that can potentially contribute, of the form
\begin{center}

\resizebox{12cm}{!} {\begin{tikzpicture}

\begin{scope}
\node[scale=1.3] at (180:5){$\sum\limits_{i = 0}^m \sum\limits_{j = 0}^s \delta_{\beta\gamma}$};

\node[circle,  minimum size=2cm](N1) at (0,0){}; 
\node[circle, minimum size=2cm](N2) at (6,0){};
\coordinate (A) at (10:1);
\coordinate (B) at ( $ (N2) + (170:1)$ );
\coordinate (D) at (-10:1); 
\coordinate (C) at ( $(N2) + (190:1)$ );
\coordinate (beta) at (210:1);
\coordinate (beta2) at (195:1);
\coordinate (gamma) at ( $(N2) + (-30:1)$ );
\coordinate (gamma2) at ( $(N2) + (-15:1)$ );
\filldraw[ pattern=north west lines ] (A) -- (B)  arc(170:-170:1)  --  (C) -- (D) arc (350:10:1) (A);
\draw[color=white, very thick](beta2) arc(195:210:1) -- (beta);
\draw[color=white, very thick](gamma) arc(-30:-15:1) -- (gamma2);

\fill [ pattern=north west lines ] (gamma2)  arc(400:140: 5.18 and 2.25) -- (beta2) -- (beta)  arc(150:390: 4.5 and 2) -- (gamma)-- (gamma2);

 \draw (gamma2)  arc(400:140: 5.18 and 2.25) -- (beta2); 
 \draw (beta)  arc(150:390: 4.5 and 2) -- (gamma);

\node(a) at (150:2.5){$\alpha$};
\node(d) at ( $(N2) + (30:2.5)$ ) {$\delta$};

\draw[dbl->-] (N1) -- (a);
\draw[dbl-<-] (N2) -- (d);


\draw[->-] (N1) -- (60:2);

\node[rotate=45] at (45:2.3){$\left. \begin{array}{c}\\    \\ \\ \end{array} \right\} i-1 $}; 

\node[rotate=135]  at (45:1.5){$\cdots$}; 
\draw[->-] (N1) -- (30:2);


\draw[->-] (N1) -- (-30:2); 
\node[rotate=45]  at (-45:1.5){$\cdots$};

\draw[->-] (N1) -- (-60:2);
\node[rotate=-45] at (-45:2.5){$\left. \begin{array}{c}  \\ \\ \\ \end{array} \right\} m-i $};

\draw[-<-] (N1) -- (-85:2); 
\node[rotate=-10]  at (-100:1.5){$\cdots$};
\draw[-<-] (N1) -- (-115:2);

\node[rotate=-100] at (-100:2.1){$\left. \begin{array}{c}  \\ \\ \\ \end{array} \right\} n $};

\begin{scope}[shift={(6,0)}, rotate=180]

\draw[-<-] (N2) -- (60:2);

\node[rotate=45] at (45:2.3){$j-1 \left\{ \begin{array}{c}\\    \\ \\ \end{array} \right.  $}; 

\node[rotate=315]  at (45:1.5){$\cdots$}; 
\draw[-<-] (N2) -- (30:2);

\draw[-<-] (N2) -- (-30:2); 
\node[rotate=225]  at (-45:1.5){$\cdots$};

\draw[-<-] (N2) -- (-60:2);
\node[rotate=-45] at (-45:2.5){$s- j \left\{ \begin{array}{c}  \\ \\ \\ \end{array} \right. $};

\draw[->-] (N2) -- (85:2); 
\node[rotate=10]  at (100:1.5){$\cdots$};
\draw[->-] (N2) -- (115:2);

\node[rotate=-80] at (100:2.1){$\left. \begin{array}{c}  \\ \\ \\ \end{array} \right\} r $};

\end{scope}

\end{scope}

\end{tikzpicture} } 
\end{center}
There are similar diagrams where the $\alpha$ arrow is joined to the $\delta$ arrow, and a factor of $\delta_{\alpha \delta}$ appears instead of $\delta_{\alpha \gamma}$.  However, these diagrams all contribute zero. The reason is that the correspond to Riemann surfaces with a boundary component with no type $1$ marked point, i.e.\ with no double arrows.  By the cutting relation (C0), we can replace such a diagram by a disjoint of two discs, one of which has two type $1$ marked points, and the other has  none. By the argument in the proof of proposition \ref{proposition_classical_limit_universal_enveloping}, a disc with no type $1$ marked points  is zero modulo $\hbar$.   (These diagrams will make a contribution modulo $\hbar^2$, but  we do not consider that).  
\end{proof}

\subsection{A  basis for the algebra $A_{N,\hbar,c}$}
The fact that the algebra $A_{N,\hbar,c}$ is flat over $\hbar$, and at $\hbar = 0$ can be described as a universal enveloping algebra, tells us that there must exist a normal form for the labelled surfaces which span the algebra $A_{N,\hbar,c}$, with the feature that surfaces in this normal  form will give a basis for the algebra. In this subsection  we will introduce such a normal form.
\begin{definition}
\label{definition_reduced}
Consider a marked surface as defined at the beginning of this section.  We say such a surface is \emph{reduced}, if
\begin{enumerate} 
\item All the outgoing marked points (of type $1$ and type $2$) are placed, in the ordering on the set of marked points, after all the incoming marked points.  Since we can switch positions of adjacent outgoing (or incoming) marked points, the ordering on the set of outgoing (or incoming) marked points is irrelevant.
\item  Each component of the surface is a disc with precisely two type $1$ marked points.
\item In order coming from their arrangement on the boundary of each disc, outgoing type $2$ marked points are placed after incoming marked points.  
\end{enumerate} 
\end{definition} 
It is clear that the relations in the algebra allow one to make every surface into a sum of reduced surfaces. This tells us that reduced surfaces span $A_{N,\hbar,c}$.   The fact that the algebra $A_{N,\hbar,c}$ is flat over $\hbar$ implies that the reduced surfaces form a basis for $A_{N,\hbar,c}$.  Indeed, by using reduced surfaces we get an isomorphism of vector spaces
$$
\Sym^\ast \mf{gl}_N[z_1,z_2] \to A_{N,\hbar,c}.
$$
This isomorphism is defined as follows.  Given an expression like
$$
(E_{\alpha_1 \beta_1} z_1^{k_1} z_2^{l_1} ) \dots (E_{\alpha_n \beta_n} z_1^{k_n} z_2^{l_n})
$$ 
we build a reduced surface as follows.  The surface has $n$ components, each of which is a disc with two type $1$ marked points. The $i$'th disc has its type $1$ marked points (i.e.\ the ``double arrows'' in the figures) labelled by $\alpha_i$ and $\beta_i$.  It has $k_i$ outgoing type $2$ marked points and $l_i$ incoming type $2$ marked points.  The ordering on the set of marked points is now dictated by the condition that the surface be reduced.  The resulting element of $A_{N,\hbar,c}$ does not depend on the order in which these $n$ discs are placed.

The relations in the algebra $A_{N,\hbar,c}$ give us  an implicit algorithm to take the product of two reduced surfaces and  produce a sum of reduced surface.  The product of two reduced surfaces $\Sigma,\Sigma'$  is simply their disjoint union, but with the ordering on the set of marked points in which  we first list the marked points on $\Sigma$, and then those on $\Sigma'$.  The relations in the algebra allow us to change the ordering on the marked points and then to manipulate the resulting surface back into reduced form.  The statement that the algebra is flat over $\hbar$, and  that therefore the reduced surfaces form a basis, implies that no matter how we do this, we end up with the same sum of reduced surfaces.   Unfortunately, I don't know of a closed-form expression for the product of two reduced surfaces in terms  of reduced surfaces.  
  
\section{Matching the computations from gauge theory and algebra}
In this section we will see how to some computations from the $5$-dimensional gauge theory can be matched with those from the combinatorial algebra we have just defined. Recall that in the algebra $U_\hbar^{QFT} (\mf{gl}_N[z_1,z_2])$ which is Koszul dual to the algebra of observables of the $5$-dimensional gauge theory, we computed the commutator

\begin{multline} 
 [E_{\alpha\beta}z_1, E_{\gamma\delta} z_2] = \delta_{\beta \gamma}E_{\alpha\delta}(z_1 \ast_c z_2) - \delta_{\alpha\delta}E_{\gamma\beta}(z_2 \ast_c z_1)\\
 + \hbar 2^{-5}\pi^{-2}  E_{\alpha \delta} E_{\gamma \beta} - \hbar 2^{-5} \pi^{-2}  \sum_{\mu} \delta_{\beta \gamma}  E_{\alpha \mu} E_{\mu \delta}  
- \hbar 2^{-5} \pi^{-2} \sum_{\mu} \delta_{\alpha \delta} E_{\gamma \mu} E_{\mu \beta} .
\end{multline}
We want to match this with the commutator in the algebra $A_{N,\hbar,c} = U^{comb}_{\hbar}(\mf{gl}_N[z_1,z_2])$.  To do this, we must match up the generators of the algebra $U_\hbar^{QFT}(\mf{gl}_N[z_1,z_2])$ with those of $A_{N,\hbar,c}$.  We will work modulo $\hbar^2$, and we will only write down a map on those generators which are at most quadratic in the $z_i$.

The map we choose sends
\begin{align*} 
E_{\alpha \beta} & \mapsto D( {\alpha}\Uparrow, {\beta}\Downarrow) \\
E_{\alpha \beta} z_1 & \mapsto D(\alpha \Uparrow, \uparrow , \beta \Downarrow ) \\
E_{\alpha \beta} z_2 & \mapsto D(\alpha \Uparrow, \downarrow , \beta \Downarrow ) \\
E_{\alpha \beta} z_1 z_2 & \mapsto\tfrac{1}{2} \left(  D(\alpha \Uparrow, \uparrow, \downarrow , \beta \Downarrow ) +  D(\alpha \Uparrow, \downarrow, \uparrow , \beta \Downarrow ) \right) 
 \end{align*}
There is a slight sublety in this definition.  In the $5$-dimensional gauge theory, we viewed $\C[z_1,z_2]$ as being the usual space  of polynomials, with it's ordinary action of $SL(2,\C)$, but equipped with the Moyal product.  The Moyal product is an explicit formula which takes two polynomials $f,g$ of $z_1,z_2$ and produces a new polynomial $f \ast_c g$.  Thus, $z_1 z_2$ does not mean the algebra product of $z_1$ and $z_2$. Rather we have 
$$
 z_1 \ast_c z_2 = z_1 z_2 + \tfrac{1}{2} c . 
$$

Let us denote by 
$$
\rho : \C[z_1,z_2] \otimes \mf{gl}_N \mapsto A_{N,\hbar,c}
$$
this linear map. 
\begin{lemma}
In the algebra $A_{N,\hbar,c}$, working modulo $\hbar^2$, we have
\begin{multline} 
 [\rho ( E_{\alpha\beta}z_1 ) ,\rho( E_{\gamma\delta} z_2) ] = \delta_{\beta \gamma}\rho(E_{\alpha\delta}z_1 \ast_c z_2)  - \delta_{\alpha\delta}\rho ( E_{\gamma\beta}(z_2 \ast_c z_2)) \\
 + \hbar  \rho ( E_{\alpha \delta}) \rho( E_{\gamma \beta}) - \hbar \sum_{\mu} \delta_{\beta \gamma} \rho( E_{\alpha \mu}) \rho( E_{\mu \delta}) 
- \hbar \sum_{\mu} \delta_{\alpha \delta} \rho(E_{\gamma \mu})\rho( E_{\mu \beta})
\label{equation_algebra_commutator_combinatorial}
\end{multline}
\label{lemma_combinatorial_one_loop}
\end{lemma}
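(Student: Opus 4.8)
The plan is to compute the two products $\rho(E_{\alpha\beta}z_1)\,\rho(E_{\gamma\delta}z_2)$ and $\rho(E_{\gamma\delta}z_2)\,\rho(E_{\alpha\beta}z_1)$ in reduced form, working modulo $\hbar^2$, and to subtract. By definition each product is the disjoint union of the two discs $D(\alpha\Uparrow,\uparrow,\beta\Downarrow)$ and $D(\gamma\Uparrow,\downarrow,\delta\Downarrow)$, carrying the total order in which the marked points of the first disc precede those of the second (respectively the other way around). The whole computation is the lowest-degree case of the one in the proof of Proposition \ref{proposition_combinatorial_one_loop}, so the list of contributing diagrams is short; the point is to keep every term, including the ones that vanished generically there. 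Throughout I use the dictionary $z_1\leftrightarrow\partial$, $z_2\leftrightarrow z$ under which $\rho$ agrees (up to the symmetrization built into its definition on $E_{\alpha\beta}z_1z_2$) with the map $U(\op{Diff}_c(\C)\otimes\mf{gl}_N)\to A_{N,\hbar=0,c}$ constructed in Proposition \ref{proposition_classical_limit_universal_enveloping}.

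First I would isolate the classical ($\hbar^0$) part. Permuting the total order with relations (P0), (P2), (P3) and the $c$-term of (M) — exactly the manipulations in the proof of Proposition \ref{proposition_classical_limit_universal_enveloping} — collapses the $\hbar^0$ part of the commutator to the commutator in $U(\op{Diff}_c(\C)\otimes\mf{gl}_N)$ of the images of $E_{\alpha\beta}\partial$ and $E_{\gamma\delta}z$. Using $\partial z = z_1\ast_c z_2$ and $z\partial = z_2\ast_c z_1$ in $\op{Diff}_c(\C)\cong\Oo_c(\C^2)$, this equals $\delta_{\beta\gamma}\rho(E_{\alpha\delta}z_1\ast_c z_2)-\delta_{\alpha\delta}\rho(E_{\gamma\beta}z_2\ast_c z_1)$, the first line of \eqref{equation_algebra_commutator_combinatorial}.

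Next I would collect the terms of order $\hbar$, of which there are two sources. When the single type-$2$ points $\uparrow$ of the first disc and $\downarrow$ of the second are transposed past one another, relation (P1) produces an extra surface in which they are glued by $\asymp$; this boundary connect sum fuses the two discs into one disc carrying the four type-$1$ marked points $\beta\Downarrow,\alpha\Uparrow,\delta\Downarrow,\gamma\Uparrow$ in alternating incoming/outgoing cyclic order, so (C1) applies and splits it as $D(\alpha\Uparrow,\delta\Downarrow)\amalg D(\gamma\Uparrow,\beta\Downarrow)=\rho(E_{\alpha\delta})\rho(E_{\gamma\beta})$, giving the $+\hbar\,\rho(E_{\alpha\delta})\rho(E_{\gamma\beta})$ term. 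The second source is the diagram in which an incoming type-$1$ point of one disc is transposed past an outgoing type-$1$ point of the other with matching $\mf{gl}_N$-label: relation (P2) merges the two discs and supplies the factor $\delta_{\beta\gamma}$ (resp. $\delta_{\alpha\delta}$ from the product in the opposite order); on the merged disc the two type-$2$ points now lie on a common boundary component and are last in the total order, so relation (M) applies, and its $+\hbar\sum_{\mu=1}^N(\cdots)$ term converts them into a pair of type-$1$ points labelled $\mu$. Cutting the resulting four-type-$1$-point disc with (C1) and summing over $\mu$ yields $-\hbar\sum_\mu\delta_{\beta\gamma}\rho(E_{\alpha\mu})\rho(E_{\mu\delta})$ and $-\hbar\sum_\mu\delta_{\alpha\delta}\rho(E_{\gamma\mu})\rho(E_{\mu\beta})$. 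Every remaining order-$\hbar$ contribution involves gluing two type-$2$ points that are adjacent along a single boundary component; by the argument in the proof of Proposition \ref{proposition_classical_limit_universal_enveloping} such a surface factors through a disc with no type-$1$ marked points and hence is $0$ in $A_{N,\hbar,c}$.

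The step I expect to be the main obstacle is the orientation and sign bookkeeping: the relative signs with which the $\asymp$-glued surface enters relations (P1) and (M), together with the precise cyclic order of the four type-$1$ marked points after a boundary connect sum — which is what determines, through (C1), exactly which product $\rho(E_{\cdot\,\cdot})\rho(E_{\cdot\,\cdot})$ one lands on. This is the combinatorial shadow of tracking the four planar tree structures in the double-line computation of Proposition \ref{proposition_commutator_QFT}; once the orientations are fixed consistently the coefficients come out mechanically and one reads off \eqref{equation_algebra_commutator_combinatorial}, matching \eqref{equation_algebra_commutator_QFT} up to the overall constant absorbed into $\hbar$.
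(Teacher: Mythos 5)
Your proposal is correct in outline and uses the same ingredients as the paper's proof — relations (P1), (P2), (M), (C1), and the vanishing modulo $\hbar$ of discs without type-$1$ marked points — so the two arguments are the same in content. The one organizational difference is that the paper does not redo the commutator diagrammatics: it simply cites Proposition \ref{proposition_combinatorial_one_loop} (specialized to $m=s=1$, $n=r=0$) to obtain $[E_{\alpha\beta}\partial,E_{\gamma\delta}z]=\delta_{\beta\gamma}E_{\alpha\delta}\partial z-\delta_{\delta\alpha}E_{\gamma\beta}z\partial+\hbar E_{\alpha\delta}E_{\gamma\beta}$, and the entire remaining work is the single conversion step: using (M) to rewrite $E_{\alpha\delta}\partial z$ and $E_{\gamma\beta}z\partial$ in terms of the symmetrized generator $\rho(E_{\bullet\bullet}z_1z_2)=\tfrac12(D(\bullet\Uparrow,\uparrow,\downarrow,\bullet\Downarrow)+D(\bullet\Uparrow,\downarrow,\uparrow,\bullet\Downarrow))$, which is exactly where the $-\hbar\sum_\mu$ corrections enter. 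Your ``second source'' is this same conversion, phrased as manipulating the merged disc produced by (P2), and it is equivalent. One small caveat worth making explicit in a write-up: after (P2) merges the discs, the two type-$2$ marked points are not yet the last two in the total order, so you must first shuffle with (P0) before (M) is literally applicable — this is harmless but should be said, since it is precisely the bookkeeping you flag as the main obstacle. Given that the paper has already carried out the hard combinatorics in the proofs of Propositions \ref{proposition_classical_limit_universal_enveloping} and \ref{proposition_combinatorial_one_loop}, citing those (as the paper does) is cleaner than unfolding them; but your version is a legitimate self-contained rederivation.
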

Let us write the parameter $\hbar$ in the algebra $U_{\hbar}^{QFT}(\mf{gl}_N[z_1,z_2])$ as $\hbar_{QFT}$,  and the  corresponding parameter in the combinatorially defined algebra as $\hbar_{comb}$. What this lemma tells us is that, if we identify $\hbar_{comb} = 2^{-5} \pi^{-2} \hbar_{QFT}$, the commutators of the elements $E_{\alpha \beta} z_i$ in the two algebras can be matched, modulo $\hbar^2$.  

\begin{proof}
We have already seen in proposition \ref{proposition_combinatorial_one_loop} that
\begin{equation}
[E_{\alpha \beta} \partial, E_{\gamma \delta}  z] = \delta_{\beta \gamma} E_{\alpha \delta} (\partial z) - \delta_{\delta\alpha} E_{\gamma \beta} ( z \partial) 
 + \hbar E_{\alpha \delta}E_{\gamma\beta}.
 \label{equation_commutator_simple}
\end{equation}
We have
$$
\tfrac{1}{2} E_{\alpha \delta} (\partial z + z \partial) = \rho ( E_{\alpha \delta} z_1 z_2 ). 
$$
Let us use relation (M) which we recall:
\begin{center}
\begin{tikzpicture}[scale=0.6]
\begin{scope}
\fill[pattern=north west lines ] (0.75,0) rectangle (1,4);
\draw (1,4)--(1,0);
\draw[->-](1,3) -- (3,3) node[midway,above]{$n-1$};
\draw[->-](3,1) -- (1,1) node[midway,above]{$n$};
\end{scope}
\node at (4,2) {$=$};
\begin{scope}[shift={(5,0)}]
\fill[pattern=north west lines ] (0.75,0) rectangle (1,4);
\draw (1,4)--(1,0);
\draw[->-](3,3) -- (1,3) node[midway,above]{$n-1$};
\draw[->-](1,1) -- (3,1) node[midway,above]{$n$};
\end{scope}

\begin{scope}[shift={(9.75,0)}]

\filldraw[pattern=north west lines , even odd rule] (1,2) circle (1) (1,2) circle (1.25);
\fill[color = white] (-1,0) rectangle (1,4);
\fill[pattern=north west lines ] (0.75,0) rectangle (1,4);
\draw (1,3) -- (1,1);
\draw (1,4) -- (1,3.25);
\draw (1,0) -- (1,0.75);
\node at (0,2) {$\hbar$}; 
\end{scope}
\node at (9,2) {$-$};
\node at (13.5,2) {$-$}; 

\begin{scope}[shift={(16,0)}]
\node at (-0.7,2) {$\hbar \displaystyle \sum_{\mu = 1}^{N} $} ;  
\fill[pattern=north west lines ] (0.75,0) rectangle (1,4);
\draw (1,4)--(1,0);
\node(N1) at (3,3) {$\mu$};
\node(N2) at (3,1) {$\mu$};
\draw[dbl->-](N1) -- (1,3)  node[midway,above]{$n-1$}; 
\draw[dbl->-] (1,1) -- (N2) node[midway,above]{$n$};
\end{scope}
\node at (20,2) {$+$} ; 
\begin{scope}[shift={(21,0)}]
\node at (0,2) {$c$};
\fill[pattern=north west lines ] (0.75,0) rectangle (1,4);
\draw (1,4)--(1,0);
\end{scope}
\end{tikzpicture}
\end{center}
We want to apply this to find a linear relation among $E_{\alpha \beta} \partial z$, $E_{\alpha \beta} z \partial$, and $\sum E_{\alpha \mu} E_{\mu \alpha}$.   (In our conventions, $\partial$ corresponds to an outgoing arrow and $z$ to an incoming arrow).  

To derive such a linear relation, we need to get rid of the second term on the right hand side of the pictorial expression above. This involves a surface  with a boundary component with no marked points.  Such  surfaces are zero modulo $\hbar$. Since this term appears with a coefficient of $\hbar$, it does not contribute modulo $\hbar^2$. 

Using relation $(M)$, we therefore find 
$$
E_{\alpha \delta} \partial z =  E_{\alpha \delta} z \partial  + c E_{\alpha \delta} - \hbar \sum E_{\alpha \mu} E_{\mu \delta}. 
$$
We can use this to derive the equality
\begin{align*} 
  \rho ( E_{\alpha \delta} z_1 z_2 ) &=  \tfrac{1}{2} E_{\alpha \delta} (\partial z + z \partial) \\
  &= E_{\alpha \delta} \partial z - \tfrac{c}{2} E_{\alpha \delta} + \hbar \sum E_{\alpha \mu} E_{\mu\delta}. 
 \end{align*}
Similarly we have
$$
\rho(E_{\gamma \beta} z_1 z_2) = E_{\gamma \beta} z \partial + \tfrac{c}{2} E_{\gamma \beta} - \hbar \sum E_{\gamma \mu} E_{\mu \delta}.  
$$
Plugging this into equation \ref{equation_commutator_simple} gives us
\begin{align*}
[E_{\alpha \beta} \partial, E_{\gamma \delta}  z] =& \delta_{\beta \gamma} \rho ( E_{\alpha \delta} z_1\ast_c z_2) - \hbar\delta_{\beta \gamma} \sum\rho( E_{\alpha \mu}) \rho(E_{\mu \delta})\\
&- \delta_{\delta \alpha} \rho(E_{\gamma \beta} z_2\ast_c z_1)  -\hbar\delta_{\alpha \delta} \sum \rho(E_{\gamma \mu})\rho( E_{\mu \beta})\\
& + \hbar \rho(E_{\alpha \delta})\rho(E_{\gamma\beta}).
\end{align*}
\end{proof}

It is an interesting, and very difficult, question as to whether one can  \emph{explicitly} match commutators computed using the quantum field theory with those computed in the combinatorial  algebra, to higher  orders in $\hbar$.  I will make no attempt to address this question in this paper. Instead, we will  rely on abstract results about the uniqueness of a deformation.  

\section{Matching combinatorial and field-theoretic algebras to higher order}
Let me now state the abstract results that we will prove later.  Recall that we have two algebras $A_{N,\hbar,c}$ and $U_{\hbar}^{QFT}(\mf{gl}_N  \otimes  \C[z_1,z_2])$, where the latter algebra is a deformation of $U(\mf{gl}_N \otimes \C[z_1,z_2])$ over the ring $\C[[\hbar]]$.  Here, we have used a rescaling symmetry of the field theory to set $c = 1$. We can do the same on the algebra $A_{N,\hbar,c}$, using the fact that $A_{N,\hbar,c} = A_{N,\lambda \hbar, \lambda c}$ for any non-zero $\lambda$.

In section \ref{section_uniqueness} we prove the following theorem.  The proof is rather difficult, but is purely algebraic.   Let $\kappa \in A_{N,\hbar,1}$ be the central element corresponding to the disc with no marked points.  There us a similar central element in $U^{QFT}_{\hbar}(\mf{gl}_N \otimes \op{Diff}(\C))$, which is a lift of the element $\op{Id}_{\mf{gl}_N} \otimes 1 \in \mf{gl}_N \otimes \op{Diff}(\C)$ to the quantum level. The existence of such a lift is not obvious, but is proved as part of the uniqueness result. 
\begin{theorem}
If $\hbar$ is a formal parameter, then there is an isomorphism of algebras 
$$
A_{N,\hbar,1} \iso U^{QFT}_{\hbar} ( \mf{gl}_N \otimes \op{Diff}(\C)). 
$$
This isomorphism may not be an isomorphism of $\C[[\hbar]]$ algebras, but it takes 
\begin{equation}
\hbar \mapsto 2^{-5} \pi^{-2} \hbar + f_2(\kappa) \hbar^2 + f_3(\kappa)\hbar^3 + \dots
\label{equation_matching_parameters}
\end{equation}
where $\lambda$ is a non-zero constant, and $f_i(\kappa)$ is a polynomial in $\kappa$ of degree at most $i-1$.

This isomorphism is canonical up to conjugation by an inner isomorphism.   
\end{theorem}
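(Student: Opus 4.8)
The plan is to obtain the isomorphism from the uniqueness of \emph{uniform-in-$R$} deformations. By the Hochschild cohomology theorem stated in the introduction, the uniform-in-$R$ second Hochschild cohomology of $U(\Oo_\eps(\C^2)\otimes \mf{gl}(N+R\mid R))$ is the free rank-one $\C[\kappa]$-module $\C[\kappa]$, and (once checked) the corresponding $HH^1$ consists only of inner derivations together with the grading derivation. So the first step is to verify that \emph{both} families extend to the super setting and are compatible under $Q$-cohomology. For $A_{N,\hbar,c}$ this is essentially immediate: the surface combinatorics and all the relations (P0)--(P3), (C0)--(C1), (M) make sense verbatim when the labels on type-$1$ marked points are replaced by homogeneous basis vectors of a $\Z/2$-graded space $\C^{N+R\mid R}$, provided one inserts the standard Koszul signs; this yields a flat family $A_{N+R\mid R,\hbar,c}$ deforming $U(\op{Diff}_c(\C)\otimes \mf{gl}(N+R\mid R))$ (the flatness argument of appendix~\ref{appendix_flatness} and the identification of the $\hbar=0$ fibre in Proposition~\ref{proposition_classical_limit_universal_enveloping} both survive), and the empty-disc element $\kappa$ is central and insensitive to $R$. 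For $U^{QFT}_\hbar$ the extension to $\mf{gl}(N+R\mid R)$ is exactly the ``uniformly in $N$'' quantization of \cite{Cos16,CosLi15}. In both cases, choosing $Q\in\mf{gl}(N+R\mid R)$ of rank $(0\mid 1)$ and passing to $Q$-cohomology carries the $R$-family to the $(R-1)$-family by the trick of \cite{MikWit14}, so each algebra defines a compatible sequence, i.e.\ a class in the uniform-in-$R$ $HH^2$.

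Next I would run the order-by-order bootstrap over $\C[[\hbar]]$ with $c=1$. Suppose inductively that an algebra isomorphism $\Phi_k\colon A_{N+R\mid R,\hbar,1}\to U^{QFT}_\hbar(\op{Diff}(\C)\otimes\mf{gl}(N+R\mid R))$ has been built modulo $\hbar^{k+1}$, uniformly in $R$, intertwining the reparametrization $\hbar\mapsto 2^{-5}\pi^{-2}\hbar+\sum_{i=2}^{k} f_i(\kappa)\hbar^i$. The obstruction to lifting $\Phi_k$ one more step lies in the uniform-in-$R$ $HH^2$, hence is a $\C[\kappa]$-multiple of the first-order class; replacing the reparametrization by one with an additional term $f_{k+1}(\kappa)\hbar^{k+1}$ (a legitimate change of coordinates by the quantization theorem on $\C^2$) kills the obstruction, and $\Phi_k$ extends. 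The base case $k=1$ is precisely the content of Proposition~\ref{proposition_commutator_QFT} together with Lemma~\ref{lemma_combinatorial_one_loop}: they identify the first-order deformation of $U^{QFT}$ with $2^{-5}\pi^{-2}$ times that of $A_{N,\hbar,c}$, and the non-triviality proposition of appendix~\ref{appendix_non_trivial_deformation} shows this class is nonzero in the uniform-in-$R$ $HH^2$ --- crucially, it is a nonzero \emph{constant} multiple of a generator, not a nonconstant polynomial in $\kappa$, so it generates and the induction starts cleanly. Passing to the limit over $k$ produces $\Phi$ and formula~\eqref{equation_matching_parameters}.

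Two points remain. For the degree bound $\deg f_i\le i-1$: track the $\C^\times$-weight under which $c,\hbar$ have weight $2$ and $\mu=c^{-1}\hbar$ weight $0$. The only reparametrizations compatible with this grading have the form $\hbar\mapsto\hbar f(\mu)$, $c\mapsto c g(\mu)$, so the $\hbar^i$-coefficient of the parameter change is carried by $c^{-(i-1)}$; since $\kappa$ can only enter the reparametrization in the combination $\kappa c^{-1}$ (the empty disc carries one unit of $c$-weight relative to $\hbar$), a term of total $\hbar$-order $i$ can involve at most $i-1$ powers of $\kappa$. This is the refined analysis referred to after the Corollary in the introduction. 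For canonicity, the ambiguity in each lift $\Phi_{k+1}$ is a torsor over the uniform-in-$R$ $HH^1$; the inner part of $HH^1$ produces conjugation by an inner automorphism, while the grading derivation contributes an ambiguity already absorbed into the choice of $f(\mu),g(\mu)$ --- so $\Phi$ is unique up to an inner conjugation.

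I expect the main obstacle to be \emph{not} the formal induction but establishing the uniform-in-$R$ structure honestly on the quantum-field-theory side, where the algebra is only accessible through Feynman diagrams; this forces one to lean on the renormalization theorem of \cite{Cos16} and on $\mf{gl}(N+R\mid R)$-functoriality rather than on explicit computation, and to be careful that the $Q$-cohomology reduction is compatible with the filtrations used in Koszul duality. The degree-bound refinement is a genuinely separate difficulty: it does not follow from the $HH^2$ computation, and making the $(c,\hbar,\mu)$-weight argument above rigorous requires a precise description of how $\kappa$ sits inside the graded families on both sides. The $HH^1$-vanishing needed for the canonicity clause is a further technically independent computation.
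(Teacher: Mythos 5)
Your overall strategy is the one the paper actually follows: verify both families extend to $\mf{gl}(N+R\mid R)$ compatibly with $Q$-cohomology, compute the uniform-in-$R$ deformation and obstruction groups, match first-order classes via the OPE computation and Lemma~\ref{lemma_combinatorial_one_loop}, then bootstrap order by order. The paper packages the uniform-in-$R$ Hochschild cohomology in the language of formal moduli problems (the tangent complex of $\mc{M}^{\C^\times}_{N+\infty\mid\infty,\op{Diff}(\C)}$ in Theorem~\ref{theorem_deformation_computation} and Corollary~\ref{corollary_unique_quantization}), but this is the same content as your $HH^2$ and $HH^1$ arguments.

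The genuine gap is your degree-bound argument. You claim that $\kappa$ ``can only enter the reparametrization in the combination $\kappa c^{-1}$'' because ``the empty disc carries one unit of $c$-weight relative to $\hbar$,'' but this is not right: $\kappa$ lifts the central element $1\otimes\op{Id}_{\mf{gl}_N}$, which has weight $0$ under the $\C^\times$-action rotating $\C^2$, and the empty disc likewise has no type-$2$ marked points and hence weight $0$. Since $\mu=c^{-1}\hbar$ also has weight $0$, weight-counting alone permits reparametrizations $\hbar\mapsto\hbar\,f(\mu,\kappa)$ with $f$ an arbitrary power series in two weight-zero variables, and places no restriction whatsoever on the $\kappa$-degree of the $\hbar^i$ coefficient. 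You correctly flag this as ``a genuinely separate difficulty,'' but the sketch you offer does not repair it. The paper's actual argument is different in kind: one verifies by explicit diagrammatics that in $A_{N,\hbar,c}$ the coefficient of $\hbar^i$ in a commutator of generators involves at most $\kappa^{i-1}$ (the number of closed boundary loops produced by $i$ applications of relation (M) is bounded by $i-1$), and likewise on the QFT side the number of disconnected index loops in an $i$-loop planar diagram with two external operator insertions is bounded by $i-1$; it does not follow from any grading. On the canonicity clause, what the paper proves (Theorem~\ref{theorem_deformation_computation}) is that $H^{-1}T\mc{M}^{\C^\times}_{N+\infty\mid\infty,\op{Diff}(\C)}$ maps to zero in $HH^1$ of each finite-$R$ algebra, i.e.\ every uniform derivation restricts to an inner one; your appeal to ``the grading derivation being absorbed into $f(\mu),g(\mu)$'' is looser and would need this vanishing statement to be made precise, so you should not list the $HH^1$-vanishing as merely a further independent verification but recognize it as essential to the clause you are trying to prove.
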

In other words, the two algebras are isomorphic but we may need to choose a different set of generators of the algebra $\C[\kappa][[\hbar]]$ of central elements.

In this way, we have found an explicit  computation  of an important object of our field theory: the  algebra Koszul dual to the algebra of observables.   This gives us an implicit  understanding of the algebra of observables, by applying  Koszul duality to the combinatorially defined algebra $A_{N,\hbar,1}$.  

The proof  of this theorem has two parts. First, we perform an abstract analysis of the possible deformations of $U(\mf{gl}_N \otimes \C[z_1,z_2])$, or more precisely for the deformations of the algebras $U(\mf{gl}_{N+R \mid R} \otimes \C[z_1,z_2])$ for all $R$ in a compatible way.   We find that the space  of deformations is a free module for $\C[\kappa]$.  

Next, we check that, working modulo $\hbar^2$, the algebras $U_{\hbar}^{QFT} (\mf{gl}_N \otimes \C[z_1,z_2])$ and $A_{N,\hbar,1}$ both give a non-trivial deformation, and that in fact the deformation in each  case  provides a basis for the space  of deformations as a $\C[\kappa]$-module.     In order to verify this, we show that we can detect the cohomology class  of any first-order deformation of $U(\mf{gl}_{N+R \mid R} \otimes\C[z_1,z_2])$ by calculating the commutator of the elements $E_{\alpha \beta} z_i$.  Since  we did this computation for each of our two algebras,  we find that each  algebra gives a non-trivial cohomology class modulo $\hbar^2$.  We can further find that the cohomology classes given by the two algebras are the same up to a factor of $2^{-5} \pi^{-2}$.     

The fact that there exists an isomorphism at all orders in $\hbar$ as stated in the theorem follows from the fact that two first order the cohomology class of the deformation associated to each algebra  forms a basis for the space  of deformations as  a  $\C[\kappa]$-module.   The restrictions on the powers of $\kappa$ that appear in equation \ref{equation_matching_parameters} follow from the fact that in each algebra, when we work to order $i$ in $\hbar$, the commutator of two generators $E_{\alpha \beta} z_1^k z_2^l$ can only involve at most the $i-1$'st power of the central element $\kappa$. (In each algebra this  can  be verified by a diagrammatic analysis).
\section{Deformation quantization of a Nakajima quiver variety and the algebra $A_{N,\hbar,c}$.}
\label{section_adhm}
In this section we will construct a homomorphism from the algebra  $A_{N,\hbar,c}$ to a deformation-quantization of the algebra of holomorphic functions on the moduli of instantons on a non-commutative $\C^2$ of rank $N$ and charge $K$. We will show that this map becomes an isomorphism as $K \to \infty$. As we explained in the introduction, we will interpret this as an instance of the AdS/CFT correspondence. 

Recall that in the construction of Nakajima quiver varieties one considers a quiver with two types of nodes, which are conventionally drawn as circles and squares.  Circular nodes are ``gauged'', meaning that when we construct the variety we perform symplectic reduction with respect to a copy of $GL(K)$ associated to the node (where $K$ is the integer labelling the node).  Square nodes are not gauged, and lead to a $GL(K)$ symmetry of the quiver variety. 

The ADHM  quiver is the following quiver:
\begin{center}
\begin{tikzpicture}
  \node[circle,draw,thick](NL) at (0,0){$K$};
  \node[rectangle,draw,thick, inner sep=5](NR) at (2,0){$N$};
  \draw[thick](NL.east)--(NR.west);
  
  \draw[thick](NL.north west) arc (16:344:1) ;  

\end{tikzpicture}
\end{center}

Let 
$$
V_{N,K} = \mf{gl}(K) \oplus \op{Hom}(\C^N, \C^K).
$$
This has an obvious action of both $GL(K)$ and $GL(N)$.  The Nakajima quiver variety is 
$$
\mc{M}_{N,K} = \left( T^\ast V_{N,K}\right) \sslash GL(K) 
$$
where we perform the algebraic symplectic reduction.  The variety $\mc{M}_{N,K}$ is the moduli of framed torsion-free sheaves on $\C^2$ of rank $N$ and second Chern class $K$.   There is a deformation of this where we set the moment map to be a multiple $c$ of the identity, giving a variety $\mc{M}^{c}_{N,K}$.  This is the moduli of framed torsion-free sheaves on a non-commutative deformation of $\C^2$, again of rank $N$ and second Chern class $K$.  We will construct a deformation quantization $\Oo_{\hbar}(\mc{M}^{c}_{N,K})$ of the algebra $\Oo(\mc{M}_{N,K}^{c})$ of polynomial functions on $\mc{M}^{c}_{N,K}$, by quantum hamiltonian reduction. 

Let us first introduce some notation to describe differential operators on $V_{N,K}$. We can choose a basis $\phi_{ij}$ (for $i,j = 1,\dots, K$) of linear functions on $\mf{gl}(K)$, and $\rho_{i\alpha}$ (for $i = 1,\dots K$ and $\alpha = 1,\dots,N$) of linear functions on $\op{Hom}(\C^N,\C^K)$.  The algebra of differential operators on $V_{N,K}$ is generated by $\phi_{ij}$, $\dpa{\phi_{ij}}$, $\rho_{i\alpha}$ and $\dpa{\rho_{i\alpha}}$. We will introduce a parameter $\hbar$, which will be the quantum-mechanical Planck's constant, so that the commutation relations are 
\begin{align*} 
\left[\dpa{\phi_{ij}}, \phi_{mn} \right] &= \hbar \delta_{im} \delta_{jn} \\
\left[\dpa{\rho_{i\alpha}}, \rho_{j\beta} \right] &= \hbar \delta_{ij} \delta_{\alpha \beta}.
\end{align*}
The algebra defined by these generators with these commutation relations is $\op{Diff}_{\hbar}(V_{N,K})$, the algebra of differential operators on $V_{N,K}$.
 
The quantization of the moment map is the map 
\begin{align*} 
 \mu : \mf{gl}(K) \to & \op{Diff}_{\hbar} (V_{N,K}) \\
   \mu(E_{ij})   
=&  \sum_{k}   \dpa{\phi_{jk}}   \phi_{ik}-  \sum_{k}  \dpa{\phi_{ki}}\phi_{kj} + \sum_{\alpha}   \dpa{\rho_{j\alpha}} \rho_{i\alpha} 
\end{align*}
 In general, there is some arbitrariness in how we lift the classical moment map to the quantum moment map: we could always add a multiple of the identity. Different choices will be related by a change of coordinates of the parameters $c$ and $\hbar$.

We will define the quantum Hamiltonian reduction of the algebra $\op{Diff}_{\hbar}(V_{N,K})$ by $GL(K)$. We can include a parameter in this defintion, which is the quantum version of forming the symplectic reduction where we set the moment map to be not zero but a multiple of the identity in $\mf{gl}(K)$.  The quantum Hamiltonian reduction, with moment map set to be $c \op{Id}$,  of the algebra $\op{Diff}_{\hbar}(V_{N,K})$ is defined in two steps. First, we form the left ideal in $\op{Diff}_{\hbar}(V_{N,K})$ generated by 
$$
\mu(E_{ij})=\delta_{ij}c.
$$
or more explicitly
\begin{equation}
\sum_{k}     \phi_{ik}  \dpa{\phi_{jk}} -  \sum_{k}\phi_{kj}   \dpa{\phi_{ki}} + \sum_{\alpha}   \dpa{\rho_{j\alpha}} \rho_{i\alpha} = c \delta_{ij}.   \label{equation_relation} 
\end{equation}
Call this left ideal $I_{c}$.  Then, the quantum Hamiltonian reduction is
$$
\Oo_{\hbar}( \mc{M}_{N,K}^{c}) = \left( \op{Diff}(V_{N,K}) / I_{c} \right)^{GL(K)}.  
$$

Note that there is an algebra map
$$
\op{Diff}(V_{N,K})^{GL(K)} \to \Oo_{\hbar}(\mc{M}_{N,K}^{c}).  
$$
Let $D_{ij}$ be a collection of elements of $\op{Diff}(V_{N,K})$ which transform under $GL(K)$ in the adjoint representation. Then, this map has a kernel which includes elements of the form
\begin{equation} 
 D_{ij}\mu ( E_{ji})  -c D_{ij} \delta_{ji} 
\end{equation}
where $E_{ij} \in \mf{gl}(K)$ is the elementary matrix.

For all non-zero $\lambda$, there is an isomorphism
$$
\Oo_{\hbar}(\mc{M}_{N,K}^{c}) \iso \Oo_{\lambda \hbar}( \mc{M}_{N,K}^{\lambda c}) 
$$
obtained by multiplying all the generators $\phi_{ij}$ and $\rho_{i\alpha}$ by $\lambda$. 

\subsection{}
As an example, let's see what this algebra look like when $K = 1$. We will invert $\hbar$ in what follows.  In what follows since there is only one $GL(K)$ index we will surpress it. The algebra is generated  by the elements
$$
E_{\alpha \beta} = \frac{1}{\hbar} \rho_{\alpha} \dpa{ \rho_{ \beta} } 
$$
which satisfy the usual commutation relations of $\mf{gl}(N)$,
$$
[E_{\alpha \beta}, E_{\gamma \delta}] = \delta_{\beta \gamma} E_{\alpha \delta}- \delta_{\delta \alpha} E_{\gamma \beta}. 
$$
The moment map relation tells us that 
$$
\sum  \dpa{\rho_\alpha} \rho_\alpha= c \hbar^{-1},    
$$
so that 
$$
\sum  E_{\alpha \alpha} + N  =     c \hbar^{-1}.  
$$
In addition, there is a  relation
$$
 E_{\alpha \beta} E_{\gamma \delta} = E_{\alpha \delta} E_{\gamma\beta} + \delta_{\beta \gamma} E_{\alpha \delta} - \delta_{\gamma \delta} E_{\alpha \beta}.  
$$
These relations tell us that the algebra generated by the $E_{\alpha \beta}$ is a ring of twisted differential operators on $\mbb{P}^{n-1}$.

Further, the operators $\phi, \dpa{\phi}$ are completely independent of the operators $\rho_\alpha, \partial_{\rho_\beta}$, and do not satisfy any moment map constraint.  The operators $\phi, \dpa{\phi}$ thus form a copy of $\op{Diff}(\C)$, the algebra of differential operators on $\C$.  It follows that we find a a tensor product of differential operators on $\C$ with twisted differential operators on $\mbb{P}^{n-1}$.  The particular twist depends on $c \hbar^{-1}$.

Let us analyze what twist we find in the simplest case, $N=2$.   In this case, the algebra generated by the $E_{\alpha \beta}$ forms a quotient of the universal enveloping algebra of $\mf{gl}_2$ by the relations
\begin{align*} 
 E_{12} E_{21} &= E_{11} E_{22} + E_{11}\\ 
E_{11} + E_{22} +  2 &= \tfrac{c}{\hbar}.  
\end{align*}
We can use the second relation to remove the identity matrix from $\mf{gl}_2$, and view the algebra as a quotient of the universal enveloping algebra of $\mf{sl}_2$. If we do this, the first relation becomes 
$$
E_{12}E_{21} = - \tfrac{1}{4} (E_{11} - E_{22} + \tfrac{c}{\hbar}  - 2  ) (E_{11} - E_{22} + 2 - \tfrac{c}{\hbar} )  + \tfrac{1}{2}( E_{11} - E_{22}  + \tfrac{c}{\hbar} - 2).  
$$ 
Using the standard notation $E = E_{12}$, $F = E_{21}$, $H = E_{11} - E_{22}$, we find the relation becomes 
$$
E F + \tfrac{1}{4} H H - \tfrac{1}{2} H = \tfrac{1}{4} (\tfrac{c}{\hbar}-2)^2 + \tfrac{1}{2}(\tfrac{c}{\hbar}-2) 
$$
or equivalently,
\begin{equation}
\tfrac{1}{2} (E F + F E) + \tfrac{1}{4} H H =   \tfrac{1}{4} (\tfrac{c}{\hbar}-2)^2 +  \tfrac{1}{2} (\tfrac{c}{\hbar}-2) .
 \label{equation_relation_sl2} 
\end{equation}
Thus, we find the central quotient of the universal enveloping algebra where the quadratic Casimir is set to $\tfrac{1}{4} (\tfrac{c}{\hbar}-2)^2 + \tfrac{1}{2} (\tfrac{c}{\hbar}-2)$.  
\subsection{}
Consider the combinatorially-defined algebra $A_{N,\hbar,c}$ defined in section \ref{section_combinatorial_algebra}. 
\begin{proposition}
Whenever $\hbar \neq 0$, there is an algebra homomorphism
$$
A_{N,\hbar,c} \to \Oo_{\hbar } (\mc{M}_{N,K}^{c}).  
$$
\label{proposition_homomorphism_adhm}
\end{proposition}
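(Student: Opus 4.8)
The plan is to construct the homomorphism $A_{N,\hbar,c} \to \Oo_{\hbar}(\mc{M}^c_{N,K})$ by sending a marked surface to an explicit element built out of the generating differential operators $\phi_{ij}, \partial_{\phi_{ij}}, \rho_{i\alpha}, \partial_{\rho_{i\alpha}}$ of $\op{Diff}_{\hbar}(V_{N,K})$, and then to verify that all the defining relations (P0)--(P3), (C0)--(C1), and (M) are respected. Concretely, I would use the double-line/ribbon-graph picture implicit in the definition of marked surfaces: a type~$2$ incoming marked point corresponds to inserting a $B$-type matrix (built from $\phi, \partial_\phi$, carrying one pair of $GL(K)$ indices), a type~$2$ outgoing marked point corresponds to the ``dual'' insertion, and a type~$1$ marked point labelled $\alpha$ corresponds to one of the framing maps $I, J$ (built from $\rho_{i\alpha}, \partial_{\rho_{i\alpha}}$, carrying one $GL(K)$ and one $GL(N)$ index). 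A connected marked surface --- in reduced form, a disc with two type~$1$ points and some type~$2$ points arranged along the boundary --- then becomes a trace $\op{Tr}_{\C^N}(M J B_1^{k_1} B_2^{l_1} \cdots I)$ of the kind appearing in the introduction's discussion of generators of $\Oo_\delta(\mc{M}^\eps_{N,K})$; a disjoint union of surfaces maps to the product of the corresponding traces, which is forced by the definition of the product on $\til A$.

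The key steps, in order, are: (i) fix the dictionary above precisely on the reduced surfaces, checking that the element produced is $GL(K)$-invariant (so it lands in $\op{Diff}_\hbar(V_{N,K})^{GL(K)}$) and is well-defined in the quotient by $I_c$; (ii) check the ``permuting'' relations (P0)--(P3): reordering two adjacent marked points in the total order corresponds to commuting two operator insertions past each other in $\op{Diff}_\hbar(V_{N,K})$, and the commutation relations $[\partial_{\phi_{ij}}, \phi_{mn}] = \hbar\delta_{im}\delta_{jn}$, $[\partial_{\rho_{i\alpha}}, \rho_{j\beta}] = \hbar\delta_{ij}\delta_{\alpha\beta}$ produce exactly a ``contracted'' surface with weight $\hbar$ (the $\asymp$ operation) for (P1)/(P2), no correction when the insertions commute (P0), and a correction with the two type~$1$ points removed when they are adjacent on the boundary (P3), since there a full index loop contracts; (iii) check the ``cutting'' relations (C0)--(C1): cutting along a simple closed curve or an appropriate arc corresponds to the fact that a closed $GL(K)$-index loop with no operator insertion simply evaluates to the constant $K = \dim \C^K$ (for C0) --- or, after using the moment-map relation~\eqref{equation_relation}, to nothing new (for C1, where the arc separates two $I$/$J$-pairs); (iv) check the ``moving'' relation (M): this is precisely the quantum moment-map equation $\mu(E_{ij}) = c\,\delta_{ij}$, i.e.\ equation~\eqref{equation_relation}, rewritten in ribbon-graph language --- the three kinds of terms $\sum_k \partial_{\phi_{jk}}\phi_{ik}$, $\sum_k \partial_{\phi_{ki}}\phi_{kj}$, $\sum_\alpha \partial_{\rho_{j\alpha}}\rho_{i\alpha}$ match the two type~$2$ configurations and the sum over type~$1$ labels $\alpha$ in~\eqref{equation_moment_map_universal}, the $\hbar\tfrac12 \asymp$ term is the operator-ordering correction from normal-ordering $\mu$, and the right-hand side $c\,\Sigma(\dots,\what{p_{n-1}},\what{p_n})$ is the $c\,\delta_{ij}$; (v) conclude that the assignment descends to a well-defined algebra map on $A_{N,\hbar,c}$, and observe that the map is compatible with the isomorphisms $A_{N,\hbar,c}\iso A_{N,\lambda\hbar,\lambda c}$ and $\Oo_\hbar(\mc{M}^c_{N,K})\iso \Oo_{\lambda\hbar}(\mc{M}^{\lambda c}_{N,K})$ already recorded.

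I expect the main obstacle to be step~(iv) together with the bookkeeping of operator orderings throughout: the relations of $A_{N,\hbar,c}$ are stated for \emph{unordered} ribbon insertions up to the total order, whereas $\op{Diff}_\hbar(V_{N,K})$ forces a specific ordering, so one must either systematically normal-order or (as the paper does in a related place, cf.\ the remark after equation~\eqref{equation_algebra_commutator_QFT}) symmetrize, and then check that the $\hbar$-corrections thrown off by re-ordering are \emph{exactly} the ones encoded combinatorially in (P1)--(P3) and (M) rather than some other multiple. In particular the factor $\tfrac12$ in the $\hbar\tfrac12\Sigma(\dots p_{n-1}\asymp p_n)$ term of~\eqref{equation_moment_map_universal} is the delicate point: it has to come out of the symmetric choice of lift of the classical moment map to $\mu$, and getting the combinatorics of the double-line loops to reproduce it is where the real care is needed. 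Everything else --- $GL(K)$-invariance, the disjoint-union/product compatibility, and relation (C0) --- should be essentially formal once the dictionary is pinned down.
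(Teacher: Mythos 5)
Your plan follows the paper's route: assign to each marked point a differential operator on $V_{N,K}$ carrying $GL(K)$ indices on the adjacent boundary segments, sum over those indices (the paper formalizes this as a ``$K$-marking'' of the surface), multiply in the given total order, and check relations (P0)--(M) one by one. However, step (iii) misattributes which relations require the moment-map ideal. In the paper's argument, (P0)--(P3), (C0) and (C1) all hold already in $\op{Diff}_\hbar(V_{N,K})^{GL(K)}$, before passing to the quotient by $I_c$; only (M) needs the quotient. Both cutting relations are consequences of the fact that the operator assigned to a $K$-marked surface depends only on the boundary data (marked points, their labels and ordering, and the $K$-labels on boundary segments), not on the interior topology. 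For (C0), cutting along a simple closed curve and filling with discs changes genus but not boundary data, so the operator is literally unchanged --- no factor of $K$ appears; the ``unmarked loop evaluates to $K$'' mechanism you invoke is what produces the $\hbar K$ contribution in the $\asymp$-terms of (P1) and (M), where gluing \emph{creates} a new unmarked boundary component, and does not drive (C0). For (C1), the cutting arc passes through the gap between an adjacent incoming/outgoing type-1 pair, which a $K$-marking does not label, so cutting there leaves the index data untouched and the moment-map relation plays no role. Trying to derive (C1) from \eqref{equation_relation} would insert terms that do not occur in the paper's identity.

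A smaller point: your reading of the factor $\tfrac12$ in \eqref{equation_moment_map_universal} as a symmetrized-ordering correction of $\mu$ is not the paper's mechanism. The quantum moment map $\mu$ is taken to be a fixed asymmetric lift, and in the paper's check of (M) the $p_{n-1}\asymp p_n$ surface contributes $\hbar\,\delta_{ij}\,K$ (the $K$ coming from the sum over the $K$ labels of the newly-created unmarked boundary loop), which then cancels against the $[\partial_\phi,\phi]$ commutator term to reproduce \eqref{equation_relation}; no $\tfrac12$ appears in that computation.
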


\begin{proof}

Recall that we define the larger algebra $\til{A}$ to be generated by surfaces with markings $\Sigma(p_1,\dots,p_n)$.  We will explain how every such surface $\Sigma(p_1,\dots,p_n)$ gives rise to a $GL(K)$-invariant differential operator on $V_{N,K}$. Then we will check that this map descends to the quotient $A_{N,\hbar,c}$ of $\til{A}$.

Given a marked surface $\Sigma(p_1,\dots,p_n)$, we define a \emph{$K$-marking} of $\Sigma(p_1,\dots,p_n)$ to be the following data.  Consider any pair $p_r,p_s$ of marked points which are on the same boundary component, such that
\begin{enumerate} 
 \item $p_s$ is directly to the right of $p_r$ on the same boundary component of $\Sigma$.
\item If both $p_r$ and $p_s$ are of type $1$, then $p_r$ is outgoing and $p_s$ is incoming.  
\end{enumerate} Then, for any such pair, we label the interval on the boundary of $\Sigma$ between $p_r$ and $p_s$ by an element of the set $\{1,\dots,K\}$.  Further, for every boundary component of $\Sigma$ which has no marked points, we label this boundary component by an element of $\{1,\dots,K\}$. 

In other words, a $K$-marking on $\Sigma$ consists of a labelling of each connected component of the complement of the marked points on $\partial \Sigma$ by an element of $\{1,\dots,K\}$, except that we don't label the intervals between two marked points which are both of type $1$, with an incoming marked point to the left of an outgoing one.  

Given a $K$-marking on $\Sigma(p_1,\dots,p_n)$, then for any marked point $p_r$ we let $l(p_r)$ (repectively, $r(p_r)$) be the element of $\{1,\dots,K\}$ which is the labels on the interval to the left (respectively, the right) of $p_r$.  Note that if $p_r$ is of type $1$ and incoming, then only $l(p_r)$ is defined, and if it is type $1$ and outgoing, only $r(p_r)$ is defined. 

Given a surface $\Sigma(p_1,\dots,p_n)$ with a $K$-marking, we will construct a differential operator on $V_{N,K}$ for each marked point $p_s$.  We let
$$
D(p_s) = \begin{cases} \dpa{\phi_{l(p_s) r(p_s)} } & \text{ if } p_r \text{ is of type } 2 \text{ and incoming.}\\
 \phi_{r(p_s) l(p_s)}  & \text{ if } p_s \text{ is of type } 2 \text{ and outgoing.}\\
\frac{1}{\hbar} \dpa{\rho_{l(p_s) \alpha}} & \text{ if } p_s \text{ is of type } 1 \text{ and incoming, and labelled by } \alpha \in \{1,\dots,N\}. \\
 \rho_{r(p_s)\alpha } & \text{ if } p_s \text{ is of type } 1 \text{ and outgoing,  and labelled by } \alpha \in \{1,\dots,N\}. \\
\end{cases}
$$
Then, given a $K$-marking which we denote by $\Gamma$ on $\Sigma(p_1,\dots,p_n)$ we define
$$
D(\Sigma(p_1,\dots,p_n), \Gamma) = D(p_1) D(p_2) \dots D(p_n).
$$
Finally, we define
$$
D(\Sigma(p_1,\dots,p_n) ) = \sum_{K \text{ markings } \Gamma}D(\Sigma(p_1,\dots,p_n), \Gamma). 
$$
Note that this is a $GL(K)$-invariant differential operator on $V_{N,K}$.

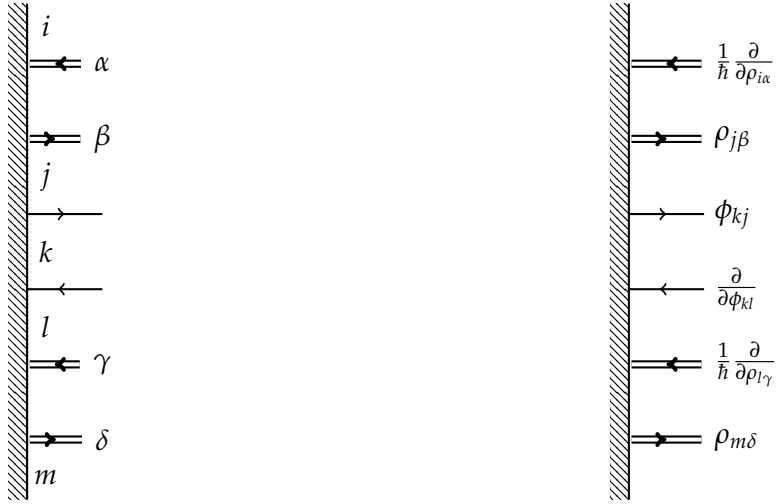
\begin{figure}
\begin{tikzpicture}[thick]
\begin{scope}[shift={(-2,0)}] 
\draw(3,-0.3)--(3,6.3);
\fill[pattern = north west lines ](2.75,-0.3) rectangle (3,6.3);
\node(N1) at (4,5.5) {$\alpha$}; 
\node(N2) at (4,4.5) {$\beta$};
\node(L0) at (3.25,6) {$i$}; 
\draw[dbl-<-](3,5.5)--(N1);
\draw[dbl->-](3,4.5)--(N2);

\node(L1) at (3.25,4) {$j$}; 
\draw[->-](3,3.5)--(4,3.5);
\node(L2) at (3.25,3){$k$};
\draw[-<-](3,2.5)--(4,2.5);
\node(L3) at (3.25,2) {$l$};
 
\node(N3) at (4,1.5) {$\gamma$};
\node(N4) at (4,0.5) {$\delta$};
\draw[dbl-<-] (3,1.5)--(N3);
\draw[dbl->-](3,0.5) -- (N4);
\node(L4) at (3.25, 0) {$m$}; 
\end{scope}

\begin{scope}[shift={(6,0)}] 
\draw(3,-0.3)--(3,6.3);
\fill[pattern = north west lines ](2.75,-0.3) rectangle (3,6.3);
\node[anchor=west](N1) at (4,5.5) {$\tfrac{1}{\hbar} \dpa{\rho_{i\alpha}}$}; 
\node[anchor=west](N2)  at (4,4.5)  {$\rho_{j \beta}$};
\draw[dbl-<-](3,5.5)--(N1); 
\draw[dbl->-](3,4.5)--(N2);

\node[anchor=west](N3) at (4,3.5) {$\phi_{kj}$};
\draw[->-](3,3.5)--(N3);

\node[anchor=west](N4) at (4,2.5) {$\dpa{\phi_{kl}} $};
\draw[-<-](3,2.5)--(N4);
 
\node[anchor=west](N5) at (4,1.5) {$\tfrac{1}{\hbar}\dpa{\rho_{l \gamma}}$};
\node[anchor=west](N6) at (4,0.5) {$\rho_{m\delta}$};
\draw[dbl-<-] (3,1.5)--(N5);
\draw[dbl->-](3,0.5) -- (N6);
\end{scope} 

\end{tikzpicture}
\caption{The picture on the left is of a region of the boundary of a marked surface equipped with a $K$-marking.  The labels $i$, $j$, $k$, $l$, $m$ on boundary segments are elements of the set $\{1,\dots,K\}$ and give the $K$-marking. On the right, the same surface is drawn where each arrow is labelled by the corresponding differential operator.  By taking the product of these differential operators using the order on the set of arrows we get an element of $\op{Diff}(V_{N,K})$.}  
\end{figure}
Let us check that the relations defining the quotient $A_{N,\hbar,c}$ hold.  The relations (P0), (P1), (P2), (P3) involving switching the ordering of the marked points $p_i$ and $p_{i+1}$ follow immediately from the commutation relations in the algebra $\op{Diff}(V_{N,K})$.  The cutting relations (C0) and (C1) follow immediately from the fact that the differential operator associated to $\Sigma$ only depends on the way the marked points are arranged on the various boundary components of $\Sigma$. In this way we see that  all relations except relation the moving relation (M) hold in the algebra $\op{Diff}(V_{N,K})^{GL(K)}$.  

Only for relation (M) do we need to descend to the quantum Hamiltonian reduction $\Oo_{\hbar}(\mc{M}_{N,K}^{c})$.  Let's check that relation (M) matches what we find from quantum Hamiltonian reduction.  To see this, let's write relation (M) including the $K$-markings.
\begin{center}
\begin{tikzpicture}[scale=0.6]
\begin{scope}
\fill[pattern=north west lines ] (0.75,0) rectangle (1,4);
\node at (-0.5,2) {$ \displaystyle \sum_{k = 1}^{K} $} ;  
\draw (1,4)--(1,0);
\node at (1.35,3.5) {$i$};
\draw[->-](1,3) -- (3,3);
\node at (1.35,2) {$k$};
\draw[->-](3,1) -- (1,1);
\node at (1.35,0.5) {$j$};
\end{scope}
\node at (4,2) {$=$};
\begin{scope}[shift={(5.5,0)}]
\node at (-0.5,2) {$ \displaystyle \sum_{k = 1}^{K} $} ;  
\fill[pattern=north west lines ] (0.75,0) rectangle (1,4);
\draw (1,4)--(1,0);
\node at (1.35,3.5) {$i$};
\draw[->-](3,3) -- (1,3);
\node at (1.35,2) {$k$};
\draw[->-](1,1) -- (3,1);
\node at (1.35,0.5) {$j$};
\end{scope}

\begin{scope}[shift={(10.5,0)}]

\filldraw[pattern=north west lines , even odd rule] (1,2) circle (1) (1,2) circle (1.25);
\fill[color = white] (-1,0) rectangle (1,4);
\fill[pattern=north west lines ] (0.75,0) rectangle (1,4);
\draw (1,3) -- (1,1);
\draw (1,4) -- (1,3.25);
\draw (1,0) -- (1,0.75);
\node at (-0.5,2) {$\hbar\delta_{ij}$}; 
\end{scope}
\node at (9,2) {$-$};
\node at (13.5,2) {$-$}; 

\begin{scope}[shift={(16,0)}]
\node at (-0.5,2) {$\hbar \displaystyle \sum_{\mu = 1}^{N} $} ;  
\fill[pattern=north west lines ] (0.75,0) rectangle (1,4);
\draw (1,4)--(1,0);
\node at (1.35,3.5) {$i$};
\node(N1) at (3,3) {$\mu$};
\node(N2) at (3,1) {$\mu$};
\draw[dbl->-](N1) -- (1,3) ;
\draw[dbl->-] (1,1) -- (N2);
\node at (1.35,0.5) {$j$};
\end{scope}
\node at (20,2) {$+$} ; 
\begin{scope}[shift={(21,0)}]

\node at (0,2) {$ \delta_{ij}c$};
\fill[pattern=north west lines ] (0.75,0) rectangle (1,4);
\draw (1,4)--(1,0);
\end{scope}
\end{tikzpicture}
\end{center}
Let's translate this picture into algebra using our rules.  In the above diagram, the convention is that marked points are ordered from top to bottom.  Thus the equation we find is
$$
\sum_k \phi_{ki} \dpa{\phi_{kj}} = \sum_k\dpa{\phi_{ik}}  \phi_{jk}  
- \hbar \delta_{ij}K  -\hbar \sum_{k,\mu}\tfrac{1}\hbar \dpa{\rho_{i\mu}} \rho_{j\mu} + \delta_{ij} c.
$$
We can rewrite this as
$$
\sum_k \phi_{ki} \dpa{\phi_{kj}} -  \sum_k \phi_{jk}  \dpa{\phi_{ik}}  + \sum_{k,\mu} \dpa{\rho_{i\mu}} \rho_{j\mu} = \delta_{ij} c.
$$
This is the same as the moment map equation \ref{equation_relation}.  
\end{proof} 

\subsection{Modules}
One nice corollary of this result is that we can obtain a family of modules for the algebra $A_{N,\hbar,c}$, by constructing modules for the quantizations $\Oo_\hbar(\mc{M}_{N,K}^c)$ of quantized functions on instanton moduli space, and then using the homomorphism
$$
A_{N,\hbar,c} \to \Oo_\hbar(\mc{M}_{N,K}^{c}). 
$$
A particularly nice class of modules are obtained when we set $K = 1$, so that $\mc{M}_{N,K}^{c}$ is the product of $\C^2$ with a twisted cotangent bundle of $\mbb{P}^{N-1}$.  The quantized algebra $\Oo_\hbar(\mc{M}_{N,K}^{c})$ is then a tensor product of $\op{Diff}(\C)$ with an algebra of twisted differential operators on $\mbb{P}^{N-1}$.  One thus gets a module for $A_{N,\hbar,c}$ from the tensor product of any $D$-module on $\C$ with a twisted $D$-module on $\mbb{P}^{N-1}$.

For generic values of $c/\hbar$, we will find a non-integral twist of differential operators on $\mbb{P}^{N-1}$, so there will not be any finite-dimensional modules for this algebra.  However, there are always infinite-dimensional modules: the algebra $\Oo(\C^{N-1})$ of polynomial functions on the affine subspace $\C^{N-1} \subset \mbb{P}^{N-1}$ is always a module for TDOs on $\mbb{P}^{N-1}$. This is simply because there is a restriction map from TDOs on $\mbb{P}^{N-1}$ to those on $\C^{N-1}$, and twisted differential operators on $\C^{N-1}$ are isomorphic to ordinary differential operators.

This tells us that we can find a module for $A_{N,\hbar,c}$ for any value of $\hbar$ and $c$ (with $\hbar \neq 0$)  which is a tensor product of $\Oo(\C^{N-1})$ with an arbitrary $D$-module on $\C$. 

For some special values of $c/\hbar$, we will find smaller modules.  Let us analyze this in the case $N=2$.  As we have seen, the algebra $\Oo_\hbar(\mc{M}^c_{2,1})$ is a tensor product of $\op{Diff}(\C)$ with the  central quotient of the universal enveloping algebra of $\mf{sl}_2$ in which the quadratic Casimir $\tfrac{1}{2} (E F + F E) + \tfrac{1}{4} H H$ is identified with $\tfrac{1}{4} (\tfrac{c}{\hbar}-2)^2 + \tfrac{1}{2} (\tfrac{c}{\hbar}-2)$.  On the finite-dimensional representation of highest weight spin $n$ the quadratic Casimir takes value $\frac{1}{4} n^2  - \tfrac{1}{2} n$. 

This tells us that as long as $2 - \tfrac{c}{\hbar}$ is a non-negative integer $n$, there is a representation of $\Oo_\hbar(\mc{M}^c_{2,1})$ which is a tensor product of a $D$-module on $\C$ (such as $\Oo_\C$) with a vector space of dimension $n$. 

Because there is a homomorphism
$$
A_{2,\hbar,c} \to \Oo_\hbar(\mc{M}^c_{2,1}),
$$
the same statement holds for the algebra $A_{2,\hbar,c}$.

\subsection{Taking the large $K$ limit}
In this subsection, we will do the following. First, we will define a uniform-in-$K$ version of the algebra $\Oo_\hbar(\mc{M}^c_{N,K}$.  This uniform-in-$K$ version of the algebra has the feature that it has a surjective algebra homomorphism to $\Oo_\hbar(\mc{M}^c_{N,K})$ for each $K$.  These maps become close to being an isomorphism as $K \to \infty$. We will then show that the uniform-in-$K$ version is isomorphic to the combinatorially defined algebra $A_{N,\hbar,c}$.  

Let us start by defining the uniform-in-$K$ version of the algebra of $GL(K)$-invariant polynomials on $T^\ast V_{N,K}$.  To do this, let us fix identifications $\C^{K \oplus K'} = \C^K \oplus \C^{K'}$.  This gives us a map $\mf{gl}_K \into \mf{gl}_{K + K'}$ by viewing $\mf{gl}_K$ as the block diagonal matrices in the upper left quadrant.  We similarly get a map $\mf{gl}_{K + K'} \to \mf{gl}_K$ by taking a $(K + K') \times (K + K')$ matrix and considering the $K \times K$ submatrix in the upper left quadrant. 

Recall that 
$$
V_{N,K} = \mf{gl}_K \oplus \op{Hom}(\C^N, \C^K).  
$$
We can use these maps to construct maps
 \begin{align*} 
V_{N,K} & \to V_{N,K+ K' } \\
V_{N,K+K'} & \to V_{N,K}
\end{align*}
and so a map
$$
V^\ast_{N,K} \oplus V_{N,K} \to V^\ast_{N,K+K'} \oplus V_{N,K+K'}.
$$
This map is $GL(K)$-invariant.  If $\Oo(V_{N,K})$ indicates the algebra of polynomial functions on $V_{N,K}$, we get an induced map 
$$
\iota^{K+K'}_K : \Oo ( T^\ast V_{N,K+K'} ) \to \Oo( T^\ast V_{N,K}). 
$$
\begin{definition} 
 A sequence $f_K \in \Oo(T^\ast V_{N,K})$ of polynomial functions on $T^\ast V_{N,K}$ is \emph{admissible of weight} $0$ if 
 \begin{enumerate} 
  \item Each $f_K$ is $GL(K)$ invariant.
  \item $\iota^{L}_K f_L = f_K$ for all $L > K$.
 \end{enumerate}
 Such a sequence is \emph{admissible of weight} $r$ if the sequence $K^{-r} f_K$ is admissible of weight $0$. A sequence $\{f_K\}$ is \emph{admissible} if it is a finite sum of sequences which are admissible of some weight. Note that if $\{f_K\}$ and $\{g_K\}$ are admissible sequences, then so is $\{f_K g_K\}$, so that the space of admissible sequences is an algebra. 

 We let $\Oo(T^\ast V_{N,\bullet})^{GL(\bullet)}$ be the algebra of admissible sequences of functions on $T^\ast V_{N,K}$. 
\end{definition}

Let us now define a non-commutative product on $\Oo(T^\ast V_{N,\bullet})^{GL(\bullet)}$. Recall that for any vector space $W$, we can define a Moyal product $\ast_\hbar$ on $\Oo(W^\ast \oplus W)$ for any complex value of the constant $c$, by a standard formula.   This is invariant under $GL(W)$.  In particular, we have such a Moyal product  on $\Oo( T^\ast V_{N,K})$ which is $GL(K)$-invariant.   We will refer to $\Oo(T^\ast V_{N,K})$ with this product as $\Oo_{\hbar} (T^\ast V_{N,K})$. 
\begin{lemma}
If $\{f_K\}$, $\{g_K\}$ are admissible sequences of functions on $T^\ast V_{N,K}$, then so is
$$
f_K \ast_{\hbar} g_K. 
$$
\end{lemma}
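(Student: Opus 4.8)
The plan is to reduce to weight-homogeneous sequences and then to compute $f_K\ast_\hbar g_K$ directly in the trace-monomial basis for $GL(K)$-invariants, tracking the only source of $K$-dependence, namely the trace of the identity on $\C^K$.

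By definition every admissible sequence is a finite sum of sequences admissible of some weight, so by bilinearity of $\ast_\hbar$ it suffices to treat the case where $\{f_K\}$ is admissible of weight $r$ and $\{g_K\}$ of weight $s$. Writing $f_K=K^r\what f_K$, $g_K=K^s\what g_K$ with $\{\what f_K\},\{\what g_K\}$ admissible of weight $0$, and observing that multiplying an admissible sequence by $K^{r+s}$ again yields an admissible sequence (it merely shifts weights), we may as well assume $r=s=0$. Moreover each $f_K\ast_\hbar g_K$ is $GL(K)$-invariant: the symplectic form on $T^\ast V_{N,K}$, hence the Poisson bivector defining $\ast_\hbar$, is $GL(K)$-invariant, so $\ast_\hbar$ is $GL(K)$-equivariant and carries invariant functions to invariant functions. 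The content is therefore the admissibility of $\{f_K\ast_\hbar g_K\}$.

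For this I would argue combinatorially. Identify $T^\ast V_{N,K}$ with the space of ADHM quadruples $(B_1,B_2,I,J)$ as in section \ref{section_adhm}. By the first fundamental theorem of invariant theory for $GL(K)$, the algebra $\Oo(T^\ast V_{N,K})^{GL(K)}$ is spanned by products of trace monomials — $\op{Tr}_{\C^K}$ of words in $B_1,B_2,IJ$ and $\op{Tr}_{\C^N}$ of alternating words $J(\text{word in }B_1,B_2)I\cdots$ — in which every $\C^K$-index is contracted around a cycle. Under $\iota^L_K$, which restricts to the upper-left $K\times K$ block, each such monomial at level $L$ goes to the identical monomial at level $K$; together with the requirement that each $f_K$ be a polynomial, this forces a weight-$0$ admissible sequence to be a fixed, finite linear combination of trace monomials with $K$-independent coefficients, in which no bare factor $\op{Tr}_{\C^K}(\op{Id})=K$ occurs (such a factor raises the weight by one). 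Now the Poisson bivector is a sum over $\C^K$-indices, pairing a $B_1$-derivative with a $B_2$-derivative and an $I$-derivative with a $J$-derivative; hence in the expansion of $f_K\ast_\hbar g_K$ the coefficient of $\hbar^n$ is obtained by cutting $n$ letters out of trace monomials of $f_K$ and $n$ dual letters out of trace monomials of $g_K$ and splicing the cut cycles. This produces a finite sum of trace monomials (finite because $f_K,g_K$ are polynomials, so the $\hbar$-expansion terminates), except that some splicings close up a new empty $\C^K$-cycle, each contributing a factor $\op{Tr}_{\C^K}(\op{Id})=K$. Grouping by the number $w$ of empty cycles exhibits $\{f_K\ast_\hbar g_K\}$ as a finite sum $\sum_{w\ge 0}\{K^w h^{(w)}_K\}$ with each $\{h^{(w)}_K\}$ a fixed finite combination of nonempty trace monomials, hence admissible of weight $0$; therefore $\{f_K\ast_\hbar g_K\}$ is admissible.

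The step I expect to be the main obstacle is this last one: making the cut-and-splice description of the Moyal product of $GL(K)$-invariants precise and verifying that the only $K$-dependence it introduces is through the factors $\op{Tr}_{\C^K}(\op{Id})=K$, so that the output is genuinely a finite sum over weights rather than a series in $K$. An alternative route is to use the $GL(K)$-equivariant symplectic splitting $T^\ast V_{N,L}\iso T^\ast V_{N,K}\oplus T^\ast W$ with $W$ the complementary block, under which $\ast_\hbar$ factors as the completed tensor product of the Moyal products on the two factors and $\iota^L_K$ becomes restriction to the zero-section of $T^\ast W$; then $GL(L-K)$-invariance of $f_L,g_L$ confines their $W$-dependence to $GL(L-K)$-invariants of $T^\ast W$, and the discrepancy between $\iota^L_K(f_L\ast_\hbar g_L)$ and $f_K\ast_\hbar g_K$ is again controlled by traces of the identity on $\C^{L-K}$, i.e.\ by polynomials in $L-K$ that regroup into the weight decomposition. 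Either way the crux is to isolate the powers of $K$.
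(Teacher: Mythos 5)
Your proposal is correct and follows essentially the same route as the paper, whose proof is just the one-line remark that the claim ``follows from an analysis of the diagrammatic expression for the Moyal product'': your reduction to weight-homogeneous sequences, expansion of invariants in trace monomials, and tracking of the $K$-dependence of the cut-and-splice expansion through closed index loops contributing factors of $\op{Tr}_{\C^K}(\op{Id})=K$ is precisely that diagrammatic analysis, carried out explicitly. The only caveat is the minor one you already flag: the statement that a weight-$0$ admissible sequence has a $K$-independent trace-monomial expression should be justified using stability for large $K$ (the sequence is determined by its tail, where trace monomials of bounded degree are linearly independent), but this is routine.
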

\begin{proof}
This follows from an analysis of the diagrammatic expression for the Moyal product. 
\end{proof}
In this way,  we find a non-commutative Moyal product on $\Oo(T^\ast V_{N,\bullet})^{GL(\bullet)}$.  We will refer to the non-commutative algebra as $\Oo_{\hbar} (T^\ast V_{N,\bullet})^{GL(\bullet)}$.   This defines the uniform-in-$K$ version of the algebra of $GL(K)$-invariant differential operators on $V_{N,K}$.  To get the uniform-in-$K$ version of the quantum Hamiltonian reduction, we need to quotient by the  ideal defined by the quantum moment map.  As discussed above, the quantum moment map  is a map
$$
\mu : \mf{gl}(K) \to \Oo_{\hbar}(T^\ast V_{N,K}).
$$
We are interested in a shift of the quantum moment map, because we want to consider the quantum version of setting the moment map to be a multiple of the identity.  We let 
$$
\mu_{c}(E) = \mu(E) + c \op{Tr} (E) \op{Id}.
$$
The quantum moment map, together with the Moyal product, gives a $GL(K)$-equivariant map of left $\Oo_{\hbar}(T^\ast V_{N,K})$-modules
$$
\mu_{c} : \Oo_{\hbar}(T^\ast V_{N,K}) \otimes \mf{gl}_K \to \Oo_{\hbar}(T^\ast V_{N,K})
$$
Passing to $GL(K)$ invariants, we get a map
$$
\mu_{c} : \left[  \Oo_{\hbar}(T^\ast V_{N,K}) \otimes \mf{gl}_K \right]^{GL(K)} \to \left[  \Oo_{\hbar}(T^\ast V_{N,K})\right]^{GL(K)}.  
$$
The image of this map is a two-sided ideal where the right hand side is equipped with the Moyal product. The quantum Hamiltonian reduction is defined to be the quotient of the right hand side by this idea.

Now it is clear how to perform quantum Hamiltonian reduction uniformly in $K$.  We can define the notion of an admissible sequence of elements of $\Oo(T^\ast V_{N,K}) \otimes \mf{gl}_K$ in the same way as our definition of admissible sequences of elements of $\Oo(T^\ast V_{N,K})$. We call this space $\left[\Oo_{\hbar}(T^\ast V_{N,\bullet})\otimes \mf{gl}_{\bullet}\right]^{GL(\bullet)}$.

Then, the quantum moment map at finite $K$ gives rise to a map
$$
\mu_{c} : \left[\Oo_{\hbar}(T^\ast V_{N,\bullet}) \otimes \mf{gl}_{\bullet}\right]^{GL(\bullet)}\to \Oo_{\hbar}(T^\ast V_{N,\bullet})^{GL(\bullet)}.
$$
The image of this map is a two-sided ideal, since it is at finite $K$.  We define the uniform-in-$K$ quantum Hamiltonian reduction to be the quotient by this ideal, and use the notation 
$$
\Oo_{\hbar}( \mc{M}^c_{N,\bullet}) 
$$
recalling that $\mc{M}^c_{N,K}$ is the symplectic reduction of $T^\ast V_{N,K}$ by $GL(K)$ with moment map set to $c$ times the identity. 
\begin{remark}
One could also define this large-$K$ version of the algebra using the notion of Deligne category, see \cite{Eti14}.
\end{remark}

\begin{proposition}
The maps we have constructed
$$
\phi_{K} : A_{N,\hbar,c} \to  \Oo_{\hbar}(\mc{M}^c_{N,K}) 
$$
lift to a map 
$$
\phi_{\bullet} : A_{N,\hbar,c} \to \Oo_{\hbar}(\mc{M}^c_{N,\bullet}). 
$$
This second map is an isomorphism of associative algebras. 
\end{proposition}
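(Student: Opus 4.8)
\emph{Strategy.} The proof breaks into three steps: constructing the lift $\phi_\bullet$ and checking it is an algebra homomorphism; proving surjectivity; and proving injectivity. I expect only the last step to present a genuine difficulty.

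\emph{Constructing $\phi_\bullet$.} For a marked surface $\Sigma(p_1,\dots,p_n)$ let $D(\Sigma)_K\in\op{Diff}_\hbar(V_{N,K})^{GL(K)}$ be the operator produced in the proof of Proposition~\ref{proposition_homomorphism_adhm}, the sum over $K$-markings $\Gamma$ of $D(\Sigma,\Gamma)$. The first point is that the sequence $K\mapsto D(\Sigma)_K$ is \emph{admissible} in the sense of the definition preceding the proposition. Expanding the sum over $\Gamma$ and grouping the markings according to which boundary circles of $\Sigma$ carry no marked point — and more generally by the ``index–loop structure'' of $\Gamma$ — one sees that $D(\Sigma)_K$ is a finite sum of terms, each a polynomial in $K$ (from the free indices running over empty loops) times an operator built only from traces of words in $\phi,\partial_\phi$ and from framed strings $\partial_\rho\cdot(\text{word})\cdot\rho$. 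Such operators are compatible with the block–restriction maps $\iota^L_K$ — a trace, or a framed string, of a fixed word restricts from $\op{Diff}_\hbar(V_{N,L})$ to the same object on $V_{N,K}$, while the empty loop $\Tr(\op{Id}_{\C^K})=K$ has weight $1$ — so each term is admissible of a definite weight, and $\{D(\Sigma)_K\}$ is admissible. By the lemma that admissible sequences are closed under $\ast_\hbar$, the assignment $\Sigma\mapsto\{D(\Sigma)_K\}$ lands in $\Oo_\hbar(T^\ast V_{N,\bullet})^{GL(\bullet)}$; the relations (P0)--(P3), (C0), (C1) hold there for the same reason they held at each finite $K$, and the moving relation (M) becomes valid after passing to the uniform-in-$K$ quantum Hamiltonian reduction, by exactly the index computation of Proposition~\ref{proposition_homomorphism_adhm} read uniformly in $K$. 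This yields the algebra homomorphism $\phi_\bullet$, which visibly lifts the $\phi_K$.

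\emph{Surjectivity.} The reduced surfaces (Definition~\ref{definition_reduced}) span $A_{N,\hbar,c}$, and a reduced surface is a disjoint union of discs each carrying exactly two type-$1$ marked points; since disjoint union is the product in $A_{N,\hbar,c}$, the algebra is generated by the single-disc elements $D(\alpha\Uparrow,w,\beta\Downarrow)$ with $w$ a word in $\uparrow,\downarrow$. Their images are the ``open-string'' operators $\sum\rho_{i\alpha}\,(\text{word in }\phi,\partial_\phi)\,\partial_{\rho_{j\beta}}$ on $V_{N,\bullet}$. By the first fundamental theorem of invariant theory for $GL(K)$ acting on the ADHM space $T^\ast V_{N,K}$, applied uniformly in $K$ (legitimate since the FFT generators all form admissible sequences, as just noted), $\op{Diff}_\hbar(T^\ast V_{N,\bullet})^{GL(\bullet)}$ is generated by necklaces $\Tr(\text{word in }\phi,\partial_\phi)$ together with the framed strings; and in the quotient $\Oo_\hbar(\mc{M}^c_{N,\bullet})$ the moment-map relation rewrites every necklace as a polynomial in the open strings and the central element $\kappa\mapsto\{K\}$. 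This is precisely the manipulation made in the proof of Proposition~\ref{proposition_classical_limit_universal_enveloping}, where a disc with no type-$1$ marked point is shown to be expressible through the type-$1$ generators. Hence the open strings generate $\Oo_\hbar(\mc{M}^c_{N,\bullet})$ and $\phi_\bullet$ is onto.

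\emph{Injectivity — the main obstacle.} An admissible sequence vanishes iff its $K$-th term vanishes for all large $K$: if $f_L=0$ for $L\ge L_0$, then projecting to any fixed $V_{N,K_0}$ via $\iota^L_{K_0}$ produces a polynomial in $L$ with coefficients in $\Oo(T^\ast V_{N,K_0})$ vanishing for infinitely many $L$, so all its coefficients vanish. It therefore suffices to show that, for any finite collection of reduced surfaces, the operators $\phi_K(\Sigma)$ are linearly independent in $\Oo_\hbar(\mc{M}^c_{N,K})$ once $K\gg 0$. Writing each $\phi_K(\Sigma)$ — with $\Sigma$ reduced, hence a disjoint union of discs — as the corresponding Moyal-ordered product of open-string operators, and passing to leading symbols in a suitable filtration, this reduces to the assertion flagged in the Introduction: the functions $\Tr_{\C^N}(E_{\beta\alpha}JB_1^aB_2^bI)$ become algebraically independent on $\mc{M}^c_{N,K}$ as $K\to\infty$. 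I would establish this geometrically, working on the open locus of $\mc{M}^c_{N,K}$ where the image of $I$ generates $\C^K$ under $B_1,B_2$: there the quiver variety is cut out transversally and carries a chart of dimension growing linearly in $K$ in which the open strings of bounded complexity restrict to independent coordinates; equivalently, one exhibits points at which the differentials of the relevant functions are linearly independent, with the number of such functions unbounded. The real work is in the bookkeeping — controlling the filtration so that independence of leading symbols suffices, and matching Moyal-ordered words of differential operators with the reduced-surface basis of $A_{N,\hbar,c}$. Once this independence is in hand it does double duty: together with surjectivity it shows $\phi_\bullet$ is an isomorphism, and it simultaneously shows the reduced surfaces are a basis of $A_{N,\hbar,c}$, which is the flatness over $\C[[\hbar]]$ asserted earlier in the excerpt.
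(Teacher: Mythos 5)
Your proof is correct in outline, but it takes a genuinely different route from the paper's. The paper introduces an intermediate algebra $A'_{N,\hbar,c}$ — defined by the same generators with all relations except the moving relation (M) — and proves the result in two cleaner steps: first that the lift $\phi'_\bullet\colon A'_{N,\hbar,c}\to\Oo_\hbar(T^\ast V_{N,\bullet})^{GL(\bullet)}$ is an isomorphism \emph{before} passing to the quantum Hamiltonian reduction (this is the invariant-theory content), and second that $\phi'_\bullet$ identifies the two-sided ideal generated by (M) with the ideal generated by the quantum moment map. The isomorphism of quotients then follows formally. You instead work directly on the reduced algebras and prove surjectivity and injectivity of $\phi_\bullet$ by hand. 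Your surjectivity step is essentially the same invariant-theory exercise reorganized, but your injectivity step — linear independence of images of reduced surfaces for $K\gg 0$, reduced to algebraic independence of the functions $\Tr(IE_{\alpha\beta}JB_1^aB_2^b)$ via a leading-symbol filtration — is actually the content the paper proves \emph{separately}, in the flatness appendix (the proposition that those trace functions satisfy no polynomial relation of degree $\le K$). So your route collapses the proposition and part of the flatness proof into a single argument, which is aesthetically appealing and correctly observed to do double duty, but it also front-loads the hardest estimate; the paper's factoring through $A'$ postpones that work and keeps this proposition's proof purely formal, with the analytic-looking independence argument quarantined to one appendix. Your sketch of the independence argument (working on the open ADHM locus, finding a chart of growing dimension, reducing to leading symbols) is the right idea and matches the appendix's strategy, but as you note the filtration bookkeeping is where the real content lives — be careful that your chosen filtration makes the Moyal-ordered open strings compatible with the reduced-surface normal form, since the relation (M) mixes orders of type-2 marked points with central terms.
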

\begin{proof}
We will prove this by introducing a larger algebra $A'_{N,\hbar,c}$ which maps to $\Oo_{\hbar}( T^\ast V_{N,\bullet})^{GL(\bullet)}$, and then seeing that this map descends to give the map we need.

Let $A'_{N,\hbar,c}$ be defined in the same way as we defined $A_{N,\hbar,c}$, as a quotient of a large combinatorial algebra $\til{A}$ built from surfaces; except that we only impose the relations (P0), (P1), (P2), (P3),(C0) and (C1) in the definition of $A'_{N,\hbar,c}$. We do not impose the relation (M), which is related to the quantum moment map. 

 In the proof of \ref{proposition_homomorphism_adhm} we constructed a map $\til{A} \to \Oo_{\hbar}(T^\ast V_{N,K})^{GL(K)}$ for all $K$.  As we saw in the proof, the kernel of this map contains all the relations defining $A'_{N,\hbar,c}$ so it descends to a map 
$$\phi'_{K} : A'_{N,\hbar,c}\to \Oo_{\hbar}(T^\ast V_{N,K}) ^{GL(K)}.$$
It is evident that for all $\alpha\in A'_{N,\hbar, c}$ the sequence $\phi'_{K}(\alpha)$ is an admissible sequence.  Thus, $\phi'_K$ lifts to a map 
$$
\phi'_{\bullet} : A'_{N,\hbar,c}\to \Oo_{\hbar}(T^\ast V_{N,\bullet}) ^{GL(\bullet)}.
$$ 
As we saw in the proof of \ref{proposition_homomorphism_adhm},  the map $\phi'_K$ sends the ideal in $A'_{N,\hbar,c}$ generated by relation (M) to the ideal in $\Oo_{\hbar}(T^\ast V_{N,K})^{GL(K)}$ generated by the moment map. It follows that $\phi'_{\bullet}$ does the same, so that it gives the desired map
$$
\phi_{\bullet} : A_{N,\hbar,c} \to \Oo_{\hbar}(\mc{M}^c_{N,\bullet}). 
$$

Next, we need to check that this map is an isomorphism.  We will first check that the map
$$
A'_{N,\hbar,c} \to  \Oo_{\hbar}(T^\ast V_{N,\bullet}) ^{GL(\bullet)}.
$$
is an isomorphism.  This is an exercise in invariant theory. Then,  one checks that the two-sided ideal in $\Oo_{\hbar}(T^\ast V_{N,\bullet}) ^{GL(\bullet)}$ whose quotient is $\Oo_{\hbar}(\mc{M}^c_{N,\bullet})$ matches the two-sided ideal in $A'_{N,\hbar,c}$ generated by the relation (M).     

\end{proof}
In this way, we can think of our combinatorially-defined algebra $A_{N,\hbar,c}$ as being something like the large $K$ limit of the algebras $\Oo_{\hbar}(\mc{M}^c_{N,K})$.

\section { Categorified Donaldson-Thomas invariants and AdS / CFT. }
\label{section_DT}
In this section,  I derive a conjecture stating that the algebra $U_{\delta}(\Oo_{\eps}(\C^2)\otimes \mf{gl}_N)$ acts on certain categorified Donaldson-Thomas invariants.  Some aspects of this story were discussed in the introduction in section \ref{subsection_DT_intro}.  Here I will focus on the version related to Pandharipande-Thomas, rather than Donaldson-Thomas, invariants. 

Let 
$$
X_N = \til{\C^2 / \Z_N}
$$
be the resolution of the $A_{N-1}$ surface singularity.  The Hilbert scheme $ X_N^{[k]}$ of $k$ points on $X_N$ is a smooth quasi-projective variety with an algebraic symplectic form.  

We will be interested in quasi-maps from a curve $C$ to $X_N^{[k]}$.  The theory of such quasi-maps is studied in detail by Okounkov in \cite{Oko15}.

The notion of quasi-map depends on the presentation of $X_N^{[k]}$ as a quiver variety.  A quiver variety is always an algebraic symplectic reduction of some complex symplectic vector space $V$ (built from the edges of the quiver) by an algebraic group $G$ (which is a product of general linear groups associated to the gauge nodes of the quiver).  To define quasi-maps to $X_N^{[k]}$ we will not need a detailed description of $V$ and $G$. 

The algebraic symplectic quotient is defined as follows.  There is a moment map $\mu : V \to \mf{g}^\vee$. Inside $\mu^{-1}(0)$ is a subvariety $\mu^{-1}(0)_{stable}$ of stable points. (The stability condition depends on the choice of a vector $\theta \in \Z^I$ where $I$ is the set of nodes of the quiver). The symplectic reduction is
$$
V \sslash_{stable} G = \mu^{-1}(0)_{stable} / G. 
$$
We assume we are in a situation where semistable implies stable, and where the $G$-action is free.

If we don't remove the unstable points, and just form the quotient $\mu^{-1}(0) / G$, we find a stack. In fact, it's better to treat this as a derived stack.  The derived stack is obtained by taking the homotopy fibre at $0$ of $\mu$ and then forming the quotient by $G$. Let us denote this derived stack by $V\sslash G$.  The Nakajima quiver vareity $V \sslash_{stable} G$ is an open substack of $V \sslash G$. 

Given a curve $C$, then a quasi-map from $C$ to the quiver variety $V \sslash_{stable} G$ is by definition a map from $C$ to $V \sslash G$ which generically on $C$ lands in $V \sslash_{stable} G$.  We can thus make sense of quasi-maps to $X_N^{[k]}$.  

Fix a point $\alpha \in X_N^{[k]}$.  Let us consider the space of quasi-maps from $\mbb{P}^1$ to $X_N^{[k]}$ which at $\infty$ go to $\alpha$, and which are of degree $d$. Call this scheme $\mc{M}_{k,d,N,\alpha}$.
\begin{claim}
\begin{enumerate} 
 \item The quasi-projective scheme $\mc{M}_{k,d,N,\alpha}$ has a symmetric perfect obstruction theory.
 \item The space $\mc{M}_{k,d,N,\alpha}$ is the underlying classical scheme of a derived scheme which has a $-1$ shifted symplectic structure.  The derived scheme is the derived space of maps from $\mbb{P}^1$ to the derived stack $V \sslash G$ (using the notation above) which at $\infty$ go to $\alpha$, are of degree $d$, and generically map to the stable locus.  
\end{enumerate}
\end{claim}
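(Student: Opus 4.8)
The plan is to deduce both parts from a single input: the target quotient is holomorphically symplectic, so the deformation--obstruction complex of a quasi-map out of $\mbb{P}^1$ is self-dual once one twists by the ideal sheaf of $\infty$, and Serre duality on $\mbb{P}^1$ then supplies the symmetry. Part (2) is the derived-geometric repackaging of the same computation.

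First I would write down the deformation theory explicitly. Realize $X_N^{[k]}$ as the Nakajima quiver variety $V\sslash_{stable}G$, the symplectic reduction of the symplectic vector space $V$ by the gauge group $G$ with moment map $\mu:V\to\mf{g}^\vee$, exactly as in the excerpt. A point of $\mc{M}_{k,d,N,\alpha}$ is a quasi-map $(P,s)$: a $G$-bundle $P\to\mbb{P}^1$ of degree $d$, a section $s$ of $V_P=P\times_G V$ with $\mu(s)=0$, generically landing in the stable locus, rigidified at $\infty$ so that $s(\infty)$ lies over $\alpha$. Its tangent--obstruction complex is $\mbb{R}\Gamma\big(\mbb{P}^1,\mc{C}^\bullet(-\infty)\big)$, where $\mc{C}^\bullet=\big[\mf{g}_P\xrightarrow{a_s}V_P\xrightarrow{d_s\mu}\mf{g}^\vee_P\big]$ is placed in degrees $-1,0,1$ and the twist by the ideal sheaf $\Oo_{\mbb{P}^1}(-\infty)\iso\Oo_{\mbb{P}^1}(-1)$ encodes fixing the value and framing at $\infty$; this is the standard deformation theory of quasi-maps, as in Okounkov \cite{Oko15}. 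The key structural fact is that $\mc{C}^\bullet$ is \emph{self-dual}, $(\mc{C}^\bullet)^\vee\iso\mc{C}^\bullet$: the bundles $\mf{g}_P$ and $\mf{g}^\vee_P$ are dual, $V_P$ is self-dual via the symplectic pairing, and $a_s$ and $d_s\mu$ are mutually adjoint under that pairing — which is precisely the defining property of a moment map. Conceptually, $\mc{C}^\bullet$ is the restriction to $\mbb{P}^1$ of the tangent complex of the derived stack $V\sslash G$, and its self-duality is the restriction of the $0$-shifted symplectic form on $V\sslash G$ arising from derived Hamiltonian reduction (Pantev--To\"en--Vaqui\'e--Vezzosi, Safronov).

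Next I would run Serre duality. Since $\omega_{\mbb{P}^1}\iso\Oo_{\mbb{P}^1}(-2)$, we have $\big(\mc{C}^\bullet(-\infty)\big)^\vee\otimes\omega_{\mbb{P}^1}\iso\mc{C}^{\bullet\vee}(\infty)\otimes\Oo_{\mbb{P}^1}(-2)\iso\mc{C}^\bullet(-\infty)$, so with $F:=\mbb{R}\Gamma(\mbb{P}^1,\mc{C}^\bullet(-\infty))$ Serre duality yields $F^\vee\iso F[1]$. Taking $F=\mbb{T}_{\mc{M}}$, the obstruction theory $E^\bullet:=F^\vee\to\mbb{L}_{\mc{M}_{k,d,N,\alpha}}$ (perfect of amplitude $[-1,0]$ on the locus where the quasi-map has no automorphisms and no higher obstructions, both guaranteed by generic stability and the rigidification at $\infty$) satisfies $E^\bullet\iso(E^\bullet)^\vee[1]$, which is the definition of a symmetric perfect obstruction theory; this proves (1). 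For (2): the derived scheme of the claim is $\mbb{R}\mc{M}:=\mbb{R}\op{Map}_{\infty\mapsto\alpha}(\mbb{P}^1,V\sslash G)$, the homotopy fibre of $\op{ev}_\infty:\mbb{R}\op{Map}(\mbb{P}^1,V\sslash G)\to V\sslash G$ over $\alpha$, cut out further by the open "generically stable" and the discrete degree/charge conditions; its classical truncation is $\mc{M}_{k,d,N,\alpha}$ by the very definition of quasi-maps recalled in the excerpt, and its tangent complex is the $F$ above. The same Serre-duality identity, run through the AKSZ construction of Pantev--To\"en--Vaqui\'e--Vezzosi applied to $\mbb{P}^1$ with the boundary condition $\{\alpha\}\hookrightarrow V\sslash G$ at $\infty$, promotes the quasi-isomorphism $F^\vee\iso F[1]$ to a genuine closed $(-1)$-shifted symplectic form on $\mbb{R}\mc{M}$; its induced symmetric obstruction theory on the truncation agrees with the one from (1) by the Darboux-type results of Ben-Bassat, Brav, Bussi and Joyce.

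The step I expect to be the main obstacle is making part (2) precise, namely pinning down the correct orientation/boundary datum so that the AKSZ shift comes out to exactly $-1$. One must note that $\mbb{R}\op{Map}(\mbb{P}^1,V\sslash G)$ is \emph{not} shifted symplectic at all — because $\omega_{\mbb{P}^1}\not\iso\Oo_{\mbb{P}^1}$, i.e. $\mbb{P}^1$ is not Calabi--Yau — so the point-condition at $\infty$ is not a harmless decoration but exactly the feature that produces the symplectic structure; the content to be checked is that fixing the value (and a framing) at $\infty$ is a Lagrangian boundary condition whose relative dualizing data shifts the degree by precisely one. In the hands-on approach this is the requirement that the twist by $\Oo_{\mbb{P}^1}(-\infty)$ must not be dropped: without it, self-duality of $\mc{C}^\bullet$ alone does \emph{not} give a symmetric obstruction theory. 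A secondary technical point is controlling amplitude in (1): for non-balanced underlying bundles $P$ one can a priori produce a nonzero $\mbb{H}^2(\mbb{P}^1,\mc{C}^\bullet(-\infty))$, which I would handle either by restricting to the dense open locus where it vanishes or by observing that statement (2) imposes no amplitude bound; and throughout one uses that $\alpha$ lies in the smooth stable locus $X_N^{[k]}\subset V\sslash G$, which is what makes the framing at $\infty$ meaningful.
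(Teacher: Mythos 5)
The paper does not prove this statement: it is recorded as a Claim, with Okounkov's quasi-map theory \cite{Oko15} cited for the $N=1$ case and the general case left to the reader. Your argument supplies a correct proof along what one would regard as the canonical route. The two ingredients you isolate are exactly right: the moment-map construction of $X_N^{[k]}$ makes $\mc{C}^\bullet=[\mf{g}_P\to V_P\to\mf{g}^\vee_P]$ self-dual (this is the $0$-shifted symplectic structure on $V\sslash G$ from derived Hamiltonian reduction pulled back along the quasi-map), and the twist by $\Oo_{\mbb{P}^1}(-\infty)$ that encodes fixing the value at $\infty$ converts $\omega_{\mbb{P}^1}\cong\Oo(-2)$ into $\Oo(-\infty)$ again, so Serre duality returns $F^\vee\cong F[1]$ on the nose. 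You have also correctly identified what would go wrong if one dropped the marked point — $\mbb{P}^1$ is not Calabi--Yau, so the raw mapping stack is not shifted symplectic, and the Lagrangian boundary condition at $\infty$ is what produces the $-1$ shift — which is precisely the content one would need to make part (2) a theorem via PTVV/Calaque.

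One remark that slightly strengthens your sketch: the amplitude worry you flag in the last paragraph is already discharged by the symmetry you established. From $F^\vee\cong F[1]$ one gets $H^2(F)\cong H^{-1}(F)^\vee$, and $H^{-1}(F)\subset H^0(\mbb{P}^1,\mf{g}_P(-\infty))$ vanishes because the quasi-map generically lands in the free $G$-locus; so $H^2(F)=0$ automatically, and no restriction to a dense open is needed. The remaining gap, as you say, is the full verification that the isotropic structure on $\op{ev}_\infty$ is non-degenerate Lagrangian and hence that the relative AKSZ construction produces a genuine closed $2$-form of degree $-1$, not merely the shifted pairing on the tangent complex; that is where a complete proof would have to invest its effort, and your outline of how to do it (PTVV relative to a boundary, then Darboux a la Ben-Bassat--Brav--Bussi--Joyce to compare with the classical symmetric obstruction theory) is the correct plan.
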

Since $X_N^{[k]}$ parametrizes rank $1$ torsion free sheaves on $X_N$ which are framed at $\infty$,  we should imagine of $\mc{M}_{k,d,N,\alpha}$ as parametrizing a class of rank $1$ torsion free sheaves on $\mbb{P}^1 \times X_N$ which are framed at $\infty$ and with some singularities at those points on $\mbb{P}^1$ where the quasi-map has singularities. 

In the case $N=1$  this is made precise in \cite{Oko15}. It shown there that the moduli space of such quasi-maps is isomorphic to the Pandharipande-Thomas moduli space on $C \times \C^2$.  This is the moduli space of sheaves $\mc{F}$ on $C \times \C^2$, with a section 
$$
s : \Oo_{C \times \C^2} \to \mc{F} 
$$
such that:
\begin{enumerate} 
\item $\mc{F}$ is one dimensional.
\item $\mc{F}$ has no zero-dimensional subsheaves.
\item The cokernel of $s$ is zero-dimensional.  
\end{enumerate} 
It is natural to expect that there is a similar description of the space of quasi-maps from $C$ to $X_N^{[k]}$, in terms of Pandharipande-Thomas moduli spaces on $C \times X_N$.    

Since $\mc{M}_{k,d,N,\alpha}$ has a symmetric perfect obstruction theory, the work of Behrend \cite{Beh09} gives a constructible function $f_{B}$ on $\mc{M}_{k,d,N,\alpha}$ with the feature that the Euler characteristic twisted by $f_{B}$ is the integral against the virtual fundamental class of $\mc{M}_{k,d,N,\alpha}$.   It is expected  \cite{KonSoi10, KieLi12, BusJoyMei13, JoySon08} that this constructible function can be lifted to a perverse sheaf we will call $\mc{P}$ on $\mc{M}_{k,d,N,\alpha}$. We can thus consider the graded vector space
$$
H^\ast (\mc{M}_{k,d,N,\alpha}, \mc{P} ) 
$$
given by the cohomology with coeffients in this perverse sheaf.   This graded vector space plays the role of the categorified Donaldson-Thomas invariants when we replace the Donaldson-Thomas moduli space of ideal sheaves by the space of quasi-maps to the Hilbert scheme of points on a surface. 

We are also interested in the equivariant version of this, with two equivariant parameters.  Let us give $X_N$ an action of two copies of $\C^\times$, as follows. One copy of $\C^\times$ we denote by $\C^\times_\eps$, and this acts via the action on $\C^2$ sending 
$$(z_1,z_2) \to (\lambda z_1, \lambda^{-1} z_2).$$
The other copy of $\C^\times$ we denote by $\C^\times_{\delta}$, and this acts via the action on $\C^2$ sending
$$
(z_1,z_2) \to (\lambda z_1, \lambda z_2).
$$
The torus $\C^\times_\eps \times \C^\times_\delta$ acts on the Hilbert schemes $X_N^{[k]}$, and $\C^\times_\delta$ scales the holomorphic symplectic form on these Hilbert schemes with weight $2$. 

The space $\mc{M}_{k,d,N,\alpha}$ of quasi-maps from $\mbb{P}^1$ to $X_{N}^{[k]}$ also acquires an action of $\C^\times_\eps \times \C^\times_{\delta}$, where $\C^\times_{\eps}$ acts just via the action  on $X_{N}^{[k]}$, but where the action of $\C^\times_{\delta}$ also involves a rotation of $\mbb{P}^1$.  This can be contrived so that the action of $\C^\times_{\eps} \times \C^\times_{\delta}$ preserves the shifted symplectic structure on the derivedspace of quasi-maps.  

Since this torus preserves the shifted symplectic structure, the perverse sheaf $\mc{P}$ is equivariant.  We can therefore take equivariant cohomology of $\mc{M}_{k,d,N,\alpha}$ with coefficients in $\mc{P}$, with equivariant parameters $\eps,\delta$.   

Recall that we have defined combinatorially a deformation $A_{N,\hbar,c}$ of $U(\op{Diff}_c(\C) \otimes \mf{gl}(N))$.  
 \begin{conjecture}
 The algebra $A_{N,\delta,\eps}$ acts on 
$$\oplus_d H^\ast_{\eps,\delta}(\mc{M}_{k,d,N,\alpha}, \mc{P})$$ for all values of $k$ and $\alpha$.  
 \end{conjecture}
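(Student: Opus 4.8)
The plan is to construct the action by \emph{correspondences}, in the spirit of Nakajima's Heisenberg action and of the stable-envelope and cohomological-Hall constructions of representations on equivariant (co)homology of Nakajima-type moduli, adapted here to the three-dimensional Pandharipande--Thomas setting with coefficients in the vanishing-cycle sheaf $\mc{P}$. First I would reduce to ADHM data: by Proposition \ref{proposition_homomorphism_adhm} and the isomorphism $A_{N,\delta,\eps}\cong\Oo_{\delta}(\mc{M}^{\eps}_{N,\bullet})$ of section \ref{section_adhm}, it suffices to let the uniform-in-$K$ quantized ADHM algebra act. Its generators are $\op{Tr}_{\C^N}(E_{\alpha\beta}\,J B_1^{k}B_2^{l}I)$, matching the marked-surface generators $D(\alpha\Uparrow,\downarrow^{k},\uparrow^{l},\beta\Downarrow)$, and the $\C^\times\times H$-grading on $A_{N,\delta,\eps}$ is exactly the one compatible with the torus acting on $\mc{M}_{k,d,N,\alpha}$ described in section \ref{section_DT}, so the action produced would be automatically torus-equivariant.

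Next I would build the Hecke correspondence. Inside $\coprod_{k,d}\mc{M}_{k,d,N,\alpha}\times\coprod_{k',d'}\mc{M}_{k',d',N,\alpha}$ I would single out the locus of pairs of Pandharipande--Thomas objects on $\mbb{P}^1\times X_N$ that agree away from a single fibre $\{p\}\times X_N$ with $p\neq\infty$ and there differ by an elementary modification, and decompose this locus into components labelled by $(\alpha,\beta)\in\{1,\dots,N\}^2$ and $(k,l)\in\Z_{\ge 0}^2$, recording the point $p\in\C$ and the type of modification. The essential structural input is that this correspondence is the classical truncation of a $(-1)$-shifted symplectic derived scheme --- a derived space of short exact sequences of PT-type objects --- so that the vanishing-cycle sheaves on source and target are linked by a natural kernel via Thom--Sebastiani and dimensional reduction, and pull-push of $\mc{P}$ along this kernel defines operators on $\bigoplus_{d}H^\ast_{\eps,\delta}(\mc{M}_{k,d,N,\alpha},\mc{P})$, weighted by $\eps$ and $\delta$.

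I would then verify the relations and the compatibility with the Kontsevich--Soibelman cohomological Hall algebra. Localising near a single fibre $\{p\}\times X_N$, a Pandharipande--Thomas object together with its modification should be governed by ADHM-type data for $X_N$, equivalently by rank-$N$ data for the non-commutative $\C^2$ with $[z_1,z_2]=\eps$; the composition of two correspondences supported on the same fibre should reproduce the Moyal product on this data, so that the relations (P0)--(P3), (C0), (C1) follow exactly as in the proof of Proposition \ref{proposition_homomorphism_adhm}, and the moving relation (M) emerges from stability of PT sheaves in the role played there by the $GL(K)$ symplectic reduction. The COHA of $\C\times X_N$ \cite{KonSoi10} acts on the same space through the ``raising'' half of these correspondences, and one identifies it with the subalgebra of $A_{N,\delta,\eps}$ generated by the elements $E_{\alpha\beta}z_1^{k}$.

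The hardest step will be the third: globalising the correspondence over all $(k,d)$ and rigorously establishing \emph{all} of the combinatorial relations for the pull-push operators, which requires controlling the interaction of the vanishing-cycle functor with these derived, non-proper correspondences --- compatibility with Thom--Sebastiani, dimensional reduction to the ADHM model, and base change --- in precisely the form needed. A secondary obstacle, needed even to phrase the COHA comparison, is extending Okounkov's $N=1$ identification of quasi-maps to $X_N^{[k]}$ with Pandharipande--Thomas moduli on a threefold \cite{Oko15} to arbitrary $N$. Alternatively one can route the whole argument through the $3$-dimensional mirror symmetry conjecture, trading these geometric inputs for the statement that the categorified vertex functions of $X_N^{[k]}$ match those of the mirror ADHM space on which $A_{N,\delta,\eps}$ manifestly acts; but then the compatibility of the mirror equivalence with the two algebra actions becomes the crux.
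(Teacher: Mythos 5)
The statement you are asked about is a \emph{conjecture} in the paper: the paper does not prove it, and explicitly labels what follows as ``one strand of evidence,'' a heuristic argument rather than a proof. That evidence is precisely the route you mention only as an alternative in your last paragraph: $X_N^{[k]}$ is symplectic dual to the ADHM moduli $\mc{M}^{\eps}_{N,k}$; under $3$d mirror symmetry, the quantized Coulomb branch (the algebra of monopole operators, equivalently $\Oo_{\delta}(\mc{M}^{\eps}_{N,k})$) is expected to act on the cohomology of vortex moduli for $X_N^{[k]}$, which are quasi-maps from $\mbb{P}^1$; the vanishing-cycle coefficients are inserted based on an analysis of topological twists of the $3$d $N=4$ theory; and $A_{N,\delta,\eps}$ then acts through its homomorphism to $\Oo_{\delta}(\mc{M}^{\eps}_{N,k})$. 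So your ``alternative'' is in fact the paper's primary argument, and your main proposal --- building the action directly by Hecke-type correspondences of PT objects and pull-push of $\mc{P}$ --- is a genuinely different and more ambitious program the paper does not attempt.

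As a proof program, though, your main route has the gaps you yourself flag, and they are substantial rather than merely technical. First, the identification of quasi-maps to $X_N^{[k]}$ with Pandharipande--Thomas moduli on $\mbb{P}^1\times X_N$ is known (via Okounkov) only for $N=1$; for general $N$ the paper only says it is ``natural to expect.'' Without this, the PT-object description underlying your correspondences is not available, and the entire derived-geometric construction is left floating. Second, and independently, it is not enough to construct correspondences labelled by the generators $D(\alpha\Uparrow,\downarrow^k,\uparrow^l,\beta\Downarrow)$ and assert that compositions localised near a fibre ``reproduce the Moyal product'': the relations (P0)--(P3), (C0), (C1), and especially (M) must be verified for the pull-push operators on $H^\ast_{\eps,\delta}(-,\mc{P})$, and the claim that (M) would emerge from ``stability of PT sheaves'' in the role of $GL(K)$-reduction is an analogy, not an argument. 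This verification step would require exactly the compatibility of the vanishing-cycle functor with Thom--Sebastiani, base change, and non-proper pushforward that you name as ``the hardest step''; as of the paper's writing none of this machinery had been set up in this context. One further point worth flagging: your reduction ``it suffices to let the uniform-in-$K$ quantized ADHM algebra act'' is a slight misreading of which parameter matches which --- the charge $K$ on the ADHM side is the $k$ of $X_N^{[k]}$, so the relevant target is the fixed-$k$ quantization $\Oo_{\delta}(\mc{M}^{\eps}_{N,k})$, and $A_{N,\delta,\eps}$ acts through its surjection onto it (with $\kappa\mapsto k$); the uniform-in-$K$ algebra is not what you want here.

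In short: you have correctly identified that the conjectural action should be built by correspondences, and you have been honest about the obstacles. But what you have written is a research program, not a proof, and it is not the route the paper uses even heuristically; the paper's own supporting argument is the $3$d mirror symmetry one, which also does not rise to the level of proof.
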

Let me present one strand of evidence for this conjecture.  The symplectic algebraic variety $X_N^{[k]}$ is symplectic dual to the ADHM moduli space of instantons of charge $k$ and rank $N$ on $\C^2$.  Under symplectic duality, the algebra of monopole operators \cite{BulDimGai15,BraFinNak16} becomes a deformation quantization of the symplectic dual. The algebra of monopole poerators is supposed to act on the cohomology of the space of vortices.  Quasi-maps from $\mbb{P}^1$ to $X_N^{[k]}$ with fixed behaviour at $\infty$ is an algebro-geometric version of the moduli of vortices in $X_N^{[k]}$.  The perverse sheaf of vanishing cycles is missing from most physics discussion of the cohomology of the moduli of vortices; however, a close analysis of topological twists of $3d$ $N=4$ theories (which I  will not present here) tells us that we should expect the Hilbert space to be the cohomology of the moduli of vortices with coefficients in this sheaf.  

Putting all this together, we find that a deformation quantization of the moduli of instantons of rank $N$ and charge $k$ should act on $\oplus_d H^\ast ( \mc{M}_{k,d,N,\alpha}, \mc{P})$.   As we have seen, the algebra $A_{N,\delta,\eps}$ maps to the deformation quantization of moduli of instantons of rank $N$ and charge $k$, for all $k$.  From these symplectic duality considerations, one justifies the conjecture.

More generally, one can make the following conjecture.  Let $X_M$ denote, as above, the resolution of the $A_{M-1}$ singularity.  The $5$-dimensional quantum field theory can be defined on $X_M$.  There are more parameters in the definition, however, because $X_M$ admits more than one deformation-quantization.  There is an $M-1$ dimensional space of inequivalent deformation-quantizations, with one parameter associated to each $2$-cycle in $X_M$. Let $\Oo_{c,\lambda_1,\dots,\lambda_{M-1}}(X_M)$ denote this family of deformation-quantizations of $\Oo(X_M)$ (where $c$ is the quantization parameter).   
    
The Koszul dual of the algebra of observables of the $5$-dimensional field theory will be a deformation of $U(\mf{gl}_N \otimes \Oo_{c,\lambda_1,\dots,\lambda_{M_1}} (X_M))$.  Let us call this algebra $U_{\hbar}^{QFT} ( \mf{gl}_N \otimes \Oo_{c,\lambda_1,\dots,\lambda_{M_1}} (X_M))$.  
\begin{conjecture}
The algebra $U_{\hbar}^{QFT} ( \mf{gl}_N \otimes \Oo_{c,\lambda_1,\dots,\lambda_{M_1}} (X_M))$ act on the equivariant cohomology space of quasi-maps from from $\mbb{P}^1$ to the moduli of rank $M$ instantons on the $X_N$, the resolution of the $A_{N-1}$ singularity.  The parameters $\hbar$ and $c$ should be equivariant parameters related to the rotations of $\C \times X_N$ which preserve the Calabi-Yau structure (where $\C$ is the source curve for the quasi-map).  The parameters $\lambda_i$ should be equivariant parameters which rotate the framing of the instanton on $X_N$ at $\infty$.    
\end{conjecture}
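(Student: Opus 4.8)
The plan is to reduce the conjecture to the combinatorial and Koszul-duality machinery already developed for $\C^2$, and then to feed the resulting algebra identification into a symplectic-duality picture. First I would construct, in the style of Section~\ref{section_combinatorial_algebra} and Section~\ref{section_adhm}, a combinatorial algebra $A^{X_M}_{N,\hbar,\lambda_1,\dots,\lambda_{M-1}}$ — built from marked surfaces carrying extra decorations that record the $M-1$ vanishing $2$-cycles of $X_M$ — together with algebra homomorphisms
$$
A^{X_M}_{N,\hbar,\lambda_1,\dots,\lambda_{M-1}} \to \Oo_\hbar\!\left(\mc{M}^{\lambda_1,\dots,\lambda_{M-1}}_{N,K}(X_M)\right)
$$
to quantum Hamiltonian reductions presenting the moduli of framed torsion-free sheaves of rank $N$ and charge $K$ on the noncommutative deformation of $X_M$, compatibly in $K$ and becoming isomorphisms in the large-$K$ limit (the analogue of Proposition~\ref{proposition_homomorphism_adhm} and its large-$K$ refinement). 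Next I would prove that $A^{X_M}_{N,\hbar,\lambda_\bullet}\iso U^{QFT}_{\hbar}(\mf{gl}_N\otimes\Oo_{c,\lambda_\bullet}(X_M))$ by the uniqueness-of-deformation strategy of Section~\ref{section_uniqueness}: pass to $\mf{gl}_{N+R\mid R}$, compute the uniform-in-$R$ second Hochschild cohomology of $U(\mf{gl}_{N+R\mid R}\otimes\Oo_{c,\lambda_\bullet}(X_M))$ via a Loday--Quillen--Tsygan argument, show it is a free $\C[\kappa]$-module of rank $M$ (one generator per quantization parameter, as anticipated in the open-problems discussion), and then match first-order deformations by an OPE computation in the $5d$ gauge theory on $\R\times X_M$ of the same type as Proposition~\ref{proposition_commutator_QFT}.

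Granting this identification, $U^{QFT}_{\hbar}(\mf{gl}_N\otimes\Oo_{c,\lambda_\bullet}(X_M))$ maps to $\Oo_\hbar(\mc{M}^{\lambda_\bullet}_{N,K}(X_M))$ for every $K$, and it remains to produce the action on quasi-map cohomology. Here I would invoke symplectic duality: the moduli of framed rank $M$ sheaves on $X_N$ and the moduli of framed rank $N$ sheaves on $X_M$ form a symplectic-dual pair of Nakajima quiver varieties of affine type $A$, and under $3d$ mirror symmetry the quantized Coulomb-branch algebra of the gauge theory whose Higgs branch is the former is precisely the quantization $\Oo_\hbar(\mc{M}^{\lambda_\bullet}_{N,K}(X_M))$ of the latter, with the resolution parameters $\lambda_i$ of $X_M$ appearing as the equivariant parameters rotating the rank-$M$ framing at $\infty$ and $\hbar,c$ as the two equivariant parameters of $\C\times X_N$ preserving the Calabi--Yau volume form. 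The Coulomb-branch algebra acts by correspondences on the equivariant cohomology of the moduli of vortices, i.e. on $\bigoplus_d H^\ast_{\eps,\delta}(\mc{M}_{k,d},\mc{P})$ for the quasi-map spaces $\mc{M}_{k,d}$ from $\mbb{P}^1$ to the moduli of rank $M$ instantons on $X_N$, taken with coefficients in the sheaf $\mc{P}$ of vanishing cycles forced by the topological twist of the $3d$ $N=4$ theory; composing $U^{QFT}_{\hbar}\to\Oo_\hbar(\mc{M}^{\lambda_\bullet}_{N,K}(X_M))$ with this action yields the desired module structure. Consistency checks: for $N=1$ this should reduce to the action of the large-$K$ limit of the symplectic reflection algebra attached to $S_K\ltimes\Gamma^K$ (with $\Gamma\subset SU(2)$ of type $A_{M-1}$), and for $N=1$ with $M$ arbitrary one expects the quasi-map space to be identified with a Pandharipande--Thomas moduli space on $\mbb{P}^1\times X_M$ in the manner of Okounkov.

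The main obstacle is the rigorous construction of the Coulomb-branch action on \emph{vanishing-cycle} cohomology. The Braverman--Finkelberg--Nakajima quantized Coulomb branch is defined and its action on ordinary cohomology of vortex/quasi-map spaces is within reach, but incorporating the perverse sheaf $\mc{P}$ is at present only a physics expectation; the route I would pursue is through the Kontsevich--Soibelman cohomological Hall algebra of $\C\times X_N$, embedding it into $A^{X_M}_{N,\hbar,\lambda_\bullet}$, using its known action on the DT/PT vanishing-cycle cohomology, and then extending from the positive (COHA) part to the full algebra via the monopole operators generating the negative part — making this extension precise, and matching the choice of stability chamber with a choice of positive roots in the algebra, is the crux of the argument. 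A secondary difficulty is the large-$R$ Hochschild computation for $\Oo_{c,\lambda_\bullet}(X_M)$, which the open-problems section already flags as nontrivial, since one must first produce and control the combinatorial algebra $A^{X_M}_{N,\hbar,\lambda_\bullet}$ and verify that it degenerates, in an appropriate limit, to $U(\mf{gl}_N\otimes\Oo_{c,\lambda_\bullet}(X_M))$.
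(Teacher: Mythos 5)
The statement is a conjecture in the paper, and the paper presents no proof of it: the only justification offered is the symplectic-duality heuristic (quasi-maps to instanton moduli are an algebro-geometric model for vortex moduli; the quantized Coulomb branch of the mirror dual theory should act on vortex cohomology with vanishing-cycle coefficients) together with the expectation, stated explicitly as an open problem, that the combinatorial presentation and uniqueness theorem of this paper will extend from $\C^2$ to $X_M$. Your proposal is precisely this plan spelled out in more detail, and you correctly identify the same two genuine open problems the paper flags --- the vanishing-cycle version of the Braverman--Finkelberg--Nakajima action on quasi-map cohomology, and the uniform-in-$R$ Hochschild computation for $\Oo_{c,\lambda_\bullet}(X_M)$ --- so what you have is the same research program the paper proposes, not a proof, and it should be presented as such.

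One caution on a concrete assertion you slip in: you state the uniform-in-$R$ second Hochschild cohomology of $U(\mf{gl}_{N+R\mid R}\otimes\Oo_{c,\lambda_\bullet}(X_M))$ should be free of rank $M$ over $\C[\kappa]$. In the paper's $\C^2$ case the field-theory quantization ambiguity at each order in $\hbar$ is two-dimensional (including the $c\,\partial_c$ term, which has no $H^2$ counterpart) yet the uniform-in-$R$ $HH^2$ is only rank $1$, because one direction is absorbed by the $\C^\times$ rescaling of $(c,\hbar)$. For $X_M$ the relevant symmetry group is smaller (only $\C^\times$, no $\mf{sl}(2)$ or holomorphic translations), the quantization ambiguity is $\dim H^2(X_M)+1 = M$, and the bookkeeping between field-theory ambiguities and Hochschild classes does not transfer formally from the $\C^2$ case. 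Whether the rank is $M$ or $M-1$ should be computed, not assumed, since it determines how many of the parameters $\lambda_i$ remain genuine deformation parameters after the uniqueness argument, and this feeds directly into the comparison with the Coulomb-branch algebra.
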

I also expect that one can define a combinatorial version of this algebra, and prove a uniqueness theorem for the quantization of $U ( \mf{gl}_N \otimes \Oo_{c,\lambda_1,\dots,\lambda_{M_1}} (X_M))$, exactly as in this paper for the $M=0$ case.  

\section{ Flatness of the family of algebras $A_{N,\hbar,c}$} 
\label{appendix_flatness}
In section \ref{section_combinatorial_algebra} we defined, combinatorially, a family of algebras $A_{N,\hbar,c}$ which we claimed deforms the universal enveloping algebra $U(\Oo_{c}(\C^2) \otimes \mf{gl}_N)$.  In that section, we stated that the family of algebras is flat over $\C[[\hbar]]$ (when we complete the parameter $\hbar$).  In this section we prove this result. 

In fact, for later use, we need to prove a generalization.  One can modify the definition of the algebra $A_{N,\hbar,c}$ to accomodate the case when the Lie algebra $\mf{gl}(N)$ is replaced by a super Lie algebra $\mf{gl}(N\mid M)$.  To do this, one allows the type $1$ marked points to be labelled by an element of the set $\{1,\dots,N, \br{1},\dots,\br{M} \}$, where the barred numbers indicate the fermionic indices.  The relations in the algebra are exactly the same, except that we pick up some extra signs in the relations (P0), (P2), (P3). Recall that, in the definition of this combinatorial algebra, the generators are given by surfaces with decorated marked points on the boundary and a total order on the set of such marked points.  The relations (P0)- (P3) tell us how to permute the order on these marked points. When defining the super version of this algebra, we get an extra Koszul sign when we permute type $1$ marked points with fermionic labels.   Let us denote the super version of this algebra by $A_{N \mid M, \hbar,c}$.  

The algebra $A_{N,\hbar,c}$ is the uniform-in-$K$ version of the algebra $\Oo_{\hbar}(\mc{M}^{c}_{N,K})$ deforming the algebra of polynomial functions on the quiver variety $\mc{M}^{c}_{N,K}$. Recall that this is the quiver variety associated to the ADHM quiver and that $c$ indicates we take the symplectic quotient by setting the moment map to $c$ times the identity.  We can define a super-quiver variety $\mc{M}_{N \mid M, K}^{c}$ by the super analog of symplectic reduction, and define an algebra $\Oo_{\hbar}( \mc{M}_{N \mid M, K}^{c})$ by performing, as before, quantum Hamiltonian reduction.  

The first result we will prove is that the family of algebras $\Oo_{\hbar}(\mc{M}_{N \mid M,K}^{c})$ is flat over $\C[[\hbar]]$,   
\begin{lemma} 
The family of algebras $\Oo_{\hbar}(\mc{M}^{c}_{N \mid M,K})$, when $\hbar$ is treated as a formal parameter, is flat over $\C[[\hbar]]$, for all values of $c$. 
\label{lemma_flat_adhm} 
\end{lemma}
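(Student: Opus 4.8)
The plan is to reduce the statement to a flatness property of a left quotient of the $\hbar$-dependent differential operator algebra $\op{Diff}_\hbar(V)$, where $V = V_{N \mid M, K} = \mf{gl}(K) \oplus \op{Hom}(\C^{N \mid M}, \C^K)$, and then to extract that property from the flatness of the classical ADHM moment map. By construction $\Oo_\hbar(\mc{M}^{c}_{N \mid M, K}) = \big(\op{Diff}_\hbar(V)/I_c\big)^{GL(K)}$, where $I_c$ is the left ideal generated by the shifted quantum moment-map elements $g_{ij} = \mu(E_{ij}) - c\,\delta_{ij}$, $1 \le i,j \le K$. Since $GL(K)$ is reductive and acts rationally, the functor of $GL(K)$-invariants is exact, and a $GL(K)$-submodule of a $\C[[\hbar]]$-torsion-free module is again torsion-free. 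As $\C[[\hbar]]$ is a discrete valuation ring, flatness over it is equivalent to torsion-freeness, so it suffices to show that $\op{Diff}_\hbar(V)/I_c$ is $\hbar$-torsion-free.

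The main tool is the Bernstein filtration on $\op{Diff}_\hbar(V)$: each generator $\phi_{ij}, \partial_{\phi_{ij}}, \rho_{i\alpha}, \partial_{\rho_{i\alpha}}$ has degree $1$ and $\hbar$ has degree $0$. Because each commutator of generators carries a factor of $\hbar$ and hence drops phase-space degree by two, the associated graded $A := \op{gr}\op{Diff}_\hbar(V)$ is the (super-)commutative polynomial ring $\Oo(T^\ast V)[\hbar]$, which is free over $\C[\hbar]$. The symbol of $g_{ij}$ is its degree-two part, the classical moment-map component $\bar\mu(E_{ij}) := \sigma(\mu(E_{ij}))$; the shift $-c\,\delta_{ij}$ has degree $0$, so $\sigma(g_{ij}) = \bar\mu(E_{ij})$ \emph{independently of} $c$. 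I would then establish the key claim that the leading-term ideal $\sigma(I_c)$ of the left ideal $I_c$ equals the ideal $J = \big(\bar\mu(E_{ij}) : 1\le i,j\le K\big)$ of $A$. Granting this, $\op{gr}\big(\op{Diff}_\hbar(V)/I_c\big) = A/J = \Oo(\bar\mu^{-1}(0))[\hbar]$ is free over $\C[\hbar]$, and the usual leading-term argument finishes: if $\hbar m = 0$ for $0 \ne m \in \op{Diff}_\hbar(V)/I_c$, then $\hbar\,\sigma(m) = 0$ in the $\C[\hbar]$-free module $\op{gr}\big(\op{Diff}_\hbar(V)/I_c\big)$, forcing $\sigma(m) = 0$, a contradiction. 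Since every step here is insensitive to the value of $c$, this proves the lemma ``for all values of $c$'' at once.

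The key claim holds as soon as the components $\bar\mu(E_{ij})$ form a regular sequence in $\Oo(T^\ast V)$: then the Koszul complex on them is a resolution, so a combination $\sum a_{ij}g_{ij} \in I_c$ of degree below its naive bound has top symbol a Koszul syzygy of the $\bar\mu(E_{ij})$, which can be lifted using the commutators $[g_{kl},g_{ij}] \in I_c$ (these are $\hbar$ times a linear combination of the $g_{mn}$, by the moment-map relation) to reduce the degree; one then induces. This is the standard computation of the leading-term ideal of an ideal generated by a regular sequence, valid in our filtered non-commutative setting precisely because $A$ is commutative. Thus the lemma reduces to the assertion that $\bar\mu^{-1}(0) \subset T^\ast V_{N\mid M,K}$ has codimension $\dim\mf{gl}(K) = K^2$, i.e.\ that the (super) ADHM moment map is flat. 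For $M = 0$ this is classical, a special case of the dimension estimates for moment maps of quiver representations. For $M > 0$ I would argue by degeneration: scale the odd coordinates of $\op{Hom}(\C^{N\mid M},\C^K)$ and its dual by a parameter $s$, so that at $s = 0$ the components become the purely even ADHM components on $T^\ast V_{N,K}$, now living in $\Oo(T^\ast V_{N,K})\otimes\wedge^\bullet(\text{odd coordinates})$ with the odd coordinates as silent free directions; a regular sequence survives the faithfully flat base change along $\Oo(T^\ast V_{N,K})\hookrightarrow\Oo(T^\ast V_{N,K})\otimes\wedge^\bullet(\text{odd coordinates})$, and openness of Koszul-exactness in $s$ (combined with the $\C^\times$-scaling that relates all $s \ne 0$) transports the regular-sequence property back to $s = 1$.

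I expect the last point --- the regular-sequence/codimension statement for the \emph{super} ADHM moment map --- to be the main obstacle, because regular sequences interact delicately with exterior generators and the quiver-variety literature I would invoke is written in the non-super setting; the degeneration sketched above is the natural route but requires care. Everything else (the reduction using reductivity of $GL(K)$, the Bernstein associated-graded computation, the leading-term flatness argument, and the derivation of the key claim from the regular sequence) is routine. If the degeneration proves awkward, an alternative I would pursue is to deduce super flatness formally from flatness for $\mf{gl}(N')$ with $N'$ large using the $Q$-cohomology mechanism for rank $(0\mid1)$ elements that appears elsewhere in the paper.
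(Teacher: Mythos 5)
Your approach is genuinely different from the paper's. The paper localizes: it restricts to a Zariski-open $U \subset V_{N\mid M,K}$ on which $GL(K)$ acts freely, identifies $\Oo_\hbar(T^\ast U \sslash^c GL(K))$ with twisted differential operators on $U/GL(K)$ (which are manifestly $\C[\hbar]$-flat), and then shows the restriction map $\Oo_\hbar(\mc{M}^c_{N\mid M,K}) \to \op{Diff}^c_\hbar(U/GL(K))$ is injective near $\hbar = 0$. That route completely sidesteps any flatness or regular-sequence property of the moment map. Your route instead uses the Bernstein filtration and reduces the lemma to the assertion that the classical (super) ADHM moment map components form a regular sequence. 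This is a trade: your argument gives flatness over all of $\C[\hbar]$ rather than a Zariski neighbourhood of $0$, but it requires the nontrivial theorem that $\mu^{-1}(0)$ is a complete intersection for the framed Jordan quiver at $N \geq 1$ (Gan--Ginzburg, Crawley-Boevey), which the paper never needs. Both of those inputs (the classical flatness theorem and the extension to the super case) are things you have to supply; the paper pays for neither.

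The place where your argument has an actual gap is the transport from $s = 0$ to $s = 1$ in the super case. The set of $s$ at which the Koszul complex $K_\bullet(\mu^s)$ is exact in positive degree is $\C^\times$-invariant (by your scaling observation) and constructible, but nothing you have said rules out the possibility that this set is exactly $\{0\}$. ``Openness of Koszul-exactness'' is not a theorem here: the homology modules $H_i(K_\bullet)$ are finitely generated over $R[s]$ but not over $\C[s]$, so vanishing of the fibre at $s=0$ only tells you $H_i$ has support disjoint from $\{s=0\}$; the image of that support in $\mathbb{A}^1_s$ could perfectly well be $\mathbb{A}^1 \setminus \{0\}$. The fix is to replace the degeneration by the spectral sequence of the filtered Koszul complex. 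Filter $R = \Oo(T^\ast V_{N,K}) \otimes \wedge^\bullet W$ by exterior degree; the Koszul differential $\mu^1 \wedge (-) = (\mu^0 + \mu^{\mathrm{odd}\cdot\mathrm{odd}}) \wedge (-)$ decomposes into a degree-$0$ piece (given by $\mu^0$) and a degree-$2$ piece, so the associated graded complex is exactly $K_\bullet(\mu^0)$ with coefficients in the free $\Oo(T^\ast V_{N,K})$-module $R$. Since $\mu^0$ is a regular sequence in $\Oo(T^\ast V_{N,K})$ and $R$ is faithfully flat over it, the associated graded is acyclic in positive degree; the filtration is bounded because $W$ is finite-dimensional, so the spectral sequence converges and gives acyclicity of $K_\bullet(\mu^1)$. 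With that substitution your argument closes; without it, the key super step is not justified.
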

\begin{proof} 
Let  
$$
V_{N\mid M,K} = \mf{gl}(K) \oplus \op{Hom}(\C^{N \mid M},\C^K). 
$$
We will refer to an element of $V_{N\mid M,K}$ as a pair $(B,I)$ where $B \in \mf{gl}(K)$ and $I \in \op{Hom}(\C^{N \mid M}, \C^K)$. 

The super scheme $\mc{M}^{c}_{N \mid M,K}$ is the symplectic reduction of $T^\ast V_{N \mid M,K}$ by $GL(K)$.  
Let 
$$
U \subset V_{N \mid M,K}
$$
be the open subset where
\begin{enumerate} 
\item $B$ is regular semi-simple, that is diagonizable with distinct eigenvalues.
\item If $I_1,\dots,I_N, I_{\br{1}}, \dots, I_{\br{M}} \in \C^K$ denote the components of $I$, we ask that $\C^K$ is spanned by the vectors $B^i I_1$ for $i \in \Z_{\ge 0}$.  This amounts to saying that, if we expand $I_1$ as a sum of eigenvectors of $B$, the coefficient of each eigenvector is non-zero. 
\end{enumerate}
 We can use the action of $GL(K)$ to diagonalize $B$, with diagonal entries $\lambda_1,\dots,\lambda_K$ being distinct.  The remaining symmetry is the semi-direct product of copy of the maximal torus $(\C^\times)^K \subset GL(K)$ with the symmetric group $S_K$. We can use this maximal torus to scale $I_1$ so that it is the vector with $1$ in each entry.  Once we do this, the only symmetry we are left with is the symmetric group $S_K$.   

This shows that $GL(K)$ acts freely on $U$, and that  the quotient of $U$ by $GL(K)$ is isomorphic to the quotient
$$
U / GL(K) \iso \left( \left\{\lambda_1,\dots,\lambda_K \in \C^\times \mid \lambda_i \neq \lambda_j \right\} \times \C^{(N-1)K \mid MK} \right)/S_K. 
$$ 
This quotient is an affine super variety.

Note also that we can identify 
$$
(T^\ast U) \sslash GL(K) = T^\ast ( U / GL(K)).
$$
The same holds at the quantum level:
$$
\Oo_{\hbar} ( T^\ast U \sslash GL(K)) = \op{Diff}_{\hbar}( U / GL(K) )  
$$
where the left hand side is defined by quantum Hamiltonian reduction.  

An analogous statement holds when we introduce the moment-map parameter $c$. There is a line bundle over $U / GL(K)$ whose principal $\C^\times$-bundle is $U / SL(K)$. We can associate a one-parameter family of twisted differential operators $\op{Diff}^{c}_{\hbar}( U / GL(K))$ to this line bundle.  There is a natural identification between 
$$
\Oo_{\hbar}( T^\ast U \sslash^c GL(K)) \iso \op{Diff}^c_{\hbar}( U / GL(K)),
$$
where the left hand side is defined by quantum Hamiltonian reduction where we specialize the moment map to a particular value.  It turns out that this line bundle is trivial, so that $\op{Diff}^{c}_{\hbar}(U / GL(K))$ is isomorphic to $\op{Diff}_{\hbar}(U / GL(K))$. 
 
It follows from the description in terms of twisted differential operators that $\Oo_{\hbar}( T^\ast U \sslash^c GL(K))$ is a flat family over $\C[\hbar]$.

There's an injective restriction map 
$$
r : \Oo( \mc{M}_{N \mid M,K}^{c} ) \to \Oo ( T^\ast U \sslash^c GL(K) ).  
$$ 
This map quantizes to a map
\begin{equation*}
r_{\hbar} : \Oo_{\hbar}( \mc{M}_{N \mid M,K}^{c} ) \to \Oo_{\hbar} ( T^\ast U \sslash^c GL(K) ) = \op{Diff}_{\hbar} ( U / GL(K) ) \tag{$\dagger$}. 
\end{equation*}
To see this, note that the restriction map 
$$
\op{Diff}_{\hbar}(V_{N \mid M,K}) \to \op{Diff}_{\hbar}(U)
$$
is $GL(K)$-equivariant, and the diagram
$$
 \xymatrix{ \op{Diff}_{\hbar}(V_{N \mid M,K} ) \otimes \mf{gl}_K \ar[r]^{\mu_c} & \op{Diff}_{\hbar}(V_{N \mid M,K} )  \\  
\op{Diff}_{\hbar}(U) \otimes \mf{gl}_K \ar[r]^{\mu_c}  & \op{Diff}_{\hbar}(U)   } 
$$ 
commutes, where the horizontal arrows are given by the quantum moment map shifted by $c$ times the identity. Passing to $GL(K)$ invariants and then taking the cokernel of the horizontal arrows gives the desired map.  

The map $r_{\hbar}$ in the displayed equation ($\dagger$) above is injective when we specialize to $\hbar = 0$. Further, the range $\op{Diff}_{\hbar} ( U / GL(K))$ of this map is flat over $\C[\hbar]$.  It follows that $r_{\hbar}$ is injective on some Zariski open subset of $\mbb{A}^1_{\hbar}$ containing $0$.  

Finally, this implies that $\Oo_{\hbar}(\mc{M}_{N \mid M,K}^{c})$ is flat on this Zariski open subset of $\mbb{A}^1_{\hbar}$, and in particular flat in the formal neighbourhood of $0$.  

\end{proof} 

Next we will prove the main result of this section. 
\begin{theorem*} 
The family of algebras $A_{N\mid M,\hbar,c}$ is flat over $\C[[\hbar]]$.  
\end{theorem*}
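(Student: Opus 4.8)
The plan is to reduce the theorem to the finite-$K$ flatness statement of Lemma~\ref{lemma_flat_adhm} by way of the identification of $A_{N\mid M,\hbar,c}$ with a uniform-in-$K$ quantum Hamiltonian reduction. The construction of Proposition~\ref{proposition_homomorphism_adhm} and the isomorphism $A_{N,\hbar,c}\iso\Oo_\hbar(\mc{M}^c_{N,\bullet})$ established in Section~\ref{section_adhm} go through verbatim in the super setting: the relations (P0)--(P3), (C0), (C1) identify the auxiliary algebra $A'_{N\mid M,\hbar,c}$ with the algebra $\Oo_\hbar(T^\ast V_{N\mid M,\bullet})^{GL(\bullet)}$ of admissible sequences, by invariant theory for the $GL(K)$-action on $V_{N\mid M,K}$, and relation (M) cuts out the uniform quantum moment-map ideal. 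This yields an isomorphism $A_{N\mid M,\hbar,c}\iso\Oo_\hbar(\mc{M}^c_{N\mid M,\bullet})$, so it suffices to prove the latter is flat over $\C[[\hbar]]$. Since $\C[[\hbar]]$ is a discrete valuation ring, flatness is equivalent to the absence of $\hbar$-torsion.

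Write $B=\Oo_\hbar(T^\ast V_{N\mid M,\bullet})^{GL(\bullet)}$, let $J\subset B$ be the two-sided ideal generated by the uniform quantum moment map, and let $B_K=\Oo_\hbar(T^\ast V_{N\mid M,K})^{GL(K)}$ with $J_K\subset B_K$ its finite-$K$ counterpart, so $\Oo_\hbar(\mc{M}^c_{N\mid M,\bullet})=B/J$ and $\Oo_\hbar(\mc{M}^c_{N\mid M,K})=B_K/J_K$. The evaluation maps $\mathrm{ev}_K\colon B\to B_K$, $\{f_L\}\mapsto f_K$, carry $J$ into $J_K$ and so descend to $\phi_K\colon B/J\to B_K/J_K$, while the family $\{\mathrm{ev}_K\}$ is jointly injective on $B$ by the very definition of an admissible sequence. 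Now suppose $x\in B/J$ satisfies $\hbar^n x=0$. Lift it to $\tilde x\in B$; then $\hbar^n\tilde x\in J$, hence $\hbar^n\,\mathrm{ev}_K(\tilde x)\in J_K$ for every $K$, and since $B_K/J_K$ is torsion-free over $\C[[\hbar]]$ by Lemma~\ref{lemma_flat_adhm} this forces $\mathrm{ev}_K(\tilde x)\in J_K$, i.e.\ $\phi_K(x)=0$, for every $K$. To conclude $x=0$ one needs the family $\{\phi_K\}$ to be jointly injective on $B/J$; equivalently, one needs $J$ to be \emph{saturated} among admissible sequences, meaning that an admissible sequence $\{f_K\}$ with $f_K\in J_K$ for all $K$ must already lie in $J$.

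This saturation statement is where the real work lies, and it is the place where the relationship with deformation quantizations of quiver varieties is genuinely used: one must show that a decomposition of each $f_K$ against the standard adjoint-type generators $D_{ij}\mu_c(E_{ji})-cD_{ij}\delta_{ji}$ of $J_K$ can be chosen uniformly in $K$, so that the coefficient sequences are themselves admissible. This is an invariant-theory argument of the same flavour as the one underlying the isomorphism $A'_{N\mid M,\hbar,c}\iso\Oo_\hbar(T^\ast V_{N\mid M,\bullet})^{GL(\bullet)}$, resting on the fact that the ring of $GL(K)$-invariants of $T^\ast V_{N\mid M,K}$ and the kernel of its moment map admit presentations that stabilize as $K\to\infty$. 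Granting it, the $\phi_K$ are jointly injective, so $B/J\hookrightarrow\prod_K\Oo_\hbar(\mc{M}^c_{N\mid M,K})$; each factor is torsion-free over $\C[[\hbar]]$ by Lemma~\ref{lemma_flat_adhm}, a product of torsion-free modules is torsion-free and any submodule of a torsion-free module is torsion-free, so $\Oo_\hbar(\mc{M}^c_{N\mid M,\bullet})=B/J$ has no $\hbar$-torsion and is therefore flat. Transporting through the isomorphism of the first paragraph gives flatness of $A_{N\mid M,\hbar,c}$, as required. Every step except the saturation of $J$ is formal; that one step is the main obstacle.
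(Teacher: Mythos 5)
Your reduction rests on the claim that $A_{N\mid M,\hbar,c}\iso\Oo_\hbar(\mc{M}^c_{N\mid M,\bullet})$ over $\C[[\hbar]]$, and this is false. The homomorphism $\phi_K\colon A_{N,\hbar,c}\to\Oo_\hbar(\mc{M}^c_{N,K})$ constructed in Proposition~\ref{proposition_homomorphism_adhm} sends an incoming type-$1$ marked point to $\tfrac{1}{\hbar}\,\dpa{\rho_{l(p_s)\alpha}}$, so it is only defined after inverting $\hbar$; the resulting isomorphism with $\Oo_\hbar(\mc{M}^c_{N\mid M,\bullet})$ is over $\C[\hbar,\hbar^{-1}]$, not $\C[[\hbar]]$. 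This is not a technicality: at $\hbar=0$ the two algebras are visibly non-isomorphic, since by Proposition~\ref{proposition_classical_limit_universal_enveloping} $A_{N\mid M,0,c}$ is the noncommutative universal enveloping algebra $U(\Oo_c(\C^2)\otimes\mf{gl}(N\mid M))$, while $\Oo_{\hbar=0}(\mc{M}^c_{N\mid M,\bullet})$ is a commutative ring of functions. Consequently, even if you completed your saturation argument, you would have proved flatness of $\Oo_\hbar(\mc{M}^c_{N\mid M,\bullet})$, which transports to $A$ only over $\C((\hbar))$ --- where flatness is automatic and contains no content.

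The paper's proof addresses exactly this. It introduces an auxiliary algebra $B_{N\mid M,\hbar,c}$, defined by the same generators as $A$ but with $\hbar$ shifted between relations (P2) and (M), so that the map to $\Oo_\hbar(\mc{M}^c_{N\mid M,K})$ involves no negative powers of $\hbar$ and the isomorphism $B_{N\mid M,\hbar,c}\iso\Oo_\hbar(\mc{M}^c_{N\mid M,\bullet})$ holds over all of $\C[\hbar]$. The torsion-free argument you sketch then applies to $B$ and is essentially what the paper does at that step (the joint injectivity of the evaluation maps on $B$, which you flag as ``saturation,'' is indeed the same point the paper asserts without elaboration). But flatness of $B$ does not immediately give flatness of $A$: the map $B\to A$ multiplies a surface by $\hbar$ raised to the number of incoming type-$1$ marked points, and so is far from an isomorphism at $\hbar=0$. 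The paper bridges this with a further argument you have no substitute for: choose a PBW-style spanning set of $A$ given by ordered products of two-marked-point discs, note that $\hbar^n$ times a degree-$n$ linear combination lies in the image of $B$, and reduce flatness of $A$ to the algebraic independence of these generators modulo $\hbar$ inside $B$. That independence is the content of the paper's concluding proposition, which shows that the power-sum-type functions $\op{Tr}_{\C^K}(IE_{\alpha\beta}JB_1^kB_2^m)$ on $\mc{M}^c_{N\mid M,K}$ satisfy no polynomial relation of degree $\le K$. Without the detour through $B$ and this independence argument, your proof establishes the flatness of a different algebra.
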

\begin{proof} 

Define  a new  algebra $B_{N \mid M,\hbar,c}$, defined just like $A_{N \mid M,\hbar,c}$, except that in relation (P2) we introduce a $\hbar$ next to the glued surface; and in relation (M) we remove the $\hbar$ that appears  in the term with the type $1$ marked points.  Then, there is a map $B_{N \mid M, \hbar,c} \to A_{N\mid M, \hbar,c}$, which on the level of generators, multiplies every surface by $\hbar$ to the number of incoming type $1$ marked points.  This map preserves all the relations.  This map is an isomorphism when $\hbar \neq 0$.  

In section \ref{section_adhm} we showed that there is a map of algebras
$$
A_{N , \hbar,  c} \to \Oo_{\hbar}(\mc{M}_{N,K}^{c} ) 
$$
defined when $\hbar \neq 0$ and for all values of $K$.  This map gives an isomorphism
$$
A_{N , \hbar,  c} \iso \Oo_{\hbar}(\mc{M}_{N,\bullet}^{c} ) 
$$
where $\bullet$ indicates that we work uniformly in $K$.  

The same statement, with the same proof, holds in the super case, so that we have an isomorphism 
\begin{equation*}
A_{N\mid M , \hbar,  c} \iso \Oo_{\hbar}(\mc{M}_{N\mid M,\bullet}^{c} ) \tag{$\dagger$} 
\end{equation*}
when $\hbar \neq 0$.  

The map $B_{N\mid M, \hbar, c} \to A_{N \mid M, \hbar, c}$ induces a similar map
\begin{equation*}
B_{N \mid M, \hbar, c} \iso  \Oo_{\hbar}(\mc{M}_{N \mid M, \bullet}^{c} ) \tag{$\ddagger$} \end{equation*} 
when $\hbar \neq 0$.  One can check from the definition of the map ($\dagger$) above that the map ($\ddagger$) extends to a map even when we don't invert $\hbar$.  The proof in section \ref{section_adhm} that the map ($\dagger$) is an isomorphism shows that the map ($\ddagger$) is an isomorphism for all values of $\hbar$, that is, it defines an isomorphism of algebras over $\C[\hbar]$.

We have seen in the previous lemma that the spaces $\Oo_{\hbar}(\mc{M}_{N\mid M, K}^c)$ are flat over $\C[[\hbar]]$. We can deduce from this that $B_{N \mid M, \hbar,c}$ is flat over $\C[[\hbar]]$.  Indeed, there is no element of $B_{N \mid M,\hbar,c}$ which is in the kernel of all the  maps $B_{N \mid M , \hbar,c} \to \Oo_{\hbar}(\mc{M}^{c}_{N \mid M,K})$.  However, any element which is annihilated by some power of $\hbar$ must be in the kernel of all these maps, because $\Oo_\hbar(\mc{M}^c_{N\mid M,K})$ is flat over $\C[[\hbar]]$.   Thus, $B_{N \mid M,\hbar,c}$ is torsion-free over $\C[[\hbar]]$ and therefore  flat.

We need to use this to deduce that $A_{N\mid M, K, c}$ is flat over $\C[[\hbar]]$. Let us choose a PBW isomorphism  
$$\Sym^\ast \left( \Oo_{c}(\C^2)  \otimes \mf{gl}(N \mid M) \right) \iso U ( \Oo_{c}(\C^2) \otimes\mf{gl}_N).$$ 
This gives a basis of $U ( \Oo_{c}(\C^2) \otimes\mf{gl}_N)$ where basis elements are of the form 
$$
(E_{\alpha_1 \beta_1} z_1^{k_1} z_2^{m_1} )\dots (E_{\alpha_n \beta_n} z_1^{k_n} z_2^{m_n} ) 
$$ 
using some normal-ordering prescription to specify the order on the product. Here, the indices $\alpha_i$ and $\beta_j$ lie in the set $\{1,\dots,N,\br{1},\dots,\br{M}\}$.  More formally a normal ordering prescription is a choice of total order on the set of $4$-tuples 
$$(\alpha,\beta,k_1,k_2) \in \{1,\dots,N,\br{1},\dots,\br{M} \}^2 \times \Z_{> 0}^2.$$  
One then multiplies elements according to their order in this total order.  

We can lift these basis elements to $A_{N\mid M,\hbar,c}$ by representing them by products of discs $D(\alpha \Downarrow, \shortuparrow^k, \shortdownarrow^m, \beta \Uparrow)$ as in the proof of \ref{proposition_classical_limit_universal_enveloping}.  The products of these discs are taken using the same normal ordering prescription.  

Evidently, these products of discs span $A_{N\mid M,\hbar,c}$ over $\C[[\hbar]]$.  Our task is to show that they form a basis over $\C[[\hbar]]$. This is equivalent to showing that no finite linear combination of our putative basis given by products of discs is annihilated by any power of $\hbar$.  

We say a linear combination of putative basis elements is of degree $n$ if it is a sum of terms each of which is a product of $\le n$ discs. If such a linear combination of degree $n$ is annihilated by some power of $\hbar$, then the same holds after multiplying by $\hbar^n$.  After multiplying by $\hbar^n$, we find something in the image of the map $B_{N\mid M,\hbar,c} \to A_{N\mid M,\hbar,c}$.  Since $B_{N\mid M,\hbar,c}$ is flat over $\C[[\hbar]]$, it suffices to show that if we take an element of our putative basis, multiply it by $\hbar^n$, lift it to an element of $B_{N\mid M,\hbar,c}$, and then set $\hbar = 0$, we get a basis for $B_{\hbar = 0,c}$.  

We can rephrase this as follows.  There is an algebra map
$$
\Sym^\ast \left( \Oo_{c}(\C^2) \otimes \mf{gl}(N \mid M) \right) \to B_{N \mid M, \hbar = 0,c}
$$
which, at the level of generators, sends 
$$
E_{\alpha \beta} z_1^k z_2^m \to D(\alpha \Downarrow, \shortuparrow^k, \shortdownarrow^m, \beta \Uparrow) 
$$
where the disc on the right hand side is viewed as one of the generators of $B_{N \mid M,\hbar,c}$.  This algebra map is surjective.  We need to show that it is an isomorphism.  This, of course, is the same as saying that these elements of $B_{N \mid M,\hbar = 0,c}$ are algebraically independent.

We will check algebraic independence as follows. There is a map of algebras
$$
B_{N \mid M, \hbar = 0,c} \to \Oo ( \mc{M}^{c}_{N \mid M,K} )
$$ 
for all values of $K$.  It suffices to show that, under this map, the image of the  elements $D(\alpha \Downarrow, \shortuparrow^k, \shortdownarrow_m,\beta \Uparrow)$ do not satisfy any polynomial relation of degree $\le K$. 

Let us introduce some notation to describe the space $\mc{M}^{c}_{N \mid M, K}$.  This space is, of course, the symplectic quotient of 
$$
T^\ast \left(\Hom(\C^{N\mid M}, \C^K) \oplus \mf{gl}(K) \right)
$$ 
by $GL(K)$.  Let us describe the elements of this cotangent vector space by
\begin{align*} 
I &: \C^{N \mid M} \to \C^{K}\\
J &: \C^{K} \to \C^{N \mid M} \\
B_1,B_2 &: \C^K \to \C^K.   
\end{align*}

The space $\mc{M}^{c}_{N \mid M,K}$ consists of quadruples $I,J,B_1,B_2$ satisfying the moment map equation
$$
IJ + [B_1,B_2] = c \op{Id} \in \mf{gl}(K)
$$
modulo the action of $GL(K)$.  

Let $E_{\alpha \beta} \in \mf{gl}(N \mid M)$ denote the elementary matrix corresponding to the indices $\alpha,\beta$. The map from $B_{N \mid M,\hbar = 0, c}$ to functions on $\mc{M}^{c}_{N \mid M,K}$ sends 
$$D(\alpha \Downarrow, \shortuparrow^k, \shortdownarrow_m,\beta \Uparrow) \to \op{Tr}_{\C^K} \left( I E_{\alpha \beta} J B_1^k B_2^m    \right).$$ 
Thus, to finish the proof, it suffices to show that these functions on $\mc{M}^{c}_{N \mid M,K}$ do not satisfy any polynomial equation of degree $\le K$. We will do this in the next proposition. 
\end{proof}

\begin{proposition}
Let $I,J,B_1,B_2$ be as in the previous paragraph, and let $E_{\alpha \beta} \in \mf{gl}(N \mid M)$ be an elementary matrix.  The functions
$$
\op{Tr}_{\C^K} \left( I E_{\alpha \beta} J B_1^k B_2^m    \right) 
$$
on $\mc{M}^{c}_{N\mid M, K}$ do not satisfy any polynomial relation of degree $\le K$. 
\end{proposition}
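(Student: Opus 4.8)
The plan is to prove the proposition by restricting the trace functions to an explicit dense chart of $\mc{M}^{c}_{N \mid M,K}$ on which they degenerate — to leading order in an auxiliary scaling — to polarized power sums, and then invoking a purely symmetric-function statement. Write $f_{\alpha\beta,k,m}=\Tr_{\C^K}(I E_{\alpha\beta}J B_1^k B_2^m)$. A polynomial relation of degree $\le K$ among these functions is the same as a hypersurface of degree $\le K$ in the affine space containing the image of $(f_{\alpha\beta,k,m})\colon \mc{M}^{c}_{N \mid M,K}\to\mathbb{A}^{\{(\alpha,\beta,k,m)\}}$; so it suffices to show this image meets no such hypersurface, and we are then free to restrict $\mc{M}^{c}_{N \mid M,K}$ to any dense locally closed subvariety and to use the scaling isomorphisms $\Oo_{\hbar}(\mc{M}^{c}_{N \mid M,K})\iso\Oo_{\lambda\hbar}(\mc{M}^{\lambda c}_{N \mid M,K})$ of Section~\ref{section_adhm} to move between values of $c$. (The single forced relation $\sum_\alpha f_{\alpha\alpha,0,0}=cK$ coming from the $\kappa$-direction is inessential for the flatness application — it does not lift to $B_{N\mid M,\hbar=0,c}$ — and is simply set aside below, so that the polarized power sum $p_{0,0}$ gets excluded.)

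First I would fix an index $\alpha_0$ (possible since $N+M\ge 1$) and work on the open subset $U\subset\mc{M}^{c}_{N \mid M,K}$ where $B_1$ is regular semisimple and $I e_{\alpha_0}\in\C^K$ is a cyclic vector for $B_1$. On $U$ one uses $GL(K)$ to put $B_1=\op{diag}(x_1,\dots,x_K)$ with the $x_a$ distinct and $I e_{\alpha_0}=(1,\dots,1)^{t}$; the moment-map equation $[B_1,B_2]+IJ=c\,\op{Id}$ then reads $(B_2)_{ab}=-(x_a-x_b)^{-1}(IJ)_{ab}$ off the diagonal, while $(B_2)_{aa}=:y_a$ and the remaining entries of $I,J$ stay free, subject only to the $K$ scalar equations $(IJ)_{aa}=c$, and the residual gauge group is $S_K$. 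A direct expansion gives
\[
f_{\alpha\beta,k,m}=\sum_{a=1}^{K}(I e_\alpha)_a\,(e_\beta^{t}J)_a\,x_a^{k}y_a^{m}+(\text{terms containing at least one off-diagonal factor of }B_2).
\]
Rescaling the diagonal of $B_2$ by a parameter $\sigma$ — that is, $y_a\mapsto\sigma y_a$, the entries of $J$ by $\sigma$, $c$ by $\sigma$, and transporting the resulting point of $\mc{M}^{\sigma c}_{N\mid M,K}$ back to $\mc{M}^{c}_{N\mid M,K}$ by the scaling isomorphism — turns each $f_{\alpha\beta,k,m}$ into a polynomial in $\sigma$ whose top term, of degree $\sigma^{\,k+m+1}$, has coefficient $\sum_a (I e_\alpha)^{(0)}_a(e_\beta^{t}J)^{(0)}_a x_a^{k}y_a^{m}$. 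Hence, given any relation of degree $d\le K$ among the $f_{\alpha\beta,k,m}$, extracting the leading power of $\sigma$ and then specializing the flavour data to a generic point of the affine quadric $\{(IJ)_{aa}=c\}$ (keeping, for instance, two flavour labels $\alpha_1\neq\alpha_2$ realized so that $f_{\alpha_1\alpha_2,k,m}$ becomes exactly $p_{k,m}=\sum_{a}x_a^{k}y_a^{m}$ and every other $f_{\alpha\beta,k,m}$ a scalar multiple of it) produces, after an invertible change of variables, a polynomial relation of degree $\le d\le K$ among the polarized power sums $p_{k,m}$, $(k,m)\neq(0,0)$, on $(\C^{2})^{K}/S_K$.

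Thus the whole statement reduces to: the polarized power sums $p_{k,m}$ of $K$ variables, with $(k,m)\neq(0,0)$, satisfy no polynomial relation of degree $\le K$. For this I would use that $\{p_{k,0}\}_{1\le k\le K}\cup\{p_{k,1}\}_{0\le k\le K-1}$ is a transcendence basis of $\C(x_1,\dots,x_K,y_1,\dots,y_K)^{S_K}$ — the $x_a$ are recovered, up to $S_K$, from $p_{1,0},\dots,p_{K,0}$ by Newton's identities, and then the $y_a$ from $p_{0,1},\dots,p_{K-1,1}$ by inverting the Vandermonde matrix $(x_a^{k})_{0\le k\le K-1}$ — so that the ideal of relations among all the $p_{k,m}$ is generated by the identities expressing each remaining $p_{k,m}$ ($k+m\ge 2$) in terms of this basis. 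A weighted-homogeneity count — assign $p_{k,m}$ weight $k+m$, so that every generator of the basis has weight $\le K$ while each of these defining identities is weight-homogeneous of weight $\ge K+1$ and carries a nonzero top monomial (e.g. $p_{K+1,0}$ acquires a nonzero $p_{1,0}^{K+1}$ term) — should then show that every nonzero element of the relations ideal has ordinary degree $\ge K+1$.

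The step I expect to be the crux is exactly this last degree estimate: verifying that no combination of the weight-$\ge K+1$ defining identities, with coefficients among the generators, can have ordinary degree $\le K$ (the two-variable analogue of the classical fact that the first relation among power sums of $K$ variables occurs in degree exactly $K+1$). This will require a careful induction with Newton's identities tracking the coefficient of the all-$p_{1,0}$ (respectively all-$p_{k,m}$ with $k+m$ minimal) monomial. Secondary points needing care, but not affecting the structure of the argument, are: checking that the off-diagonal corrections in the displayed formula are genuinely of strictly lower $\sigma$-degree, and that generic flavour data makes the leading coefficients span the same subalgebra as the $p_{k,m}$; and the Koszul-sign bookkeeping when some indices $\alpha,\beta,\gamma$ are fermionic, which enters the formula for $(B_2)_{ab}$ and the constraint $(IJ)_{aa}=c$ but leaves the reduction to symmetric functions unchanged.
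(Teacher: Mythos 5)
Your overall strategy matches the paper's: pass to the regular-semisimple chart, gauge-fix $B_1$ diagonal and the distinguished column of $I$ to $(1,\dots,1)$, solve the moment-map constraints for the off-diagonal of $B_2$ and the distinguished row of $J$, and then extract a leading term that is a power sum in the eigenvalues. The paper implements the ``leading term'' extraction with a grading (weight $1$ for $\lambda_i$, weight $-1$ for $(\lambda_i-\lambda_j)^{-1}$, weight $0$ for $\mu_i, I_{i,\alpha}, J_{\alpha,j}$) rather than your auxiliary scaling $\sigma$, but that is cosmetic.

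The genuine gap is the flavour specialization. You propose to specialize $I,J$ so that every top-term $f^{\mathrm{top}}_{\alpha\beta,k,m}=\sum_a (Ie_\alpha)_a(e_\beta^{t}J)_a\,x_a^k y_a^m$ becomes a scalar $c_{\alpha\beta}$ times the pure polarized power sum $p_{k,m}$. This forces the products $(Ie_\alpha)_a(e_\beta^{t}J)_a$ to be independent of $a$, which is a very \emph{non-generic} point of the quadric (your ``generic point'' language is at odds with this). Worse, at such a specialization the top-terms for a fixed $(k,m)$ and varying $(\alpha,\beta)$ are all proportional to $p_{k,m}$, so the induced map $\C[X_{\alpha\beta,k,m}]\to\C[p_{k,m}]$, $X_{\alpha\beta,k,m}\mapsto c_{\alpha\beta}p_{k,m}$, has a large kernel (generated by $c_{\gamma\delta}X_{\alpha\beta,k,m}-c_{\alpha\beta}X_{\gamma\delta,k,m}$). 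A nonzero relation $P^{\mathrm{top}}$ can therefore map to the zero polynomial in the $p_{k,m}$, and the argument learns nothing. There is no way to repair this while specializing the flavour data to numbers: the information distinguishing the different $(\alpha,\beta)$ lives exactly in that data.

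The paper's proof resolves this by \emph{not} specializing: it keeps $\mu_i$, $I_{i,\alpha}$, $J_{\alpha,j}$ as coordinates, observes that $\Oo(W)$ contains $\Sym^K R$ for $R=\C[\lambda,\mu,I_\alpha,J_\alpha]$ modulo the single-site constraint, and recognizes the top-degree terms $F^{(k)}_{\alpha,\beta,k,l}$ as power sums $P(r)$ of linearly independent elements $r\in R_0$ (linear independence uses $c\neq 0$). The lemma preceding the proposition then applies verbatim: for a super-commutative $R$, the power sums $P(r_\alpha)$ over a basis $r_\alpha$ of any complement $R_0$ to $\C\cdot 1$ satisfy no polynomial relation of degree $\le K$, proved by filtering $\Sym^K R$ by $F^l=\mathrm{image}(\Sym^l R\hookrightarrow\Sym^K R)$ and computing $P(r_1)\cdots P(r_l)\equiv r_1\circ\dots\circ r_l\circ 1^{\circ(K-l)}$ modulo $F^{l-1}$. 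This lemma also settles your self-identified ``crux'' as a special case ($R=\C[x,y]$ gives exactly the polarized power sums), and is considerably cleaner than the Newton's-identities and transcendence-basis route you sketch but do not complete. So: keep the flavour data as variables, and use the filtration lemma on $\Sym^K R$ instead of specializing and estimating degrees.
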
 
Before we prove this, we need a lemma.
\begin{lemma} 
Let $R$ be any super-commutative algebra. Let
$$
\Sym^K R = (R^{\otimes k})^{S_K}
$$ 
be the $K$'th symmetric power algebra of $R$. For any element $r \in R$, let $r_i \in R^{\otimes K}$ indicate $r$ placed in position $i$, and let 
$$
P(r) = \sum_{i = 1}^{K} r_i \in \Sym^K R. 
$$ 

Let $R_0 \subset R$ be a complement to the subspace spanned by $1 \in R$.  Choose a basis $r_{\alpha}$ of $R_0$. Then, the elements $P(r_\alpha) \in \Sym^K R$ do not satisfy any polynomial relation of degree $\le K$.  
\end{lemma}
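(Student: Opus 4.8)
The plan is to prove the lemma about symmetric powers by a linear-algebra argument based on specialization to points, which is robust enough to handle the super-commutative case. Write $P(r_\alpha)$ for the power-sum-type elements in $\Sym^K R = (R^{\otimes K})^{S_K}$. Suppose, for contradiction, that $Q(P(r_{\alpha_1}),\dots,P(r_{\alpha_n})) = 0$ is a polynomial relation of degree $\le K$ among these elements, where $Q$ is a (super-)polynomial in $n$ variables; I may assume $Q$ is homogeneous of some degree $d \le K$ since the graded pieces of a relation are themselves relations (using the grading on $R^{\otimes K}$ by total degree, once we fix any grading on $R$ with $R_0$ homogeneous — or if $R$ carries no such grading, one argues degree by degree using the filtration by the number of factors). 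The first step is to reduce to the universal/free case: it suffices to prove the statement when $R$ is the free super-commutative algebra on the generators $r_\alpha$ (a polynomial algebra in the even generators tensored with an exterior algebra in the odd ones), since any relation valid in $R$ pulls back from one valid in this free algebra, and a nonzero $Q$ that vanishes identically would have to vanish in the free case too.

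Next I would evaluate $P(r_\alpha)$ at tuples of points. Concretely, for a super-commutative algebra $R$ a ``point'' is an algebra homomorphism $R \to \Lambda$ into an auxiliary Grassmann algebra $\Lambda$ with enough odd generators; evaluating at $K$ such points $x_1,\dots,x_K$ gives an algebra map $\Sym^K R \to \Lambda$ sending $P(r_\alpha) \mapsto \sum_{i=1}^K r_\alpha(x_i)$. So a degree-$d$ relation $Q$ among the $P(r_\alpha)$ forces $Q\bigl(\sum_i r_{\cdot}(x_i)\bigr) = 0$ for all choices of $K$ points. Setting $t_i^\alpha = r_\alpha(x_i)$ — which range freely over $\Lambda$, subject only to the obvious parity constraints, as the $x_i$ range over points of the free algebra — this says the polynomial $Q\bigl(\sum_{i=1}^K t_i^{(1)}, \dots, \sum_{i=1}^K t_i^{(n)}\bigr)$ vanishes identically in the $Kn$ super-variables $t_i^\alpha$. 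Now I would use the classical fact (valid over any $\Q$-algebra, hence after base-changing to work over $\C$) that the map sending a polynomial $Q$ in $n$ variables of degree $\le K$ to the symmetrized polynomial $Q(p_1^{(1)},\dots,p_1^{(n)})$ where $p_1^{(\alpha)} = \sum_{i=1}^K t_i^\alpha$ is injective: this is the statement that power sums of degree $\le K$ in $K$ sets of variables generate, and that a polynomial identity among the first power sums of $K$ variables forces the polynomial to be zero when its degree does not exceed $K$ (the degree bound is exactly what prevents collapse, as in the theory of symmetric functions where $p_1,\dots,p_K$ are algebraically independent but $p_1^{K+1}$ lies in a smaller ring). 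The super-signs do not obstruct this, because the even generators already suffice: restrict all $t_i^\alpha$ with $\alpha$ odd to zero and one is left with an honest commutative identity in the even variables, and homogeneity in the odd variables then lets one peel off the odd part degree by degree.

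With the lemma in hand, the proof of the proposition about $\op{Tr}_{\C^K}(I E_{\alpha\beta} J B_1^k B_2^m)$ on $\mc{M}^c_{N\mid M,K}$ proceeds by exhibiting a suitable point of the moduli space where these traces become exactly the symmetrized expressions $P(r)$ of the lemma. The idea is to take $B_1$ diagonalizable with distinct eigenvalues $\lambda_1,\dots,\lambda_K$ and $B_2$ diagonal as well (so $[B_1,B_2]=0$), and $I,J$ chosen so that $IJ = c\,\op{Id}$ holds on the diagonal — for instance $I$ and $J$ built from rank-one pieces at each eigenline, with $J E_{\alpha\beta} I$ acting diagonally. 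Then $\op{Tr}_{\C^K}(I E_{\alpha\beta} J B_1^k B_2^m) = \sum_{i=1}^K c_i^{\alpha\beta}\lambda_i^k \mu_i^m$ for suitable scalars, which is precisely of the form $\sum_{i=1}^K r_i$ for $r = r_{\alpha\beta,k,m}$ a generic element of the (super-commutative) coordinate ring of the ``one-point'' space. Since on this family of points the $r_{\alpha\beta,k,m}$ span a complement to the constants in that ring, the lemma immediately gives that no polynomial relation of degree $\le K$ holds among the traces — completing the proof of the theorem that $A_{N\mid M,\hbar,c}$ is flat over $\C[[\hbar]]$. The main obstacle I anticipate is verifying cleanly that the chosen points actually lie in $\mc{M}^c_{N\mid M,K}$ (i.e. satisfy the moment map equation and are stable, so that the traces are well-defined functions there) while simultaneously arranging the $r_{\alpha\beta,k,m}$ to be ``as independent as possible''; this is a concrete but slightly fiddly construction with the ADHM data, and getting the super-signs right in the evaluation of the trace in the presence of fermionic indices $\alpha,\beta \in \{\br 1,\dots,\br M\}$ requires care.
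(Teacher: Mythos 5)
Your proof of the lemma has two linked gaps, both stemming from conflating ``vector-space basis of $R_0$'' with ``free algebra generators of $R$.''

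First, the reduction to the free case runs in the wrong direction. If $F$ is the free super-commutative algebra on symbols $\tilde r_\alpha$, then $F$ \emph{surjects} onto $R$ (since $R_0$ together with $1$ spans $R$), and $\Sym^K F \to \Sym^K R$ has a kernel. Knowing $Q(P(\tilde r_\alpha)) \neq 0$ in $\Sym^K F$ does not tell you $Q(P(r_\alpha)) \neq 0$ in $\Sym^K R$: the image could be zero. The implication you need — that a degree-$\le K$ relation in $\Sym^K R$ pulls back to a genuine relation in $\Sym^K F$ — is precisely the assertion that $\ker(\Sym^K F \to \Sym^K R)$ meets the degree-$\le K$ polynomials in the $P$'s trivially, which is essentially the lemma itself. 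So this step is circular.

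Second, even granting the free case, the evaluation step is false. You claim the values $t_i^\alpha = r_\alpha(x_i)$ range freely as the points $x_i: R \to \Lambda$ vary. But the $r_\alpha$ are a \emph{linear} basis of $R_0$, not free algebra generators, so their values at a single point are tightly constrained by the multiplication in $R$. Already for $R = \C[x]$ and $r_\alpha = x^\alpha$ one has $r_2(x_i) = r_1(x_i)^2$; the $P(r_\alpha)$ become the power sums $p_1, p_2, p_3, \dots$ in $K$ variables, not the first power sums $p_1^{(\alpha)}$ in $n$ disjoint sets of variables. The ``classical fact'' you cite concerns the latter and is trivially true without any degree bound (linear forms in disjoint variables are always algebraically independent), whereas the statement you actually need — that $p_1, p_2, p_3, \dots$ in $K$ variables satisfy no relation of degree $\le K$ — is the genuine content, and your argument never addresses it. The paper's own proof avoids all of this by working intrinsically in $\Sym^K R$: it filters $\Sym^K R$ by $F^l = $ image of $\Sym^l R$ padded by $1$'s, identifies $\op{gr}^l \cong \Sym^l R_0$, and shows $P(r_{\alpha_1})\cdots P(r_{\alpha_l}) \equiv r_{\alpha_1} \circ \cdots \circ r_{\alpha_l} \circ 1^{\circ(K-l)} \pmod{F^{l-1}}$ for $l \le K$, so the leading term of any degree-$\le K$ monomial in the $P$'s is visibly nonzero in the associated graded. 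This is robust against both issues: it never leaves $R$, and it uses only the linear structure of $R_0$ together with the one identity $P(r)P(s) = \sum_{i\ne j} r_i s_j + P(rs)$. I'd recommend replacing the specialization strategy with such a leading-term/filtration argument; the third paragraph on the ADHM application is broadly in the right spirit but depends on the lemma, so it inherits the gap.
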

Note that in the case $R = \C[x]$, the elements $P(x^n)$ are the power sums from the theory of symmetric functions.
\begin{proof} 
If $r_1,\dots,r_K$ are elements of $R$, let $r_1 \circ \dots \circ r_K \in \Sym^K R$ be the symmetrization of $r_1 \otimes \dots \otimes r_K$.  For $l < K$, there is a map
\begin{align*} 
 \Sym^l R & \mapsto \Sym^K R \\
r_1 \circ \dots \circ r_l & \mapsto r_1 \circ \dots \circ r_l \circ 1^{\circ (K-l) }. 
\end{align*}
This map is injective.  Let $F^l \Sym^K R$ be the image.  This defines a filtration on $\Sym^K R$, since $F^{l-1} \Sym^K R \subset F^l \Sym^K R$.  The associated graded with respect to this filtration can be described by the isomorphism
$$
F^l \Sym^K R / F^{l-1} \Sym^K R \iso \Sym^l R_0, 
$$
where, as above, $R_0$ is a complement to the subspace of $R$ spanned by the identity.  

In fact, if we restrict the map $\Sym^l R \to \Sym^K R$ to $\Sym^l R_0$, we find a splitting of this filtration and an isomorphism
$$
\Sym^K R \iso \oplus_{l = 0}^{K} \Sym^l R_0. 
$$

Let $r_1,\dots,r_l$ be elements of $R_0$, and let $P(r_i) \in \Sym^K R$ be the elements described in the statement of the lemma.  Then, for $l \le K$ 
$$
P(r_1) \dots P(r_l) = r_1 \circ \dots \circ r_l \circ 1^{\circ (K-l) } \text{ mod } F^{l-1} \Sym^K R. 
$$
From this it follows that the monomials
$$
P(r_{\alpha_1})^{m_1} \dots P(r_{\alpha_i})^{m_K}
$$
where the $m_i \ge 0$ and $\sum m_i \le K$, form a basis for $\Sym^K R$, as the $r_{\alpha_i}$ run over a basis for $R_0$.  Therefore there are no possible polynomial relations of degree $\le K$.

\end{proof}

\begin{proof}[Proof of the proposition]

Recall from the proof of lemma \ref{lemma_flat_adhm} that there is a Zariski open subset 
$$W \subset \mc{M}_{N\mid M,K}^{c}$$ 
which is the twisted cotangent bundle of an affine variety, say $V$, where 
$$
V \subset \left(  (\C^\times)^K \times \C^{ (N-1)K \mid MK} \right)/S_K 
$$ 
is the affine Zariski open subset where we remove the small diagonal from $(\C^\times)^K$.  We will describe this open subset $W$ bundle explicitly.    

Let us first recall the description of $\mc{M}^{c}_{N\mid M,K}$. This variety consists of the set of matrices $B_1,B_2 \in \mf{gl}(K)$, $I: \C^{N \mid M} \to \C^K$, and $J : \C^K \to \C^{N \mid M}$.  These satisfy, as usual, the moment map equation
$$
[B_1,B_2] + IJ = c \op{Id}, 
$$
and are taken up to the action of $GL(K)$.  We will let $I_\alpha \in \C^K$, and $J_{\alpha} : \C^K \to \C$, refer to the components of $I$ and $J$, where $\alpha \in \{1,\dots,N,\br{1},\dots,\br{M}\}$.  

To lie in the Zariski open subset $W$, we will constrain $B_1$ and $I_1$ as follows. We require that $B_1$ is diagonal, with distinct eigenvalues $\lambda_1,\dots,\lambda_K$; and we require that the first component $I_1 \in \C^K$ of $I$ is the vector $(1,\dots,1) \in \C^K$.  Everything is taken up to permutation by the symmetric group $S_K$.  

Let us fix these constraints on $B_1$ and $I_1$.  The moment map constraints will then fix the off-diagonal entries of $B_2$ and the first component $J_1$ of $J$.  The diagonal entries of $B_2$ and the remaining components $I_\alpha$, $J_\alpha$ of $I$ and $J$ will be free to vary.    

To fix $J_1$ and the off-diagonal elements of $B_2$, let us write out the moment map equation in coordinates. We find
$$
 (\lambda_i - \lambda_j) B_{2,ij} + J_{1,j} + \sum_{\alpha \neq 1} I_{i,\alpha} J_{\alpha,j} = c \delta_{ij}. \
$$ 
We can solve for $J_{1,j}$ by using the case $i = j$, in which case
$$
J_{1,j} = c -\sum_{\alpha \neq 1} I_{j,\alpha} J_{\alpha,j}. 
$$
We can then solve for $B_{2,ij}$ when $i \neq j$, to find
$$
B_{2,ij} = (\lambda_i - \lambda_j)^{-1} \left(\sum_{\alpha \neq 1} (I_{j,\alpha} - I_{i,\alpha})J_{\alpha,j} - c  \right).
$$
We let $\mu_i$ denote the diagonal entries of $B_{2}$, so that $B_{2,ii} = \mu_i$.  

In this way, we described the open subset $W \subset \mc{M}^{c}_{N \mid M,K}$ as the quotient by $S_K$ of an affine super variety with coordinates $\lambda_i$, $\mu_i$, $J_{\alpha,j}$ and $I_{i,\alpha}$ when $\alpha \neq 1$.  There are $2 K + 2(N-1) K$ bosonic coordinates and $2MK$ fermionic coordinates.   Since the $\lambda_i$ must be distinct, and we are quotienting by $S_K$, we find that 
$$
\Oo(W) \simeq  \C[\lambda_i,\mu_i, I_{i,\alpha}, J_{\alpha,j}, (\lambda_i - \lambda_j)^{-1}]^{S_K}.  
$$

Now let us consider the functions 
$$
F_{\alpha,\beta,k,l} = \op{Tr}_{\C^K} (I E_{\alpha \beta} J B_1^k B_2^l ) \in \Oo(W). 
$$
in these coordinates.  Our aim is to show that these functions do not satisfy any polynomial relation of degree $\le K$ in $\Oo(W)$. For this purpose, it is not necessary to obtain a complete expression for these functions in our coordinates.  We need only to understand a certain leading term.

Let grade the ring $\Oo(W)$ by giving the variables $\lambda_i$ degree $1$, $(\lambda_i - \lambda_j)$ degree $-1$, and the other generators degree $0$.  Let us expand each element $F_{\alpha,\beta,k,l}$ as a sum of homogeneous elements with respect to this grading, 
$$
F_{\alpha,\beta,k,l} = F^{(k)}_{\alpha,\beta,k,l} + F^{(k-1)}_{\alpha,\beta,k,l}+ \dots + F^{(l)}_{\alpha,\beta,k,l}.
$$
We can describe the term $F^{(k)}_{\alpha,\beta,k,l}$ of highest weight explicitly. We find 
\begin{align*} 
F^{(k)}_{1,1,k,l} &=\sum_i (c - \sum_{\alpha \neq 1} I_{i,\alpha}J_{\alpha,i}) \lambda_i^k \mu_i^l\\ 
F^{(k)}_{\alpha,\beta,k,l} &= \sum I_{i,\alpha}J_{\beta,i} \lambda_i^k \mu_i^l  \text{  if } \alpha, \beta \neq 1\\
 F^{(k)}_{1,\beta,k,l} &= \sum J_{\beta,i} \lambda_i^k \mu_i^l  \text{ if } \beta \neq 1\\    
 F^{(k)}_{\alpha,1,k,l} &= \sum I_{i,\alpha} (c - \sum_{\beta \neq 1} I_{i,\beta}J_{\beta,i}) \lambda_i^k \mu_i^l  \text{ if } \alpha \neq 1.
\end{align*}    
Note that in each case we find a power sum of the kind discussed in the previous lemma, and that if $c \neq 0$ all these power sums are linearly independent. (If $c = 0$, then there is a linear relation  $\sum_{\alpha} F_{\alpha,\alpha,k,l}^{(k)} = 0$ among these polynomials).

We are interested in potential polynomial relations among the elements $F_{\alpha,\beta,k,l}$.  Let us introduce the polynomial ring $\C[\mc{F}_{\alpha,\beta,k,l}]$ on free generators bearing the same indices as the elements $F_{\alpha,\beta,k,l}$.  Let us grade this polynomial ring by giving the generator $\mc{F}_{\alpha,\beta,k,l}$ weight $k$.  There is a map
$$
\rho : \C[\mc{F}_{\alpha,\beta,k,l}] \to \Oo(W)  
$$
sending the generator $\mc{F}_{\alpha,\beta,k,l}$ to the function $F_{\alpha,\beta,k,l}$.  This does not preserve the gradings on both sides. However, we can filter each algebra by saying that $F^n \C[\mc{F}_{\alpha,\beta,k,l}]$ consists of elements of weight $\le n$, and that $F^n \Oo(W)$ consists of elements of weight $\le n$, using the gradings we have defined on each algebra.  Then, $\rho$ is a map of filtered algebras.  

In each case, the filtration is split, so that we can identify the filtered algebra with the associated graded.  We thus have a map
\begin{align*} 
 \op{Gr} \rho : \C[\mc{F}_{\alpha,\beta,k,l}] & \to \Oo(W) \\
\mc{F}_{\alpha,\beta,k,l} & \mapsto F^{(k)}_{\alpha,\beta,k,l}.  
\end{align*} 

Now suppose that $P \in \C[\mc{F}_{\alpha,\beta,k,l}]$.  Let us expand $P$ as a sum
$$
P = P^{(n)} + P^{(n-1)} + \dots + P^{(0)}
$$
of homogeneous elements.  

Suppose that $P$ is of degree $\le K$ as a polynomial in the generators $\mc{F}_{\alpha,\beta,k,l}$, and that $\rho(P)= 0$.   We need to show that this leads to a contradition.  Since $\rho$ is a map of filtered algebras, we find
$$
\rho(P) = (\op{Gr} \rho) (P^{(n)} )  + \text{ lower order terms} 
$$
where the lower order terms are in $F^{n-1} \Oo(W)$.   Thus, if $\rho(P) = 0$, we must also have $(\op{Gr} \rho) (P^{(n)}) = 0$.  As we have seen, $(\op{Gr} \rho) (P^{(n)})$ is the polynomial $P^{(n)}$ applied to the power sums $F_{\alpha,\beta,k,l}^{(k)}$.  

Thus, we have found that there must be a polynomial relation of degree $\le K$ among these power sums.  However, the previous lemma tells us that this can not be the case.  

\end{proof} 

\section{Uniqueness of the deformation of $U(\op{Diff}(\C) \otimes \mf{gl}_N)$}
\label{section_uniqueness}

In this section we will prove the following theorem.
\begin{theorem} 
Consider the algebras $A_{N,\hbar,c}^{QFT}$ and $A_{N,\hbar,c}^{comb}$ over $\C[[\hbar]]$, each of which deforms $U(\Oo_{c}(\C^2) \otimes \mf{gl}_N)$.  Then, 
\begin{enumerate} 
\item The central element $1 \otimes \op{Id} \in \Oo_c(\C^2) \otimes \mf{gl}(N)$ quantizes to  central elements $\kappa \in A_{N,\hbar,c}^{QFT}$ and $\kappa \in A_{N,\hbar,c}^{comb}$. 
\item  There is an isomorphism of (topological) associative algebras  
$$
A_{N,\hbar_{QFT},1}^{QFT} \iso A_{N,\hbar_{comb},1}^{comb}.
$$
which sends
$$
\hbar_{QFT} \mapsto 2^{5} \pi^{2} \hbar_{comb} + f_2(\kappa) \hbar_{comb}^2  + \dots
$$
where the $f_i$ are polynomials of degree at most $i-1$. 
\end{enumerate}
Further, the isomorphism is unique provided it satisfies certain natural properties we will discuss below. 
\end{theorem}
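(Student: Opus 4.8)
The plan is to deduce the theorem from the uniqueness of \emph{uniform-in-$R$} deformations and then to bootstrap the isomorphism order by order in $\hbar$. Recall that both $A^{QFT}_{N,\hbar,c}$ and $A^{comb}_{N,\hbar,c}$ have super-analogues $A^{QFT}_{N+R\mid R,\hbar,c}$ and $A^{comb}_{N+R\mid R,\hbar,c}$, that passing to $Q$-cohomology for a $Q$ of rank $(0\mid 1)$ intertwines the $R$- and $(R-1)$-families, and (the genuinely hard input, proved elsewhere in the paper by the Loday--Quillen--Tsygan argument) that the group classifying compatible-in-$R$ first-order deformations of $U(\Oo_c(\C^2)\otimes\mf{gl}(N+R\mid R))$ is the free rank-one module $HH^2_{\mathrm{unif}}=\C[\kappa]$ over $\C[\kappa]$. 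A deformation, or an isomorphism of deformations, of the algebras for all $N$ simultaneously is the same as one compatible with the $R$-tower, so it suffices to work in this uniform-in-$R$ category and specialise at the end.

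\emph{First order.} By Proposition~\ref{proposition_combinatorial_one_loop}, Proposition~\ref{proposition_commutator_QFT} and Lemma~\ref{lemma_combinatorial_one_loop}, the first-order brackets of $E_{\alpha\beta}z_1$ with $E_{\gamma\delta}z_2$ in the two algebras are both nonzero and agree once $\hbar$ is rescaled by $2^{-5}\pi^{-2}$. Since the class of a first-order deformation in $HH^2_{\mathrm{unif}}$ is detected by exactly this commutator of degree-one generators, and since the value obtained is $\kappa$-independent, each of the two classes is a \emph{generator} of the free $\C[\kappa]$-module, and the two generators differ by the scalar $2^{5}\pi^{2}$. This pins down the leading coefficient of the reparametrisation as $2^{5}\pi^{2}$ and produces the isomorphism modulo $\hbar^{2}$.

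\emph{Induction and conclusion.} Suppose that, after a reparametrisation $\hbar_{QFT}\mapsto 2^{5}\pi^{2}\hbar_{comb}+\sum_{j\le n}f_j(\kappa)\hbar_{comb}^{\,j}$ with coefficients in the centre of $A^{comb}$, we have an isomorphism $\Phi_n$ modulo $\hbar_{comb}^{n+1}$. The obstruction to lifting $\Phi_n$ one step is a uniform-in-$R$ Hochschild $2$-cocycle, hence an element of $HH^2_{\mathrm{unif}}=\C[\kappa]$: coboundary contributions are absorbed by altering $\Phi_n$, and the residual $g(\kappa)$ times the generator is killed by enlarging the reparametrisation by $f_{n+1}(\kappa)\hbar_{comb}^{n+1}$, since adding $f_{n+1}(\kappa)\hbar_{comb}^{n+1}$ to $\hbar_{QFT}$ shifts the order-$(n+1)$ term of $A^{QFT}$ by $f_{n+1}(\kappa)$ times the generating cocycle (here $\kappa=\kappa^{comb}$, the empty disc, is manifestly central in $A^{comb}$, so the expression makes sense). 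Thus the isomorphism extends to all orders, an inverse-limit argument upgrades it to an isomorphism of the topological complete filtered algebras, and transporting $\kappa^{comb}$ across it defines the central element $\kappa\in A^{QFT}$ lifting $1\otimes\op{Id}$ (whose centrality is otherwise not evident).

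\emph{The degree bound and uniqueness.} The estimate $\deg f_i\le i-1$ is not formal and must be proved separately in each model. In $A^{comb}$ one shows, using relations (P2), (P3) and (M) — each of which contributes at most one glued or capped surface per power of $\hbar$ — that the $\hbar^{i}$-part of a commutator of reduced surfaces is a sum of reduced surfaces containing at most $i-1$ disc components with no type-$1$ marked point, each such disc being a copy of $\kappa$ (the same book-keeping as in the proof of Proposition~\ref{proposition_classical_limit_universal_enveloping}); in $A^{QFT}$ the analogue is that an $i$-loop OPE contribution has at most $i-1$ closed colour loops, hence at most that power of $\kappa$. For uniqueness one imposes the natural normalisations — compatibility with the $\C^\times$-weight gradings, reduction to $\mathrm{id}_{U(\op{Diff}(\C)\otimes\mf{gl}_N)}$ at $\hbar=0$, and no $\hbar^{0}$-term in the reparametrisation — so that the freedom at each inductive stage is a torsor over the outer-derivation group $HH^1_{\mathrm{unif}}$; the corresponding automorphisms are inner, giving the isomorphism up to conjugation by an inner automorphism. \textbf{The main obstacle} is precisely this last pair of diagrammatic bounds: they are the only place where the two presentations must be analysed by hand and independently, since nothing a priori forces the combinatorial and field-theoretic bounds to be simultaneously sharp.
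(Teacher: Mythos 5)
Your proposal takes essentially the same route as the paper: the uniform-in-$R$ tangent-complex computation ($H^0 T = \C[\kappa]$, $H^1 T = 0$, Theorem~\ref{theorem_deformation_computation}) is the black-box input, the first-order cocycles are matched via the explicit commutator computations, and the order-by-order bootstrap you write out is exactly what the paper packages as Corollary~\ref{corollary_unique_quantization}. The one detail you elide is that the relevant moduli problem deforms the \emph{pair} $(U(\mf{gl}(N+R\mid R)\otimes B),\, f)$ with $f$ the structure map from $U(\mf{sl}(N+R\mid R))$, which supplies the $\mf{sl}(1\mid 1)$-action needed to define the $Q$-cohomology passage between adjacent $R$; since you treat the cohomology computation as a black box, this omission does not affect the rest of your argument.
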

This theorem thus says that, after reparameterizing $\hbar$ in a $\kappa$-dependent way, the two algebras are isomorphic.  The central element $\kappa \in A_{N,\hbar,c}^{comb}$ is such that under the homomorphism
$$
A_{N,\hbar,c} \to \Oo_{\hbar}(\mc{M}^c_{N,K})
$$
the central element $\kappa$ gets sent to $K$. This means that, from a physics point of view, we should interpret the parameter $\kappa$ as being the parameter $K$ in a large $K$ expansion.  Thus, our isomorphism occurs after reparameterizing $\hbar$ in a $K$-dependent way. 

I should also remark that there is some inherent ambiguity in the definition of $A_{N,\hbar,c}^{QFT}$, because our $5$-dimensional gauge theory does not admit a unique quantization. Rather, at each order in $\hbar$, we are free to add on two linearly independent terms to the Lagrangian.  When translated, by Koszul duality, into deformations of the algebra $U(\Oo_c(\C^2) \otimes \mf{gl}(N))$, these two linearly independent deformations are no longer linearly dependent. This is because, as discussed in section \ref{sec:Koszul duality}, the action of $\C^\times$ which scales $\C^2$ preserves the associative product in the $\R$-direction of $\R \times \C^2$. It also scales the parameters $c,\hbar$ with weight $2$. This allows us to scale away one of the two deformations.  

It follows from this and the discussion in section \ref{sec:Koszul duality} that different choices of quantization of our field theory lead to algebras $A_{N,\hbar,c}^{QFT}$ which are related by a change of coordinates of the parameter $\hbar$.  We can therefore normalize our choice of quantization so that the functions $f_i(\kappa)$, when $i > 1$, vanish when $\kappa = 0$. With this normalization, the algebras become isomorphic when we set $\kappa = 0$ and scale $\hbar$ appropriately. 

\subsection{}
The rest of this section will be devoted to the proof of this theorem.  The proof is entirely algebraic, but is rather non-trivial.  The basic strategy is to use deformation theory to relate different quantizations of the algebra $U(\Oo_c(\C^2) \otimes \mf{gl}(N))$. For this, we need some understanding of the Hochschild cohomology of this algebra.  This Hochschild cohomology is very difficult to compute: for example, in the case $N = 1$, this Hochschild cohomology is related to the Lie algebra cohomology of the Lie algebra $\Oo_c(\C^2) = \op{Diff}(\C)$, which is unknown \cite{Fuk12}. 

If $N$ is sufficiently large, however, the Hochschild cohomology computations simplify: there is a certain filtered piece of Hochschild cohomology that we can compute exactly for sufficiently large $N$, and as $N \to \infty$ we can compute all the Hochschild cohomology. 

We need to leverage this stabilization of Hochschild cohomology to constrain quantizations of $U(\Oo_c(\C^2) \otimes \mf{gl}(N))$.  For this, we will consider the super Lie algebras $\mf{gl}(N +R \mid R)$ and the algebra $U(\Oo_c(\C^2) \otimes \mf{gl}(N + R \mid R))$.  For $R$ sufficiently large, the Hochschild cohomology stabilizes as above.  Further, the Lie algebras $\mf{gl}(N +R \mid R)$ are related for different values of $R$: if we let $Q \in \mf{gl}(N +R \mid R)$ denote a fermionic matrix of rank $(0 \mid 1)$, then $\mf{gl}(N+R-1 \mid R-1)$ is the $Q$-cohomology of $\mf{gl}(N+R \mid R)$.  This relationship extends to all objects built from these super Lie algebras. In particular, we find that the $Q$-cohomology of $U(\Oo_c(\C^2) \otimes \mf{gl}(N+R \mid R))$  is $U(\Oo_c(\C^2) \otimes \mf{gl}(N+R-1\mid R_1))$. 

This suggests a strategy for relating different quantizations of $U(\Oo_c(\C^2) \otimes \mf{gl}(N+R \mid R))$.  We will look for families of quantizations $A_{N+R \mid R,\hbar,c}$ of $U(\Oo_c(\C^2) \otimes \mf{gl}(N+R \mid R))$, where the algebra $A_{N+R \mid R,\hbar,c}$ is the $Q$-cohomology of $A_{N+R+1\mid R+1,\hbar,c}$, as above.  It turns out that the obstruction and deformation groups controlling the construction of such families are the stable Hochschild cohomology groups of $U(\Oo_c(\C^2) \otimes \mf{gl}(N+R \mid R) ) $.  Since we can compute these stable groups, we then have good control over the problem.  

\subsection{}
Let us now introduce some algebraic language for discussing families of quantizations $A_{N+R \mid R,\hbar,c}$ related as $R$ varies in the way suggested above.  We will use the language of formal moduli problems \cite{Lur12}, which is a way of formalizing the ideas of deformation theory.  A formal moduli problem is something which assigns a  groupoid (or more generally a simplicial set) to every Artinian ring, in a way which is functorial and which satisfies certain additional axioms.  One can perform obstruction-theory arguments with any formal moduli problem.  The language of formal moduli problems isn't strictly necessary for what we're doing, but using this language makes the arguments clearer.   

Let us fix two associative algebras $B_1,B_2$ with a map $f : B_1 \to B_2$.  
\begin{definition} 
Define the formal moduli problem $\op{Def}(B_2,f)$ to be the formal moduli space of deformations of the pair $(B_2,f)$.
\end{definition}
More formally, given an Artinian dg ring $S$ with maximal ideal $m$, the $\infty$-groupoid of $S$-points of $\op{Def}(B_2,f)$ consists of  
\begin{enumerate} 
\item Algebras $B_{2,S}$, flat over $S$, with an isomorphism to $B_2$ when reduced modulo $m$. 
\item Maps $f_S : B_1 \otimes S \to B_{2,S}$ of $S$-algebras which reduce to $f$ modulo $m$.  
\end{enumerate} 
In this section, we will compute the tangent complex of this formal moduli problem in terms of certain Hochschild cohomology groups of $B_1$ and $B_2$.

As with any formal moduli problem \cite{Lur12}, we can write $\op{Def}(B_2,f) $ as the Maurer-Cartan functor associated to some $L_\infty$ algebra.  Let $CH^\ast(B_2)$ denote the Hochschild cochain complex of $B_2$.  For simplicity, we will assume that $B_1$ and $B_2$ are algebras concentrated in degree $0$. We will first describe the individual homotopy Lie algebras describing deformations of $B_2$ and deformations of the map $f$, and then explain how to put them together to get the homotopy Lie algebra controlling deformations of the pair $(B_2,f)$.   

Let $CH_+^\ast(B_2)$ denote the part of the Hochschild complex of $B_2$ concentrated in degrees greater than zero. This is the cone of the natural map from $CH^\ast(B_2)$ to $B_2$.  Then, $CH_+^\ast(B_2)[1]$ is the dg Lie algebra controlling deformations of $B_2$.       If $\til{B}_2$ is a quasi-free resolution of $B_2$, then there is a quasi-isomorphism of dg Lie algebras
$$
CH_+^\ast(B_2) \simeq \op{Der}^\ast(\til{B}_2)
$$
where on the right hand side we have the dg Lie algebra of derivations.

We can view $B_2$ as a $B_1$ bimodule via the map $f$. Thus, we can consider the Hochschild cochain complex $CH^\ast(B_1,B_2)$ of $B_1$ with coefficients in the bimodule $B_2$, and the subcomplex $CH^\ast_+(B_1,B_2)$ of cochains in positive degree.  We can define a natural dg associative algebra structure on $CH^\ast(B_1,B_2)$, as follows. If $\mu_1 \in CH^n(B_1,B_2)$ and $\mu_2 \in CH^m(B_1,B_2)$ then we define
$$
( \mu_1 \ast\mu_2) (\beta_1,\dots,\beta_{n+m} ) = \mu_1(\beta_1,\dots,\beta_n) \mu_2(\beta_{n+1},\dots,\beta_{n+m}).  
$$
The $\beta_i$ are in $B_1$ and on the right hand side we are using the product in $B_2$. This product is manifestly associative, and so the corresponding Lie bracket satisfies the Jacobi identity.  One can check that this product is compatible with the Hochschild differential, and makes $CH^\ast(B_1,B_2)$ into a dg associative algebra. We are interested in $CH^\ast_+(B_1,B_2)$, which forms a sub dg Lie algebra.  

This dg Lie algebra controls deformations of the map $f : B_1 \to B_2$ as a map of homotopy associative algebra. To see this, consider deforming $f$ to $f +  g$, where $g$ is an $A_\infty$ map and so has components $g_n : B_1^{\otimes n} \to B_2$ for $n \ge 1$. We can treat $g$ as an element of $CH^\ast_+(B_1,B_2)$.  One can check directly that $f+g$ defines an $A_\infty$ map if and only if $g$, viewed as an element of $CH^\ast_+(B_1,B_2)$, satisfies the Maurer-Cartan equation.  

So far, we have understood the formal moduli problems describing deformations of $B_2$ and of $f$. Next, we need to put these together to find a description of the formal moduli problem controlling deformations of the pair $(B_2,f)$. 
 
Since $\op{Def}(B_2,f)$ is a formal moduli problem, it's shifted tangent space $T[-1] \op{Def}(B_2,f)$ has a homotopy Lie algebra structure \cite{Lur12}.  There is a fibre sequence of formal moduli problems
$$
\op{Def}(f) \to \op{Def}(B_2,f) \to \op{Def}(B_2)
$$ 
which becomes a fibre sequence of homotopy Lie algebras
$$
CH^\ast_+ (B_1,B_2) \to T[-1] \op{Def}(B_2,f) \to CH^\ast_+(B_2,B_2)[1]. 
$$ 
Therefore, we can describe $T[-1] \op{Def}(B_2,f)$ with its homotopy Lie algebra structure as being built from the direct sum $CH^\ast_+(B_2,B_2) [1] \oplus CH^\ast_+(B_1,B_2)$ where the $L_\infty$ structure is deformed by ``triangular'' terms. 

The higher  terms will not be important for us. What will be important, however, is to notice that $l_1$ is the natural restriction map
$$
CH^\ast_+(B_2,B_2)[1] \to CH^\ast_+(B_1,B_2).
$$
We can see this as follows. Suppose we have a first order deformation of the algebra $B_2$ into an $A_\infty$ algebra, where only non-zero $A_\infty$ structure map that is deformed is
$$
m_n : B_2^{\otimes n} \to B_2.
$$
The deformation parameter is a parameter $\eps$ of square zero and degree $2-n$.  

We would like to know if the map $f$ deforms to an $A_\infty$ map $f+\eps g$ from $B_1$ to $B_2[\eps]/\eps^2$ with this deformed product.  We can do this if we can find some
$$
g:  B_1^{\otimes n-1} \to B_2
$$
such that 
$$
\d g = f^\ast m_n
$$
where $f^\ast m_n : B_1^{\otimes n} \to B_2$ is the pull-back of $m_n$ along the map $f$, and $\d g$ is the differential on $CH^\ast_+(B_1,B_2)$.  In other words, $f^\ast m_n$ is the obstruction to finding a deformation of the map $f$ compatible with a given deformation of the algebra $B_2$.

We have thus shown that  $T[-1] \op{Def}(B_2,f)$ is the cone of the natural restriction map $$
CH^\ast_+(B_2,B_2)  \to CH^\ast_+(B_1,B_2).
$$
Note also that we get the same cone of we replace $CH^\ast_+(B_1,B_2)$ by $CH^\ast(B_1,B_2)$ and $CH^\ast_+(B_2,B_2)$ by $CH^\ast(B_2,B_2)$. The point is that if we do this, we add a copy of $B_2$ to each complex and they cancel out when we form the cone.

This fact will be useful for later, so we will frame it as a lemma.
\begin{lemma}
\label{lemma_les} 
There is a long exact sequence in cohomology
$$
\dots \to H^i(T\op{Def}(B_2,f)  ) \to HH^{i+2}(B_2,B_2) \to HH^{i+2}(B_1,B_2) \to \dots
$$
where the map
$$
HH^{i+2}(B_2,B_2) \to HH^{i+2}(B_1,B_2) 
$$
is the natural restriction map.  
\end{lemma}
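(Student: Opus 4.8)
The plan is to read the long exact sequence directly off the identification, obtained in the discussion preceding the lemma, of the shifted tangent complex $T[-1]\op{Def}(B_2,f)$ with the mapping cone of the restriction map; the lemma is nothing more than the relative version, for a pair $(B_2,f)$, of the exact sequence $\cdots\to\op{Def}^0(A)\to HH^2(A)\to H^2(A)\to\cdots$ recalled earlier. First I would record the fibre sequence of formal moduli problems $\op{Def}(f)\to\op{Def}(B_2,f)\to\op{Def}(B_2)$ and, invoking the formalism of \cite{Lur12}, pass to shifted tangent complexes to obtain the fibre sequence of homotopy Lie algebras
$$
CH^\ast_+(B_1,B_2)\longrightarrow T[-1]\op{Def}(B_2,f)\longrightarrow CH^\ast_+(B_2,B_2)[1].
$$
The essential input, already supplied by the explicit obstruction computation with $f^\ast m_n$, is that the attaching map $l_1\colon CH^\ast_+(B_2,B_2)[1]\to CH^\ast_+(B_1,B_2)$ is the natural restriction; hence this fibre sequence exhibits $T[-1]\op{Def}(B_2,f)$ as $\op{cone}\bigl(CH^\ast_+(B_2,B_2)\to CH^\ast_+(B_1,B_2)\bigr)$, the arrow being restriction.

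The next step is the observation, also noted above, that the cone is unchanged if one replaces the truncated complexes $CH^\ast_+(B_i,B_2)$ by the full Hochschild complexes $CH^\ast(B_i,B_2)$: doing so adjoins a copy of $B_2$ to each term, and the two copies cancel in the cone. Thus
$$
T[-1]\op{Def}(B_2,f)\;\simeq\;\op{cone}\bigl(CH^\ast(B_2,B_2)\xrightarrow{\ \op{res}\ }CH^\ast(B_1,B_2)\bigr),
$$
and the distinguished triangle $CH^\ast(B_2,B_2)\xrightarrow{\op{res}}CH^\ast(B_1,B_2)\to T[-1]\op{Def}(B_2,f)\to CH^\ast(B_2,B_2)[1]$ yields a long exact sequence in cohomology. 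It then only remains to match indices: using $H^i\bigl(T\op{Def}(B_2,f)\bigr)=H^{i+1}\bigl(T[-1]\op{Def}(B_2,f)\bigr)$ together with the degree shift in the connecting map of the triangle, the piece of the long exact sequence reading $H^{i+1}(T[-1]\op{Def})\to H^{i+2}\bigl(CH^\ast(B_2,B_2)\bigr)\to H^{i+2}\bigl(CH^\ast(B_1,B_2)\bigr)$ becomes exactly
$$
\cdots\to H^i\bigl(T\op{Def}(B_2,f)\bigr)\to HH^{i+2}(B_2,B_2)\to HH^{i+2}(B_1,B_2)\to\cdots,
$$
with the displayed arrow induced by $\op{res}$, i.e.\ the restriction map on Hochschild cohomology.

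Since the real work --- the identification of the tangent complex as a cone and of the attaching map as restriction --- has been carried out in the paragraphs before the lemma, what is left is bookkeeping of shifts. The one point deserving care, and which I regard as the main (modest) obstacle, is confirming that the fibre sequence of formal moduli problems genuinely descends to a distinguished triangle of cochain complexes with the asserted attaching map, rather than merely a sequence exact on $\pi_0$; this is standard from \cite{Lur12}, but it is where one must be precise about which complex carries which shift, and hence about the $i\mapsto i+2$ in the statement. An alternative, resolution-theoretic route --- present $CH^\ast(B_i,B_2)$ by derivations of quasi-free resolutions $\til B_i$, arrange the restriction as an honest surjection of complexes, and take the long exact sequence of the resulting short exact sequence --- would bypass the formal-moduli language at the cost of choosing resolutions; I would fall back on it only if the abstract argument proved awkward to pin down.
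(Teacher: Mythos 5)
Your proof is correct and follows essentially the same route as the paper: the lemma there is simply a restatement of the preceding discussion, which identifies $T[-1]\op{Def}(B_2,f)$ with the cone of the restriction map $CH^\ast_+(B_2,B_2)\to CH^\ast_+(B_1,B_2)$, notes that passing to the untruncated Hochschild complexes does not change the cone since the two copies of $B_2$ cancel, and then reads off the long exact sequence of the resulting triangle with the same shift bookkeeping you carry out.
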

Another remark that will be useful later is that both $CH^\ast(B_1,B_2)$ and $CH^\ast(B_2,B_2)$ are modules over the center $Z(B_2)$.  The formula for the module structure is very simple: an element $\kappa \in Z(B_2)$ acts on a Hochschild cochain 
$$
\mu : B_i^{\otimes n} \to B_2
$$
(for $i = 1,2$) by post-composing with multiplication by $\kappa$.  One checks that this preserves the differential.  Evidently, the restriction map
$$
CH^\ast(B_2,B_2) \to CH^\ast(B_1,B_2)
$$
is a map of $Z(B_2)$-modules.  It follows from this that the tangent complex to the moduli problem $\op{Def}(B_2,f)$ is a module for $Z(B_2)$.   

\subsection{}
Let us specialize the deformation problem a bit to one closer to that we are ultimately interested in.  Let $B$ be any algebra, and consider the pair of super-algebras $U(\mf{gl}(N+R \mid R)\otimes B)$ and $U(\mf{sl}(N+R \mid R))$.  There is a homomorphism from the second algebra to the first which we can call $f$. 
\begin{definition} 
Let $\mc{M}_{N+R \mid R, B}$ be the formal moduli problem $\op{Def}( U(\mf{gl}(N+R \mid R) \otimes B), f)$.  
\end{definition}
Note that the tangent complex $T \mc{M}_{N+R \mid R,B}$ is, as I mentioned above, a module for the center $Z(U(\mf{gl}(N+R \mid R) \otimes B) ) $.  There is a homomorphism
$$
\C[\kappa] \to Z(U(\mf{gl}(N+R \mid R) \otimes B) )   
$$
sending $\kappa$ to $1 \otimes \op{Id}_{\mf{gl}(N+R \mid R)}$.  Thus, in particular, the tangent complex for $\mc{M}_{N+R \mid R,B}$ is a $\C[\kappa]$-module.

We will define a map of formal moduli problems
$$
\mc{M}_{N+R \mid R,B} \to \mc{M}_{N+R-1 \mid R-1,B} 
$$
implementing the $Q$-cohomology procedure we mentioned earlier.  To make this $Q$-cohomology construction work better with differential graded constructions, we will phrase it as adding $Q$ to the differential rather than as taking cohomology.

Thus, in general, suppose we have a bounded complex of super-vector spaces, say $V$, which is acted on by $\mf{sl}(1 \mid 1)$, and suppose the action of the one-dimensional maximal torus lifts to an action of the rank one maximal torus $H$ of the group $SL(1 \mid 1)$, and that the weights of the action of $H$ on $V$ are all even.    Let $Q \in \mf{sl}(1\mid 1)$ be a matrix of rank $(0\mid 1)$. Note that this maximal torus $H \iso \C^\times$ acts on $Q$ with weight $2$.   

Introduce a parameter $t$ which is of weight $-2$ under $H$, is odd under the super $\Z/2$-grading, and is of cohomological degree $1$.  Note that since $t$ is of both cohomological degree $1$ and odd in the super grading, it is a commuting as opposed to anti-commuting variable. Then, we can introduce a new complex $V((t))$ with differential $\d_V + t Q$.  This differential is of cohomological degree $1$, super-degree $0$ and is $H$-invariant.  We can then form the $H$-invariants $V((t))^{H}$.   This procedure has the effect of rearranging the grading of $V$ according to how $H$ acts, and then introducing a new differential (now of cohomological degree $1$) by adding $Q$ to the existing differential. 

Let us see how this produces maps of formal moduli problems
$$
\mc{M}_{N+R \mid R,B} \to \mc{M}_{N+R-1 \mid R-1,B}. 
$$
Suppose we have an Artinian ring $S$, and an $S$-point of $\mc{M}_{N+R \mid R,B}$. This consits of a super-algebra $A_{N+R \mid R,S}$ over $S$ which deforms $U(\mf{gl}(N+R \mid R)\otimes B)$, and a homomorphism
$$
f_S : U(\mf{sl}(N+R \mid R)) \to A_{N+R \mid R,S} 
$$
deforming the obvious homomorphism to $U(\mf{gl}(N+R \mid R)\otimes B)$.

Choosing a copy of $\mf{sl}(1 \mid 1)$ inside of $\mf{sl}(R \mid R)$, we find that $\mf{sl}(1 \mid 1)$ acts on both $A_{N+R \mid R,S}$ and of course on $U(\mf{sl}(N+R \mid R))$. We can apply the construction described above to find 
$$
f_S^Q : U^Q(\mf{sl}(N+R\mid R)) \to A^{Q}_{N+R \mid R,S}. 
$$ 
We find that $A^{Q}_{N+R \mid R,S}$ is a deformation of an algebra canonically quasi-isomorphic to $U(\mf{gl}(N+R -1 \mid R-1 )\otimes B)$, and that $U^Q(\mf{sl}(N+R \mid R))$ is canonically quasi-isomorphic to $U(\mf{sl}(N+R-1\mid R-1))$. In this way we have produced an $S$-point of the formal moduli problem $\mc{M}_{N+R-1\mid R-1,B}$.  
 
\begin{definition}
Let $\mc{M}_{N+\infty\mid \infty,B}$ denote the homotopy limit
$$
\mc{M}_{N+\infty \mid \infty,B} = \lim \mc{M}_{N+R \mid R,B} 
$$
of these formal moduli problems under these maps. 
\end{definition}
\begin{lemma}
The tangent complex $T \mc{M}_{N+\infty \mid \infty,B}$ is a $\C[\kappa]$-module.  
\end{lemma}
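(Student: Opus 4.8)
The plan is to exhibit $T\mc{M}_{N+\infty\mid\infty,B}$ as a homotopy limit of the tangent complexes $T\mc{M}_{N+R\mid R,B}$ along maps of $\C[\kappa]$-modules, and then to invoke the fact that such a limit is computed in $\C[\kappa]$-modules. First I would use that the tangent complex functor on formal moduli problems preserves homotopy limits: in Lurie's correspondence \cite{Lur12} a formal moduli problem corresponds to an $L_\infty$ algebra and its shifted tangent complex is the underlying complex of that algebra, so, since $\mc{M}_{N+\infty\mid\infty,B}$ is by definition the homotopy limit $\lim_R \mc{M}_{N+R\mid R,B}$, we obtain a quasi-isomorphism $T\mc{M}_{N+\infty\mid\infty,B}\simeq \lim_R T\mc{M}_{N+R\mid R,B}$, the limit being taken along the maps induced by the transition maps $\mc{M}_{N+R\mid R,B}\to\mc{M}_{N+R-1\mid R-1,B}$.

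Second, each $T\mc{M}_{N+R\mid R,B}$ is, as noted above, a module over $Z(U(\mf{gl}(N+R\mid R)\otimes B))$, hence over $\C[\kappa]$ via the ring map $\kappa\mapsto 1\otimes\op{Id}_{\mf{gl}(N+R\mid R)}$: concretely $\kappa$ acts on the Hochschild-cochain model of $T\mc{M}_{N+R\mid R,B}$ by post-composition with multiplication by $1\otimes\op{Id}$, compatibly with the restriction map of Lemma \ref{lemma_les}. So the input data for the limit are $\C[\kappa]$-modules.

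Third — the step that requires actual work — I would check that the transition map $T\mc{M}_{N+R\mid R,B}\to T\mc{M}_{N+R-1\mid R-1,B}$ is $\C[\kappa]$-linear. Recall the transition map is built by the $Q$-cohomology recipe: adjoin an odd degree-one variable $t$ of $H$-weight $-2$, twist the differential by $tQ$ with $Q\in\mf{sl}(1\mid1)\subset\mf{sl}(R\mid R)$ of rank $(0\mid1)$, and take $H$-invariants. The element $1\otimes\op{Id}_{\mf{gl}(N+R\mid R)}$ is $H$-invariant and of $t$-weight zero, so it survives the operation $V\mapsto V((t))^{H}$ with differential $\d_V+tQ$ and stays central; moreover, under the canonical identification of the $Q$-cohomology of $\mf{gl}(N+R\mid R)$ with $\mf{gl}(N+R-1\mid R-1)$, and hence of $U^Q(\mf{gl}(N+R\mid R)\otimes B)$ with a deformation of $U(\mf{gl}(N+R-1\mid R-1)\otimes B)$, it is carried precisely to $1\otimes\op{Id}_{\mf{gl}(N+R-1\mid R-1)}$, because this identification matches the two identity matrices. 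Consequently the $Q$-cohomology operation intertwines the $\C[\kappa]$-module structures on the Hochschild models of $CH^\ast(B_2,B_2)$ and $CH^\ast(B_1,B_2)$ (for both the $R$ and $R-1$ pairs), and so the induced map on tangent complexes is $\C[\kappa]$-linear.

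The main obstacle is exactly this third step: tracking the central element $\kappa$ through the $Q$-cohomology construction at the level of the cochain models for the tangent complexes and verifying compatibility with the $L_\infty$ structures (in particular that the quasi-isomorphisms identifying $U^Q(\mf{gl}(N+R\mid R)\otimes B)$ and $U^Q(\mf{sl}(N+R\mid R))$ with their smaller counterparts match the respective distinguished central elements). Once that bookkeeping is in place the conclusion is formal: $\lim_R T\mc{M}_{N+R\mid R,B}$ is a homotopy limit of $\C[\kappa]$-modules along $\C[\kappa]$-linear maps, hence a $\C[\kappa]$-module, and therefore so is $T\mc{M}_{N+\infty\mid\infty,B}$.
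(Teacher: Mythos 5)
Your proposal is correct and follows the paper's own argument: at finite $R$ each $T\mc{M}_{N+R\mid R,B}$ is a $\C[\kappa]$-module, the transition maps are $\C[\kappa]$-linear, and the module structure passes to the homotopy limit. You usefully fill in the step the paper leaves implicit — tracking $1\otimes\op{Id}$ through the $Q$-cohomology construction to justify that the transition maps are $\C[\kappa]$-linear — but the route is the same.
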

\begin{proof}
At finite $R$, the tangent complex $T \mc{M}_{N+R \mid R,B}$ are $\C[\kappa]$-modules. The maps 
$$
T \mc{M}_{N+R \mid R,B} \to T \mc{M}_{N+R-1 \mid R-1,B}  
$$
are maps of $\C[\kappa]$-modules, so that the $\C[\kappa]$-module structures survives the formation of the limit. 
\end{proof}

\subsection{}
The first algebraic theorem we will prove is a description of the cohomology of the tangent complex of $\mc{M}_{N+\infty \mid \infty,B}$.  Before we describe the result, we need to introduce a little notation.  Consider the algebra $B \oplus B^\vee[-1]$ which is the square-zero extension of $B$ by the bimodule $B^\vee$, the linear dual of $B$.  In the examples we will consider, $B$ will typically be a countable-dimensional complex vector space, and we should give $B^\vee$ it's natural topology.  Then, we can consider the cyclic cohomology
$$
HC^\ast(B \oplus B^\vee[-1]).
$$
The cyclic cochain complex is, of course, built from multi-linear maps
$$
(B \oplus B^\vee)^{\otimes n} \to \C. 
$$
We require these to be continuous multilinear maps.  If we do this, we find, for example, that multilinear maps from $(B^\vee)^{\otimes n}$ to $\C$ are the same as elements of $B^{\otimes n}$.  

The introduction of topological vector spaces is not strictly necessary, as one can describe the continuous cyclic cochain complex of $B \oplus B^\vee[-1]$ entirely in algebraic terms, but I feel that using the language of topological vector spaces makes the presentation more palatable. 

There's a natural map
$$
HH^\ast(B,B)[-1] \to HC^\ast(B\oplus B^\vee[-1])
$$
which is an isomorphism onto the summand in $HC^\ast(B \oplus B^\vee[-1])$ consisting of cyclic cochains which only involve one copy of $B^\vee[-1]$ and any number of copies of $B$.  We will let
$$
\kappa \in HC^\ast(B \oplus B^\vee[-1]) 
$$
denote the image of the identity element in $HH^0(B,B)$ under this map.  Multiplying by $\kappa$ gives an action of $\C[\kappa]$ on the symmetric algebra $\Sym^\ast ( HC^\ast(B \oplus B^\vee[-1] ) [-1] ) $.

Let us introduce a graded vector space
$$
V = t \C[[t]] \oplus \bigoplus_{n \ge 1} (B^{\otimes n})_{C_n}[1]
$$ 
where the subscript $C_n$ indicates cyclic coinvariants, and the parameter $t$ is of degree $2$.  We can identify 
$$
V = HC^\ast (\C \oplus B^\vee[-1] ) / \C
$$
where we quotient by the unit element in $HC^0$.   Let $\kappa \in V$ be the identity element $1 \in B$ viewed as an element of $V$.  Multiplication by $\kappa$ gives an action of $\C[\kappa]$ on $\Sym^\ast V[-1]$.  
\begin{theorem} 
There is a long exact sequence
$$
\dots \to H^i T \mc{M}_{N+\infty\mid \infty,B} \to H^{i+2} \left(  \Sym^\ast \left( HC^\ast ( B \oplus B^\vee[-1] )[-1] \right) \right) \to H^{i+2} \left(\Sym^\ast \left(V[-1] \right)  \right) \to \dots 
$$
This is a sequence of $\C[\kappa]$-modules, where the action of $\C[\kappa]$ on the tangent complex $T \mc{M}_{N+\infty \mid \infty,B}$ was discussed above.
\label{theorem_tangent_complex_cyclic} 
\end{theorem}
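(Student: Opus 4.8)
The plan is to obtain the long exact sequence by applying Lemma \ref{lemma_les} at each finite $R$ and then passing to the homotopy limit over $R$, so that the real content is reduced to an explicit (Loday--Quillen--Tsygan type) computation of two stable Hochschild cohomology groups.

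First I would take $B_2 = U(\mf{gl}(N+R\mid R)\otimes B)$, $B_1 = U(\mf{sl}(N+R\mid R))$, and $f$ the evident homomorphism, so that $\mc{M}_{N+R\mid R,B} = \op{Def}(B_2,f)$ and Lemma \ref{lemma_les} gives, for each $R$, a long exact sequence
$$
\cdots \to H^i T\mc{M}_{N+R\mid R,B} \to HH^{i+2}(B_2,B_2) \to HH^{i+2}(B_1,B_2) \to \cdots
$$
of modules over $Z(B_2)$, and in particular of $\C[\kappa]$-modules via $\kappa \mapsto 1\otimes\op{Id}$. The $Q$-cohomology maps $\mc{M}_{N+R\mid R,B}\to\mc{M}_{N+R-1\mid R-1,B}$ were defined by adjoining $tQ$ to the differential on \emph{all} of the relevant complexes at once, so they are compatible with the restriction maps on Hochschild complexes; this promotes the displayed sequence to a map of towers of long exact sequences. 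Since a long exact sequence is the one associated to a cone and cones commute with homotopy limits, passing to $\lim_R$ yields a long exact sequence relating $H^\ast T\mc{M}_{N+\infty\mid\infty,B}$ to $\lim_R HH^\ast(B_2,B_2)$ and $\lim_R HH^\ast(B_1,B_2)$, still $\C[\kappa]$-linearly. It then remains to identify these two stable groups with $\Sym^\ast(HC^\ast(B\oplus B^\vee[-1])[-1])$ and $\Sym^\ast(V[-1])$.

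For the identifications I would use $HH^\ast(U\mf{g},U\mf{g})\iso H^\ast_{\mathrm{Lie}}(\mf{g};U\mf{g})$ with adjoint coefficients, the PBW comparison of $U\mf{g}$ with $\Sym\mf{g}$ as a $\mf{g}$-module, and a super-stabilized form of the Loday--Quillen--Tsygan theorem \cite{LodQui84,Tsy83}. The fermionic directions $\mid R$ together with the $Q$-differential are precisely what make stabilization work: as in \cite{MikWit14}, in the limit $R\to\infty$ the relevant invariant theory is governed by the supertrace, and the stable Lie algebra cohomology of $\mf{gl}(N+\infty\mid\infty)\otimes B$ with symmetric-algebra coefficients becomes the symmetric algebra on the cyclic cohomology of the square-zero extension of the ``algebra slot'' by the continuous dual of the ``module slot''. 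For $B_2$ both slots equal $B$, which gives $\Sym^\ast(HC^\ast(B\oplus B^\vee[-1])[-1])$, with the central class $\kappa$ corresponding, through the map $HH^\ast(B,B)[-1]\to HC^\ast(B\oplus B^\vee[-1])$, to $1\in HH^0(B,B)$, matching the element called $\kappa$ in the statement. For $HH^\ast(B_1,B_2)$ the adjoint action is replaced by the $\mf{sl}$-action, which carries no trace, so the algebra slot degenerates to $\C$ while the module slot stays $B^\vee[-1]$; one obtains $\Sym^\ast(V[-1])$ with $V = HC^\ast(\C\oplus B^\vee[-1])/\C$, the quotient by the unit reflecting the passage from $\mf{gl}$ to $\mf{sl}$. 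Under these identifications the restriction map $HH^\ast(B_2,B_2)\to HH^\ast(B_1,B_2)$ is the map induced by collapsing the $B$ in the algebra slot to $\C$, and every map in sight is manifestly $Z(B_2)$-linear, hence $\C[\kappa]$-linear.

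The main obstacle will be the stabilized Loday--Quillen--Tsygan computation itself: carrying out the $R\to\infty$ stabilization in the $\Z/2$-graded setting with the extra $tQ$-differential, controlling all of the degree shifts and the cyclic (co)invariants, and handling the topological vector space subtleties needed for $HC^\ast(B\oplus B^\vee[-1])$ to be well defined (the cyclic cochains must be continuous multilinear functionals, so $B^\vee$ must carry its natural topology). One also has to check that the PBW comparison between $U\mf{g}$ and $\Sym\mf{g}$ leaves no residual differential in the limit, and that $\lim_R$ commutes with cohomology; the latter is harmless because, by stability, each transition map in the tower is an isomorphism in any fixed cohomological degree and tensor-weight once $R$ is large, so the limit is computed by a single large $R$.
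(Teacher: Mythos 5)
Your proposal is correct and takes essentially the same route as the paper: reduce to the cone appearing in Lemma~\ref{lemma_les} (at the cochain level, then pass to the limit over $R$), convert Hochschild cochains of $U(\mf{g})$ into Lie algebra cochains with $\Sym\mf{g}$ coefficients via PBW, and invoke the super Loday--Quillen--Tsygan theorem (Lemma~\ref{lemma_super_LQT}) to identify the stable limits with symmetric algebras on cyclic cochains of $B\oplus B^\vee[-1]$ and $\C\oplus B^\vee[-1]$, with the $\mf{sl}$-versus-$\mf{gl}$ distinction deleting $HC^0(\C)$. One small imprecision: the ``algebra slot'' being $\C$ in the second term comes from $B_1 = U(\mf{sl}(N+R\mid R))$ having no $B$-factor, not from $\mf{sl}$ being traceless; the traceless property of $\mf{sl}$ only accounts for the further quotient by the unit, which you do identify correctly.
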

\begin{proof}
The first thing to note is that the formation of the tangent complex of a formal moduli problem commutes with limits, so that
$$
T \mc{M}_{N+\infty \mid \infty,B} = \lim T \mc{M}_{N+R\mid R,B}.
$$
Next, we can use the long exact sequence of lemma \ref{lemma_les} (or rather, it's cochain-level version) to show that
\begin{multline*} 
 T \mc{M}_{N+\infty \mid \infty,B}\\
 = \lim_R \op{Cone} \left( CH^\ast(U(\mf{gl}(N+R \mid R) \otimes B))  )[1] \to CH^\ast(U(\mf{sl}(N+R \mid R) ) , U(\mf{gl}(N+R  \mid R) \otimes B) )[1]   \right). 
\end{multline*} 
To prove the result, we thus need to show that
\begin{align*} 
\lim_R  CH^\ast(U(\mf{gl}(N+R \mid R) \otimes B) ) & \simeq \Sym^\ast \left(  CC^\ast ( B \oplus B^\vee[-1]) [-1]\right)  \\
\lim_R CH^\ast ( U(\mf{sl}(N+R \mid R) ,   U(\mf{gl}(N+R \mid R) \otimes B)  ) & \simeq \Sym^\ast \left( CC^\ast ( \C \oplus B^\vee[-1] )[-1] / \C[-1] \right).  
\end{align*} 
Here $CH^\ast$ and $CC^\ast$ indicate Hochschild and cyclic cochain complexes. 

Recall that for any Lie algebra $\mf{g}$, there is a quasi-isomorphism between Hochschild cochains of $\mf{g}$ and the Lie algebra cochains $C^\ast(\mf{g},U(\mf{g}))$.  Further, the PBW theorem gives us a $\mf{g}$-equivariant isomorphism
$$
\Sym^\ast \mf{g} \to U(\mf{g}). 
$$
Explicitly, the map $\Sym^n \mf{g} \to U(\mf{g})$ is given by the formula
$$
X_1 \otimes \dots \otimes X_n \to \sum_{\sigma \in S_n} X_{\sigma(1)} \dots X_{\sigma(n)} 
$$
where on the right hand side we are using the product in the universal enveloping algebra.  This map is symmetric, and so descends to $\Sym^n \mf{g}$, and is evidently $\mf{g}$-equivariant.   

This isomorphism allows us to identify the Hochschild cochains of $U(\mf{g})$ with $C^\ast(\mf{g}, \Sym^\ast \mf{g})$.   This, in turn, can be identified with the Lie algebra cochains of $\mf{g} \oplus \mf{g}^\ast [-1]$.

Applied to the situation at interest, we find a quasi-isomorphism
$$
CH^\ast(U(\mf{gl}(N+R \mid R) \otimes B) ) \simeq C^\ast\left( \mf{gl}(N+R \mid R) \otimes \left( B \oplus B^\vee[-1] \right) \right). 
$$ 
A similar argument gives us a quasi-isomorphism 
\begin{align*} 
 CH^\ast ( U(\mf{gl}(N+R \mid R) ,   U(\mf{gl}(N+R \mid R) \otimes B)  ) & \simeq C^\ast(\mf{gl}(N+R \mid R), \Sym^\ast (\mf{gl}(N+R \mid R) \otimes B) ) \\
&\simeq C^\ast\left( \mf{gl}(N+R \mid R) \otimes \left( \C \oplus B^\vee[-1] \right) \right). 
\end{align*}
We will modify this to obtain the complex computing the cohomology of $\mf{sl}(N+R \mid R)$ with coefficients in $U(\mf{gl}(N+R \mid R) \otimes B)$ at the end of our argument.  

Let us take the limit as $R \to \infty$ and see what we find.  There is a Lie algebra homomorphism 
$$
\mf{gl}(N+R \mid R) \to \mf{gl}(N+R +1 \mid R+1). 
$$
This leads to natural restriction cochain maps 
$$
 C^\ast\left(\mf{gl}(N+R +1 \mid R+1 ) \otimes (B \oplus B^\vee[-1] ) \right) \to  C^\ast\left(\mf{gl}(N+R \mid R) \otimes (B \oplus B^\vee[-1] ) \right) .  
$$
One can verify easily that this is the map that is relevant for the limit we are computing.

Let us now invoke the next lemma \ref{lemma_super_LQT}, which proves a   version of the Loday-Quillen-Tsygan theorem that applies to super groups.  This implies that there are quasi-isomorphisms
\begin{align*} 
 \Sym^\ast CC^\ast(B \oplus B^\vee[-1] ) [-1] & \to \lim_{R} C^\ast\left(\mf{gl}(N+R  \mid R ) \otimes (B \oplus B^\vee[-1] ) \right) \\ 
 \Sym^\ast CC^\ast(\C \oplus B^\vee[-1] ) [-1] &\to \lim_{R} C^\ast\left(\mf{gl}(N+R  \mid R ) \otimes (\C \oplus B^\vee[-1] ) \right). 
\end{align*}
This almost completes the proof. The final step to note is that
$$
HC^\ast(\C \oplus B^\vee[-1] ) = HC^\ast(\C) \oplus \bigoplus_{n \ge 1} (B^{\otimes n})_{C_n}[1] 
$$
and that $HC^\ast(\C) = \C[[t]]$.  Replacing the first occurrence of $\mf{gl}(N+R \mid R)$ by $\mf{sl}(N+R \mid R)$ in our computation of $C^\ast(\mf{gl}(N+R \mid R),\Sym^\ast (B \otimes \mf{gl}(N+R \mid R) ) ) $ has the effect of getting rid of $HC^0(\C)$, and so replacing $\C[[t]]$ by $t \C[[t]]$.

The fact that the sequence is a sequence of $\C[\kappa]$-modules is easy to verify.  
\end{proof}

\begin{lemma}
Let $A$ be any differential graded algebra.  Then, for any $N$, there is a quasi-isomorphism 
$$
\Sym^\ast \left( CC^\ast(A)[-1] \right) \simeq \lim_R C^\ast \left( A \otimes \mf{gl}(N+R \mid R)\right).  
$$
\label{lemma_super_LQT}
\end{lemma}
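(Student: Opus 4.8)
The plan is to compute $H^\ast$ of the Chevalley--Eilenberg cochain complex by passing to its invariant subcomplex under the even part of $GL(N+R\mid R)$, identifying that subcomplex explicitly once $R$ is large, and then taking the limit. First I would fix the model $C^\ast(A\otimes\mf{gl}(V_R)) = \Sym^\ast\big((A\otimes\mf{gl}(V_R))^\vee[-1]\big)$, with $V_R = \C^{N+R\mid R}$ and $(-)^\vee$ the continuous dual (so cochains are polynomial in the matrix entries, as elsewhere in the paper), equipped with the Chevalley--Eilenberg differential, which is a derivation of this graded-symmetric algebra. Let $G_R := GL_{N+R}\times GL_R$ be the even part of $GL(V_R)$; it is an ordinary, linearly reductive group acting on $A\otimes\mf{gl}(V_R)$ by conjugation on the $\mf{gl}(V_R)$-factor, and since $A$ is unital this action is by inner automorphisms of the Lie superalgebra $A\otimes\mf{gl}(V_R)$. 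Inner automorphisms act trivially on Lie-algebra cohomology, and taking $G_R$-invariants is exact, so $H^\ast(A\otimes\mf{gl}(V_R)) = H^\ast\big(C^\ast(A\otimes\mf{gl}(V_R))^{G_R}\big)$; it therefore suffices to understand $C^\ast(A\otimes\mf{gl}(V_R))^{G_R}$ and its limit over $R$.

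The key step is the identification of this invariant subcomplex. Writing $\mf{gl}(V_R) = V_R\otimes V_R^\vee$ with $V_R = \C^{N+R}\oplus\Pi\C^R$ ($\Pi$ the parity shift), the $G_R$-invariants of $C^\ast(A\otimes\mf{gl}(V_R))$ are governed by the classical first and second fundamental theorems of invariant theory for a product of two general linear groups acting on tensor powers of $\C^{N+R}$, $\C^R$ and their duals. In the stable range --- when the number of tensor factors (the ``cyclic word length'') is at most $R$ --- the invariants are freely spanned by the ``trace words'' obtained by composing the evaluation pairings around cycles; with the Koszul signs produced by $\Pi$, these trace words are exactly the cyclic words of $A$, so the invariant subcomplex equals the graded-symmetric algebra $\Sym^\ast(CC^\ast(A)[-1])$. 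Since the Chevalley--Eilenberg differential is a derivation, it is pinned down by its values on the length-one generators, where one checks directly that it is the cyclic differential (Hochschild boundary together with the internal differential of $A$). The second fundamental theorem --- the fact that the over-antisymmetrization relations among trace words appear only beyond $R+1$ factors --- is what guarantees freeness, and it fails only outside a range that grows without bound as $R\to\infty$.

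It remains to take the limit. The inclusions $\mf{gl}(V_R)\hookrightarrow\mf{gl}(V_{R+1})$ induce surjective restriction maps $C^\ast(A\otimes\mf{gl}(V_{R+1}))\to C^\ast(A\otimes\mf{gl}(V_R))$, and in each fixed cohomological degree and cyclic word length only boundedly many tensor factors occur, so the tower over $R$ is eventually constant there; passing to $G_R$-invariants is compatible with this, and the eventual value is the corresponding piece of $\Sym^\ast(CC^\ast(A)[-1])$. Hence $\lim_R C^\ast(A\otimes\mf{gl}(N+R\mid R))\simeq\Sym^\ast(CC^\ast(A)[-1])$, which is the assertion. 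Note that only the super-dimension $(N+R)-R = N$ stays bounded as $R\to\infty$, which is precisely why the limit exists and why the answer does not otherwise depend on $N$ --- compatibly with the $HC^\ast(\C) = \C[[t]]$ appearing in the proof of Theorem \ref{theorem_tangent_complex_cyclic}.

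The part requiring the most care is the sign bookkeeping of the middle step: the Koszul signs carried by $\Pi$ in $V_R = \C^{N+R}\oplus\Pi\C^R$, the signs in the $GL\times GL$-invariant contractions, and the Chevalley--Eilenberg signs must be reconciled so that the invariant subcomplex is identified with $CC^\ast(A)$ \emph{on the nose}, with the right signs and the right differential --- this is where the super-ness of $\mf{gl}(N+R\mid R)$ does the real work, and where the eventual Koszul-sign conventions feed into the applications (e.g.\ Theorem \ref{theorem_tangent_complex_cyclic}). The reductivity and inner-automorphism argument of the first step and the stabilization of the tower are routine once these signs are in order.
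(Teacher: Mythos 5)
Your strategy — pass to invariants under the even reductive group $G_R = GL(N+R)\times GL(R)$, identify the invariant subcomplex by the fundamental theorems of invariant theory, and take the limit — is sound as far as it goes, and it shares its first step with the paper. But there is a genuine gap in the key identification, and it is not, as you suggest, a matter of sign bookkeeping.

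The $G_R$-invariants of $\Sym^k\bigl((A\otimes\mf{gl}(V_R))^\vee[-1]\bigr)$ are \emph{strictly bigger} than $\Sym^k$ of cyclic cochains of $A$, already in the stable range. The reason is that you are only taking invariants under the even subgroup $GL(N+R)\times GL(R)$, not the full supergroup $GL(N+R\mid R)$. Decomposing $V_R = \C^{N+R}\oplus\Pi\C^R$, an invariant contraction may pair the bosonic and fermionic legs by two \emph{independent} permutations, because $GL(N+R)$ and $GL(R)$ are independent factors; nothing forces the two permutations to assemble into a single permutation of the $V_R$-legs. Concretely, already for $k=1$ the space $\bigl(\mf{gl}(V_R)^\vee\bigr)^{G_R}$ is two-dimensional, spanned by the traces $\mathrm{tr}_{\C^{N+R}}$ and $\mathrm{tr}_{\Pi\C^R}$ separately, rather than one-dimensional as in the classical story. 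The trace words you obtain are therefore not cyclic words of $A$ but cyclic words of $A\otimes\mf{gl}(1\mid1)$: the extra $\mf{gl}(1\mid 1)$-factor records, on each tensor slot, which sector (bosonic or fermionic) the contraction passes through. So the invariant subcomplex in the stable range is $\Sym^\ast\bigl(CC^\ast(A\otimes\mf{gl}(1\mid1))[-1]\bigr)$, not $\Sym^\ast\bigl(CC^\ast(A)[-1]\bigr)$, and these two complexes are not equal. They are quasi-isomorphic, by Morita invariance of cyclic cohomology (since $\mf{gl}(1\mid1)$ is a matrix superalgebra over $\C$), but this final quasi-isomorphism is an additional input that your argument is missing. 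This is precisely the maneuver the paper uses: it first restricts from $\mf{gl}(N+R\mid R)$-invariants to $\mf{gl}(R\mid R)$-invariants, observes $\mf{gl}(R\mid R) = \mf{gl}(1\mid1)\otimes\mf{gl}(R)$, applies the classical Loday--Quillen--Tsygan theorem with coefficient algebra $A\otimes\mf{gl}(1\mid1)$, and \emph{then} invokes Morita invariance to pass from $CC^\ast(A\otimes\mf{gl}(1\mid1))$ to $CC^\ast(A)$. Your proof can be repaired by inserting exactly that step, but as written the claim that the invariant subcomplex ``equals $\Sym^\ast(CC^\ast(A)[-1])$'' overcounts what the FFT forces and undercounts the invariants that remain.
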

\begin{proof}
Let $A$ be any dg algebra.  The inclusion
$$
C^\ast(A \otimes \mf{gl}(N+R \mid R) )^{GL(N+R) \times GL(R) } \into C^\ast( A \otimes \mf{gl}(N+R \mid R) )
$$
is a quasi-isomorphism, since the action of $GL(N+R) \times GL(R)$ is semi-simple.  We can thus consider the limit
$$
\lim_R C^\ast(A \otimes \mf{gl}(N+R \mid R) )^{GL(N+R) \times GL(R) }. 
$$
Further, there is a natural restriction map
$$
C^\ast(A \otimes \mf{gl}(N+R \mid R) )^{GL(N+R) \times GL(R) } \to C^\ast(A \otimes \mf{gl}(R \mid R) )^{GL(R) \times GL(R) }.  
$$
This map is compatible with the maps in the inverse system, and gives a map 
$$
\lim_R C^\ast(A \otimes \mf{gl}(N+R \mid R) )^{GL(N+R) \times GL(R) } \to \lim_R C^\ast(A \otimes \mf{gl}(R \mid R) )^{GL(R) \times GL(R) }.  
$$
Invariant theory for the general linear group tells us that, this map is an isomorphism of cochain complexes.  We thus find a quasi-isomorphism of cochain complexes
$$
\lim_R C^\ast(A \otimes \mf{gl}(R \mid R) )^{GL(R) \times GL(R) } \simeq \lim_R  C^\ast( A \otimes \mf{gl}(N+R \mid R) ). 
$$

Next, note that the inclusion
$$
C^\ast(A \otimes \mf{gl}(R \mid R) )^{GL(R)\times GL(R)  } \into  C^\ast(A \otimes \mf{gl}(R \mid R) )^{GL(R) } 
$$
is quasi-isomorphism, where on the right hand side we are taking invariants with respect to the diagonal subgroup.  

Now, $\mf{gl}(R \mid R) = \mf{gl}(1 \mid 1) \otimes \mf{gl}(R)$.   Thus we find, invoking the standard Loday-Quillen-Tsygan theorem, that   
$$
\lim_R C^\ast(A \otimes \mf{gl}(R \mid R) )^{GL(R) } = \Sym^\ast \left(CC^\ast(A \otimes \mf{gl}(1 \mid 1) ) [-1] \right) . 
$$
The final step is to note that cyclic cohomology is Morita invariant, and that $A \otimes \mf{gl}(1 \mid 1)$ is Morita equivalent (as a super algebra) to just $A$.  It follows that we can replace $A \otimes \mf{gl}(1 \mid 1)$ in the above equation by just $A$.
\end{proof}

\subsection{} 
Next, we will specialize to the algebra $\op{Diff}(\C)$.  This algebra has a $\C^\times$-action where $z$ has weight $1$ and $\dpa{z}$ has weight $-1$.  We are interested in the $\C^\times$ fixed points of the moduli problem $\mc{M}_{N+\infty\mid\infty,\op{Diff}(\C)}$. We will indicate this by $\mc{M}^{\C^\times}_{N+\infty \mid \infty,\op{Diff}(\C)}$.  
  
The main algebraic theorem we will prove is the following. 
\begin{theorem} 
There are isomorphisms of $\C[\kappa]$-modules
\begin{align*}
H^0 T\mc{M}^{\C^\times}_{N+\infty \mid \infty,\op{Diff}(\C)}& = \C[\kappa] \\
 H^1 T\mc{M}^{\C^\times}_{N+\infty \mid \infty,\op{Diff}(\C)}& =0.
\end{align*}
Further, although $H^{-1} T\mc{M}^{\C^\times}_{N+\infty \mid \infty,\op{Diff}(\C)}$ is non-zero, the natural map
$$
H^{-1} T\mc{M}^{\C^\times}_{N+\infty \mid \infty,\op{Diff}(\C)} \to HH^1( U (\mf{gl}(N+R \mid R) \otimes \op{Diff}(\C) )  
$$
is zero for all $N$ and $R$.

\label{theorem_deformation_computation}
\end{theorem}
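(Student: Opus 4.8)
\emph{Sketch of the argument.} The plan is to deduce the whole theorem from Theorem~\ref{theorem_tangent_complex_cyclic}, applied with $B=\op{Diff}(\C)$, after passing to $\C^\times$-fixed points. The $\C^\times$-action scaling $z$ on $\op{Diff}(\C)$ is semisimple and commutes with all the operations in sight, so taking invariants is exact and Theorem~\ref{theorem_tangent_complex_cyclic} gives a long exact sequence of $\C[\kappa]$-modules
\begin{multline*}
\dots \to H^i T\mc{M}^{\C^\times}_{N+\infty\mid\infty,\op{Diff}(\C)} \xrightarrow{\ \alpha\ } H^{i+2}\big(\Sym^\ast(HC^\ast(\op{Diff}(\C)\oplus\op{Diff}(\C)^\vee[-1])^{\C^\times}[-1])\big) \\ \xrightarrow{\ \beta\ } H^{i+2}\big(\Sym^\ast(V^{\C^\times}[-1])\big) \to \dots
\end{multline*}
where $V=HC^\ast(\C\oplus\op{Diff}(\C)^\vee[-1])/\C$ and $\beta$ is induced by restriction along the unit $\C\to\op{Diff}(\C)$. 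By the proof of Theorem~\ref{theorem_tangent_complex_cyclic} this sequence is, degree by degree, the $R\to\infty$ limit of the exact sequence of Lemma~\ref{lemma_les} for the pairs $(U(\mf{gl}(N+R\mid R)\otimes\op{Diff}(\C)),f)$; in particular $\alpha$ in degree $i$ is the stable form of the natural map $H^iT\mc{M}_{N+R\mid R}\to HH^{i+2}(U(\mf{gl}(N+R\mid R)\otimes\op{Diff}(\C)))$. So everything reduces to computing the two cyclic cohomology groups and the map $\beta$, and then reading off cohomology in the three relevant degrees: $i=-1,0,1$ correspond to symmetric-algebra degrees $1,2,3$.

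The hard part, which I expect to be the main obstacle, is this $\C^\times$-equivariant cyclic-cohomology computation. The group $V^{\C^\times}$ is the easy side: $\C\oplus\op{Diff}(\C)^\vee[-1]$ is the trivial square-zero extension of $\C$, so $V=t\C[[t]]\oplus\bigoplus_{n\ge1}(\op{Diff}(\C)^{\otimes n})_{C_n}[1]$ and the $\C^\times$-action merely rearranges $z$-weights; its invariants are infinite-dimensional but completely explicit. For $\op{Diff}(\C)\oplus\op{Diff}(\C)^\vee[-1]$ I would use the spectral sequence for a square-zero extension together with the classical homological algebra of the Weyl algebra $A_1=\op{Diff}(\C)$: $HH^\ast(A_1,A_1)=\C$ concentrated in degree $0$; Van den Bergh duality $HH^\ast(A_1,M)\cong HH_{2-\ast}(A_1,M)$, which forces $HH^\ast(A_1,A_1^\vee)=\C$ in degree $2$ and, via the $SBI$ exact sequence, $HC^\ast(A_1)=\C$ in each even degree $\ge2$. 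Intersecting with $\C^\times$-invariants is crucial: all these ``small'' Weyl-algebra classes are of weight $0$ and survive, and a weight count through the square-zero spectral sequence and through $\Sym^\ast$ should show that, in symmetric-algebra degrees $\le3$, $\beta$ is an isomorphism away from a piece governed entirely by $HC^\ast(A_1)$ and $HH^\ast(A_1,A_1)$. The delicate step is keeping the $\C^\times$-weights straight throughout; everything after that is bookkeeping.

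Granting the explicit form of $\beta$, the rest is diagram-chasing in the sequence above. In symmetric-algebra degree $2$ one reads off $H^0T\mc{M}^{\C^\times}_{N+\infty\mid\infty,\op{Diff}(\C)}=\C[\kappa]$, the $\C[\kappa]$-module structure being inherited from Theorem~\ref{theorem_tangent_complex_cyclic}; in degree $3$ one gets $H^1T\mc{M}^{\C^\times}_{N+\infty\mid\infty,\op{Diff}(\C)}=0$. The group $H^{-1}T\mc{M}^{\C^\times}_{N+\infty\mid\infty,\op{Diff}(\C)}$ is nonzero because $\beta$ fails to be surjective in symmetric-algebra degree $0$, which is transparent from its explicit description. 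The extra fact one must extract for the final clause is that $\alpha$ \emph{vanishes in degree $-1$}, equivalently that $\beta$ is injective in degree $1$. Then, for fixed $N$ and $R$: the long exact sequence of Lemma~\ref{lemma_les} is natural in the pair $(B_2,f)$, hence natural with respect to the structure map $\mc{M}_{N+\infty\mid\infty,\op{Diff}(\C)}\to\mc{M}_{N+R\mid R,\op{Diff}(\C)}$ supplied by the $Q$-cohomology construction. Consequently the composite
\[
H^{-1}T\mc{M}^{\C^\times}_{N+\infty\mid\infty,\op{Diff}(\C)}\longrightarrow H^{-1}T\mc{M}_{N+R\mid R,\op{Diff}(\C)}\longrightarrow HH^1\big(U(\mf{gl}(N+R\mid R)\otimes\op{Diff}(\C))\big)
\]
factors as the vanishing stable map $\alpha$ followed by the natural comparison map from the stable Hochschild cohomology to the one at finite $R$, and is therefore zero for all $N$ and $R$, as claimed.
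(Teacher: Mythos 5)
Your high-level strategy — specialize Theorem~\ref{theorem_tangent_complex_cyclic}, compute the cyclic cohomology of the square-zero extension, read off the answer from the resulting long exact sequence in symmetric-algebra degrees $1,2,3$, and deduce the final clause from naturality of Lemma~\ref{lemma_les} together with $\alpha=0$ — is exactly the paper's. Your argument for the third clause (the composite factors through the vanishing stable $\alpha$) is correct and is the argument the paper implicitly uses, since $H^1\Sym^\ast M=0$ after inverting $c$.

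The genuine gap is the step you flag as ``the main obstacle'' and then leave as a promissory note. You propose to compute $HC^\ast(\op{Diff}(\C)\oplus\op{Diff}(\C)^\vee[-1])$ directly from Van den Bergh duality, the $SBI$ sequence, and a square-zero spectral sequence. The ingredient that actually makes the exact sequence of Theorem~\ref{theorem_tangent_complex_cyclic} computable — and which you never state — is the identification of the \emph{derived tensor powers} $A^\vee\otimes^{\mathbb{L}}_A\cdots\otimes^{\mathbb{L}}_A A^\vee\simeq A^\vee[2(n-1)]$ as $A$-bimodules (the paper's Lemma~\ref{lemma_noncommutative_tensor_powers}). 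This is what collapses the piece of cyclic cohomology with $n$ copies of $A^\vee[-1]$ to a shifted copy of Hochschild cohomology with coefficients in $A^\vee$, producing the explicit $M$ of Proposition~\ref{proposition_stable_Hochschild_differential_operators}. Van den Bergh duality for the Weyl algebra is the right conceptual input (it is morally the $n=2$ case), but knowing $HH^\ast(A_1,A_1)$, $HH^\ast(A_1,A_1^\vee)$, and $HC^\ast(A_1)$ separately is not enough: without the tensor-power formula the ``bookkeeping'' you defer does not close. The paper also does not carry out this computation for $\op{Diff}(\C)$ itself; it works over the formal disc $\what{B}=\C[[z_1,z_2,c]]$, where the two-term Koszul resolution makes Lemma~\ref{lemma_noncommutative_tensor_powers} a direct calculation, and then transfers the answer via Lemmas~\ref{lem:B formal translation} and~\ref{lem:Diff(C) formal translation}. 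This detour is not a stylistic choice: $\op{Diff}(\C)^\vee$ is the full dual of a countable-dimensional space, and the restriction to $\C^\times$-finite weight pieces (which is what makes your ``keeping the weights straight'' work) is precisely what identifies the restricted Hochschild complex with the polynomial/formal one. Your sketch would need to reproduce both the tensor-power statement and this identification before the diagram chase you outline can be run.
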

There is one more aspect to the story which we will phrase as a separate proposition.
\begin{proposition}
There is a unique generator for the $\C[\kappa]$ module $H^0 T\mc{M}^{\C^\times}_{N+\infty \mid \infty,\op{Diff}(\C) }$ with the following property. Let us view this generator as giving rise to a deformation of the algebra $U(\mf{gl}_N\otimes\op{Diff}(\C)$.  Let $E_{\alpha \beta} \in \mf{gl}(N)$ denote the elementary matrix.  Then, in this deformation, 
\begin{align*} 
[E_{\alpha\beta}, E_{\gamma \delta}] =& \delta_{\beta \gamma} E_{\alpha \delta} - \delta_{\gamma \alpha} E_{\gamma \beta} \\
 [E_{\alpha\beta}, z E_{\gamma \delta}] =& \delta_{\beta \gamma} z_i E_{\alpha \delta} - \delta_{\gamma \alpha} z E_{\gamma \beta} \\
[E_{\alpha\beta}, \partial E_{\gamma \delta}] =& \delta_{\beta \gamma} \partial E_{\alpha \delta} - \delta_{\gamma \alpha} \partial E_{\gamma \beta} \\
 [E_{\alpha\beta}\partial, E_{\gamma\delta} z] =& \delta_{\beta \gamma}E_{\alpha\delta}(\partial   z) - \delta_{\alpha\delta}E_{\gamma\beta}(z   \partial)\\ 
& + \hbar E_{\alpha \delta} E_{\gamma \beta} - \hbar \sum_{\mu} \delta_{\beta \gamma}  E_{\alpha \mu} E_{\mu \delta}  
- \hbar\sum_{\mu} \delta_{\alpha \delta} E_{\gamma \mu} E_{\mu \beta} .
\end{align*} 
Here $\hbar$ is the deformation parameter.  
\end{proposition}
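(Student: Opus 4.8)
The plan is to recognise the required generator as the first--order part of the combinatorial deformation $A_{N,\hbar,c}$ already computed, and to use the rank--one freeness of $H^0T\mc{M}^{\C^\times}_{N+\infty\mid\infty,\op{Diff}(\C)}$ to pin down its normalisation. By Theorem \ref{theorem_deformation_computation}, $H^0T\mc{M}^{\C^\times}_{N+\infty\mid\infty,\op{Diff}(\C)}$ is the free $\C[\kappa]$--module of rank one, so its generators are exactly the nonzero scalar multiples of any single one of them. A class in this $H^0$ determines, via the homotopy--limit description of $\mc{M}^{\C^\times}_{N+\infty\mid\infty,\op{Diff}(\C)}$ together with the $Q$--cohomology maps and Lemma \ref{lemma_les}, a genuine first--order deformation of $U(\mf{gl}_{N+R\mid R}\otimes\op{Diff}(\C))$ for every $R$; specialising at $R=0$ and $c=1$ gives the deformation of $U(\mf{gl}_N\otimes\op{Diff}(\C))$ referred to in the statement. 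So the content of the proposition is that exactly one generator realises the four displayed commutators.

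For existence I would take $\xi$ to be the class of the first--order part (reduction modulo $\hbar^2$) of the super combinatorial algebra $A_{N+R\mid R,\hbar,1}$. Since passing to super labels only inserts Koszul signs and the $Q$--cohomology of $A_{N+R\mid R,\hbar,c}$ is $A_{N+R-1\mid R-1,\hbar,c}$, compatibly with the homomorphisms out of $U(\mf{sl}(N+R\mid R))$, these first--order deformations are compatible across $R$ and so define a class $\xi\in H^0T\mc{M}^{\C^\times}_{N+\infty\mid\infty,\op{Diff}(\C)}$; by the Proposition of appendix \ref{appendix_non_trivial_deformation} it is nonzero. Its value at $R=0$, $c=1$ is then read off from Proposition \ref{proposition_combinatorial_one_loop} and Lemma \ref{lemma_combinatorial_one_loop} (equation \ref{equation_algebra_commutator_combinatorial}): the first three displayed commutators reduce to the undeformed relations, because in Proposition \ref{proposition_combinatorial_one_loop} the $\hbar$--correction to a commutator vanishes as soon as one of the two bracketed elements is the constant $E_{\alpha\beta}$ (the double sums are empty), while the fourth is precisely the nontrivial one--loop correction of those results, the case $(m,n,r,s)=(1,0,0,1)$.

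Next I would show $\xi$ actually generates $H^0T\mc{M}^{\C^\times}_{N+\infty\mid\infty,\op{Diff}(\C)}$, i.e.\ that $\xi$ is not $p(\kappa)$ times a generator for any polynomial $p$ with $\deg p\geq 1$. Multiplication by $\kappa=1\otimes\op{Id}$ on the tangent complex post--composes a Hochschild cochain with multiplication by $\kappa$, hence raises by one the polynomial degree, in the generators $E_{\alpha\beta}z^k\partial^l$, of every commutator a deformation produces; on the other hand a first--order deformation contributes to $[E_{\alpha\beta}z_1,E_{\gamma\delta}z_2]$ only through a single connected tree with two special vertices (as in the computation of section \ref{section_ope}), whose output is quadratic in the generators with $\kappa$--independent coefficients, and the commutators of the generators detect the first--order cohomology class (in the notation of section \ref{section_ope}, the deformation lives in $\Sym^2 V\otimes\wedge^2 V^\ast$ and is recovered from its quadratic commutator). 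Combining these, $\xi=p(\kappa)g$ with $g$ a generator forces $\deg p=0$, so $\xi$ generates. Uniqueness of the normalised generator is then immediate: the generators are the $\lambda\xi$ with $\lambda\in\C^\times$, the coefficient of $\hbar E_{\alpha\delta}E_{\gamma\beta}$ in $[E_{\alpha\beta}\partial,E_{\gamma\delta}z]$ scales as $\lambda$ and equals $1$ for $\xi$, so $\lambda=1$ is forced.

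The step I expect to be the main obstacle is the one in the third paragraph: proving that the single bracket $[E_{\alpha\beta}z_1,E_{\gamma\delta}z_2]$ both detects the first--order class and sees divisibility by $\kappa$. This rests on the diagrammatic bookkeeping referred to in the main text as ``a diagrammatic analysis'' — the statement that at $k$--th order in $\hbar$ the commutator of two generators involves $\kappa$ to at most the power $k-1$, so that at first order it is $\kappa$--independent — together with the standard Koszul--duality identification of first--order deformations with $\Sym^2 V\otimes\wedge^2 V^\ast$. A secondary point requiring care is the compatibility across $R$ of the super combinatorial construction used in the existence step.
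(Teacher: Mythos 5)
Your overall scaffolding — use freeness of the $\C[\kappa]$-module, take the combinatorial algebra as the candidate generator, show non-triviality, then rule out divisibility by $\kappa$ and fix normalisation — is the same as the paper's. But there is a genuine circularity in the existence/non-triviality step. You write ``by the Proposition of appendix \ref{appendix_non_trivial_deformation} it is nonzero.'' The only proposition in that appendix is the very one being proved; the un-numbered non-triviality statement in section \ref{section_ope} explicitly defers its proof to the appendix. So the argument as written assumes what it is trying to establish. The paper fills this hole by an explicit cyclic-cohomology computation: it identifies the single-trace first-order deformations with the cyclic cochain complex of $\op{Diff}(\C)\oplus\op{Diff}(\C)^\vee[-1]$, writes the cocycle of the displayed deformation as $\eps_{ij}\,z_i^\ast\otimes z_j^\ast\otimes 1\otimes 1-\eps_{ij}\,z_i^\ast\otimes 1\otimes z_j^\ast\otimes 1+\cdots$, and then exhibits a closed chain $z_1\otimes 1^\ast\otimes z_2\otimes 1^\ast$ in the dual complex pairing nontrivially with it. Some concrete calculation of this type (or an equivalent one) is unavoidable; without it you have not established that the class is nonzero.

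Your third paragraph — ruling out $\xi=p(\kappa)g$ with $\deg p\geq 1$ — is aimed at the right fact but is phrased in a way that does not quite hold together: a first-order deformation of an associative algebra has no a priori reason to contribute only quadratic terms to a given commutator, and the ``single connected tree'' picture belongs to the QFT realisation, not to an arbitrary point of the moduli space. The paper's formulation is cleaner and actually proves the needed statement: the Loday--Quillen--Tsygan description writes the tangent complex as a symmetric algebra on cyclic cochains, so the space of deformations splits by number of traces; the displayed deformation is single-trace, whereas multiplication by $\kappa^l$ lands in $(l+1)$-trace terms, so a nonzero single-trace class cannot be a multiple of $\kappa$. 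You should replace your degree-counting heuristic by this trace-grading argument, which is exactly the structure the rest of section \ref{section_uniqueness} supplies.
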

We will prove this in the appendix.

This theorem will take a fair bit of work to prove.  However, we have some immediate consequences. 
\begin{corollary}
There is a $\C[[\hbar]]$-point of $\mc{M}^{\C^\times}_{N+\infty \mid \infty , \op{Diff}(\C)}$ which, to first order, is given by a generator of the $\C[\kappa]$-module $H^0 T \mc{M}^{\C^\times}_{N+\infty \mid \infty,\op{Diff}(\C)}$.  Explicitly, this is given by a sequence of algebras $U_{\hbar}(\mf{gl}(N+R \mid R) \otimes \op{Diff}(\C))$ deforming the universal enveloping algebra $U(\mf{gl}(N+R \mid R) \otimes \op{Diff}(\C))$, where to first order the deformation is as in the previous proposition. These algebras come equipped with a Lie algebra map from $\mf{sl}(N+R \mid R)$, and they are related as $R$ varies by taking $Q$-cohomology.

Further, any two such $\C[[\hbar]]$-points of $\mc{M}^{\C^\times}_{N+\infty \mid \infty,\op{Diff}(\C)}$ are related by a change of variables of the form
$$
\hbar \mapsto \lambda \hbar + f_2(\kappa) \hbar^2 + \dots
$$
where $\lambda$ is a non-zero constant and the $f_i(\kappa)$ are polynomials in $\kappa$. 
\label{corollary_unique_quantization} 
\end{corollary}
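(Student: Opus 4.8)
The plan is to deduce the corollary from Theorem~\ref{theorem_deformation_computation} by a routine obstruction-theoretic argument, with all the genuine content residing in that theorem. Since the formal moduli problem $\mc{M}^{\C^\times}_{N+\infty\mid\infty,\op{Diff}(\C)}$ is only defined on Artinian rings, I will work throughout with the truncations $\C[\hbar]/\hbar^n$ and pass to $\C[[\hbar]]$ only at the end, using that a $\C[[\hbar]]$-point is by definition a compatible system of $\C[\hbar]/\hbar^n$-points. I will also use, as bookkeeping, that unwinding the definition of $\mc{M}_{N+\infty\mid\infty,\op{Diff}(\C)}$ as the homotopy limit of the $\mc{M}_{N+R\mid R,\op{Diff}(\C)}$, a point of it is precisely a compatible family of flat deformations of the algebras $U(\gl(N+R\mid R)\otimes\op{Diff}(\C))$, each equipped with a Lie algebra map from $\mf{sl}(N+R\mid R)$ and related as $R$ varies by the $Q$-cohomology construction; the $\C^\times$-fixed points are those compatible with the grading scaling $\op{Diff}(\C)$. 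This is the sense in which the output of the corollary is an explicit sequence of algebras.

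For existence, I would start at first order with the distinguished generator $v_0$ of the rank-one free $\C[\kappa]$-module $H^0T\mc{M}^{\C^\times}_{N+\infty\mid\infty,\op{Diff}(\C)}$, namely the one pinned down by the commutation relations in the Proposition just stated, and extend it order by order in $\hbar$. At each stage the obstruction to lifting a $\C[\hbar]/\hbar^n$-point to a $\C[\hbar]/\hbar^{n+1}$-point is a class in $H^1T\mc{M}^{\C^\times}_{N+\infty\mid\infty,\op{Diff}(\C)}$, tensored with the one-dimensional ideal $\hbar^n\C[\hbar]/\hbar^{n+1}$, and Theorem~\ref{theorem_deformation_computation} tells us this group vanishes, so the lift always exists; passing to the limit gives the desired $\C[[\hbar]]$-point.

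For uniqueness up to reparametrization, let $p,p'$ be two $\C[[\hbar]]$-points whose first-order terms are generators of $H^0T\mc{M}^{\C^\times}_{N+\infty\mid\infty,\op{Diff}(\C)}$. Since the units of $\C[\kappa]$ are the nonzero scalars, those first-order terms are $\lambda v_0$ and $\lambda' v_0$ with $\lambda,\lambda'\in\C^\times$, so after the substitution $\hbar\mapsto(\lambda'/\lambda)\hbar$ applied to $p$ we may assume $p\equiv p'\pmod{\hbar^2}$. For the inductive step, suppose that after a substitution which is the identity to higher order we have $p\equiv p'\pmod{\hbar^n}$; then choosing an isomorphism between the $\C[\hbar]/\hbar^n$-truncations of $p$ and $p'$ identifies their $\C[\hbar]/\hbar^{n+1}$-truncations up to an element of $H^0T\mc{M}^{\C^\times}_{N+\infty\mid\infty,\op{Diff}(\C)}$, i.e.\ a class $g_n(\kappa)v_0$ with $g_n\in\C[\kappa]$ (here we use $H^0=\C[\kappa]\cdot v_0$). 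Applying to $p$ the substitution $\hbar\mapsto\hbar-\lambda^{-1}g_n(\kappa)\hbar^n$, whose effect on the coefficient of $\hbar^n$ is translation by $-g_n(\kappa)v_0$ because the first-order deformation class of $p$ is $\lambda v_0$, makes $p$ and $p'$ agree modulo $\hbar^{n+1}$. The infinite composite of these substitutions converges (each one is the identity to ever higher order) to a change of variables $\hbar\mapsto\lambda\hbar+f_2(\kappa)\hbar^2+\cdots$ with $f_i\in\C[\kappa]$; crucially the $f_i$ are genuine polynomials, not formal series in $\kappa$, precisely because the deformation module $H^0T\mc{M}^{\C^\times}_{N+\infty\mid\infty,\op{Diff}(\C)}$ is the polynomial ring $\C[\kappa]$ rather than $\C[[\kappa]]$.

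The one place where care is needed — and what I expect to be the main obstacle — is the choice of isomorphism of truncations in the inductive step: it is ambiguous up to automorphisms of the $\C[\hbar]/\hbar^n$-point, and these are controlled by $H^{-1}T\mc{M}^{\C^\times}_{N+\infty\mid\infty,\op{Diff}(\C)}$, which is \emph{not} zero. A priori, changing the isomorphism by such an automorphism alters the difference class $g_n(\kappa)v_0$ by the image of $H^{-1}$ under a connecting map, which could destroy well-definedness. This is exactly where the final clause of Theorem~\ref{theorem_deformation_computation} enters: the map from $H^{-1}T\mc{M}^{\C^\times}_{N+\infty\mid\infty,\op{Diff}(\C)}$ to the first Hochschild cohomology of $U(\gl(N+R\mid R)\otimes\op{Diff}(\C))$ vanishes for all $N,R$, so these infinitesimal automorphisms act trivially on the underlying algebras and hence cannot change which deformation is obtained; thus $g_n(\kappa)v_0$ is independent of the choice and the induction goes through. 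Everything else — flatness, the $\C[\kappa]$-equivariance of the maps in sight, and convergence — is bookkeeping on top of Theorem~\ref{theorem_deformation_computation} and the description of $\mc{M}_{N+\infty\mid\infty,\op{Diff}(\C)}$ recorded above.
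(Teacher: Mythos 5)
Your argument is the standard obstruction-theoretic unpacking of what the paper's one-line proof (``formal consequence of the theorem above and general obstruction theory'') refers to: vanishing of $H^1$ gives existence of lifts order by order, and $H^0 = \C[\kappa]\cdot v_0$ lets you absorb the order-$n$ difference into the substitution $\hbar\mapsto\hbar-\lambda^{-1}g_n(\kappa)\hbar^n$; this is correct and matches the paper's intent. Your discussion of the $H^{-1}$ ambiguity is somewhat overcautious — for the induction one does not actually need the difference class $g_n(\kappa)v_0$ to be canonical, only to exist and lie in $\C[\kappa]\cdot v_0$ for \emph{some} choice of trivialization at the previous stage, which is already guaranteed by $H^0=\C[\kappa]$; the final clause of the theorem (vanishing of $H^{-1}\to HH^1$ at each finite $R$) is really there to support the ``canonical up to inner isomorphism'' refinement stated later, not this corollary — but including it does no harm.
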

\begin{proof}
This is a formal consequence of the theorem above and of general obstruction theory arguments. 
\end{proof}
\subsection{}
   Let us give simple algebraic preliminaries we need before turning to the proof of the theorem. The first thing we need to do is to relate the moduli problem defined using the algebra $\op{Diff}(\C)$ to the moduli problem defined using some variant algebras. 

Let $B = \C[z_1,z_2,c,c^{-1}]$ where $[z_1,z_2] = c$.  Note that $B$ has a $\C^\times$-action, where the $z_i$ have weight $1$ and $c$ has weight $2$. We work over the base ring $\C[c,c^{-1}]$.  

We will first  derive some relations between the Hochschild cochain complexes of $U(\mf{gl}(N+R \mid R) \otimes B)$ and similar Hochschild complexes where the algebra $B$ is replaced by the algebra $\C[[z_1,z_2,c]]$ where $[z_1,z_2] = c$.

\begin{definition} 
 A $\C[[c]]$-linear restricted Hochschild cochain of $U(\mf{gl}(N+R \mid R) \otimes \C[[z_1,z_2,c]])$ is a Hochschild cochain which is a finite sum of cochains of definite weight under the $\C^\times$ action. 
\end{definition}
\begin{lemma} 
 The restricted $\C[[c]]$-linear Hochschild cochains of $U(\mf{gl}(N+R \mid R) \otimes \C[[z_1,z_2,c]])$ is the same the $\C[c]$-linear Hochschild cochains $U(\mf{gl}(N+R \mid R) \otimes \C[z_1,z_2,c])$.  
\end{lemma}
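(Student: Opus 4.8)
The plan is to unwind the definitions on both sides and exhibit a natural bijection. First I would recall that a $\C[c]$-linear Hochschild cochain of $U(\mf{gl}(N+R\mid R)\otimes\C[z_1,z_2,c])$ is, for each $n$, a $\C[c]$-linear map $\phi$ from the $n$-fold tensor power (over $\C[c]$) of $U(\mf{gl}(N+R\mid R)\otimes\C[z_1,z_2,c])$ to itself. The $\C^\times$-action (with $z_i$ of weight $1$, $c$ of weight $2$, $\mf{gl}$ in weight $0$) makes the universal enveloping algebra a graded $\C[c]$-module, and every homogeneous piece is finitely generated and free over $\C[c]$; indeed via PBW the weight-$k$ part is spanned by ordered monomials in the $E_{\alpha\beta}z_1^a z_2^b$ with total $z$-degree bounded by $k$ (times a power of $c$), so it is finite-rank free over $\C[c]$. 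The key observation is that the completed algebra $U(\mf{gl}(N+R\mid R)\otimes\C[[z_1,z_2,c]])$ is the $c$-adic (equivalently, the weight-wise) completion of the polynomial version, and since each graded piece is already complete (it is finite-rank free over $\C[c]$, and the grading is by a single integer bounded below in a way compatible with the $c$-filtration), a cochain "of definite weight" sees only finitely many graded pieces of the source and lands in a single graded piece of the target, hence never feels the completion.

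The main steps, in order: (1) record the PBW description of the $\C^\times$-graded pieces of both $U(\mf{gl}(N+R\mid R)\otimes\C[z_1,z_2,c])$ and its completion, noting each graded piece is finite-rank free over $\C[c]$ and that the completion is the product of graded pieces while the polynomial algebra is the direct sum; (2) show that an $n$-cochain of definite $\C^\times$-weight $w$ is determined by its restriction to the finitely many graded pieces of the $n$-fold tensor power whose total weight can map into a fixed-weight target — more precisely, a weight-$w$ cochain must send the weight-$v$ part of the tensor power into the weight-$(v+w)$ part of the algebra, so on each graded piece it is just a $\C[c]$-linear map between finite-rank free $\C[c]$-modules, with no distinction between the polynomial and completed settings; (3) assemble: a finite sum of definite-weight cochains in the completed setting is exactly the same data as a finite sum of definite-weight cochains in the polynomial setting, and one checks the Hochschild differential is weight-preserving so it carries restricted cochains to restricted cochains identically on both sides, giving an isomorphism of cochain complexes. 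I would phrase (2)–(3) as: the restricted cochain complex is the subcomplex of the product-over-weights of the graded Hochschild complexes consisting of finitely-supported elements, and this description is manifestly the same whether we start from $\C[[z_1,z_2,c]]$ or $\C[z_1,z_2,c]$.

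The one genuinely delicate point — the step I expect to be the main obstacle — is checking that the $\C^\times$-grading interacts correctly with the $c$-adic topology, i.e.\ that "definite weight" really does force a cochain to involve only finitely many source graded pieces mapping to a single target graded piece. This uses that $c$ has \emph{positive} weight $2$: multiplying by $c$ raises weight, so within a fixed weight there are only finitely many powers of $c$ available above any given $z$-monomial degree, and the relevant graded pieces are honestly finite-rank free over $\C[c]$ rather than merely finitely generated. If instead $c$ had weight $0$ this would fail, and the completion would matter; so the argument must invoke the weight assignment explicitly. Once that finiteness is in hand the rest is bookkeeping, and I would state the conclusion as the claimed equality of complexes, remarking that it will be used (together with the analogous, easier statement for cyclic cochains) to transport the computation of Theorem~\ref{theorem_tangent_complex_cyclic} from the formal setting to the polynomial setting where $\op{Diff}(\C)=\Oo_c(\C^2)$ lives.
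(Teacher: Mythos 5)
Your overall strategy is the same as the paper's: the $\C^\times$-grading on the universal enveloping algebra is concentrated in non-negative weights, and this collapses the distinction between the direct sum (polynomial) and the product (formal) when one only considers maps of definite weight. But the specific finiteness claims you lean on are false, and the one you flag as ``the main obstacle'' is a red herring.

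Concretely, you assert that each $\C^\times$-weight piece of $U(\mf{gl}(N+R\mid R)\otimes\C[z_1,z_2,c])$ is ``finite-rank free over $\C[c]$.'' This is wrong on two counts. First, since $c$ has weight $2$, multiplication by $c$ shifts the grading, so the individual weight pieces are not $\C[c]$-modules at all. Second, and more seriously, the weight-zero piece is all of $U(\mf{gl}(N+R\mid R))$: PBW monomials can contain an arbitrary number of $\mf{gl}$-factors that carry no $z$ or $c$ and hence contribute nothing to the weight. So the weight pieces are infinite-dimensional, and the ``finite-rank free versus merely finitely generated'' distinction you highlight does not arise. The related claim that a definite-weight cochain ``lands in a single graded piece of the target'' is also off: a weight-$w$ cochain sends the weight-$v$ part of the source to the weight-$(v+w)$ part of the target for \emph{every} $v$, so it is a product over all $v$ of components, not a single one.

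What saves the lemma is that none of this finiteness is needed. For $\Z_{\ge 0}$-graded $V,W$ with graded pieces $V_i,W_i$, let $\br V=\prod V_i$, $\br W=\prod W_i$. Then weight-$k$ maps $V\to W$ and weight-$k$ continuous maps $\br V\to\br W$ are \emph{both} identified with $\prod_i\Hom(V_i,W_{i+k})$, with no hypothesis on the $V_i$; this is the paper's one-line observation, applied to the tensor powers and the target. The only thing you need the positivity of the weights of $z_1,z_2,c$ for is that the grading is bounded below (so that, for each $n$ and $v$, the weight-$v$ part of the $n$-fold tensor power is a finite sum over compositions, hence agrees in the sum and product settings) — not any finite-dimensionality of the graded pieces themselves. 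If you remove the $\C[c]$-module and finite-rank claims and replace them with this graded-pieces-are-the-same-and-weight-$k$-maps-are-products-of-components observation, your proof becomes correct and coincides with the paper's.
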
 
\begin{proof} 
This follows from a simple general observation about graded vector spaces. Suppose that $V,W$ are vector spaces graded by the non-negative integers.  Let $V_i, W_i$ indicate the graded pieces, and $\br{V} = \prod V_i$, $\br{W} = \prod W_i$.  Then,
\begin{align*} 
\op{Hom}(V,W) &= \prod_i \left(\oplus_j\op{Hom} (V_i,  W_j)\right)\\
 \op{Hom}(\br{V},\br{W}) &= \prod_j \left(\oplus_i \op{Hom}(V_i,  W_j) \right)
\end{align*}
where on the second line we mean continuous linear maps. 
 
We can consider maps $V \to W$ of weight $k$ (meaning that they shift the grading by $k$), and also continuous maps $\br{V} \to \br{W}$ of weight $k$.  In either case we denote this space by $\op{Hom}_k$. Then, we find
\begin{align*} 
\op{Hom}_k(V,W) &= \prod_i \left(\op{Hom}( V_i,  W_{i+k}) \right)\\
 \op{Hom}_k(\br{V},\br{W}) &= \prod_i \left( \op{Hom}( V_{i-k} , W_i) \right)
\end{align*}
so that the spaces are the same. 

This observation applies to the case at hand, because the grading on $U(\C[[z_1,z_2,c]] \otimes \mf{gl}(N+R \mid R))$ coming from that on $\C[[z_1,z_2,c]]$ is entirely in non-negative degrees. 

\end{proof}
Next, we need another simple lemma.
\begin{lemma} 
The $\C[c,c^{-1}]$-linear restricted Hochschild cochain complex of $U( \C[z_1,z_2,c,c^{-1}] \otimes \mf{gl}(N+R \mid R))$ is obtained from the $\C[[c]]$-linear restricted Hochschild cochain complex of $U(\C[[z_1,z_2,c]]\otimes \mf{gl}(N+R \mid R))$ by tensoring over $\C [c]$  with $\C[c,c^{-1}]$.  
\end{lemma}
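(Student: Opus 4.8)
The plan is to deduce this from the general fact that the (restricted, weight-graded) Hochschild cochain construction commutes with localization at the central, weight-homogeneous, non-zero-divisor $c$ --- the localization analogue of the graded-linear-algebra bookkeeping used in the preceding lemma. Write $\mf g = \mf{gl}(N+R\mid R)$, let $R_c = \C[z_1,z_2,c]$ with $[z_1,z_2]=c$ and $c$ central, let $R_c[c^{-1}] = \C[z_1,z_2,c,c^{-1}]$, and set $A = U(\mf g\otimes R_c)$, $A[c^{-1}] = U(\mf g\otimes R_c[c^{-1}])$. By PBW, $A$ is a free graded $\C[c]$-module (the $\C^\times$-action assigns weight $1$ to the $z_i$, weight $2$ to $c$, weight $0$ to $\mf g$, and $A$ is concentrated in non-negative weights); in particular $c$ is a non-zero-divisor on $A$. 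Since $c$ is central, $A[c^{-1}]$ is the Ore localization of $A$ at $\{c^n\}_{n\ge 0}$, so $A[c^{-1}] = A\otimes_{\C[c]}\C[c,c^{-1}]$, with $A\hookrightarrow A[c^{-1}]$ an injective $\C^\times$-equivariant map of $\C[c]$-algebras. By the preceding lemma the restricted $\C[[c]]$-linear Hochschild cochain complex of $U(\mf g\otimes\C[[z_1,z_2,c]])$ is identified with the restricted $\C[c]$-linear Hochschild cochain complex $CH^\ast_{\mathrm{res}}(A)$, so it suffices to prove $CH^\ast_{\mathrm{res}}(A[c^{-1}]) \cong CH^\ast_{\mathrm{res}}(A)\otimes_{\C[c]}\C[c,c^{-1}]$ as complexes of $\C[c,c^{-1}]$-modules.

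Next I would build the comparison map and reduce it to a weight-by-weight statement. Base change of the bar complex along $A\to A[c^{-1}]$ gives $A[c^{-1}]^{\otimes_{\C[c,c^{-1}]} n} = A^{\otimes_{\C[c]} n}\otimes_{\C[c]}\C[c,c^{-1}]$, so extending $\C[c]$-linear cochains $\C[c,c^{-1}]$-linearly (and allowing them to be divided by powers of $c$) defines a map of complexes
\[
CH^\ast_{\mathrm{res}}(A)\otimes_{\C[c]}\C[c,c^{-1}] \longrightarrow CH^\ast_{\mathrm{res}}(A[c^{-1}]),
\]
compatible with the Hochschild differential, since that differential is assembled from the product, which is compatible with localization. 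As a restricted cochain is by definition a finite sum of weight-homogeneous cochains, it is enough to check this map is bijective on each weight-$k$ homogeneous component in each cochain degree $n$. Injectivity is immediate because $c$ is a non-zero-divisor. For surjectivity one uses that a weight-$k$, $\C[c,c^{-1}]$-linear cochain $\phi$ of $A[c^{-1}]$ is determined by its restriction $\psi := \phi|_{A^{\otimes n}}\colon A^{\otimes_{\C[c]} n}\to A[c^{-1}]$, a $\C[c]$-linear weight-$k$ map; if one knows $\psi$ takes values in $c^{-m}A$ for some finite $m$, then $c^m\psi$ is a bona fide element of $CH^n_{\mathrm{res}}(A)$ of weight $k+2m$ and $\phi = c^{-m}\otimes(c^m\psi)$, giving surjectivity. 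Tracking this isomorphism back through the preceding lemma then yields the statement as phrased.

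The crux --- and the one step I expect to require real care rather than formal manipulation --- is the bounded-denominator claim: a weight-homogeneous $\C[c,c^{-1}]$-linear Hochschild cochain of $A[c^{-1}]$ uses only a bounded power of $c^{-1}$ and hence descends. This is exactly where the weight grading must be used in an essential way, just as the graded-$\mathrm{Hom}$ observation was used in the preceding lemma: $A$ lies in non-negative weights, $c$ has positive weight, and fixing the total weight $k$ of the cochain, together with the $\C[c]$-linearity constraint $\psi(c\,a)=c\,\psi(a)$, forces the power of $c^{-1}$ needed to return a homogeneous argument to $A$ to be uniformly bounded. Once this is in place, the remaining checks --- compatibility with the differential and agreement of the two $\C[c,c^{-1}]$-module structures --- are routine, and combining with the preceding lemma identifies the restricted $\C[c,c^{-1}]$-linear Hochschild cochain complex of $U(\C[z_1,z_2,c,c^{-1}]\otimes\mf{gl}(N+R\mid R))$ with $\bigl(\text{restricted }\C[[c]]\text{-linear Hochschild cochain complex of }U(\C[[z_1,z_2,c]]\otimes\mf{gl}(N+R\mid R))\bigr)\otimes_{\C[c]}\C[c,c^{-1}]$.
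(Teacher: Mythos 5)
Your strategy is the same as the paper's: use the preceding lemma to pass from $\C[[c]]$-linear to $\C[c]$-linear restricted cochains, then localize $c$. The paper's own proof is a one-sentence remark that does not engage with the point you isolate, so you are right that the content, if any, lives in the bounded-denominator claim; but the sketch you give for that claim does not work, and at the cochain level I believe the claim is actually false.

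Write $A\cong V\otimes_\C\C[c]$ as a graded free $\C[c]$-module via a $\C[c]$-linear PBW isomorphism, with $V=\Sym^\ast\left(\mf{gl}(N+R\mid R)\otimes\C[z_1,z_2]\right)$ concentrated in non-negative weights. A weight-$k$ $\C[c,c^{-1}]$-linear $n$-cochain of $A[c^{-1}]$ is determined by a weight-$k$ $\C$-linear map $\psi\colon V^{\otimes n}\to V\otimes_\C\C[c,c^{-1}]$. For $v\in V^{\otimes n}_w$ the weight forces $\psi(v)\in\bigoplus_{m\le(w+k)/2}V_{w+k-2m}\,c^m$. So non-negativity of the weights of $V$ together with the positive weight of $c$ bounds the exponent of $c$ appearing in $\psi(v)$ from \emph{above}, not from below; it places no constraint whatsoever on negative powers. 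This is the opposite of what your argument needs. Moreover $\mf{gl}(N+R\mid R)$ has weight $0$, so each $V_w$ (already $V_0\supset\Sym^\ast\mf{gl}(N+R\mid R)$) is infinite-dimensional, and the values of $\psi$ on distinct basis vectors within a single $V^{\otimes n}_w$ are entirely unconstrained by $\C[c]$-linearity. Concretely, in cochain degree $1$ and weight $0$: pick a countable basis $\{v_i\}_{i\ge 0}$ of $V_0$ and nonzero $w_i\in V_{2i}$, and let $\psi$ be the $\C[c]$-linear extension of $v_i\mapsto c^{-i}w_i$. This is a weight-homogeneous $\C[c,c^{-1}]$-linear Hochschild $1$-cochain of $A[c^{-1}]$ that is not $c^{-M}$ times any cochain of $A$ for any finite $M$, hence lies outside the image of your comparison map. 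Your map is injective, as you say, but it is not surjective.

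So the gap is real: the uniform bound you appeal to is not forced by the weight grading and $\C[c]$-linearity, and the step you correctly flag as ``the crux'' is exactly where the argument fails. The paper glosses over the same point; it only uses this lemma downstream after inverting $c$ and passing to cohomology modulo $c$-torsion, where the discrepancy between $CH^\ast_{\mathrm{res}}(A)\otimes_{\C[c]}\C[c,c^{-1}]$ and $CH^\ast_{\mathrm{res}}(A[c^{-1}])$ may wash out; but the cochain-level isomorphism you set out to prove is not established by your argument.
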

\begin{proof} 
This follows from first, applying the previous lemma, and then noting that we get the Hochschild cochains of $\C[z_1,z_2,c,c^{-1}]$ from those of $\C[z_1,z_2,c]$ by tensoring over $\C[c]$ with $\C[c,c^{-1}]$. 
\end{proof}
Next, we can put these together to get a lemma about the tangent complex to our moduli problem $\mc{M}_{N+\infty\mid\infty,B}$.   To state this lemma, as above let $B = \C[z_1,z_2,c,c^{-1}]$ and also let $\what{B} = \C[[z_1,z_2,c]]$.  The formal moduli problems $\mc{M}_{N+\infty \mid \infty,B}$ and $\mc{M}_{N+\infty \mid \infty, \what{B}}$ have $\C^\times$ actions.  It follows that we can consider the restricted part of the tangent complex, $T^{res}\mc{M}_{N+\infty \mid \infty,B}$, defined as the subspace consisting of finite direct sums of eigenvectors for the $\C^\times$-action. 
\begin{lemma} 
There is an isomorphism
$$
H^i \left( T^{res} \mc{M}_{N+\infty \mid \infty,B} \right)\iso H^i \left( T^{res}\mc{M}_{N+\infty\mid\infty,\what{B}} \right) [c^{-1}]. 
$$
\label{lem:B formal translation}
\end{lemma}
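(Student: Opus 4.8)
The plan is to unwind the definition of the tangent complex, reduce to the two preceding lemmas, and then observe that inverting $c$ is an exact operation that is compatible with every construction in sight (cones, cohomology, and — the one delicate point — the inverse limit over $R$).

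First I would recall, from the cochain-level form of Lemma \ref{lemma_les} together with the identity $T\mc{M}_{N+\infty\mid\infty,B} = \lim_R T\mc{M}_{N+R\mid R,B}$ (and its analogue for $\what{B}$), that for each finite $R$ the restricted tangent complex $T^{res}\mc{M}_{N+R\mid R,B}$ is the cone of the restriction map between the restricted Hochschild cochain complexes $CH^\ast\bigl(U(\mf{gl}(N+R\mid R)\otimes B)\bigr)[1]$ and $CH^\ast\bigl(U(\mf{sl}(N+R\mid R)),\,U(\mf{gl}(N+R\mid R)\otimes B)\bigr)[1]$. The two preceding lemmas already identify the first of these, for $B = \C[z_1,z_2,c,c^{-1}]$, with the corresponding restricted complex for $\what{B}=\C[[z_1,z_2,c]]$ after $-\otimes_{\C[c]}\C[c,c^{-1}]$; and the identical grading argument — which uses nothing about the algebra structure beyond the $\C^\times$-weights of $z_1,z_2,c$ — gives the same statement for the relative complex with its $U(\mf{gl}(N+R\mid R)\otimes B)$-bimodule coefficients. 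Since localization is flat it commutes with the cone, so $T^{res}\mc{M}_{N+R\mid R,B}\simeq T^{res}\mc{M}_{N+R\mid R,\what{B}}\otimes_{\C[c]}\C[c,c^{-1}]$ as complexes.

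Next I would pass to the limit over $R$. Localization at $c$ sends the $\C^\times$-weight-$k$ summand of a restricted complex into weight-$k$, weight-$(k-2)$, etc., hence respects the decomposition as a direct sum of eigenspaces; and the inverse limit defining $T^{res}\mc{M}_{N+\infty\mid\infty,-}$ is computed weight by weight. By the super Loday–Quillen–Tsygan computation (Lemma \ref{lemma_super_LQT}, Theorem \ref{theorem_tangent_complex_cyclic}) the inverse system in each fixed $\C^\times$-weight and cohomological degree is essentially constant, so $\lim_R$ commutes with $-\otimes_{\C[c]}\C[c,c^{-1}]$; this yields $T^{res}\mc{M}_{N+\infty\mid\infty,B}\simeq T^{res}\mc{M}_{N+\infty\mid\infty,\what{B}}\otimes_{\C[c]}\C[c,c^{-1}]$, and exactness of localization then gives the stated isomorphism $H^i\bigl(T^{res}\mc{M}_{N+\infty\mid\infty,B}\bigr)\iso H^i\bigl(T^{res}\mc{M}_{N+\infty\mid\infty,\what{B}}\bigr)[c^{-1}]$.

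The main obstacle is precisely the interchange of the inverse limit over $R$ with localization at $c$: tensor products do not commute with limits in general, so one must genuinely invoke the stabilization of each $\C^\times$-weight piece of the tangent complex as $R\to\infty$ (equivalently, that the relevant pro-system is Mittag–Leffler with eventually constant terms), which is exactly what the cyclic-homology description of Theorem \ref{theorem_tangent_complex_cyclic} provides — and this is why the statement is made for the \emph{restricted} tangent complex rather than the full one. A secondary bookkeeping point, which I would state but which is routine, is that the relative complex $CH^\ast\bigl(U(\mf{sl}(N+R\mid R)),\,U(\mf{gl}(N+R\mid R)\otimes B)\bigr)$ obeys the same "invert $c$'' relation as the absolute one, since the grading argument of the two preceding lemmas is insensitive to replacing $\mf{gl}$ by $\mf{sl}$ on the source and to the choice of bimodule coefficients.
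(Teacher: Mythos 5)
Your argument is correct and is essentially the paper's intended proof: the paper's own justification is simply that the statement ``follows from the previous lemmas,'' i.e.\ the finite-$R$ identification of restricted Hochschild complexes for $B$ with the $c$-localized ones for $\what{B}$, applied to the cone description of the tangent complex from Lemma \ref{lemma_les} and passed through the limit over $R$. Your additional care about interchanging $\lim_R$ with $-\otimes_{\C[c]}\C[c,c^{-1}]$ (via the weight-by-weight stabilization furnished by the Loday--Quillen--Tsygan description) fills in a point the paper leaves implicit, and is the right way to justify it.
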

\begin{proof}
This follows from the previous lemmas.
\end{proof} 
The final lemma we need will translate between the tangent space of our moduli problem defined using $B = \C[z_1,z_2,c,c^{-1}]$ and of the corresponding moduli problem when we use $\op{Diff}(\C)$.  Putting these together gives us a dictionary allowing us to compute things about our moduli problem for $\op{Diff}(\C)$ using statements about the moduli problem for $\C[[z_1,z_2,c]]$.
\begin{lemma} 
Let us give $\op{Diff}(\C)$ a $\C^\times$-action under which $z$ has weight $1$ and $\dpa{z}$ has weight $-1$. 

Let $T^{(i)}\mc{M}_{N+\infty \mid \infty, \op{Diff}(\C)}$ denote the subcomplex of the tangent complex consisting of elements of weight $i$ under this $\C^\times$-action. 

Then, there is a quasi-isomorphism
$$
\oplus_{i \text{ even} } T^{(i)}\mc{M}_{N+\infty \mid \infty, \op{Diff}(\C)} \simeq T^{(0)} \mc{M}_{N+\infty \mid \infty, B}  
$$
where on the right hand side $T^{(0)}$ indicates the $\C^\times$-invariants in the tangent complex.
\label{lem:Diff(C) formal translation} 
\end{lemma}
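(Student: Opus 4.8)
The plan is to deduce this from an identification of $B$ with $\op{Diff}(\C)$ after adjoining the central unit $c$, and then to transport the grading carefully.

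First I would observe that the substitution $\zeta_1 = z_1$, $\zeta_2 = z_2 c^{-1}$ identifies $B$, as a \emph{graded} algebra, with $\op{Diff}(\C)\otimes\C[c,c^{-1}]$: one has $[\zeta_1,\zeta_2]=[z_1,z_2]c^{-1}=1$, so $\zeta_1,\zeta_2$ generate a copy of $\op{Diff}(\C)$ (with $\zeta_1$ in the role of $z$ and $\zeta_2$ in the role of $\partial$), $c$ is central and invertible, and the monomials $\zeta_1^a\zeta_2^bc^k$ form a basis. Under this identification $c$ has weight $2$ and the $\op{Diff}(\C)$-tensorand carries exactly the $\C^\times$-action of the lemma ($z$ of weight $1$, $\partial$ of weight $-1$); the $B$-weight of $\zeta_1^a\zeta_2^bc^k$ equals $(a-b)+2k$, so the $B$-weight of any monomial equals its $\op{Diff}(\C)$-weight plus $2$ times its $c$-degree. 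Consequently, for any graded object built functorially from $B$ — a tensor power of $U(\mf{gl}(N+R\mid R)\otimes B)$, its restricted linear dual, a Hochschild or cyclic cochain complex, and so on — its weight-$0$ part is canonically isomorphic to $\bigoplus_{i\text{ even}}$ of the weight-$i$ part of the corresponding object over $\op{Diff}(\C)$, via $x\mapsto x\,c^{-i/2}$; only even weights occur because $c^{-i/2}$ must be an honest power of $c$.

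Next I would feed this through the construction of the tangent complex. By Lemma \ref{lemma_les} and the discussion preceding it, $T\mc{M}_{N+\infty\mid\infty,A}$ is assembled from the Hochschild cochain complexes $CH^\ast(U(\mf{gl}(N+R\mid R)\otimes A))$ and $CH^\ast(U(\mf{sl}(N+R\mid R)),U(\mf{gl}(N+R\mid R)\otimes A))$ by taking a cone, adjoining the differential $tQ$, passing to $H$-invariants, and taking the homotopy limit over $R$. Each of these operations is built from $\otimes$, $\Hom$ and (co)limits and, on the restricted (finitely graded) parts, commutes with $-\otimes\C[c,c^{-1}]$; hence $T^{res}\mc{M}_{N+\infty\mid\infty,B}\simeq T^{res}\mc{M}_{N+\infty\mid\infty,\op{Diff}(\C)}\otimes\C[c,c^{-1}]$ as graded complexes, with $c$ contributing weight $2$. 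Taking weight-$0$ parts — the left side being $T^{(0)}\mc{M}_{N+\infty\mid\infty,B}$ by definition, the right side being $\bigoplus_{i\text{ even}}T^{(i)}\mc{M}_{N+\infty\mid\infty,\op{Diff}(\C)}$ by the bookkeeping of the previous paragraph — gives the claim. Alternatively, one could bypass the base-change step and instead combine Theorem \ref{theorem_tangent_complex_cyclic} with a Künneth formula for the cyclic homology of $(A\otimes\C[c,c^{-1}])\oplus(\,\cdot\,)^\vee[-1]$; I would use whichever is cleaner when writing out details.

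The only real obstacle is the interaction of the linear duals and completions occurring in the construction with the base change $-\otimes\C[c,c^{-1}]$: since $\C[c,c^{-1}]$ is infinite-dimensional, $(A\otimes\C[c,c^{-1}])^\vee$ is not literally $A^\vee\otimes\C[c,c^{-1}]$. This is precisely the kind of subtlety already resolved for $B$ versus $\what B=\C[[z_1,z_2,c]]$ in Lemma \ref{lem:B formal translation} and the preceding lemmas on restricted Hochschild cochains, and the remedy is the same: run the entire argument with the restricted (finitely graded) versions of all cochain complexes throughout, for which the naive base-change and duality identities do hold, and only pass to the weight-$0$ subspace at the end. One also checks directly from the definition of the $Q$-cohomology maps that the residual $\C^\times$-action defining $T^{(i)}\mc{M}_{N+\infty\mid\infty,\op{Diff}(\C)}$ is exactly the one induced by the grading of $\op{Diff}(\C)$, so that the two sides are graded compatibly.
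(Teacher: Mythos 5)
Your proposal is correct and follows essentially the same route as the paper: the $\C^\times$-equivariant, $\C[c,c^{-1}]$-linear identification $\op{Diff}(\C)\otimes\C[c,c^{-1}]\iso B$ (via $z\mapsto z_1$, $\partial_z\mapsto c^{-1}z_2$), base change of the restricted tangent complex, and then passing to $\C^\times$-invariants using that $c$ has weight $2$. Your extra care about restricted complexes and duals just makes explicit what the paper handles in the preceding lemmas, so no further comment is needed.
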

\begin{proof}
 There a $\C^\times$-equivariant and $\C[c,c^{-1}]$-linear isomorphism of algebras
\begin{align*} 
\op{Diff}(\C) \otimes \C[c,c^{-1}]& \iso \C[z_1,z_2,c,c^{-1}] \\
z & \mapsto z_1 \\
\dpa{z}  &\mapsto c^{-1} z_2.  
\end{align*}
From this it is immediate that there is an isomorphism of complexes of $\C[c,c^{-1}]$-modules
$$
  T^{res}\mc{M}_{N+\infty \mid \infty, \op{Diff}(\C)}\otimes \C[c,c^{-1}]  =  T^{res}\mc{M}_{N+\infty \mid \infty, \op{Diff}(\C)\otimes \C[c,c^{-1}]} \iso T^{res} \mc{M}_{N+\infty\mid\infty,B} . 
$$
Passing to $\C^\times$-invariants on both sides and using the fact that $c$ is of weight $2$ gives the result.  
\end{proof}

\subsection{}
To compute the tangent complex of $\mc{M}_{N+\infty \mid \infty,\op{Diff}(\C)}$ it therefore suffices to compute the tangent complex of $\mc{M}_{N+\infty \mid \infty,\what{B}}$. There is a spectral sequence which allows us to relate this tangent complex to the tangent complex to $\mc{M}_{N+\infty\mid\infty,\C[[z_1,z_2]]}$. 

Invoking theorem \ref{theorem_tangent_complex_cyclic}, to compute the cohomology of this tangent complex we find that we need to compute the derived tensor product of the $\C[[z_1,z_2]]$-module $\C[[z_1,z_2]]^\vee$ with itself $n$ times.  In this case 
$$\C[[z_1,z_2]]^\vee = D_0 = \C[\partial_1, \partial_2]$$
is the space of constant-coefficient differential operators on $\C^2$. A constant-coefficient differential operator $D$ acts on the space of formal power series by sending a power series $f$ to $(Df)(0)$, the value at $0$ of $Df$. 

\begin{lemma}
There is an isomorphism of $\C[[z_1,z_2]]$-modules
$$
D_0^{\otimes^{\mbb L}_{\C[[z_1,z_2]]} n} \simeq D_0 [2(n-1) ].
$$
\end{lemma}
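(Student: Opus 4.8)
The plan is to compute the derived tensor powers of $D_0 = \C[\partial_1,\partial_2]$ as a module over $S := \C[[z_1,z_2]]$ by means of an explicit free resolution. The key point is that $D_0$ is the residue field $\C$ of $S$ (regarded as an $S$-module via the augmentation $f \mapsto f(0)$, up to reindexing by the degree of the differential operator), and $S$ is a regular local ring of dimension $2$. So $\mathrm{Tor}^S_\ast(\C,\C)$ is the exterior algebra on two generators, concentrated in homological degrees $0,1,2$. First I would make this precise: the Koszul complex $K_\bullet = \Sym^\ast(V[1]) \otimes_\C S$ on the two-dimensional space $V = \langle z_1, z_2\rangle$ is a free resolution of $\C$ over $S$, so $\C \otimes^{\mathbb L}_S \C \simeq \Sym^\ast(V[1]) = \C \oplus V[1] \oplus \wedge^2 V[2]$. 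Dualizing (and keeping track that $D_0 = \C[\partial_1,\partial_2]$ is the graded dual of $S$, with $\partial_i$ in weight $-1$, so that the cohomological/homological conventions of the excerpt match), one gets $D_0 \otimes^{\mathbb L}_S D_0 \simeq D_0[2]$ after twisting by $\wedge^2 V$, which is one-dimensional; this is the $n=2$ case of the claimed formula.

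Next I would handle general $n$ by induction, using associativity of the derived tensor product:
\begin{align*}
D_0^{\otimes^{\mathbb L}_S n} &\simeq D_0^{\otimes^{\mathbb L}_S (n-1)} \otimes^{\mathbb L}_S D_0 \\
&\simeq \left(D_0[2(n-2)]\right) \otimes^{\mathbb L}_S D_0 \simeq D_0[2(n-2)][2] = D_0[2(n-1)].
\end{align*}
Here the middle isomorphism is the inductive hypothesis and the last uses the $n=2$ computation together with the fact that shifting is compatible with $\otimes^{\mathbb L}_S$. One subtlety to address carefully is that $D_0$ is not finitely generated over $S$ (it is the full dual), so I would either work in the graded/pro-finite setting where $D_0 = \bigoplus_{k\ge 0}\Sym^k(\partial_1,\partial_2)$ is a colimit of finite-dimensional pieces and Tor commutes with filtered colimits, or equivalently phrase everything in terms of the local cohomology / Matlis dual of $S$, i.e. $D_0 \cong H^2_{\mathfrak m}(S)$ up to shift, whose derived self-products are governed by the same Koszul duality. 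Either way the computation reduces to linear algebra over $\C$ on the exterior algebra $\wedge^\ast V$.

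The main obstacle I anticipate is bookkeeping rather than conceptual: getting the degree shifts and the weight/grading conventions exactly right, so that the shift works out to $2(n-1)$ and not, say, $2n$ or $2(n-1)$ in the wrong cohomological direction, and confirming that the one-dimensional twist $\wedge^2 V$ really is trivial as an $S$-module (it is, since $S$ acts on it through the augmentation) so that the answer is $D_0$ on the nose and not some nontrivial twist. I would also want to double-check that the module structure of $D_0^{\otimes^{\mathbb L}_S n}$ referred to in the statement is indeed the diagonal $S$-module structure, and that this diagonal structure matches the one induced on $D_0[2(n-1)]$ under the iterated isomorphism above — this is automatic from the symmetric-monoidal structure of $D(S\text{-mod})$ but worth stating explicitly since it is what gets used in Theorem \ref{theorem_tangent_complex_cyclic}.
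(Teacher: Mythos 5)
Your opening identification is simply wrong: $D_0 = \C[\partial_1,\partial_2]$ is the full topological dual of $S = \C[[z_1,z_2]]$, which is infinite-dimensional; it is the injective hull of $\C$ (the Matlis dual of $S$, or $H^2_{\mathfrak m}(S)$), not the residue field $\C$. You correct yourself further down, but the damage to the argument is real. Your key step -- that ``dualizing'' the Koszul computation $\C \otimes^{\mathbb L}_S \C \simeq \Sym^\ast(V[1])$ and ``twisting by $\wedge^2 V$'' yields $D_0 \otimes^{\mathbb L}_S D_0 \simeq D_0[2]$ -- is not a deduction, and it is not clear how it could be. Notice that $\C \otimes^{\mathbb L}_S \C$ is spread across three cohomological degrees, whereas the claimed answer $D_0[2]$ is concentrated in one. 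So whatever ``dualization'' is being invoked must produce a genuine cancellation, and nothing in the proposal shows how this happens. If one naively tensors $D_0$ against the Koszul resolution of the other factor, one obtains the complex $D_0 \otimes \wedge^\ast V$ in three degrees with a Koszul differential, and the content of the lemma is that this differential is exact except at the top -- that is precisely what has to be proved, and it is not automatic from the residue-field computation.

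The paper's own proof avoids all of this by two reductions that you do not use: first, it factors the $k$-dimensional problem as a $k$-fold tensor of the one-dimensional problem (since $D_0(\widehat D^k) = D_0(\widehat D^1)^{\otimes k}$ and $\Oo(\widehat D^k) = \Oo(\widehat D^1)^{\widehat\otimes k}$), and it reduces $n > 2$ to $n = 2$ by induction, as you also propose. Then, in one variable, it writes $D_0 = \C[\partial]$ as the filtered colimit of the finitely generated $\C[[z]]$-modules $\C[z]/z^n$ under the multiplication-by-$z$ maps, resolves each by the two-term complex $\C[[z]] \xto{z^n} \C[[z]]$, tensors with $D_0$, observes that $z^n$ acts surjectively on $\C[\partial]$ with kernel $\C[\partial]/\partial^n \cong \C[z]/z^n$ in degree $-1$, and passes to the colimit. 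This produces $D_0[1]$ in one variable, hence $D_0[k]$ in $k$ variables, with no appeal to Matlis or local duality. You gesture at the colimit route as a fallback, which is the right instinct; but as written the proposal never actually performs the $n = 2$ computation, and the ``dualization'' shortcut you lean on instead is not valid without supplying exactly the argument it is meant to replace.
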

\begin{proof}
We will prove a more general result. Suppose we work in $k$ dimensions.  Let $\Oo(\what{D}^k) = \C[[z_1,\dots,z_k]]$ be the algebra of functions on the formal disc $\what{D}^k$.  Let $D_0(\what{D}^k)$ be the fibre at $0$ of the bundle of differential operators on $\what{D}^k$. This is the same as the constant-coefficient differential operators, which are the dual of $\Oo(\what{D}^k)$.  Then, we will see that
$$
D_0(\what{D}^k)^{\otimes^{\mbb L}_{\Oo(\what{D}^k)} n} \simeq D_0(\what{D}^k) [(n-1)k].
$$
Note that $D_0(\what{D}^k) = D_0(\what{D}^1)^{\otimes k}$, and that $\Oo(\what{D}^k) = \Oo(\what{D}^1)^{\what{\otimes} k}$, and that the tensor products we are computing are compatible with these decompositions.  So it suffices to prove the result when $k = 1$, and by induction to consider the case $n = 2$. 

In that case, note that we can identify $D_0(\what{D}^1) = \C[\partial]$ as the colimit of the $\C[[z]]$-modules $\C[z]/z^n$ under the maps  
$$
\C[z]/z^n \xto{z} \C[z]/z^{n+1}.
$$
Now, $\C[z]/z^n$ has a free resolution which is the two-term complex $\C[[z]] \xto{z^n} \C[[z]]$ situated in degrees $-1$ and $0$. We will compute the derived tensor product $\C[z]/z^n \otimes^{\mbb L}_{\C[[z]]} \C[\partial]$ using this free resolution of $\C[z]/z^n$ and then take the colimit over $n$.  We find that $\C[z]/z^n \otimes^{\mbb L}_{\C[[z]]} \C[\partial]$ is the cohomology of the two-term complex  $\C[\partial] \xto{z^n} \C[\partial]$ situated in degrees $-1$ and $0$. The differential in this complex is surjective, so the cohomology is the kernel, which consists of $\C[\partial]/\partial^{n}$ situated in degree $-1$. As a $\C[[z]]$-module, it is isomorphic to $\C[z]/z^n[1]$, where $[1]$ indicates that it's situated in degree $-1$.     

It remains to check that the maps in the diagram computing the colimit over $n$ are the same as the maps describing $\C[\partial]$ as a colimit of the modules $\C[z]/z^n$. This is easy to see, and tells us that 
$$\C[\partial] \otimes_{\C[[z]]}^{\mbb L} \C[\partial] \simeq \C[\partial][1]$$
as desired.

\end{proof}

Next, let us discuss the version of this lemma when we replace $\C[[z_1,z_2]]$ by a non-commutative deformation.
\begin{lemma} 
Let $A$ be the non-commutative algebra $\C[[z_1,z_2,c]]$ where $[z_1,z_2] = c$, and we view $A$ as an algebra over $\C[[c]]$. Let $A^\vee$ denote the $\C[[c]]$-linear dual.  Then, there is a quasi-isomorphism
$$
\underbrace{A^\vee \otimes^{\mbb L}_A \dots \otimes_A^{\mbb L} A^\vee }_{n \text{ copies of } A^\vee } \iso A^\vee[2(n-1)].  
$$
\label{lemma_noncommutative_tensor_powers}
\end{lemma}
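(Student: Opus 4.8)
The plan is to deduce Lemma \ref{lemma_noncommutative_tensor_powers} from the commutative computation of the previous lemma by a flat base-change argument over $\C[[c]]$, viewing $A = \C[[z_1,z_2,c]]$ with $[z_1,z_2]=c$ as a $c$-adically complete, topologically $\C[[c]]$-free deformation of $\C[[z_1,z_2]]$. First I would set up the relevant modules carefully. The PBW-type monomials $z_1^i z_2^j$ form a topological $\C[[c]]$-basis of $A$, so $A$ is flat over $\C[[c]]$ and $A\otimes^{\mbb L}_{\C[[c]]}\C \iso \C[[z_1,z_2]]$; its continuous $\C[[c]]$-linear dual $A^\vee$ carries the $A$-bimodule structure $(a\cdot\phi\cdot b)(x) = \phi(bxa)$, is again topologically $\C[[c]]$-free, $c$-adically complete, and reduces modulo $c$ to $D_0 = \C[\partial_1,\partial_2]$ with its usual $\C[[z_1,z_2]]$-bimodule structure. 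Choosing a resolution of $A^\vee$ by topologically free $A$-modules that are bounded above and $c$-adically complete, the $n$-fold product $C_n = A^\vee\otimes^{\mbb L}_A\cdots\otimes^{\mbb L}_A A^\vee$ is computed by a bounded-above complex of topologically $\C[[c]]$-free modules. It is convenient to keep track throughout of the $\C^\times$-action (weight $1$ on the $z_i$, weight $2$ on $c$): everything in sight is graded with finite-dimensional weight pieces, so all completeness issues reduce to elementary linear algebra weight by weight.

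The second step is the base change. Because $A$ and $A^\vee$ are $\C[[c]]$-flat, $C_n\otimes^{\mbb L}_{\C[[c]]}\C$ is quasi-isomorphic to $D_0\otimes^{\mbb L}_{\C[[z_1,z_2]]}\cdots\otimes^{\mbb L}_{\C[[z_1,z_2]]}D_0$, which by the previous lemma is $D_0[2(n-1)]$, concentrated in a single degree. Running the $c$-adic filtration on the representing complex of flat modules and comparing with the homology of the associated graded — equivalently, a Nakayama argument over the complete local ring $\C[[c]]$, carried out in each fixed weight — forces the homology of $C_n$ to be concentrated in degree $2(n-1)$, with $H_{2(n-1)}(C_n)$ a $c$-adically complete $\C[[c]]$-module satisfying $H_{2(n-1)}(C_n)/c\iso D_0$, hence itself topologically $\C[[c]]$-free.

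What remains, and what I expect to be the only genuine difficulty, is to identify $H_{2(n-1)}(C_n)$ with $A^\vee$ as an $A$-bimodule, not merely modulo $c$. Since $D_0$ is not finitely generated over $\C[[z_1,z_2]]$ one cannot simply lift a generator; instead I would produce a functorial comparison map and then invoke the flatness and $c$-adic completeness already established. One route: use a Koszul-type bimodule resolution of $A$ over $\C[[c]]$ of length $2$ — arising from the filtered structure whose $\C[[c]]$-relative associated graded is the polynomial ring $\C[[c]][z_1,z_2]$, equivalently from the Heisenberg presentation of $A$ — to obtain a canonical length-$2$ resolution of $A^\vee$ by topologically free bimodules, compute the iterated tensor products with these explicit complexes exactly as in the commutative lemma (the shift $[2]$ per extra factor coming from the length of the Koszul complex), and check that the resulting map $A^\vee[2(n-1)]\to C_n$ reduces modulo $c$ to the isomorphism of the previous lemma; it is then an isomorphism. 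An alternative, closer in spirit to the commutative argument, is to exhibit $A^\vee$ as a colimit of $c$-complete cyclic left quotients $A/Aw_k$ for a cofinal family of non-zero-divisors $w_k$, resolve each by $[A\xrightarrow{\,\cdot w_k\,}A]$, and run the same colimit computation; here matching up the transition maps is the delicate bookkeeping point. Either way the remaining verifications are formal.
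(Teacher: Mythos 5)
Your proposal is correct in outline, identifies the right crux (flatness and concentration in a single degree do not by themselves pin down the bimodule structure of the top homology), and its main route is genuinely different from the paper's. The paper also starts, in effect, from your steps 1--2: it observes that the derived tensor product is a flat deformation over $\C[[c]]$ of $D_0(\what{D}^2)[2(n-1)]$, with left and right actions agreeing at $c=0$. But instead of identifying the answer by an explicit resolution, it classifies the possible deformations: a spectral sequence from the commutative computation $\op{Ext}^\ast_{\C[[z_1,z_2,w_1,w_2]]}(D_0,D_0)=\C[[z_1,z_2,\delta_1^\ast,\delta_2^\ast]]$ shows that the degree-one part of $\op{Ext}_{A\otimes A}(A^\vee,A^\vee)$ invariant under $\mf{sl}(2)$ and the $\C^\times$-action is at most one-dimensional, spanned by $z_1\delta_1^\ast+z_2\delta_2^\ast$, a class that is already nonzero at $c=0$; since the deformation in question vanishes at $c=0$, it must be trivial. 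Your route (a) instead proves the identification directly, via the relative Koszul bimodule resolution of the homogenized Weyl algebra $A$ over $\C[[c]]$ (in effect, the relative Calabi--Yau property of $A$ in dimension $2$), with the dual-of-multiplication map as the comparison in top Koszul degree (its chain-map property holds by associativity alone, independently of $c$) and derived Nakayama to conclude; this buys an explicit isomorphism and avoids any equivariance argument, at the cost of verifying exactness of the noncommutative Koszul complex (via the associated graded) and keeping track of completions, which your weight-by-weight finiteness observation handles. One caution about your alternative route (b): the commutative lemma's reduction to one variable rests on the K\"unneth factorization $\C[[z_1,z_2]]=\C[[z_1]]\what{\otimes}\C[[z_2]]$, which is unavailable once $[z_1,z_2]=c\neq 0$, and the natural cyclic quotients such as $A/(Az_1^n+Az_2^n)$ acquire $c$-torsion (note $c=z_1z_2-z_2z_1\in Az_1+Az_2$), so presenting $A^\vee$ as a colimit of $c$-complete cyclic quotients is considerably more delicate than ``bookkeeping''; rely on route (a).
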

\begin{proof} 
Consider the tensor product on the left hand side of the displayed equation. Clearly, it is a deformation of the corresponding tensor product when $c = 0$, and so some deformation of $D_0(\what{D}^2) [2(n-1)]$ where as above we identify the space $D_0(\what{D}^2)  = \C[\partial_1,\partial_2]$ with the linear dual of $\C[[z_1,z_2]]$.  This is a deformation of $D_0(\what{D}^2)$ as a $\C[[z_1,z_2]]$-bimodule, where the left and right actions coincide.  To prove the lemma, it suffices to show that there are no such deformations compatible with the symmetries of the problem. Thus, we need to compute the $\op{Ext}$-groups
$$
\op{Ext}^\ast_{\Oo(\what{D}^2 \times \what{D}^2)} (D_0(\what{D}^2),D_0(\what{D}^2)). 
$$
As in the proof of the previous lemma, this graded vector space is the tensor square of the corresponding graded vector space where we use the one-dimensional formal disc $\what{D}^1$ instead of $\what{D}^2$.  It suffice, therefore, to calculate
$$
\op{Ext}^\ast_{\Oo(\what{D}^1 \times \what{D}^1)} (D_0(\what{D}^1),D_0(\what{D}^1)). 
$$
We let $z,w$ be coordinates on the two copies of the formal disc.  As we saw in the proof of the previous lemma, $D_0 = \C[\partial]$ is the colimit over $n$ of the modules $\C[z]/z^n$ where $\C[[z]]$ and $\C[[w]]$ act in the same way.  We can take a free resolution of these modules, of the form
$$
\C[z]/z^n \simeq \C[[z,w,\eps_n,\delta]] 
$$
with differential $\d \delta = z-w$, $\d \eps_n = z^n$.  This allows us to compute that
\begin{align*} 
 \mbb{R}\op{Hom}_{\C[[z,w]] } (D_0,D_0) & \simeq \lim_n \op{Hom}_{\C[[z,w]]} ( \C[[z,w,\eps_n,\delta]], D_0 ) \\
   & = \lim_n D_0[\eps_n^\ast,\delta^\ast]. 
\end{align*}
Here $\eps_n^\ast,\delta^\ast$ are dual variables to $\eps_n,\delta$ and are of degree $1$.The differential is 
$$
z^n \eps_n^\ast + (z-w) \delta^\ast.
$$
Now, $z-w$ acts by $0$ on $D_0$, so this term in the differential vanishes.  Further, the map $z^n : D_0 \to D_0$ is surjective and has kernel isomorphic to $\C[z]/z^n$.   We find that
$$
\mbb{R}\op{Hom}_{\C[[z,w]] } (D_0,D_0) = \lim_n \left(\C[z,\delta^\ast]/z^n \right) . 
$$
The maps in the inverse system can be checked to be the natural surjective maps $\C[z]/z^{n+1} \to \C[z]/z^n$. Thus, the limit is $\C[[z , \delta^\ast]]$, and we have shown that
$$
\op{Ext}^\ast_{\C[[z,w]] } (D_0, D_0 ) = \C[[z,\delta^\ast]]  
$$
where, as before, $\delta^\ast$ is of degree $1$. 

Next, let's check the two-variable case.   Since this is the tensor square of the one-variable case, we find that
$$
\op{Ext}^\ast_{\C[[z_1,z_2,w_1,w_2]] } (D_0(\what{D}^2), D_0(\what{D}^2) ) = \C[[z_1,z_2,\delta_1^\ast,\delta_2^\ast]] . 
$$
As before, the variables $\delta_i^\ast$ are of degree $1$.  

As in the statement of the lemma, let $A = \C[[z_1,z_2,c]]$ be the non-commutative deformation of $\C[[z_1,z_2]]$ and let $A^\vee$ be the $\C[[c]]$-linear dual of $A$.  The results we have derived so far tell us that there is a spectral sequence
$$
\C[[z_1,z_2,\delta_1^\ast,\delta_2^\ast,c]] \Rightarrow \op{Ext}^\ast_{A \otimes A}(A^\vee, A^\vee). 
$$
Our goal is to analyze deformations of $A^\vee$ as an $A-A$ bimodule.  We are thus interested in elements of degree $1$, which in the representation on the left hand side contain precisely one $\delta_i^\ast$. 

We are more precisely interested in deformations of $A^\vee$ compatible with certain natural symmetries.  There is a $\C^\times$ which acts on everything under which the  variables $z_i$ have weight $1$ and $c$ has weight $2$.  One can check that the variables $\delta_i^\ast$ have weight $-1$.  Also, $\mf{sl}(2,\C)$ acts on everything in an evident way. 

There is at most one element of degree $1$ invariant under this action of $SL(2,\C)$ and $\C^\times$.  In the spectral sequence given above, it is represented by $z_1 \delta_1^\ast + z_2 \delta_2^\ast$. 

We ultimately want to prove that 
$$
A^\vee \otimes^{\mbb L}_{A} A^\vee \iso A^\vee[2]
$$
as $A-A$ bimodules.  It follows form the previous lemma that the right hand side is some deformation of the left hand side, and that this deformation is zero when we set $c = 0$.  Since the only non-zero class in $\op{Ext}^1_{A \otimes A}(A^\vee, A^\vee)$ which respects all the symmetries is non-zero when $c = 0$, it follows that the deformation of $A^\vee$ must be trivial.

\end{proof}

Now, note that $HH_\ast(A,A^\vee)$ is the dual of the Hochschild cohomology of $A$.  We let $HC^\ast(A)$ be the cyclic cohomology of $A$, which is the dual of the cyclic homology of $A$.   

Putting these computations together with theorem \ref{theorem_tangent_complex_cyclic} gives us an understanding of the tangent complex of the moduli space $\mc{M}_{N+\infty \mid \infty,A}$.

Let 
$$
M = HC^\ast(A)[-1] \oplus HH^\ast(A) \oplus HH^\ast(A)[-2]^{C_2} \oplus \dots 
$$
Let 
$$
V = t \C[[t,c]] [-1] \oplus \left( \bigoplus_{n \ge 1} A^{\otimes n}_{C_n} \right) [[c]].
$$
where the parameter $t$ has degree $2$, and the cyclic tensor powers of $A$ live in degree $0$. 
\begin{proposition}
\label{proposition_stable_Hochschild_differential_operators}
Let  $A = \C[[z_1,z_2,c]]$ be the non-commutative deformation of $\C[[z_1,z_2]]$. There is a long exact sequence  
$$
\dots \to H^i T \mc{M}_{N+\infty \mid \infty,A} \to H^{i+2} \Sym^\ast M \to H^{i+2} \Sym^\ast V \to \dots  
$$ 
\label{proposition_cyclic_noncommutative} 
\end{proposition}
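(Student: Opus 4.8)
The plan is to read this off Theorem~\ref{theorem_tangent_complex_cyclic} by making the two cyclic cohomology complexes appearing there explicit in the case $B=A=\C[[z_1,z_2,c]]$, working $\C[[c]]$-linearly throughout. Applied with $B=A$, that theorem already furnishes a long exact sequence relating $H^\ast T\mc{M}_{N+\infty\mid\infty,A}$ to $H^\ast\Sym^\ast\big(HC^\ast(A\oplus A^\vee[-1])[-1]\big)$ and to $H^\ast\Sym^\ast\big((HC^\ast(\C\oplus A^\vee[-1])/\C)[-1]\big)$, together with its $\C[\kappa]$-module structure; so there is nothing to do except identify the first complex with $M$ and the second with $V$. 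The $V$-side is immediate: $\C\oplus A^\vee[-1]$ is a trivial square-zero algebra, so its cyclic complex depends only on $A^\vee$ as a topological $\C[[c]]$-module, and the standard description of the cyclic homology of a trivial extension already gives $HC^\ast(\C\oplus A^\vee[-1])/\C\simeq t\,\C[[t,c]]\oplus\bigoplus_{n\ge 1}(A^{\otimes n})_{C_n}[1]$; shifting by $[-1]$ is exactly the $V$ of the statement.

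For the $M$-side I would filter the continuous $\C[[c]]$-linear cyclic cochain complex of the square-zero extension $A\oplus A^\vee[-1]$ by the number $n$ of tensor slots that lie in the square-zero ideal $A^\vee[-1]$. Since that ideal squares to zero the differential respects the filtration, so the associated graded splits as a direct sum of the $n$-slot pieces. The $n=0$ piece is $CC^\ast(A)$, which contributes the summand $HC^\ast(A)[-1]$ of $M$. For $n\ge 1$, contracting each run of $A$-slots lying between two consecutive $A^\vee[-1]$-slots against the bar resolution of $A$ as an $A$-bimodule identifies the $n$-slot piece with the $C_n$-homotopy coinvariants of the derived trace of $(A^\vee[-1])^{\otimes^{\mbb L}_A n}$ — concretely, with $HH^\ast\big(A,\,(A^\vee)^{\otimes^{\mbb L}_A n}\big)$ carrying a $C_n$-action by cyclic rotation, shifted by the $n$ copies of $[-1]$. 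This is the same bookkeeping that underlies Lemma~\ref{lemma_super_LQT}.

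Now I would feed in Lemma~\ref{lemma_noncommutative_tensor_powers}: the derived tensor power $(A^\vee)^{\otimes^{\mbb L}_A n}$ collapses to $A^\vee[2(n-1)]$, concentrated in a single degree. Hence $HH^\ast\big(A,(A^\vee)^{\otimes^{\mbb L}_A n}\big)\simeq HH^\ast(A,A^\vee)[2(n-1)]$, and using $HH_\ast(A,A^\vee)=HH^\ast(A)^\vee$ (the remark just before the proposition) the dual cochain complex becomes a shift of $HH^\ast(A)$. Since we are in characteristic zero the homotopy $C_n$-(co)invariants reduce to ordinary ones, and inspecting how the cyclic generator acts on $A^\vee[2(n-1)]$ after the collapse — trivially for $n$ odd, and through the order-two reflection $\tau^{n/2}$ for $n$ even — shows the $n$-slot piece is $HH^\ast(A)$ up to shift for odd $n$ and $HH^\ast(A)^{C_2}$ up to shift for even $n$. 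Assembling over all $n\ge 0$ gives precisely $M=HC^\ast(A)[-1]\oplus HH^\ast(A)\oplus HH^\ast(A)[-2]^{C_2}\oplus\cdots$, and substituting into the sequence of Theorem~\ref{theorem_tangent_complex_cyclic} produces the stated long exact sequence, with the $\C[\kappa]$-action carried along unchanged.

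The one genuinely non-formal step is the last one: pinning down the $C_n$-equivariant structure and the exact degree shift on the collapsed tensor powers $(A^\vee)^{\otimes^{\mbb L}_A n}$, i.e.\ reconciling the Koszul signs produced by cyclically permuting the degree-one classes of $A^\vee[-1]$ with the shift $[2(n-1)]$ coming out of Lemma~\ref{lemma_noncommutative_tensor_powers}. I expect this to be the main obstacle; everything else is either quoted verbatim from Theorem~\ref{theorem_tangent_complex_cyclic} or is the routine decomposition of the cyclic complex of a trivial square-zero extension. (These explicit identifications are exactly what is needed to run the computation of $T\mc{M}^{\C^\times}_{N+\infty\mid\infty,\op{Diff}(\C)}$ in Theorem~\ref{theorem_deformation_computation}.)
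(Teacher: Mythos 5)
Your proposal is correct and takes essentially the same approach as the paper: the paper's proof is a one-line citation of Lemma~\ref{lemma_noncommutative_tensor_powers} and Theorem~\ref{theorem_tangent_complex_cyclic}, and your argument is exactly what one gets by unpacking that citation, i.e.\ applying the theorem with $B=A$, decomposing the cyclic cochain complex of the square-zero extension $A\oplus A^\vee[-1]$ by the number of $A^\vee$-slots, and then collapsing the $n\geq 1$ pieces via the derived tensor-power computation $A^\vee\otimes^{\mbb L}_A\cdots\otimes^{\mbb L}_A A^\vee\simeq A^\vee[2(n-1)]$. One caution on the detail you flag as the nontrivial step: the paper never actually verifies the $C_n$-equivariant structure of the collapsed module for general $n$ (it writes $M$ only out to the $n=2$ term and computes what it needs by hand in low degree), so your odd/even claim about the cyclic generator acting trivially or by an order-two reflection is a guess that goes beyond what the paper checks; it does not affect the low-degree conclusions used in Theorem~\ref{theorem_hh_low_degree}, but it should not be presented as established without tracking the Koszul signs through the identification of Lemma~\ref{lemma_noncommutative_tensor_powers}.
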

\begin{proof}
This follows from Lemma \ref{lemma_noncommutative_tensor_powers} combined with Theorem \ref{theorem_tangent_complex_cyclic}. 
\end{proof}
Next, we will compute the cohomology of $M$.  Note that $M$ is concentrated in degrees $\ge 0$.  Since we are ultimately only interested in the low-degree cohomology of the tangent complex to $\mc{M}_{N+\infty\mid \infty,A}$, we need only compute explicitly the graded vector space $M$ in low degree (up to $3$).

Recall that, ultimately, we are interested in the \emph{restricted} Hochschild cohomology, which is by definition the subspace of the Hochschild cohomology which consists of finite sums of weight spaces for the $\C^\times$ action under which the variables $z_i$ have weight $1$ and $c$ has weight $2$.  We then want to invert $c$.  This means is that we can ignore $c$-torsion in $M$.  

Let us now compute the graded vector space $M$ modulo $c$-torsion. To do this, we need to recall a few facts.  Cyclic cohomology is the dual of cyclic homology, and the cyclic homology of $A$ can be computed using the Hochschild-Kostant-Rosenberg theorem.  We will use the grading convention that the differential is always of positive degree, so that cyclic homology is concentrated in negative degree. We have
\begin{align*} 
HC_0(\Oo(\what{D}^2) )  & = \Oo(\what{D}^2) \\
HC_{-1}( \Oo(\what{D}^2)  )&= \Omega^1(\what{D}^2) / \d \Omega^0(\what{D}^2) \\
&= \Omega^2(\what{D}^2) \\
HC_{-2} ( \Oo(\what{D}^2) ) &= \C. 
\end{align*} 
Now, when we make $\Oo(\what{D}^2)$ non-commutative by introducing the parameter $c$, we get an extra term in the differential of cyclic homology. One can calculate that this extra term is the map
\begin{align*} 
HC_{-1} = \Omega^2(\what{D}^2) &\mapsto \Oo(\what{D}^2) = HC_0\\
\alpha &\mapsto c \pi \vee \alpha    
\end{align*}
where $\pi$ is the Poisson tensor. It follows that, modulo $c$-torsion,  the only cyclic homology that appears is a copy of $\C$ in degree $-3$, coming from $HC_{-2}$.   Dually, this contributes a copy of $\C$ in degree $3$ to $M$.  This copy of $\C$ lives in $F^{-3} M$.  
  
Hochschild cohomology of $A$ can be computed with the help of the Hochschild-Kostant-Rosenberg theorem as well. We have
$$
HH^i(\Oo(\what{D}^2)) = \op{PV}^i(\what{D}^2)
$$
is the space of polyvector fields on the formal disc.    If we turn on the parameter $c$, making $\Oo(\what{D}^2)$ non-commutative, we introduce an extra term in the differential which is the map
\begin{align*} 
 HH^i(\Oo(\what{D}^2)) = \op{PV}^i(\what{D}^2) & \mapsto   \op{PV}^{i+1}(\what{D}^2) =  HH^{i+1}(\Oo(\what{D}^2)) \\
\alpha & \mapsto c \{\pi,\alpha\}  
\end{align*}
where, as above, $\pi$ is the Poisson bracket and $\{-,-\}$ is the Schouten bracket. 

It follows that when we ignore $c$-torsion, Hochschild cohomology of $A$ is concentrated in degree $0$ and consists of $\C[[c]]$.    

It's also easy to check that the action of the cyclic group on $HH^\ast(A)[-2]$ that appears on our formula for $M$ is trivial.  Thus, we find that $M$ contains a copy of $\C[[c]]$ in degree $0$, coming from $HH^0(A,A)$; and a second copy of $\C$ in degree $2$, coming from $HH^0(A, A[-2])$.  In sum, we have,
\begin{align*} 
 M^0 &= \C[[c]]\\
M^1 &= 0 \\ 
 M^2 &= \C[[c]]\\ 
 M^3 &= \C[[c]].  
\end{align*}  
The natural basis element for $M^0$ is the element $\kappa$. Recall that this description is modulo $c$-torsion. 

Finally, we are in a situation to compute the cohomology of the restricted tangent complex to $\mc{M}_{N+\infty \mid \infty,A}$, modulo $c$-torsion. 
\begin{theorem}
\label{theorem_hh_low_degree}
After inverting $c$, the cohomology of the restricted tangent complex to $\mc{M}_{N+\infty \mid \infty, A}$ is: 
\begin{align*} 
H^{0} (T^{res}\mc{M}_{N+\infty\mid \infty,A})[c^{-1}] &=  \C[c,c^{-1},  \kappa]   \\ 
H^1 (T^{res}\mc{M}_{N+\infty\mid \infty,A})[c^{-1}] = 0 
\end{align*}

\end{theorem}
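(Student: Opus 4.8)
The plan is to feed the low-degree description of $M$ we have just obtained into the long exact sequence of Proposition~\ref{proposition_cyclic_noncommutative}, together with a parallel (but genuinely new) computation of the complex $V$ and of the connecting map between the two, and then localise at $c$. Recall that that proposition gives, for each $i$, an exact sequence
\[
\cdots\to H^{i+1}\Sym^\ast M\xrightarrow{\ \phi\ }H^{i+1}\Sym^\ast V\to H^i T\mc{M}_{N+\infty\mid\infty,A}\to H^{i+2}\Sym^\ast M\xrightarrow{\ \phi\ }H^{i+2}\Sym^\ast V\to\cdots
\]
of $\C[c,c^{-1},\kappa]$-modules, where (throughout) we pass to the restricted complex and invert $c$, and $\phi$ is induced by the restriction map $CC^\ast(A\oplus A^\vee[-1])\to CC^\ast(\C\oplus A^\vee[-1])$ along $\C\oplus A^\vee[-1]\hookrightarrow A\oplus A^\vee[-1]$. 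Since we only want $i=0,1$ it is enough to control $H^j\Sym^\ast M$, $H^j\Sym^\ast V$ and $\phi$ for $j=1,2,3$, all modulo $c$-torsion.

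First I would record the $M$-side, which is essentially done: modulo $c$-torsion $H^\ast M$ is a free rank-one $\C[[c]]$-module in each of degrees $0,2,3$ — generated respectively by $\kappa$ (the image of $1\in HH^0(A,A)$), a class $\xi$ in degree $2$ from the two-letter cyclic piece $HH^0(A,A[-2])^{C_2}$, and a class $\eta$ in degree $3$ from $HC_{-2}(\Oo(\what{D}^2))$ — and vanishes in degree $1$. As the only generators of positive degree $\le 3$ sit in degrees $2$ and $3$, no products or relations intervene, so after inverting $c$ one gets $H^1\Sym^\ast M=0$, $H^2\Sym^\ast M=\C[c,c^{-1},\kappa]\,\xi$ and $H^3\Sym^\ast M=\C[c,c^{-1},\kappa]\,\eta$. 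Next I would compute $H^\ast V$ in degrees $\le 3$. Here $V$ is (the $\mf{sl}$-modification of) the reduced cyclic cochain complex $CC^\ast(\C\oplus A^\vee[-1])[-1]$ of the trivial super square-zero extension $\C\oplus A^\vee[-1]$, whose ``zero-letter'' summand is $HC^\ast(\C)/\C=t\C[[t,c]]$ with $t$ of degree $2$, and whose ``$n$-letter'' summands are cyclic tensor powers of $A^\vee[-1]$. Since the algebra part is now $\C$, which carries no Poisson structure, there is \emph{no} $c$-differential doing the collapsing we exploited for $A$; instead one must compute this reduced cyclic cohomology directly, carrying the $\C^\times$-weight grading ($z_i$ of weight $1$, $c$ of weight $2$) and then localising at $c$. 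The fact I would aim to establish is that, after restriction and inverting $c$, $H^\ast V$ is concentrated in degrees $\ge 2$ apart from $\kappa$ in degree $0$ — so that in particular $H^1\Sym^\ast V=0$ — and that $H^2\Sym^\ast V$ is exactly the line $\C[c,c^{-1}]\,t$ while $H^3\Sym^\ast V$ is finitely generated over $\C[c,c^{-1},\kappa]$.

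Finally I would identify $\phi$ on the generators and read off the answer. Being a ring map, $\phi$ is pinned down by $\phi(\kappa),\phi(\xi),\phi(\eta)$; comparing with the explicit first-order deformation of $U(\gl_N\otimes\op{Diff}(\C))$ in the Proposition that follows Theorem~\ref{theorem_deformation_computation} shows that $\phi(\xi)$ is a non-zero multiple of $t$ and that $\phi(\eta)$ generates a free $\C[c,c^{-1},\phi(\kappa)]$-submodule of $H^3\Sym^\ast V$. Hence $\phi^2\colon H^2\Sym^\ast M\to H^2\Sym^\ast V$ is surjective with kernel a non-zero principal — hence free rank one — submodule of $\C[c,c^{-1},\kappa]\,\xi$, and $\phi^3$ is injective. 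Plugging into the long exact sequence: $H^0 T\mc{M}$ is an extension of $\ker(\phi^2)\cong\C[c,c^{-1},\kappa]$ by $\op{coker}(\phi^1)=H^1\Sym^\ast V=0$, so $H^0 T\mc{M}\cong\C[c,c^{-1},\kappa]$; and $H^1 T\mc{M}$ is an extension of $\ker(\phi^3)=0$ by $\op{coker}(\phi^2)=0$, so $H^1 T\mc{M}=0$. The identification of the distinguished generator of $H^0$ with the deformation of that Proposition is exactly the compatibility later used to make the isomorphism $A^{QFT}\cong A^{comb}$ canonical up to inner automorphism.

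The hard part is the computation of $V$ together with the determination of $\phi$. Unlike the $A$-side, $V$ receives contributions from cyclic words in arbitrarily many letters of $A^\vee$ and there is no $c$-differential to kill them, so the vanishing $H^1\Sym^\ast V=0$ after restriction and localisation — which is precisely what forces $H^0 T\mc{M}$ down to rank one — must be proved by a direct analysis of the reduced cyclic cohomology of the trivial super-algebra $\C\oplus A^\vee[-1]$ via Connes' $B$-operator and the $\C^\times$-weight bookkeeping. Pinning $\phi$ down on $\xi$ precisely, so as to see the kernel is genuinely free of rank one rather than something smaller, is the other delicate point, and it is there that the explicit OPE/commutator computations of Propositions~\ref{proposition_commutator_QFT} and~\ref{proposition_combinatorial_one_loop} re-enter the argument.
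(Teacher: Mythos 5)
Your overall strategy --- plugging the computation of $M$ into the long exact sequence of Proposition \ref{proposition_cyclic_noncommutative} and analyzing degrees $\le 3$ after restricting and inverting $c$ --- is exactly the paper's route, but your treatment of the $V$-side contains two errors that would derail the proof. First, the grading: in the sequence the classes coming from $HC^\ast(\C)=\C[[t]]$ enter with the shift $[-1]$ already built into the definition of $V$ (they are the primitive classes of stable $\mf{sl}$-type cohomology), so $t^k$ contributes in degree $2k+1\ge 3$, not $2k$. Hence $V$ sits in degrees $0$ and $\ge 3$, so $H^1\Sym^\ast V = H^2\Sym^\ast V = 0$; there is no nonzero $\phi^2$ to analyze, and $H^0 T\mc{M}_{N+\infty\mid\infty,A}\iso H^2\Sym^\ast M$ immediately. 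Your picture, with $t$ in degree $2$ and $\phi^2$ a surjection onto a line with free rank-one kernel, rests on the wrong degree; note also that it would make $H^0$ a proper ideal of $\C[c,c^{-1},\kappa]$ with a $\kappa$-dependent generator, which is incompatible with the later identification of the generator with the single-trace deformation of Proposition \ref{proposition_explicit_formula}.

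Second, the degree-zero part of $V$ is not $\C\,\kappa$ after inverting $c$: it is the full space $\bigoplus_{n\ge1}(A^{\otimes n})_{C_n}$ of cyclic words, which (as you yourself observe) carries no Poisson/Moyal term in the differential on the $\mf{sl}$-side square-zero extension $\C\oplus A^\vee[-1]$ and is not $c$-torsion, so nothing collapses. Consequently $H^2\Sym^\ast V$ could not be ``exactly the line $\C[c,c^{-1}]\,t$'' even on your grading (it would be $t$ times $\Sym^\ast$ of this huge degree-zero space), and, more importantly, this big degree-zero part is precisely the target $H^3\Sym^\ast V = t\cdot\Sym^\ast\bigl(\bigoplus_{n\ge1}(A^{\otimes n})_{C_n}\bigr)$ of the one genuinely nontrivial input beyond the $M$-computation: the injectivity of the restriction map $H^3\Sym^\ast M = \C[\kappa][[c]] \to H^3\Sym^\ast V$, which is what forces $H^1 T\mc{M}=0$. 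Your justification of that injectivity by appeal to the commutator computations of Propositions \ref{proposition_commutator_QFT} and \ref{proposition_combinatorial_one_loop} is a non sequitur: $\phi$ is the purely algebraic map induced by $CC^\ast(A\oplus A^\vee[-1])\to CC^\ast(\C\oplus A^\vee[-1])$, and its effect on the degree-$3$ class and its $\C[\kappa]$-multiples must be computed directly; the explicit OPE/commutator formulas only enter later, to show that the single-trace class generates $H^0$ as a $\C[\kappa]$-module.
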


Because of lemma \ref{lem:B formal translation}, the same result holds for the restricted tangent complex to the moduli space when $A$ is replaced by $\C[z_1,z_2,c,c^{-1}]$. Lemma \ref{lem:Diff(C) formal translation}  now implies a simialr result for the case when $A$ is replaced by $\op{Diff}(\C)$, thus proving theorem \ref{theorem_deformation_computation}.
\begin{proof}
Recall that we have a short exact sequence
$$
 H^{i+1} \Sym^\ast V\to   H^i T \mc{M}_{N+\infty \mid \infty,A} \to H^{i+2} \Sym^\ast M \to H^{i+2} \Sym^\ast V \to \dots  
$$
Recall that $V = t \C[[t]][-1] \oplus \bigoplus_{n \ge 1} A^{\otimes n}_{C_n}$. Thus $V$ is situated in degrees $0$ and degrees $\ge 3$. In the computations that follow, we will ignore all $c$ torsion, and invert $c$ at the end. We will also only pass to the restricted tangent complex at the end.

 Let us first compute that $H^1$ of the restricted tangent complex vanishes.  We have an exact sequence
$$
0 = H^2 \Sym^\ast V \to H^1 (T\mc{M}_{N+\infty\mid \infty,A})\to H^3 \Sym^\ast M \to H^3 \Sym^\ast V.   
$$
Now,
$$
H^3 \Sym^\ast M = \C[\kappa][[c]] 
$$
(modulo $c$-torsion).  And,
$$
H^3 \Sym^\ast V = t\cdot \Sym^\ast ( \bigoplus_{n \ge 1} A^{\otimes n}_{C_n} ).  
$$
One can calculate that the map
$$
H^3 \Sym^\ast M = \C[\kappa][[c]] \to H^3 \Sym^\ast V 
$$
is injective, showing that $H^1 T\mc{M}_{N+\infty \mid \infty,A} = 0$. 

Next, let us calculate $H^0$ of our tangent complex.  We find an exact sequence
$$
0 = H^1 \Sym^\ast V \to H^0 T \mc{M}_{N+\infty \mid \infty,A} \to H^2 \Sym^\ast M \to 0 = H^2 \Sym^\ast V.
$$
Thus, 
$$
H^0 T \mc{M}_{N+\infty \mid \infty,A} \iso H^2 \Sym^\ast M. 
$$
And $H^2 \Sym^\ast M = \C[\kappa][[c]]$.

Next, we will compute $H^{-1}$. We have an exact sequence
$$
H^0 \Sym^\ast M  \to H^0 \Sym^\ast V \to H^{-1} T \mc{M}_{N+\infty \mid \infty,A} \to H^1 \Sym^\ast M = 0.
$$
The map
$$
\C[\kappa][[c]] = H^0 \Sym^\ast M \to H^0 \Sym^\ast V 
$$
is injective.  Thus, 
$$
H^{-1} T\mc{M}_{N+\infty \mid \infty} = H^0 \Sym^\ast V / \C[\kappa][[c]]. 
$$

\end{proof}

\subsection{}
Finally, we can apply all this machinery to the problem we are interested in. 
\begin{theorem}
For all $N$, there is an isomorphism 
$$
U^{QFT}_{\hbar}(\op{Diff}(\C) \otimes \mf{gl}(N) ) \iso U^{comb}_{\hbar}(\op{Diff}(\C) \otimes \mf{gl}(N)) 
$$
which sends 
$$
\hbar \mapsto \lambda \hbar + f_2(\kappa) \hbar^2 + \dots
$$
where $\lambda$ is a non-zero constant and the $f_i(\kappa)$ are polynomials of $\kappa$. 
\end{theorem}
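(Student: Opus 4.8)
The plan is to deduce the theorem by combining the uniqueness statement for deformations of $U(\op{Diff}(\C)\otimes\mf{gl}(N))$ (Corollary \ref{corollary_unique_quantization}, coming from Theorem \ref{theorem_deformation_computation}) with the two explicit first-order calculations already carried out in the paper: Proposition \ref{proposition_commutator_QFT} on the field-theory side and Proposition \ref{proposition_combinatorial_one_loop}/Lemma \ref{lemma_combinatorial_one_loop} on the combinatorial side. The first step is to promote both algebras $U^{QFT}_\hbar(\op{Diff}(\C)\otimes\mf{gl}(N))$ and $A_{N,\hbar,1}=U^{comb}_\hbar(\op{Diff}(\C)\otimes\mf{gl}(N))$ to $\C[[\hbar]]$-points of the formal moduli problem $\mc{M}^{\C^\times}_{N+\infty\mid\infty,\op{Diff}(\C)}$. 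For the combinatorial side this is exactly the content of Section \ref{appendix_flatness} (flatness of $A_{N\mid M,\hbar,c}$) together with the observation that the combinatorial definition makes sense with $\mf{gl}(N)$ replaced by $\mf{gl}(N+R\mid R)$, and that passing to $Q$-cohomology of $\mf{gl}(N+R\mid R)$ sends $A_{N+R\mid R,\hbar,c}$ to $A_{N+R-1\mid R-1,\hbar,c}$ — a check at the level of the surface presentation, since $Q$-cohomology just deletes a fermionic index and the relations (P0),(P2),(P3) are compatible with this. For the QFT side, one uses that the $5$-dimensional gauge theory can be quantized with gauge super-group $\mf{gl}(N+R\mid R)$ uniformly in $R$, and that the Koszul-duality construction of $U^{QFT}_\hbar$ is functorial enough that the $Q$-cohomology relation holds there too; this is asserted in the introduction and follows from the constructions of \cite{Cos16,Cos13}.

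The second step is to identify, for each algebra, the cohomology class of its first-order deformation as an element of $H^0 T\mc{M}^{\C^\times}_{N+\infty\mid\infty,\op{Diff}(\C)}$, which by Theorem \ref{theorem_deformation_computation} is the rank-one free $\C[\kappa]$-module $\C[\kappa]$. By the last clause of Theorem \ref{theorem_deformation_computation}, the natural map to $HH^1(U(\mf{gl}(N+R\mid R)\otimes\op{Diff}(\C)))$ vanishes, so a first-order deformation is detected purely by its class in $H^0$ of the tangent complex, and by the preceding proposition that class is in turn detected by the commutator of the generators $E_{\alpha\beta}z$ and $E_{\gamma\delta}\partial$. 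Proposition \ref{proposition_commutator_QFT} computes this commutator for the QFT algebra and Proposition \ref{proposition_combinatorial_one_loop} computes it for the combinatorial one; comparing, and using Lemma \ref{lemma_combinatorial_one_loop}, one sees that each deformation is a \emph{nonzero} element of $\C[\kappa]$, and in fact that the QFT deformation equals $2^5\pi^2$ times the combinatorial deformation modulo $\hbar^2$ (equivalently, that after the substitution $\hbar_{QFT}=2^{-5}\pi^{-2}\hbar_{comb}$ they agree to first order). Since a nonzero element of the free rank-one $\C[\kappa]$-module $\C[\kappa]$ need not be a generator, one must further argue that the first-order class is a $\C[\kappa]$-\emph{generator}: this is where one invokes that the commutator of $E_{\alpha\beta}z_1^kz_2^l$ with $E_{\gamma\delta}z_1^rz_2^s$ to order $i$ in $\hbar$ involves $\kappa$ to degree at most $i-1$ — a diagrammatic bound valid in each model — which forces the class to have $\kappa$-degree $0$, hence to be a generator.

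The third step is the standard obstruction-theory argument of Corollary \ref{corollary_unique_quantization}: two $\C[[\hbar]]$-points of $\mc{M}^{\C^\times}_{N+\infty\mid\infty,\op{Diff}(\C)}$ whose first-order terms are both generators of the $\C[\kappa]$-module $H^0$ of the tangent complex are isomorphic after a change of variable $\hbar\mapsto\lambda\hbar+f_2(\kappa)\hbar^2+\cdots$; here $H^1$ of the tangent complex vanishes, so there is no obstruction to extending the isomorphism order by order, and $H^{-1}$ maps to zero in $HH^1$ so the inner-automorphism ambiguity is the only ambiguity, giving uniqueness up to conjugation. Restricting the resulting isomorphism of the $R=\infty$ family to the summand corresponding to $\mf{gl}(N)$ (the $R=0$ term, obtained by iterating $Q$-cohomology) yields the stated isomorphism $U^{QFT}_\hbar(\op{Diff}(\C)\otimes\mf{gl}(N))\iso U^{comb}_\hbar(\op{Diff}(\C)\otimes\mf{gl}(N))$ with $\hbar\mapsto\lambda\hbar+f_2(\kappa)\hbar^2+\cdots$; the leading constant $\lambda=2^5\pi^2$ is read off from the first-order comparison of the two commutator computations. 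I expect the main obstacle to be the second step: rigorously checking that the first-order cohomology class of each algebra is a \emph{generator} (not merely nonzero) of the $\C[\kappa]$-module, and in particular making precise the claim that detecting the class reduces to the single commutator $[E_{\alpha\beta}z,E_{\gamma\delta}\partial]$ — this requires tracking the $\C[\kappa]$-module structure through the long exact sequences of Theorem \ref{theorem_hh_low_degree} and the Loday–Quillen–Tsygan-type identification of Lemma \ref{lemma_super_LQT}, together with the diagrammatic degree bound on powers of $\kappa$.
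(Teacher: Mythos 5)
Your proposal is correct and follows essentially the same three-step strategy the paper uses: (1) promote both algebras to $\C[[\hbar]]$-points of $\mc{M}^{\C^\times}_{N+\infty\mid\infty,\op{Diff}(\C)}$ via the $\mf{gl}(N+R\mid R)$ families and $Q$-cohomology maps, (2) show the first-order classes agree up to a nonzero scalar and generate the $\C[\kappa]$-module $H^0 T\mc{M}$, and (3) conclude by Corollary~\ref{corollary_unique_quantization}. The one place you deviate is in how you argue the first-order class is a \emph{generator} (not just a nonzero element) of the free rank-one $\C[\kappa]$-module: you invoke the diagrammatic bound that an order-$i$ commutator contains at most $\kappa^{i-1}$. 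The paper instead uses a ``single-trace'' argument in the appendix proof of Proposition~\ref{proposition_explicit_formula}: the first-order cocycle is single-trace, multiplying by $\kappa^l$ produces an $(l+1)$-trace cocycle, and these live in different symmetric powers of cyclic cohomology under the Loday--Quillen--Tsygan identification, so a nonzero single-trace class is automatically a generator. Your degree bound is morally the same observation but, as stated, requires a further step connecting ``the commutator has no $\kappa$'' to ``the cohomology class lies in $\kappa$-degree zero,'' which is exactly what the trace-number grading on $\Sym^\ast HC^\ast$ supplies. The paper actually reserves the diagrammatic degree bound for the separate claim that $\deg f_i \le i-1$, not for the generator claim. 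This is a small point of precision rather than a gap.
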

\begin{proof}
In light of corollary \ref{corollary_unique_quantization}, it suffices to show that 
\begin{enumerate} 
\item Both of our algebras arise from $\C[[\hbar]]$-points of $\mc{M}_{N+\infty \mid \infty, \op{Diff}(\C) }$. 
\item To first order in $\hbar$, the Hochschild cocycle defining the deformations is given by the formula in proposition \ref{proposition_explicit_formula}, up to a non-zero constant.  
\end{enumerate} 
Both algebras can be defined using the supergroups $\mf{gl}(N+R \mid R)$, in a way compatible with the natural maps for different values of $R$.   The only thing we have not verified, which we will check in the appendix, is that to first order in $\hbar$ the algebra $U^{QFT}_{\hbar}(\op{Diff}(\C))$ is given by the expression in proposition \ref{proposition_explicit_formula}.  
\end{proof}

\appendix
\section{ Non-triviality of the deformation $U_{\hbar}(\op{Diff}(\C) \otimes \mf{gl}_N)$} 
\label{appendix_non_trivial_deformation}

In this appendix we will show that the first-order deformations of $U(\mf{gl}_{N+R \mid R} \otimes \C[z_1,z_2])$ given by our two algebras $U^{comb}_\hbar(\mf{gl}_{N+R \mid R}\otimes \op{Diff}(\C))$ and $U^{QFT}_\hbar(\mf{gl}_{N+R \mid R} \otimes \op{Diff}(\C))$ are both non-trivial, and are proportional to each other at first order in $\hbar$.

\begin{proposition}
\label{proposition_explicit_formula}
There is a unique generator for the $\C[\kappa]$ module $H^0 T\mc{M}^{\C^\times}_{N+\infty \mid \infty,\op{Diff}(\C) }$ with the following property. Let us view this generator as giving rise to a deformation of the algebra $U(\mf{gl}_N\otimes\op{Diff}(\C)$.  Let $E_{\alpha \beta} \in \mf{gl}(N)$ denote the elementary matrix.  Then, in this deformation, 
\begin{align*} 
[E_{\alpha\beta}, E_{\gamma \delta}] =& \delta_{\beta \gamma} E_{\alpha \delta} - \delta_{\gamma \alpha} E_{\gamma \beta} \\
 [E_{\alpha\beta}, z E_{\gamma \delta}] =& \delta_{\beta \gamma} z_i E_{\alpha \delta} - \delta_{\gamma \alpha} z E_{\gamma \beta} \\
[E_{\alpha\beta}, \partial E_{\gamma \delta}] =& \delta_{\beta \gamma} \partial E_{\alpha \delta} - \delta_{\gamma \alpha} \partial E_{\gamma \beta} \\
 [E_{\alpha\beta}\partial, E_{\gamma\delta} z] =& \delta_{\beta \gamma}E_{\alpha\delta}(\partial   z) - \delta_{\alpha\delta}E_{\gamma\beta}(z   \partial)\\ 
& + \hbar E_{\alpha \delta} E_{\gamma \beta} - \hbar \sum_{\mu} \delta_{\beta \gamma}  E_{\alpha \mu} E_{\mu \delta}  
- \hbar\sum_{\mu} \delta_{\alpha \delta} E_{\gamma \mu} E_{\mu \beta} .
\end{align*} 
Here $\hbar$ is the deformation parameter.  
\end{proposition}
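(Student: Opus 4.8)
The strategy is to combine the structural long exact sequence of Theorem~\ref{theorem_hh_low_degree} with the explicit one-loop commutators we have already computed, showing that the stated formula is (up to scale) the unique cocycle whose cohomology class generates the $\C[\kappa]$-module $H^0 T\mc{M}^{\C^\times}_{N+\infty \mid \infty,\op{Diff}(\C)}$ as a \emph{free} module of rank one. First I would invoke Lemmas~\ref{lem:B formal translation} and \ref{lem:Diff(C) formal translation} to transport the computation from $\op{Diff}(\C)$ to the non-commutative algebra $A = \C[[z_1,z_2,c]]$, so that Theorem~\ref{theorem_hh_low_degree} applies directly and gives $H^0(T^{res}\mc{M}_{N+\infty\mid\infty,A})[c^{-1}] = \C[c,c^{-1},\kappa]$, a free rank-one $\C[\kappa]$-module after inverting $c$ (equivalently, after using the $\C^\times$-action to set $c=1$). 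This pins down the space of uniform-in-$R$ first-order deformations. The one remaining task is then purely a matching problem: exhibit a specific cocycle representing a generator.

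The key steps, in order, are as follows. (1) Recall from Proposition~\ref{proposition_combinatorial_one_loop} that the combinatorial algebra $A_{N,\hbar,c}$ deforms $U(\op{Diff}_c(\C)\otimes\mf{gl}_N)$ with the displayed first-order commutator for $[E_{\alpha\beta}\partial, E_{\gamma\delta}z]$ (after specializing $c=1$, the term $E_{\alpha\delta}(\partial z)$ and the power-sum corrections appear precisely as written), and observe that the same formula is produced, up to the factor $2^{-5}\pi^{-2}$, by the field-theoretic computation of Proposition~\ref{proposition_commutator_QFT} via Lemma~\ref{lemma_combinatorial_one_loop}. (2) Check that this cocycle is $\C^\times$-invariant, $\mf{sl}(2)$-invariant, and compatible with the $Q$-cohomology maps $\mc{M}_{N+R\mid R,B}\to\mc{M}_{N+R-1\mid R-1,B}$, so that it defines an actual element of $H^0 T\mc{M}^{\C^\times}_{N+\infty\mid\infty,\op{Diff}(\C)}$ --- this is essentially the content of the fact, already used in the excerpt, that both $A^{comb}_{N,\hbar,c}$ and $A^{QFT}_{N,\hbar,c}$ can be defined for $\mf{gl}(N+R\mid R)$ in a way compatible across $R$. (3) Show the class is non-zero and in fact a \emph{generator} of the free $\C[\kappa]$-module: this is where Proposition~\ref{proposition_explicit_formula}'s hypothesis (the detailed formula, including the other three commutators which merely say $E_{\alpha\beta}$ acts by the adjoint action, i.e.\ the class restricts correctly to a deformation of the Lie bracket) is used to show it cannot be a $\kappa$-multiple of any other class --- concretely, because the $\hbar^1$ term is visible already at $N=1$ (and at $\kappa$-degree zero, i.e.\ on $\mc{M}^{\C^\times}_{N+\infty\mid\infty}$ modulo $\kappa$), whereas a non-generator would have to vanish modulo $\kappa$. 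Finally (4) conclude uniqueness of the generator with this normalization: any two generators of a free rank-one $\C[\kappa]$-module differ by a unit in $\C[\kappa]$, i.e.\ a nonzero scalar, and the normalization built into the displayed equations (coefficient $1$ in front of $E_{\alpha\delta}E_{\gamma\beta}$) fixes that scalar.

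The main obstacle I anticipate is step (3): verifying that the class of this explicit cocycle is a generator, not just a nonzero element, of the $\C[\kappa]$-module $H^0 T\mc{M}^{\C^\times}_{N+\infty\mid\infty,\op{Diff}(\C)}$. The structural theorem tells us the module is free of rank one, but identifying which class is the generator requires a detection argument --- one must show the image of the cocycle under specialization $\kappa \mapsto 0$ (equivalently, the induced deformation of $U(\mf{gl}_N\otimes\op{Diff}(\C))$ itself, not some central twist of it) is already nontrivial in $HH^2$. This is exactly the non-triviality assertion that Proposition in the body (stated as ``for sufficiently large $R$, this deformation is non-trivial'') is designed to supply, and which the excerpt defers to Appendix~\ref{appendix_non_trivial_deformation}; so in practice the proof of the present proposition will reduce to: (i) the Hochschild computation already done, (ii) the explicit one-loop formula already computed combinatorially and field-theoretically, and (iii) a dimension/weight bookkeeping argument showing the $\hbar^1$-class sits in the $\kappa$-degree-zero part and is nonzero there. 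Once (iii) is in hand, freeness of the module makes the uniqueness and the polynomial-degree bounds on the $f_i(\kappa)$ (which come from the grading by powers of $\kappa$ that can appear at order $\hbar^i$) formal.
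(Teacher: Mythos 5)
Your outline captures the right structure: you correctly separate the problem into (a) locating the deformation in $H^0 T\mc{M}^{\C^\times}_{N+\infty\mid\infty,\op{Diff}(\C)}$ via the translation lemmas, (b) observing that freeness of this $\C[\kappa]$-module plus a suitable non-triviality statement forces the class to be a generator, and (c) pinning down the overall scalar. And your instinct that the class lives in the ``$\kappa$-degree zero'' piece is the right one --- the paper calls this being a \emph{single-trace} deformation, i.e.\ lying in $\Sym^1$ of the cyclic cochain complex of $\op{Diff}(\C)\oplus\op{Diff}(\C)^\vee[-1]$ rather than some higher symmetric power, which is precisely why multiplying by $\kappa^l$ moves you to a strictly different summand and why a non-trivial single-trace class must generate.

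However, you have not actually closed the argument: you identify the non-triviality verification as the ``main obstacle,'' characterize it as ``dimension/weight bookkeeping,'' and then defer it --- circularly --- to the very appendix whose proof you are writing. It is not bookkeeping. The paper's proof carries out a concrete computation in the dual (reduced) cyclic chain complex of $A\oplus A^\vee[-1]$ for $A=\C[z_1,z_2]$ with Moyal product: it writes the putative deformation class in the form $\eps_{ij}\, z_i^\ast\otimes z_j^\ast\otimes 1\otimes 1 - \eps_{ij}\,z_i^\ast\otimes 1\otimes z_j^\ast\otimes 1 + (\text{higher order})$, then exhibits the explicit dual chain $z_1\otimes 1^\ast\otimes z_2\otimes 1^\ast$, verifies (using the bimodule structure $z_1\cdot 1^\ast = \tfrac{c}{2}z_2^\ast$, etc.) that this chain is closed, and checks that the pairing with the cocycle is nonzero. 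That computation is the substance of the proof; without it you have no evidence the cohomology class is nonzero. Your alternative justification --- that ``the $\hbar^1$ term is visible already at $N=1$'' --- does not help, because the paper explicitly notes that the Hochschild/Lie cohomology at small $N$ (in particular $N=1$) is intractable (it involves the unknown cohomology of the Lie algebra $\op{Diff}(\C)$); the whole point of passing to the stable $\mf{gl}(N+R\mid R)$ regime is that one can compute there, and the detection of the class happens in that stable cyclic complex, not at $N=1$.
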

\begin{proof}
We have already constructed the explicit combinatorial algebra $U^{comb}_\hbar(\mf{gl}_N \otimes \op{Diff}(\C))$ (here we set $c = 1$ in the formulae for the relations in this algebra).  We have also seen in \ref{lemma_combinatorial_one_loop} that the commutators in $U^{comb}_\hbar(\mf{gl}_N \otimes \op{Diff}(\C))$ satisfy these relations modulo $\hbar^2$, as indeed do the commutators in the algebra $U_\hbar^{QFT}(\mf{gl}_N \otimes \op{Diff}(\C))$, as proved in proposition \ref{proposition_commutator_QFT}.    

We need to do two things. First, we need to verify that any deformation with such commutation relations among the lowest-lying generators is non-trivial.   Since the space $H^0 T\mc{M}^{\C^\times}_{N+\infty \mid \infty,\op{Diff}(\C) }$ is  a rank $1$ $\C[\kappa]$-module, this will tells us that this deformation will generate after we pass to the field of rational functions of $\kappa$.  Then, we have to verify that it generates directly as a $\C[\kappa]$-module.

Let us show that if it is non-trivial it must generate as a $\C[\kappa]$-module. To see this we introduce the concept of single-trace deformation. A deformation of $U(\mf{gl}_{N+R \mid R}\otimes \op{Diff}(\C))$ is single trace if, when we view it's corresponding Hochschild cocycle as a $\mf{gl}_{N+R \mid R}$ invariant tensor on some tensor powers of $\mf{gl}_{N+R \mid R}$, it involves only one trace. Our analysis of the problem of deforming the algebra $U(\mf{gl}_{N+R \mid R} \otimes \op{Diff}(\C))$ for all $R$ in a compatible way tells us  that the space of first-order deformations is a sum of contirbutions with $0$ traces, $1$ trace, $2$ traces. etc.  This is reflected in the description of the tangent complex to this space as a symmetric algebra on the cyclic cohomology of $\op{Diff}(\C) \oplus \op{Diff}(\C)^\vee[-1]$: the number of traces is reflected in which symmetric power of cyclic cohomology we are considering.  

Now, the deformation we are considering is (modulo $\hbar^2)$ single trace. Further, multilying any single-trace deformation by $\kappa^l$ results in a deformation with $l+1$ traces. It follows immediately that any non-trivial single-trace deformation must generate the space of deformations as a $\C[\kappa]$-module.  

To verify this deformation is non-trivial,  we will go a little into the guts of the analysis of deformations of the algebras $U(\mf{gl}_{N+R \mid R} \otimes A)$ where $A$ is an associative algebra.

Deformations of any associative algebra are described by Hochschild cohomology. For any Lie algebra $\mf{g}$, the Hochschild-Kostant-Rosenberg theorem describes the Hochschild cochains of $C^\ast(\mf{g})$ as $C^\ast(\mf{g}, \Sym^\ast \mf{g})$.  The deformation we are considering is an element of $C^2(\mf{g}, \Sym^2 \mf{g})$ for $\mf{g} = \mf{gl}_{N+R \mid R}\otimes \op{Diff}(\C)$.

We are looking at elements that are invariant under $\mf{gl}_{N+R \mid R}$ for $R$ sufficiently large.  As discussed in great detail in section \ref{section_uniqueness}, we can describe them using invariant theory for $\mf{gl}_{N+R \mid R}$. For any associative  algebra $A$, we can write the large $R$ Lie algebra cochain complex of $C^\ast(\mf{gl}_{N+R \mid R} \otimes A, \Sym^\ast \mf{gl}_{N+R \mid R} \otimes A)$ as the symmetric algebra cyclic cochains of the graded algebra $A \oplus A^\ast [-1]$.  Single-trace cochains correspond to the first symmetric power.  Since the cocycle corresponding to the deformation is single-trace, it can only be made exact by a single trace cochain. We thus need only consider the subcomplex of single-trace elements, which corresponds to the  cyclic cochains of $A \oplus A^\ast[-1]$ (where $A = \op{Diff}(\C)$ in our example). 

In our situation, our deformation class has a term which lies in the complex spanned by those cochains which involve two copies of $A^\ast[-1]$. This is because it is a  class which takes a commutator to a product of two generators.  The differential respects the grading by the numbers of elements of $A^\ast[-1]$ that a cyclic cochain depends on. Therefore we need only involve the subcomplex  which involves only two copies of $A^\vee[-1]$.  

Note also that we can restrict attention to cochains which are a particular weight under the $\C^\times$ action which scales the $z_i$, so that we can use the restricted dual of $\C[z_1,z_2]$. This restricted dual is spanned by elements $\partial_{z_1}^k \partial_{z_2}^l$ which send a polynomial $f \in \C[z_1,z_2]$ to $\partial_{z_1}^k \partial_{z_2}^{l} f (0)$. 

Explicitly,  an element of the cyclic cochain complex of $A \oplus A^\ast[-1]$ that depends on only two copies of $A^\ast$ can be written as a cyclic tensor of the form
$$
\alpha_1^\ast \otimes \dots \alpha_k^\ast \otimes a_1 \otimes \alpha_{k+1}^\ast \otimes \dots \alpha_n^\ast \otimes a_2 \in (A^\ast)^{\otimes k} \otimes A \otimes (A^\ast)^{\otimes n-k} \otimes A, 
$$  
where $\alpha_i^\ast \in A^\ast$ and $a_i \in A$.  By a standard trick in homological algebra,  we can write a variant complex with the same cohomology, called the reduced cochain complex, in which  the elements $\alpha_i^\ast \in A^\ast$ must be zero on the element $1 \in A$.  

Let us write down the differential on this complex. It is more convenient to write down the differential on the linear dual complex, which is spanned by elements 
$$
\alpha_1 \otimes \dots \alpha_k \otimes a_1^\ast \otimes \alpha_{k+1} \otimes \dots \alpha_n \otimes a_2^\ast \in A^{\otimes k} \otimes A^\ast \otimes A^{\otimes n-k} \otimes A^\ast. 
$$
If we use the reduced model, which has the same cohomology, the elements $\alpha_k$ are in the quotient of $A$ by the subspace spanned by the unit element. 

We view $A^\ast$ as being an $A-A$ bimodule in the natural way. Then, the differential is given by the expression 
\begin{multline*} 
 \d\left( \alpha_1 \otimes \dots \otimes \alpha_k \otimes a_1^\ast \otimes \alpha_{k+1} \otimes \dots \otimes \alpha_n \otimes a_2^\ast\right) \\
= \sum \pm \alpha_1 \otimes \dots\otimes  (\alpha_{i} \alpha_{i+1}) \otimes \dots \otimes \alpha_k \otimes a_1^\ast \otimes \alpha_{k+1} \otimes \dots\otimes \alpha_n \otimes a_2^\ast \\
\pm  \alpha_1 \otimes \dots \otimes (\alpha_k a_1^\ast) \otimes \alpha_{k+1} \otimes \dots \otimes \alpha_n \otimes a_2^\ast \\
\pm  \alpha_1 \otimes \dots \otimes \alpha_k \otimes (a_1^\ast \alpha_{k+1}) \otimes \dots \otimes \alpha_n \otimes a_2^\ast \\
\sum \pm  \alpha_1 \otimes \dots \otimes \alpha_k \otimes a_1^\ast \otimes \alpha_{k+1} \otimes (\alpha_j \alpha_{j+1}) \dots\otimes \alpha_n \otimes a_2^\ast\\
\pm  \alpha_1 \otimes \dots \otimes \alpha_k \otimes a_1^\ast\otimes \alpha_{k+1} \otimes \dots \otimes ( \alpha_n a_2^\ast) \\
\pm  \alpha_2 \otimes \dots \otimes \alpha_k \otimes a_1^\ast \otimes \alpha_{k+1} \otimes \dots \otimes \alpha_n \otimes (a_2^\ast \alpha_1). 
\end{multline*}
More abstractly, this differential is that on the Hochschild homology of $A$ with coefficients in the $A-A$ bimodule $A^\ast \otimes^{\mbb{L}}_A A^\ast$.  (We then must take the $S_2$ invariants). 

We will identify $\C[z,\partial_z]$ with the non-commutative algebra $\C[z_1,z_2]$ equipped with the Moyal product, where $\partial_z = z_1$, $z = z_2$.
 
We will use the basis of the dual of the vector space $\C[z_1,z_2]$ given by the elements $(z_1^k)^\ast(z_2^l)^\ast$. This is simply the dual basis to the basis of $\C[z_1,z_2]$ given by the elements $z_1^k z_2^l$. Thus, the complex we are considering is build from cyclic tensors involving some number of elements like $(z_1^k)^\ast(z_2^l)^\ast$ and exactly two  elements from the space $\C[z_i]$.

The  cochain coming from the deformation is of the form
\begin{equation}
 \eps_{ij} z_i^\ast \otimes z_j^\ast \otimes 1 \otimes 1 -  \eps_{ij}z_i^\ast \otimes 1 \otimes z_j^\ast \otimes 1  + \text{ higher order terms} 
\label{equation_cocycle}
\end{equation}
where the higher order terms are related to the commutators $[f(z_1,z_2) E_{\alpha \beta}, g(z_1,z_2) E_{\gamma \delta}]$ where one or both of $f,g$ has quadratic or higher order terms.   (In this equation $\eps_{ij}$ is the alternating symbol). 

To show that this element is non-zero in cohomology, it suffices to find an element in the dual complex which is closed and which pairs non-trivially with this element.

Let us write down such an element of the dual chain complex. We choose the element 
$$
 z_1 \otimes 1^\ast \otimes z_2 \otimes 1^\ast.
$$
This element clearly pairs non-trivially with the cocycle in equation \ref{equation_cocycle}.

Recall that in the dual chain complex, we are computing the differential in the cyclic homology of $A \oplus A^\vee[-1]$ where $A = \C[z_1,z_2]$.  Thus in the cyclic complex elements of $A$ will be treated as odd, and the differential picks up a sign when we move past them,  whereas elements of $A^\vee$ will be treated as even. 

We are using the notation $1^\ast$ to indicate the dual of the element $1 \in \C[z_1,z_2]$.  The bimodule structure on the dual of $\C[z_1,z_2]$ is such that is such that
\begin{align*} 
z_1 \cdot 1^\ast &= \tfrac{c}{2} z_2^\ast \\
 z_2 \cdot 1^\ast &= -\tfrac{c}{2} z_1^\ast \\
 1^\ast \cdot z_1 &=  - \tfrac{c}{2} z_2^\ast \\
 1^\ast \cdot z_2 &=   \tfrac{c}{2} z_1^\ast \\ 
\end{align*}
Then, with these conventions, we have 
\begin{align*}
\d (    z_1 \otimes 1^\ast \otimes z_2 \otimes 1^\ast ) =& 
 (z_1 \cdot 1^\ast) \otimes z_2 \otimes 1^\ast - z_1 \otimes (1^\ast \cdot z_2) \otimes 1^\ast -  z_1 \otimes 1^\ast \otimes \otimes (z_2 \cdot 1^\ast) + 1^\ast \otimes z_2 \otimes (1^\ast \cdot z_1) \\
=& z_2^\ast \otimes z_2 \otimes 1^\ast - z_1 \otimes z_1^\ast\otimes 1^\ast + z_1 \otimes 1^\ast \otimes z_1^\ast - 1^\ast \otimes z_2 \otimes z_2^\ast\\
=& 0 
\end{align*}
Thus, we have exhibited a closed element of the chain complex which pairs non-trivially with the cochain describing our deformation class. This tells us that this cochain can not be exact. 

\end{proof}

\end{document}